\renewcommand{\chaptermark}[1]{%
  \markboth{\ifnum\value{chapter}=0 \else\thechapter.\ \fi #1}{}%
}
\renewcommand{\part}{%
\clearpage
\thispagestyle{empty}
\null\vfil
\secdef\@part\@spart
}
\newtheorem{definition}{Definition}[section]
\newtheorem{theorem}{Theorem}[section]
\newtheorem{corollary}{Corollary}[theorem]
\newtheorem{lemma}[theorem]{Lemma}
\newtheorem{proposition}[theorem]{Proposition}
\definecolor{gray}{rgb}{92.,92.,92.}
\newcommand {\Cdot}{\vcenter{\hbox{\scalebox{1.6}{.}}}}
\newcommand {\Max}{\text{a.s.}-\max}
\title{The full replica symmetry breaking in the Ising spin glass on random regular graph}
\date{}
\author{Francesco Concetti}
\begin{document}
\frontmatter
\includepdf[page=1]{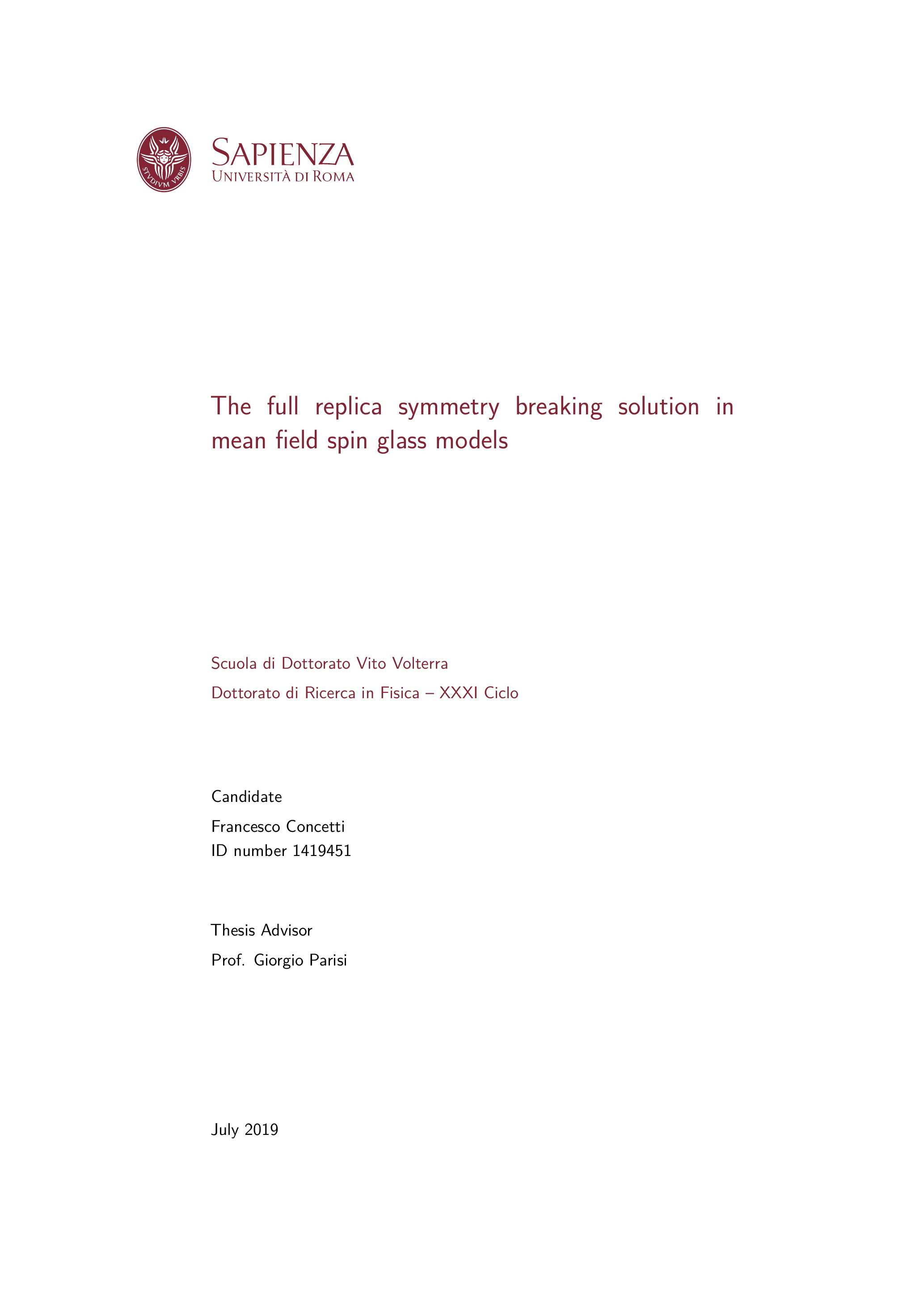}
\includepdf[page=2]{title}
\chapter*{Abstract}
\thispagestyle{empty}
\setcounter{page}{3}
This thesis focus on the extension of the Parisi full replica symmetry breaking solution to the Ising spin glass on a random regular graph. We propose a new martingale approach, that overcomes the limits of the Parisi-Mézard cavity method, providing a well-defined formulation of the full replica symmetry breaking problem in random regular graphs. 

We obtain a variational free energy functional, defined by the sum of two variational functionals (auxiliary variational functionals), that are an extension of the Parisi functional of the Sherrington-Kirkpatrick model. We study the properties of the  two variational functionals in detailed, providing a representation through the solution of a proper backward stochastic differential equation, that generalize the Parisi partial differential equation.

Finally, we define the order parameters of the system and get a set of self-consistency equations for the order parameters and the free energy.

\tableofcontents
\addtocontents{toc}{\protect\thispagestyle{empty}}

\chapter{Introduction}
\thispagestyle{empty}
\markright{Introduction}

Starting form the pioneering work of Samuel Edward and Philip Anderson \cite{EA} in 1975, spin glasses \cite{VPM,FH} acquired a dominant role in the theory of disordered systems, attracting a wide-ranging multidisciplinary interest in condensed matter, combinatorial optimization \cite{Mez85a, ParMezRRG2, Krz07}, computer science \cite{Nishimori,MonaMez} finance \cite{Bou03} and so on.

Spin glass theory, however, is still far to be completely understood. Only fully connected models have been exactly solved \cite{VPM,pSpinCri,Talagrand_Spinglass}. 

The interest in fully connected spin glass models was initially pointed out by Sherrington and Kirkpatrick (SK) with the introduction of the Sherrington-Kirkpatrick (SK) model \cite{SK1,SK2}. The SK model was solved, through the \emph{replica trick}, by G.Parisi, who introduced the concept \emph{replica symmetry breaking} (RSB) \cite{Par1_0,Par1_1,Par1_2} to describes the spin glass phase of the model.

Parisi proposed a sequence of approximated solutions \cite{Par1_1}, each of them depending on a variational order parameter with increasing dimension. Such solutions are called \emph{discrete} (or \emph{finite step})$-$RSB approximation. The order parameter of the $r-$step RSB (or simply $r-$RSB), is given by two sequences of $r$ numbers. The complete version of the Parisi solution, the so-called full$-$RSB solution, is formally obtained by the $r\to \infty$ limit of the $r-$RSB solution and the order parameter is a function.
 
The physical meaning of the RSB was further clarified as related to the decomposition of the Gibbs state in a mixture of a large number (infinite in the thermodynamic limit) of pure states\footnote{for a definition of pure state, see section 2.2 of\cite{ParisiSFT}}, that can be identified as the minima of a suitable free energy functional, depending on the local magnetizations: the so-called Thouless-Anderson-Palmer (TAP) free energy\cite{VPM,FH}.
The TAP free energy function describes a rough landscape on the space of the magnetization. The valleys of such landscape are the states of the system. 
 
It is now clear that, in the mean field approximation, the existence of multiple equilibrium states is a distinctive feature of glassiness.

If and how the RSB scheme also applies in non-fully connected systems is still debated, in spite of recent results. 

The first efforts to extend the Parisi RSB scheme to spin glass models defined on a sparse graph took place in the eighties
\cite{VB,mottishow,G2}.

Sparse graph models represent a more realistic class of mean-field models, including the notion of neighborhood which is absent in the infinite range case. They attract a significant interest also in computer science since many random optimization problems turn out to have a finite connectivity structure \cite{MonaMez}.

The $1$$-$RSB scheme was successfully extended to the Ising spin glass on sparse graphs by Parisi and Mézard (PM) with the cavity method \cite{ParMezRRG1,ParMezRRG2,ParMezRRG4,MonaMez}, improving the Bethe–Peierls method in order to deal with many equilibrium states. The approach can be easily generalized to the case with $r$ steps of RSB, by imposing the Parisi ultrametricity ansatz \cite{VPM} as in the fully connected case \cite{Panchenko2015,Panchenko2016}. 

It was proved, via interpolation arguments, that this approach provides a rigorous lower bound of the free energy \cite{FranzLeone,FranzLeone2}.

The $1$$-$RSB PM cavity method has been very successful even now, since this approach provides an algorithmic solution of certain random satisfiability problems with finite connectivity\cite{ParMezRRG2, MonaMez}. 

Within the PM cavity method formalism, the $r$$-$RSB order parameter is a distribution of $(r-1)$$-$RSB order parameters. The replica symmetric order parameter is the local fields distribution, so the $1$$-$RSB order parameter is a probability distribution over the space of distributions, the $2$$-$RSB order parameter is a distribution of distributions of distributions \cite{ParMezRRG1} and so on. As a consequence, the cavity method turns out to be inadequate to achieve a full$-$RSB theory for RRG spin glasses, indeed the $r\to \infty$ limit of the order parameter has no mathematical meaning.

The most dramatic consequence is that, actually, a complete theory that takes into account all the discrete-RSB solutions does not exist. Moreover, the high levels of replica symmetry breaking are actually numerically intractable.

The $1$$-$RSB cavity method is commonly used also to achieve an approximated solution. This approximation, however, cannot grab the presence of marginal states and then completely misses the right evaluation of such quantities that have very different properties in the marginally stable phase, as the spectrum of small oscillations, nonlinear susceptibilities and so on \cite{ParisiMarginal}. This limitedness entails that the cavity method cannot describe a glassy phase with many marginal equilibrium states. 

In this thesis we extend the full$-$RSB scheme to the computation of the free energy of the Ising spin glass on a random regular graph \cite{Bellobas}. We obtained a proper well-defined variational functional and an order parameter that can describe all the discrete-RSB solutions as a special subclass of solutions. This is the first result of this kind in diluted spin glasses.

The main contribution of this manuscript consists of the introduction of a new martingale approach \cite{YoRev} that allows us to describe several steps of replica symmetry breaking in a more compact form. This method overcomes the limits of the Parisi-Mézard cavity method, providing a well-defined formulation of the full$-$RSB problem in random regular graphs. 

We manage the progressive branching of the clusters using a martingale representation of the cavity magnetizations, improving the idea suggested in \cite{ParisiMarginal}. We reduce the computation to a composition of variational problems, where the variational parameters are martingales.
The order parameter, then, is not a deterministic distribution, as in the cavity method, but it is a stochastic process. 

We deal with non-Markovian martingales. Non-Markovianity is the source of many mathematical issues, that do not emerge in the fully connected theory of spin glasses. In particular, the free energy functional cannot be represented as the solution of a proper partial differential equations \cite{VPM}.

We obtain the analog of the Parisi functional using stochastic control theory; the Parisi equation (equation III.56 of \cite{VPM}) is replaced by a backward stochastic differential equation \cite{PaPeng}. We rigorously study this problem and provide a method to compute the derivatives and the series expansion of such a functional with respect to the parameters of the Hamiltonian.

Part \ref{P1} is devoted to provide a short review in spin glass theory, with particular attention in the problem of the computation of the free energy. In Chapter \ref{C1}, we provide a very basic introduction in the spin glass world. In Chapter \ref{C2} we describe the Parisi solution and the physical properties of fully connected spin glass.

The second part explores the problem of the Ising spin glass on Random Regular Graph. We present the model and provide a review of the cavity method ( Chapter \ref{C3}).

In Chapter \ref{C4}, we present our martingale approach. The martingale approach allows to obtain the full-RSB free energy functional. Suitable full$-$RSB order parameters are defined and the variational free energy functional is finally derived as a generalization of the Chen and Auffinger representation of the Parisi free energy for SK model \cite{ChenAuf}. The total free energy is given by the sum of two functional, that we call \emph{RSB expectations}.

In Chapter \ref{C5}, we study the mathematical properties of the RSB expectation. We show that such functional is related to the solution of a proper stochastic control problem. We prove that the solution of such problem exists, it is unique and verifies some stability conditions.

In Chapter \ref{C6} we develop some mathematical tools in order to compute the derivative and the series expansion of the RSB expectation. This chapter has remarkable importance, since the RSB$-$ is a quite "cumbersome" object.

The self-consistency equations are derived in Chapter \ref{C6}, by deriving the free energy functional with respect the order parameter.

The aim of the thesis is to provide a clear formal definition and a non-ambiguous mathematical setting for the full$-$RSB problem for spin glass models in random regular graphs. We will tackle this problem only from a technical point of view. The physical interpretation of our approach will be discussed further in next works.

We guess that the computations presented in this manuscript are essential tools to deepen the actual physical and mathematical knowledge about the problem of spin glass on diluted graphs. In particular, it is worth noting that, within our approach, the similarities between the Ising model on Bethe Lattice and the Sherrington-Kirkpatrick model \cite{SK1} are more evident. 

We are still far to provide a numerical solution of the Ising spin glass model on RRG. However, we remind that Giorgio Parisi derives the full$-$RSB equation of the SK model in 1979 \cite{Par1_0}, but Rizzo and Crisanti obtained the complete numerical solution only in 2002 \cite{RiCri}. 
 
\mainmatter
\part{Preliminaries}
\label{P1}
\thispagestyle{empty}
\chapter{Spin glass theory: a brief introduction}
\label{C1}
\thispagestyle{empty}

Spin glasses are a fascinating and interdisciplinary research topic that in the last forty years has inspired a vast scientific literature in the framework of theoretical and experimental physics, mathematics, computer science, finance and so on. Providing a worth introduction in a few pages is actually impossible. For this reason, we concentrate only on those topics that we consider to be more relevant to the aim of this thesis.

In the first section we present an overview of the original reason that motivated the study of glasses in the framework of condensed matter physics. In the second section, we provide a theoretical definition of a spin glass and discuss what we aim to "solve" when we deal with such class of models.
\section{What is a Spin Glass}
The term \emph{Spin Glass} has been introduced in \cite{Anderson_70}, referring to a particular class of dilute solutions of magnetic transition metal impurities in noble metal hosts, such as manganese (Mn) on copper (Cu) and iron (Fe) in gold (Au). Experimentally, such materials present a peculiar non-ergodic magnetic state that is neither ferromagnetic nor anti-ferromagnetic \cite{Ford82}. This different kind of magnetism is avoided in ordered systems, being a consequence of the disordered structure of such class of magnetic alloys.

In these systems, impurity moments polarizes the surrounding Fermi sea of conduction electrons of the host metal, and the induced polarization produces an effective interaction potential between impurities \cite{Ash76}. Typical behavior of such effective interaction is described by the famous Ruderman-Kittel-Kasuya-Yosida (RKKY) interaction \cite{Kittel54,Kasuya56,Yosida57}
\begin{equation}
V(r)=\frac{\cos(2\,k_F r)}{r^3}
\end{equation}
where $r$ is the distance between two impurities and $k_F$ is the Fermi wavevector of the conduction electron. Because of the random position of impurities, interactions are random and can be both positive or negative. In the low-temperature regime, impurity magnetizations "freeze" in a spatially random (or amorphous) pattern of directions. The absence of a long-range periodicity is the peculiar difference between ordinary ferromagnets or anti-ferromagnet and spin glasses \cite{FH}. 

Spin glass behavior arises also in various chemically different compounds \cite{HandMagMat24}. 

The universality of spin glass phenomena motivated the introduction of coarse-grained models that capture the essential properties of such kind of materials.

It appears that the presence of a spin glass state depends essentially on two ingredients:
\begin{itemize}
\item \emph{frustration} \cite{Van}, namely no single microscopical configuration of the system satisfies the minimum energy condition in all the interactions and the ground state is degenerate.
\item \emph{randomness} of the interaction, with competition between ferromagnetic and anti-ferromagnetic interaction.
\end{itemize} 
In 1975 \cite{EA}, Edwards and Anderson proposed an archetypal model, the so-called Edward-Anderson model (EA), that inspired all the theoretical spin glass models that has been developed until now. They considered a spin glass version of the Ising model, constituted by a set of $N\gg 1$ Ising spins $\bm{\sigma}=(\sigma_1,\,\sigma_2,\,\cdots,\sigma_N)\in\{-1,1\}^N$, with Hamiltonian
\begin{equation}
H_J[\bm{\omega}]=\sum_{(i,j)}J_{ij}\sigma_i\sigma_j
\end{equation}
where the sum $\sum_{(i,j)}\,\cdot$ runs over the links of a $d-$dimensional hypercubic lattice and the \emph{couplings} $J_{ij}$ are identical normal distributed random variables. It is generally believed that such model reproduce the main feature of a real spin glass.

After more than forty years, the EA model is still unsolved. For this reason, the theoretical physics research mainly focused on a simpler, but still highly nontrivial, class of models, the so-called \emph{mean field models}. Such models are generalizations of EA model, obtained by changing the graph where the spins lye \cite{SK1,VB}, the Hamiltonian or the kind of spins \cite{pSpinCri}.

It was early clear that such models are representative of a wider class of disordered systems and the interest in spin glass theory now goes far beyond the original motivation\cite{VPM}.
\section{The spin glass problem}
\label{SG_problem}
In this section, we describe the main issues that we aim to address when we deal with spin glass models.
 
Generally speaking, a spin glass model is a graphical model, constituted by a collection of $N\gg1$ variables $\bm{\sigma}=(\sigma_1,\cdots,\sigma_N)$ and a lower bounded random function $\bm{\sigma}\mapsto H_{N,J}(\bm{\sigma})$ that is called \emph{Hamiltonian}. The subscript $J$ denotes that the Hamiltonian depends on some external random control variables $J$, that are independent on the spins configuration $\bm{\sigma}$. 

For any given choice of $J$, the Hamiltonian $H_{N,J}$ assigns to each configuration $\bm{\sigma}$ the so-called Gibbs probability measure \cite{Huang}
\begin{equation}
\label{Gibbs-measure}
dP[\bm{\sigma}]=d\mu(\sigma_1)\cdots d\mu(\sigma_N)\frac{e^{-\beta H_{N,J}(\bm{\sigma})} }{Z_{N,J}(\beta)}
\end{equation}
The symbol $\beta$ denotes the so-called \emph{Boltzmann factor}, that in physical literature represent the inverse of the temperature. The measure $d\mu(\sigma)$ is a given probability measure that defines the nature of the spins and $Z_{N,J}(\beta)$ is the so-called \emph{partition-function}
\begin{equation}
\label{Gibbs-measure}
Z_{N,J}(\beta)= \int \text{d}\mu(\sigma_1)\cdots d\mu(\sigma_N)e^{-\beta H_{N,J}(\bm{\sigma})}.
\end{equation}
Let us also introduce the free-energy density of the system:
\begin{equation}
\label{FreeEmIntro}
f_{N,J}(\beta)=-\frac{1}{\beta N}\log\,Z_J(\beta).
\end{equation}
The computation of the partition function \eqref{Gibbs-measure} does not involve the average of the $J$s, that are "frozen" in a given configuration. For this reason the $J$s are called \emph{quenched variables}. The randomness of the $J$s constitutes a source of disorder that may induce the spin glass phase, as we discussed in the previous section.

It is worth noting that, if the spins are discrete variables, in the zero temperature limit, i.e. $\beta\to \infty$, the Gibbs measure \eqref{Gibbs-measure} concentrates around the configurations that minimize the Hamiltonian $H_{N,J}$, and the computation of the free energy \eqref{FreeEmIntro} provides the global minimum of the function $H_{N,J}$ over all the allowable configurations of the spins. This observation is the basis of the deep connection between statistical mechanics and optimization problems.

Standard statistical mechanics deals with the computation of thermodynamic limit of the free energy density, defined as follows:
\begin{equation}
\label{thermo_limit}
f_J(\beta)=\lim_{N\to\infty}f_{N,J}(\beta)\,.
\end{equation}
Because of the dependence on the quenched variables, the free energy density is a random variable. However, it appears that, under some regularity conditions on the Hamiltonian, in the $N\to \infty$ limit, the distribution of $f_J$ sharply concentrates around the mean:
\begin{equation}
\label{selfAv}
 \overline{f_J^2(\beta)}-\overline{f_J(\beta)}^2\sim O(1/N)
\end{equation}
where the symbol $\overline{\Cdot}$ denotes the \emph{quenched average}, namely the average over the random variables $J$. In physical jargon, the quantities that verify the above relation are "self-averaging”, which means that they are essentially independent of the disorder induced by the quenched variables.

We say that a spin glass model is \emph{solved} if we are able to compute the \emph{quenched free energy}, i.e. the average of the free energy density \eqref{thermo_limit} with respect the quenched variables:
\begin{equation}
f=\overline{f_J(\beta)}= \int \text{d}\mathbb{P}[J] f_J(\beta)\,.
\end{equation}
where $\mathbb{P}$ is the probability measure of the quenched variables. 

The computation of the quenched free energy is a very hard task. A rigorous treatment of this problem is a formidable challenge for mathematicians \cite{Talagrand_challange}.
\chapter{Fully connected model}
\label{C2}
\thispagestyle{empty}
The mean-field theory of spin glasses has attracted considerable interest in the last forty years, as a promising theory to describe the statistical mechanics of glassiness and disorder systems. In particular, fully connected models have represented a fruitful source of insights in this field \cite{VPM,pSpinCri}.

In the first section, we describe the Sherrington-Kirkpatrick model and the Parisi solution. In the second section, we provide a brief overview of the actual understanding of the physical properties of this kind of models. 
\section{The Sherrington-Kirkpatrick model}
The Sherrington and Kirkpatrick (SK) model \cite{SK1,SK2} is the first studied fully connected model (and maybe the most successful) that was completely solved.

The solution was derived by Parisi \cite{Par1_0,Par1_1,Par1_2}, through the introduction of a powerful mathematical tool, the so-called replica method. The physical meaning underlying the Parisi solution was further clarified \cite{VPM}. 
\subsection{The model}
The SK model is given by a set of $N$ Ising spin $\bm{\sigma}=(\sigma_1,\cdots \sigma_N)\in\{-1,1\}^N$, interacting according to the following Hamiltonian:
\begin{equation}
\label{SKHam}
H_J\left(\,\bm{\sigma}\,\right)=- \sum^N_{i=1}\sum^N_{j=i+1}J_{ij} \sigma_i \sigma_j -h \sum^N_{i=1}\sigma_i\,,
\end{equation}
where the couplings $J_{ij}$ are $N(N-1)/2$ independent and identically distributed Gaussian random variables such as:
\begin{equation}
\overline{J^2_{ij}}=\frac{1}{N}\,,
\end{equation}
where the symbol $\overline{\Cdot}$ denotes the Gaussian average of the coupling. The distribution of the couplings implies that
\begin{equation}
R_{\bm{\sigma},\bm{\tau}}=\overline{H_J\left(\,\bm{\sigma}\right)H_J\left(\,\bm{\tau}\,\right)}=\frac{1}{N}\left(\sum^N_i \sigma_i\tau_i\right)^2\leq N
\end{equation}
for any pairs $\bm{\sigma}$ and $\bm{\tau}$ in $\{-1,1\}^N$. It is proved that the above condition implies the free energy is self-averaging, according to the definition \eqref{selfAv}.

Despite his apparent simplicity, the computation of the quenched free-energy provides a formidable mathematics challenge that was completely overcome only after thirty years, requiring the introduction new physical intuitions \cite{VPM} and mathematical tools \cite{Talagrand_Spinglass}.

Sherrington and Kirkpatrick proposed a solution based on the so-called replica trick \cite{EA}. The Sherrington and Kirkpatrick approach, later referred to as the \emph{replica symmetric} (RS) solution, is wrong in the low-temperature regime, below the so-called \emph{dAR line} (de Almeida and Touless \cite{dAT}). Later on, Parisi improved the replica trick in the so-called replica method.
\subsection{The replica trick}
The replica trick, originally introduced by Edward and Anderson \cite{EA}, relies on the identity
\begin{equation}
\label{replica method}
-N\beta f=\lim_{n\to \infty} \frac{\overline{Z^n_J}-1}{n}\,.
\end{equation}
The key point of replica method consists in computing $\overline{Z^n_J}$ for an integer $n$. In such a way, $\overline{Z^n_J}$ can be consider as a partition function of a system of $n N$ particles (in replica theory jargon $n$ is the number of replicas). If $h=0$, the computation of $\overline{Z^n_J}$ gives
\begin{equation}
\overline{Z^n_J}=\sum_{\bm{\sigma}_1}\sum_{\bm{\sigma}_2}\cdots\sum_{\bm{\sigma}_n}\exp\left(N\frac{\beta^2}{2}+\frac{\beta^2}{N}\sum^n_{a=1}\sum^n_{b=a+1}\left(\sum^N_{i=1}\sigma_i^{(a)}\sigma_i^{(b)}\right)^2\right)
\end{equation}
where the spin $\sigma_i^{(a)}$ is the $i-$th spin of the $a-$th replica. After some manipulations, one gets
\begin{equation}
\overline{Z^n_J}= \int \text{d}[Q] \exp\left(N\frac{\beta^2}{2}-N\frac{\beta^2}{4}\sum_{a<b}Q^2_{a,b}+N \,\log\left(\sum_{\sigma^{(1)}\cdots \sigma^{(n)}}\exp\left(\beta^2\sum_{a<b} Q_{a,b}\sigma^{(a)}\sigma^{(b)}\right)\right) \right)
\end{equation}
where $ \int \text{d}[Q]\cdot$ is the integral in the space of $n\times n$ matrix. By the above approach, we are able to decouple the spins of different site, whilst, the average over the random couplings introduces an effective interaction between the spins of different replicas, driven by the replica matrix $Q$.

The idea is that, in the $N\to \infty$ limit, the integrand concentrate along a given configuration of the matrix $Q$, that verifies the stationary condition
\begin{equation}
\label{self_cons_rep}
Q_{p,r}=\frac{\sum_{\sigma^{(1)}\cdots \sigma^{(n)}}\sigma^{(p)}\sigma^{(r)}\exp\left(\sum_{a<b} Q_{a,b}\sigma^{(a)}\sigma^{(b)}\right) }{\sum_{\sigma^{(1)}\cdots \sigma^{(n)}}\exp\left(\sum_{a<b} Q_{a,b}\sigma^{(a)}\sigma^{(b)}\right) }
\end{equation}
The limit $n\to 0$ is achieved by imposing a particular a priori ansatz. From a mathematical point of view, such a method is non rigorous, indeed the wrong ansatz leads to a wrong solution.
\subsection{The Parisi solution}
\label{Parisi_sol}
 The solution was derived by Parisi through the introduction of a clever Replica-Symmetry-Breaking ansatz (RSB) \cite{VPM}. For an integer number of replicas $n$, he proposed to consider the case where the entries $Q_{a,b}$ of the replica matrix can take only $K$ positive values
\begin{equation}
1>q_0>q_1>q_2>\cdots >q_K.
\end{equation}
and (formula (4) in \cite{Par1_1})
\begin{equation}
\label{Parisi_integer_rep}
Q_{a,b}=q_i\quad \text{if:}\,\,\,\,\left[\frac{a}{m_i}\right]\neq\left[\frac{b}{m_i}\right]\,\,\,\,\text{and}\,\,\,\,\left[\frac{a}{m_{i+1}}\right]=\left[\frac{b}{m_{i+1}}\right]\,,\quad 0\leq i\leq K
\end{equation}
for a given sequence of integer numbers
\begin{equation}
1=m_0<m_1\cdots m_{K+1}=n
\end{equation}
such as $m_{i+1}/m_i$ is an integer number, for each $0\leq i\leq K$. Moreover he suggested that the limit $n\to 0$ may be achieved by replacing the above sequence with any sequence of $K+2$ real number, such as:
\begin{equation}
\label{Parisi_0_rep}
0=m_{K+1}<m_{K}<\cdots <m_0=1.
\end{equation}
He realized that the sequence of $m_i$ and $q_i$ can be associate the increasing function $x:[0,1]\to[0,1]$ such defined:
\begin{equation}
\label{POSK}
x(q)=\sum^K_{i=1} m_{i+1}\mathbb{1}_{(q_{i+1},q_i]}(q)
\end{equation}
In the complete formulation of the Parisi ansatz \cite{Par1_1}, namely full$-$RSB ansatz, the function \eqref{POSK} is replaced by a generic increasing function $x:[0,1]\to[0,1]$ and the free energy is given by
\begin{equation}
-f=-\frac{\beta }{2}\int _0^1 \text{d}q\,qx(q)+\phi (0,0)\,,
\end{equation}
where the function $\phi:[0,1]\times \mathbb{R}\to \infty$ is given by the solution of the following partial differential equation:
\begin{equation}
\label{start}
\begin{gathered}
\phi (1,y)=\Psi(1,y)=\frac{1}{\beta}\log(\,2\cosh\,\beta y\,)\\
\frac{\partial \phi (q,y)}{\partial q}=-\frac{1}{2} \left(\frac{\partial ^2\phi (q,y)}{\partial y^2}+\beta \,x(q)\left(\frac{\partial \phi (q,y)}{\partial y}\right)^2\right)
\end{gathered}
\end{equation}

In the next subsection we will show that the Parisi RSB ansatz is related to the decomposition of the Gibbs state in a mixture of a large number of pure equilibrium states. In particular, Parisi recognized the function $x$ as the cumulative distribution function of the mutual overlap of the magnetizations of two equilibrium states on the phase space \cite{Par2}. 

Moreover, Parisi argues \cite{Par2} that any spin glass model can be characterized by a function $x:[0,1]\to[0,1]$ related to the probability distribution of the overlaps of the magnetizations of two different pure states. Such a function is the so-called \emph{Parisi Order Parameter} (POP).

It is proved \cite{Guerra} that the function $\phi(0,0)$ is a continuous functional of the starting condition \eqref{start} and the POP $x$. The functional $(\Psi,x)\mapsto \phi(0,0)$ is the \emph{Parisi functional}.
\subsection{Overlap distribution and ultrametricity}
\label{ultrametricity}
In this subsection we describe the physical meaning of the Parisi solution, reporting the results of \cite{VPM}.

Before the Parisi solution, Thouless, Anderson and Palmer (TAP) attempted to solve the SK model by writing a variational free energy as a function of the local magnetizations $\bm{m}=(m_1,m_2,\cdots,m_N)$, associated to each site. They guessed that the equilibrium free energy, for a fixed configuration of the quenched couplings, is given by deriving the variational free energy with respect the local magnetizations and put each derivative to $0$, obtaining the so-called \emph{TAP equations}. The TAP equations may present a large number of solution ( infinite in the $N\to \infty$ limit \cite{BM}). Moreover, two different solutions may lead to two different value of free energy.

The TAP approach suggests that, in the spin glass phase, Gibbs state decomposes in the convex combination of many pure states (Chapter III of\cite{VPM}), associated to the solutions $\bm{m}^{\alpha}$ of the TAP equations. Formally we write
\begin{equation}
\braket{\Cdot}=\sum_{\alpha}w_{\alpha}\braket{\,\cdot \,}_{\alpha}
\end{equation}
where $\braket{\,\cdot \,}$ is the average with respect the Gibbs measure induced by the Hamiltonian \eqref{SKHam}, the weights $w_{\alpha}$ are the probabilities of each state $\alpha$, and $\braket{\,\cdot \,}_{\alpha}$ is the average in the pure state $\alpha$, defined in such a way
\begin{equation}
\braket{\sigma_{i_1}\sigma_{i_2}\cdots \sigma_{i_k}}_{\alpha}=m^{\alpha}_{i_1}m^{\alpha}_{i_2}\cdots m^{\alpha}_{i_k}\,.
\end{equation}
for each $k-$uple $i_1,\cdots,i_k$ of different sites. Obviously the solutions $\bm{m}^{\alpha}$ and the weight $w^{\alpha}$ depends on the random couplings $J$.

An interesting question is how the pure states differ from each other. Let us introduce the Euclidean distance between two states:
\begin{equation}
\label{distances}
d_{\alpha,\beta}=\frac{1}{N}\sum^N_{i=1}(m_i^{\alpha}-m_i^{\beta})^2\,
\end{equation}
and the overlap
\begin{equation}
q_{\alpha,\beta}=\frac{1}{N}\sum^N_{i=1}m_i^{\alpha} m_i^{\beta}.
\end{equation}
It was also argued, and recently proved \cite{PanchenkoUltra2}, that so-called \emph{Edward-Anderson overlap} $q_{\alpha,\alpha}$ is a constant that does not depend on $\alpha$. A straightforward computation yields the following relation
\begin{equation}
\label{P_J}
\sum_{\alpha}w_{\alpha}w_{\beta}\,q^k_{\alpha,\beta}=\frac{1}{N^k}\sum^N_{i_1=1}\cdots \sum^N_{i_k=1}\braket{\sigma_{i_1}\cdots \sigma_{i_k}}^2\,.
\end{equation}
wher the summation is over different $k-$uple of sites. Let $P_J(q)$ be the overlap distribution for a given choice of the random couplings $J$, formally defined as
\begin{equation}
 P_J(q)=\sum_{\alpha}\sum_{\beta}w_{\alpha}w_{\beta}\delta(q-q_{\alpha,\beta})
\end{equation}
where the symbol $\delta$ is the commonly used notation for the Dirac delta distribution (see example 3 in chapter V of \cite{Berry}).

Combining equation \eqref{P_J} with the equation \eqref{self_cons_rep}, we obtain the relation
\begin{equation}
g(y)=\int^1_0\text{d}q\,\overline{ P_J(q)}e^{y q}=\lim_{n\to 0}\frac{1}{n(n-1)}\sum^n_{a=1}\sum^n_{b=a+1} e^{y\, Q_{a,b}}
\end{equation}
where the matrix $Q$ on the right-hand side is the matrix defined in \eqref{Parisi_integer_rep} and the limit $n\to 0$ is obtained according to the Parisi ansatz. The function $g$ is the characteristic function of the averaged overlap distribution:
\begin{equation}
P(q)=\overline{P_J(q)}.
\end{equation}
In a similar manner, we can compute the joint probability distributions for several overlaps. In particular, given three pure states, denoted by the indexes $1$, $2$ and $3$, the averaged joint distribution of the overlaps $q_{1,2}$, $q_{1,3}$, $q_{2,3}$ verifies the following celebrated relation
\begin{equation}
\begin{multlined}
P(q_{1,2},q_{1,3},q_{2,3}) =\frac{1}{2}x(q_{1,2}) P(q_{1,2})\delta(q_{1,2}-q_{1,3})\delta(q_{1,2}-q_{2,3})\\+\frac{1}{2}\left\{\,P(q_{1,2})P(q_{2,3})\theta(q_{1,2}-q_{2,3})\delta(q_{1,2}-q_{2,3})+ \text{permutations}\,\right\}.
\end{multlined}
\end{equation}
This formula means that, given any three states $1$, $2$, $3$ and the distances $d_{1,2}$ and $d_{2,3}$, defined in \eqref{distances}, the distance between the state $1$ and the state $3$ verifies the \emph{ultrametric property}:
\begin{equation}
d_{1,3}\leq \max \left\{d_{1,2},d_{2,3}\right\}\,,\quad \text{with probability 1.}
\end{equation}
The discovered ultrametricity yielded several intuition about the microscopical feature of spin glasses in fully connected graphs, as we will discuss in the next section. 

From the joint distribution of the overlaps $q_{1,2}$ and $q_{3,4}$ of two distinct couples of states, we also get:
\begin{equation}
\begin{multlined}
\overline{P_J(q_{1,2})P_J(q_{3,4})}-\overline{P_J(q_{1,2})}\,\,\overline{P_J(q_{3,4})}\\=
P(q_{1,2},q_{3,4})-P(q_{1,2})P(q_{3,4})=\delta(q_{1,2}-q_{3,4})P(q_{1,2})\,.
\end{multlined}
\end{equation}
This implies that the overlap distribution $P_J(q)$ is not self-averaging. It has been proved that the above relation is a consequence of the so-called \emph{Ghirlanda-Guerra} identities \cite{GhirGuerra}. 

\section{General results fully connected model}
\label{overview}
The replica method was successfully used in many problems on fully connected networks \cite{VPM,Nishimori}.

Various models show different patterns of RSB, depending on the way the states are "distant” to each other. 
\begin{itemize}
\item The overlaps between different states can take (almost surely) only two different values. In this case, we speak about "one-step replica symmetry breaking”($1$$-$RSB) solution. The states are scattered randomly in the phase space and correspond to stable well-defined minima (genuine minima) of the free energy landscape \cite{REM, pSpinCri}. 
\item The overlaps can take a discrete number $r+1 \in \mathbb{N}$ of values in the interval $[q_m : q_M]$. In this case, we speak about "$r$-step replica symmetry breaking”($r$$-$RSB) solution. The equilibrium states exhibit a hierarchal structure, where clusters of states with a given mutual overlap are grouped in a progressively wider level of clusters with a progressively lower overlap, for $r$ levels \cite{ViraMezUltra,VPM}. Each state enters in the Gibbs decomposition with a random weight, which is generated according to the Derrida’s REM and GREM calculations (that we will explain in Part \ref{P2})\cite{DerridaGREM, ParisiMezardGREM}. Such a solution is an iterative composition of $1$$-$RSB solutions. Far as we know, this situation is very uncommon.
\item The overlaps among states can take all possible values in the interval $[q_m : q_M]$. In this case, we speak about "full replica symmetry breaking”(full$-$RSB), and it can be considered as a $r\to \infty$ of the preceding case. The equilibrium states exhibit a continuous fractal clustering, and the random weights are configurations of a Ruelle random probability cascade \cite{Ruelle,ASS,Arguin}, that provides a continuous extension of GREM. It is worth stressing that, unlike the preceding case, the equilibrium states can be arbitrarily close, and the barriers between the states may be arbitrarily small. The minima are marginal, with many flat directions ( infinite in the thermodynamic limit) \cite{DeDoKondor}.
\end{itemize}

Most of these results have been reproduced using a probabilistic iterative approach, the \emph{cavity method}, which avoids the mathematical weirdnesses of the replica method \cite{VPM, ASS}. The replica jargon, however, is used in spin glass theory, regardless the approach considered: actually, we speak about RSB if the system exhibits many pure states, organized according to one of the schemes described before.

Almost thirty years later, Talagrand rigorously proved the Parisi solution of the SK model \cite{Tala}, using the Guerra's interpolation scheme \cite{Guerra}. Soon after, Panchenko proved the ultrametricity of the states \cite{PanchenkoUltra2}, in relation to the so-called Ghirlanda-Guerra identities\cite{GhirGuerra}.

\part{The full replica symmetry breaking in the Ising spin glass on random regular graph}
\label{P2}

\chapter{The cavity method for diluted model}
\label{C3}
\thispagestyle{empty}

As we discussed in the previous chapters, after forty years of efforts, a deep understanding of the fully connected spin glasses has been achieved, both from the mathematical and physical point of view. 

If the main hallmarks of the fully connected theory appear also in real spin glasses, as the Edward-Anderson models, is still debated. The fully connected models, indeed, seem to be quite unrealistic, since each spins interacts with a diverging number of other spins and all the spins are at the same distance.

A more realistic theory of spin-glasses is given by considering models defined on a random graph with finite connectivity \cite{Bellobas}. These of random graphs are usually called sparse graphs. 

Initially invented to deal with the Sherrington-Kirkpatrick model of spin glasses (chapter V of \cite{VPM}), the cavity method is a powerful method to compute the properties of many systems with a local tree-like structure \cite{ParMezRRG1,ParMezRRG2}.

The cavity method is equivalent to the replica approach, but it turns out to have a much clearer and more direct physical interpretation.

This approach allows us to exploit the locally tree-like structure of a typical random sparse graph, reducing the problem to solving a set of recursive equations for a given set of \emph{cavity variables}. 

In Section \ref{The_model} we introduce the Ising spin glass model on RRG and describes the basic iterative techniques that motive the use of the cavity method for such kind of models. In section \ref{RS} we present the so-called Bethe-Peierles \cite{Bethe} solution and we discuss the instability of such solution. In the last section we derive the discrete-RSB Parsi-Mézard (PM) cavity method. The chapter is basically a review of the results presented in \cite{ParMezRRG1}.

\section{The model}
\label{The_model}
In this section we describe the model that we aim to deal with. We start by providing a general presentation of the diluted spin glass model, then we focus on the Ising spin glass on a Random Regular Graph. Finally we explain the general idea behind the cavity method.
\subsection{Diluted model}
A sparse graph is a random graph, where each vertex is involved in a finite number of connections, with probability $1$. Spin glass models on sparse graphs are named diluted spin glasses. 

The ensembles of sparse graphs which are usually considered are:
\begin{itemize}
\item The Cayley tree: it is a tree-like graph with no loops. It is generated starting from a central site $i = 0$ and inserting a first shell of $c$ neighbors. Then, each vertex of the first shell is connected to $c-1$ new vertices in the second shell etc. The last shell constitutes the boundary of the graph.

\item The Erd\"os-Rényi random graphs ensemble, where the graphs are generated by connecting each pair of vertices $(i j)$ with probability $z/N$. Each vertex is involved in a random number of connections, with a Poisson distribution with mean $z$.

\item The random regular graphs (RRG) ensemble given by the space of random graph, where each vertex is involved in $c$ different connections. It is usually assumed that every graph has the same probability.
\end{itemize}

A Cayley tree has a finite fraction of the total number of spins lie on the boundary. For this reason, diluted spin systems on Cayley tree presents a quite trivial behavior and they do not represents interesting model for spin glasses.

In the other two ensembles, for large $N$ and $c> 2$, or $z>1$, typical graphs presents loops with typical length of order $\sim \log N$. The probability to have finite loops vanishes in the $N\to \infty$ limit \cite{Bellobas}. As consequence, sparse graphs are locally isomorphic to a tree graphs (i.e. graphs without loop). 

As we will see later, in the high-temperature regime systems the contribution of large loops is negligible, and such kind of graph may be considered as the interior of a large tree-like graph.

 For low temperature regime, the presence of large loops may induce frustration and the system may exhibit a spin glass phase at low temperature \cite{ParMezRRG1,ParMezRRG2,MonaMez}.

Diluted spin glass models have attracted a large interest also mathematics and computer science, since they are intimately related to sparse graph codes and to random satisfiability problems, among others. 

It is generally believed that the Parisi replica symmetry breaking ansatz for fully connected spin glasses holds (in a certain way) also for such class of models. For the rest of the Chapter we will assume that it is the case.

It is worth remarking that, in spite of recent mathematical breakthroughs on this field\cite{FranzLeone,FranzLeone2,Panchenko2015,Panchenko2016}, only few exact results has been achieved so far, and only a fraction of these have been proved rigorously.
\subsection{The Ising spin glass on a random regular graph}
We consider a system of $N$ Ising spins $\bm{\sigma}:=(\sigma_1,\sigma_2,\cdots,\sigma_N)\in\{-1,1\}^N$ with Hamiltonian:
\begin{equation}
\label{hamiltonian}
H[J,\mathcal{G}_{N,c},\bm{\sigma}]=\sum_{\braket{ij}_{\mathcal{G}_{N,c}}}J_{i,j}\sigma_i\sigma_j,
\end{equation}
where the sum $\sum_{\braket{ij}_{\mathcal{G}_{N,c}}}$ is restricted only to the edges $\braket{ij}$ of a random regular graph $\mathcal{G}_{N,c}$ with connectivity $c$ ( $c-$RRG)\cite{Bellobas}.

The couplings $\{J_{i,j}\}$ are independent identically distributed quenched random variables, with symmetric distribution and finite variation, defined on the edges of the graph. Gaussian distributions with zero average or a bimodal distribution ($J=\pm c$ with equal probability) are commonly considered.

As we have already explained in section \ref{SG_problem}, we are interested in the computation of the thermodynamic limit of the free energy density:
\begin{equation}
\label{aim}
f_{J,\mathcal{G}_{N,c}}=\lim_{N\to \infty}-\frac{\log Z_{N,J,\mathcal{G}_{N,c}}}{\beta N}=\lim_{N\to \infty}-\frac{1}{\beta N}\,\log\sum_{\bm{\sigma}\in \{-1,1\}^N}e^{-\beta H[J,\mathcal{G}_{N,c},\bm{\sigma}]}\,,
\end{equation}
where $Z_{N,J,\mathcal{G}_{N,c}}$ is the partition function of the $N$ spins system. The couplings and the adjacency matrix associated to the graph $\mathcal{G}_{N,c}$ constitute the quenched disorder of the system.

The finite variance assumption assures the thermodynamic limit of the free energy exists and does not depend on the realization of the quenched disorder with probability one. For this reason, as in the fully connected case, we will concentrate on the computation of the quenched average of the free energy:
\begin{equation}
\label{aim}
f=\lim_{N\to \infty}-\frac{1}{\beta N}\overline{\log \,Z_{N,J,\mathcal{G}_{N,c}}\,}
\end{equation}
where the the overline $\overline{\Cdot}$ denotes the average with respect the random couplings distribution and all the realizations of random regular graphs, generated with uniform distribution \cite{Bellobas}.

In the limit of infinite connectivity $c \to \infty$, keeping $c\Braket{J^2} = 1$, the free energy becomes independent of the probability distribution of the $J$ and it is equivalent to the free energy of the SK model.

\subsection{Computation of the free energy}
\label{graph_operations}
In this subsection, we describe the derivation of the cavity method that was originally presented in \cite{ParMezRRG4}. In the following we denote by $c$ the connectivity of the random graph:
\begin{equation}
c=\text{connectivity of the graph.}
\end{equation}
We start by introducing an intermediate structure. Let $\mathcal{G}_{N,c,q}$ be a random graph, where $q$ randomly selected \emph{cavity vertices} have only $c-1$ neighbors, while the other $N-q$ have $c$ neighbors. The graph $\mathcal{G}_{N,c,q}$ is called \emph{cavity graph}.

We consider a system of $N$ spins, with Hamiltonian given by \eqref{hamiltonian}, defined on the random lattice $\mathcal{G}_{N,c,q}$. The spins corresponding to the cavity vertices are the \emph{cavity spins}. Note that, if the number of cavity spins $q$ is a multiple of the connectivity $c$, we can look at the cavity graph as a c-RRG where some vertices have been removed. Using this notation, a $c-$RRG can be considered as a cavity graph $\mathcal{G}_{N,c,0}$ with zero cavity spins.

The values $\sigma_1, \sigma_2,\cdots, \sigma_{q}$ of the cavity spins are kept fixed. 

We can pass from a cavity graph to another by performing one of the following graph operations:
\begin{enumerate}
\label{Iteration}
\item \label{Iteration} Iteration: by connecting a new spin $\sigma_0$ of fixed value to $c-1$ cavity spins $\sigma_1, \sigma_2,\cdots, \sigma_{c-1}$ via a new set of random couplings $J_{0,1}\cdots J_{0,c-1}$ and averaging over these $c-1$ cavity spins, one changes a $\mathcal{G}_{N,c,q}$ cavity graph into a $\mathcal{G}_{N+1,c,q-c+2}$ cavity graph.

\item Link addition: by adding a new random interaction $J_{ij}$ between two randomly chosen cavity spins $\sigma_i$ and $\sigma_j$ and averaging away over the values of these $2$ spins one changes a $\mathcal{G}_{N,c,q}$ cavity graph into a $\mathcal{G}_{N+1,c,q-c+2}$ cavity graph.

\item Site (or vertex) addition: by connecting a new spin $\sigma_0$ to $c$ cavity spins $\sigma_1, \sigma_2,\cdots, \sigma_{c}$ via a new set of random couplings $J_{0,1},\cdots,J_{0,c}$ and averaging over the values of the spin $\sigma_0$ and the cavity spins $\sigma_1, \sigma_2,\cdots, \sigma_{c}$, one changes a $\mathcal{G}_{N,c,q}$ cavity graph to a $\mathcal{G}_{N+1c,q-c}$ cavity graph.
\end{enumerate}

Spin glasses on random regular graphs can be obtained from such intermediate models on cavity graphs by performing the graph operations described above.
\begin{figure}
 \includegraphics[width=\linewidth]{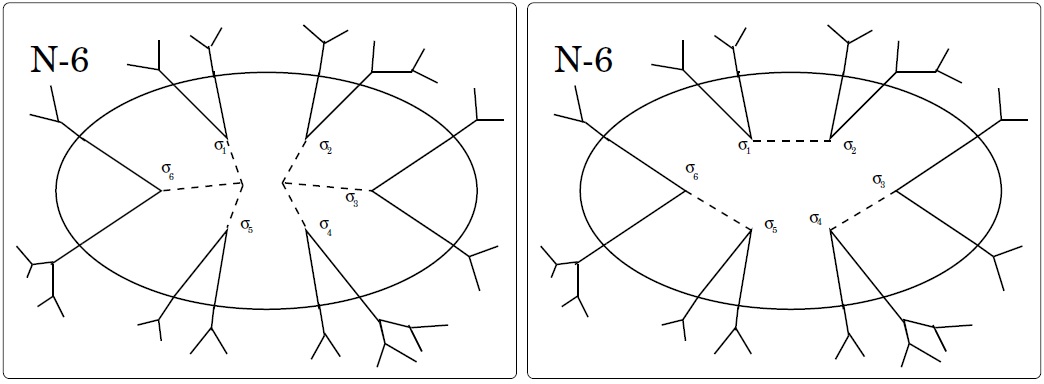}
 \caption{\small{The above figure is reprinted from \cite{ParMezRRG2}. Starting from the $\mathcal{G}_{N,3,6}$ cavity graph, one can either add two sites (left figure) and create a $\mathcal{G}_{N+2,3,0}$ graph, or add three links (right figure) and create a $\mathcal{G}_{N,3,0}$ graph.}}
 \label{cavityfigure}
\end{figure}
If one starts from a graph $\mathcal{G}_{N,c,2c}$ and performs $c$ link additions, one get a spin glass with $N$ spins on $\mathcal{G}_{N,c,0}$, that is actually the model described by the Hamiltonian \eqref{hamiltonian}, and let $F_{N}$ be the free energy of this system. On the other hand, if one starts from the same cavity graph $\mathcal{G}_{N,c,2c}$ and performs two site additions, one get a system of $N+2$ spins on the random regular graph $\mathcal{G}_{N+2,c,0}$, that is actually the model described by the Hamiltonian \eqref{hamiltonian}, and let $F_{N+2}$ be the free energy of this system (see \ref{cavityfigure}). Let $\Delta F^{(1)}$ and $\Delta F^{(2)}$ be the free energy shifts due to a site addition (\emph{vertex contribution}) and a link addition (\emph{edge contribution}), averaged over all the possible choice of cavity spins and the random couplings, then we have:
\begin{equation}
\label{shift}
F_{N+2}-F_{N}=2\Delta F^{(1)}-c\Delta F^{(2)}.
\end{equation}
If the thermodynamic limit exists, the total free energy $F_{N}$ is asymptotically linear in $N$, so we get
\begin{equation}
\label{cavFree}
f=\lim_{N\to \infty}\frac{1}{N} F_N=\lim_{N\to \infty}(F_{N+2}-F_{N})=\Delta F^{(1)}-\frac{c}{2}\Delta F^{(2)}.
\end{equation}

The computation of the free energy shifts $\Delta F^{(1)}$ and $\Delta F^{(2)}$ is the crucial point of such approach. 

The underlying intuition of the PM cavity method is given by a particular hypothesis on the cavity spins marginal distributions that allows to compute the two free energy shifts from quantities that does not depends on the whole system, but only on the cavity spins involved in the two graph operations, as we will discuss in the following sections. Such hypothesis is equivalent to the discrete-RSB ansatz in the fully connected systems. 

\section{The Bethe approximation: replica symmetric solution}
\label{RS}
The Bethe (cavity) approximation was originally proposed as a mean field theory for the ferromagnetic Ising spin model \cite{Huang,Bethe}.

The basic assumption is that, when a spin in a vertex $i$ is removed, forming a cavity in $i$, the cavity spins that were connected to the spin $i$ become uncorrelated.

This hypothesis is obviously correct for spin systems defined on a Cayley tree, indeed if we remove a vertex $i$, forming a cavity, the Cayley tree decomposes in disconnected tree-like components originating from each cavity vertices.

As we already discussed, random regular graphs converge locally to a tree in the thermodynamic limit, since the typical size of a loop diverges as $N\to \infty$. As a consequence, if we remove a vertex $i$, the distance on the lattice between two generic cavity spins is large for large $N$ ($\sim \log N$, \cite{Bellobas}). 

If there is a single pure state, then correlations in the Gibbs measure decay quickly with the distance and the Bethe approximation is asymptotically correct for $N\to\infty$. 

 Assuming the existence of a unique pure state is equivalent to impose the RS ansatz \cite{ParMezRRG1}.

In the first subsection, we derive the Bethe-Peierls equation.In the section we reformulate the Bethe-Peierls approach in a variational representation. In the last subsection we discuss the stability of this solution and the replica symmetry breaking.
\subsection{The Bethe iterative approach}
Let us consider a cavity graph, as defined in the previous subsection. For each cavity vertex $i$, let $\eta_i^{\text{(cav)}}(\sigma_i)$ be the marginal probability distribution that the value of the cavity spin in the vertex $i$ is equal to $\sigma_i$. The distribution $\eta_i^{\text{(cav)}}(\sigma_i)$ is usually called \emph{cavity distibution}.

Since an Ising spin is a $\{-1,1\}-$valued random variable, the marginal probability $\eta_i^{\text{(cav)}}(\sigma_i)$ can be expressed as:

\begin{equation}
\label{cavprob}
\eta_i^{\text{cav}}(\sigma_i)=1+\sigma_i \tanh\left(\,\beta h_i\,\right)
\end{equation}

where we have introduced the effective \emph{cavity field} $h_i$, that encodes the action onto the cavity spin $\sigma_i$ of all the other spins of the cavity graph.

It is worth noting that the cavity distribution $\eta_i^{\text{(cav)}}(\sigma_i)$ is not the marginal probability distribution of the true system, since we are considering a cavity graph. Analogously, the cavity field $h_i$ is not the true local field of the vertex $i$.

By iteration, we merge $c-1$ cavity vertices into the new cavity vertex $\sigma_0$, as in Figure \eqref{fig:merging}, and average over the merged spins. 
\begin{figure}
\center
 \includegraphics[scale=0.4]{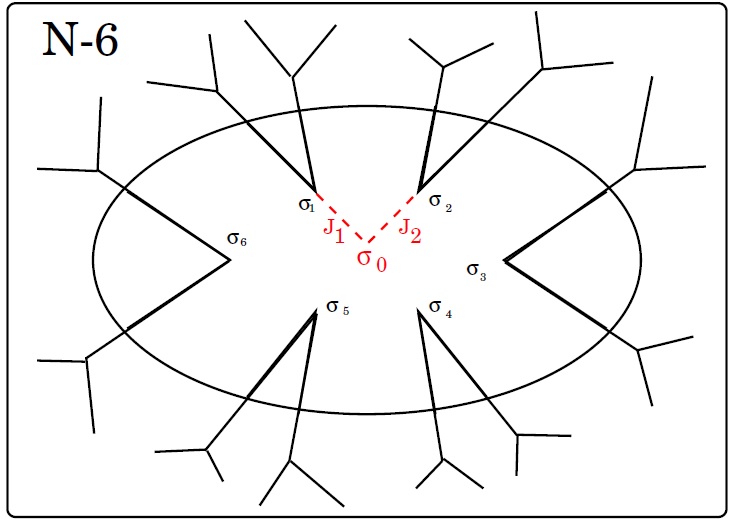}
 \caption{\small{The above figure is reprinted from \cite{ParMezRRG2}. The cavity spins $\sigma_1$ and $\sigma_2$ merges in the spin $\sigma_0$. The spin $\sigma_0$ is a new cavity spin.}}
 \label{fig:merging}
\end{figure}
Let $J_{0,1},\,J_{0,2},\,\cdots,\,J_{0,c}$ be the random couplings that connect the new cavity spin $\sigma_0$ to the old cavity spins $\sigma_1,\,\sigma_2,\,\cdots,\,\sigma_c$.

The cavity distribution of the new cavity spin can be computed from the cavity fields of the merged spins and the random couplings in such a way:
 
\begin{multline}
\label{recursion}
\eta_0^{\text{cav}}(\sigma_0)=\frac{1}{z_0}\prod^{c-1}_{i=1}\left(\,\sum_{\sigma=\pm 1}e^{\beta J_{0,i}\sigma_i\sigma_0}\left(\,1+\sigma_i \tanh\left(\,\beta h_i\,\right)\,\right)\,\right)\\=1+\sigma_0\,\tanh\left(\,\,\beta \sum^{c-1}_{i=1} u(J_{0,i},h_i)\,\,\right),
\end{multline}
where $z_0$ is the normalization constant such as 
\begin{equation}
\eta_0^{\text{cav}}(+1)+\eta_0^{\text{cav}}(-1)=1
\end{equation}
and
\begin{equation}
u(J_{0,i},h_i)=\frac{1}{\beta} \text{atanh}\left(\,\tanh\left(\beta h_i\right) \tanh\left(\beta J_{0,i}\right)\,\right).
\end{equation}
Let $h_0$ be the cavity field corresponding to the spin $\sigma_0$, then we have:
\begin{equation}
\label{BPEq}
h_0= \sum^{c-1}_{i=1} u(J_{0,i},h_i)
\end{equation} 
Because of the randomness of the couplings, the cavity fields are also random quantities. In principle, one can solve numerically the equation \eqref{BPEq}, iteratively for all the vertex of the graph, by taking a given realization of the disorder and for a finite (but large) number of spins N.

If we are interested in computing the free energy averaged over the disorder, it is more convenient to define a non random order parameter $p(h)$, i.e. the probability distribution of cavity field, defined formally by:
\begin{equation}
p(h)=\overline{\delta(h-h_i)},
\end{equation}

where the overline $\overline{\Cdot}$ is the usual average over the disorder and $h_i$ is the cavity fields of the site $i$ obtained form the equation \eqref{BPEq} for a given realization of the disorder. This distribution must be the same for all sites, since, after averaging over the disorder, all the sites are statistically equivalent. For this reason we can obtain a recursion formula for the probability distribution $p(h)$, by using the equation \eqref{BPEq}:
\begin{equation}
\label{recursion_ph}
p(h)=\overline{ \int \prod^{c-1}_{i=1}[dh_i p(h_i)] \,\, \delta \left(\,h-\sum^{c-1}_{i=1} u(J_{0,i},h_i)\,\right)}.
\end{equation}

The two averaged free energy shift contributions in \eqref{shift} can now be computed from the cavity field distribution in such a way

\begin{equation}
\label{RS1}
-\beta\,\Delta F^{(1)}=\overline{\int \prod^{c}_{i=1}[dh_i p(h_i)] \log \, \sum_{\sigma_0=\pm 1}\prod^c_{i=1}\left(\,\sum_{\sigma_i=\pm 1}e^{\beta J_{0,i}\sigma_i\sigma_0}\left(\,1+\sigma_i \tanh\left(\,\beta h_i\,\right)\,\right)\,\,\right)}
\end{equation}
and
\begin{multline}
\label{RS2}
-\beta\,\Delta F^{(2)}=\\\overline{\int \prod^{2}_{i=1}[dh_i p(h_i)]\log \sum_{\substack{\sigma_1=\pm 1\\ \sigma_2=\pm 1}}e^{\beta J_{1,2}\sigma_1\sigma_2}\left(\,1+\sigma_1 \tanh\left(\,\beta h_i\,\right)\,\right)\left(\,1+\sigma_2\tanh\left(\,\beta h_2\,\right)\right)}
\end{multline}
and the RS free energy is given by these two quantities through \eqref{cavFree}.

The true local field $H_i$ of the vertex $i$ can be computed from the cavity fields of the nearest neighbors on to the cavity graph where the vertex $i$ has been removed, in such a way:
\begin{equation}
H_i=\sum_{j\in \partial i} u(J_{i,j},h_j)\,,
\end{equation} 
where the symbol $\sum_{j\in \partial i}\Cdot$ denotes the sum over the nearest neighbors of the vertex $i$ and the quantities $J_{i,j}$ and $h_j$ are, respectively, the coupling between the spin $\sigma_i$ and $\sigma_j$ and the cavity field of the vertex $j$.

The distribution of the true local fields is then given by
\begin{equation}
P(H)=\overline{ \int \prod^{c}_{i=1}[dh_i p(h_i)] \,\, \delta \left(\,H-\sum^{c}_{i=1} u(J_{i},h_i)\,\right)}.
\end{equation}
From the true local field distribution, we can compute the Edward-Anderson order parameter $q_{EA}=\overline{\sum^N_{i=1}\Braket{\sigma_i}^2}$, by
\begin{equation}
\label{qEARS}
q_{EA}= \int \text{d}H\,P(H) \left(\,\tanh\left(\,\beta H\,\right)\,\,\right)^2=\overline{ \int \prod^{c}_{i=1}[dh_i p(h_i)] \,\, \left(\,\tanh \left(\,\beta \sum^c_{i=1}u(J_{i},h_{i})\,\right)\,\,\right)^2}
\end{equation}

Unlike the fully connected systems, the overlap does not give a complete quantitative characterization of the state of the system, since we cannot derive the local field distribution $P(H)$ directly from it, but we need to consider the cavity field distribution $p(h)$. The replica symmetric solution, therefor, already involves an order parameter which is a whole function.

\subsection{Variational formulation}

For the aim of this thesis, it is useful to reformulate the problem in order to derive the equation \eqref{recursion} from a variational principle.

If we consider the two free energy contributions \eqref{RS1} and \eqref{RS2} as functionals of the cavity field distribution $p(h)$, the resulting functional $F[p]$, obtained by substituting the relation \eqref{RS1} and \eqref{RS2} in \eqref{cavFree}
\begin{multline}
\label{freeRSvar1}
F[p]=\overline{\int \prod^{c}_{i=1}[dh_i p(h_i)] \log \, \sum_{\sigma_0=\pm 1}\prod^c_{i=1}\left(\,\sum_{\sigma_i=\pm 1}e^{\beta J_{0,i}\sigma_i\sigma_0}\left(\,1+\sigma_i \tanh\left(\,\beta h_i\,\right)\,\right)\,\,\right)}\\
-\frac{c}{2}\overline{\int \prod^{2}_{i=1}[dh_i p(h_i)]\log \sum_{\substack{\sigma_1=\pm 1\\ \sigma_2=\pm 1}}e^{\beta J_{1,2}\sigma_1\sigma_2}\left(\,1+\sigma_1 \tanh\left(\,\beta h_i\,\right)\,\right)\left(\,1+\sigma_2\tanh\left(\,\beta h_2\,\right)\right)}
\end{multline}
plays the role of a variational free energy functional and the equilibrium free energy is given by:
\begin{equation}
\label{freeRSvar2}
-\beta\,f=\underset{p}{\min}\,\,F[p],
\end{equation}
where, as usual in such kind of systems, we have a $"\max"$ variational principle, instead of a $"\min"$ variational principle as in the Gibbs principle.

 The self-consistency equation \eqref{recursion} can be obtained by imposing the stationary condition
\begin{equation}
\frac{\delta F[p]}{\delta p(h)}=0
\end{equation}
under the constrained
\begin{equation}
\label{varConstr}
 \int \text{d}h \,p(h)=1.
\end{equation}
Let us introduce the following notation:
\begin{equation}
\label{important_phi}
\phi(h_1,h_2)=\log\left( \sum_{\substack{\sigma_1=\pm 1\\ \sigma_2=\pm 1}}e^{\beta J_{1,2}\sigma_1\sigma_2}\left(\,1+\sigma_1 \tanh\left(\,\beta h_i\,\right)\,\right)\left(\,1+\sigma_2\tanh\left(\,\beta h_2\,\right)\right)\right),
\end{equation}
then we have:
\begin{multline}
\log\left( \, \sum_{\sigma_0=\pm 1}\prod^c_{i=1}\left(\,\sum_{\sigma_i=\pm 1}e^{\beta J_{0,i}\sigma_i\sigma_0}\left(\,1+\sigma_i \tanh\left(\,\beta h_i\,\right)\,\right)\,\,\right)\,\right)=\\
\phi\left(\,h_1,\sum^{c}_{i=2} u(J_{0,i},h_i)\,\right)+\sum^c_{i=1}\log\left(\, \frac{\cosh \left(\,\beta J_{0,i}\,\right)}{ \cosh\left(\,\beta u(J_{0,i},h_i)\,\right)}\,\right).
\end{multline}
By introducing a Lagrange multiplier $\mu$ for the constrained \eqref{varConstr}, we get
\begin{multline}
\frac{\delta}{\delta p(h)}\left(\,F[p]+\mu\left(\, \int \text{d}h \,p(h)-1\,\right) \,\right)=\\
c \overline{\int \prod^{c}_{i=2}[dh_i p(h_i)]\phi\left(\,h_1,\sum^{c}_{i=2} u(J_{0,i},h_i)\,\right)}-c\overline{ \int \text{d}h_2 p(h_2)\, \phi(h_1,h_2)}\\
+c \overline{ \int \text{d}h p(h)\,\log\left(\, \frac{\cosh \left(\,\beta J_{0,i}\,\right)}{ \cosh\left(\,\beta u(J_{0,i},h)\,\right)}\,\right)}+\mu=0.
\end{multline}
Such equation is obviously solved by the solution of the recursion equation \eqref{recursion} and
\begin{equation}
\mu=-c \overline{ \int \text{d}h p(h)\,\log\left(\, \frac{\cosh \left(\,\beta J_{0,i}\,\right)}{ \cosh\left(\,\beta u(J_{0,i},h)\,\right)}\,\right)}.
\end{equation}
This prove that the variational principle on the functional \eqref{freeRSvar1} is equivalent with to BP approach presented in the previous subsection.
\subsection{Instability with respect replica symmetry breaking}
In the high-temperature phase, the solution of the self consistent equation \eqref{recursion} is a simple $\delta$ Dirac function at the origin. This implies that the Edward-Anderson order parameter $q_{EA}$ \eqref{qEARS} vanishes, describing a paramagnetic phase.

The paramagnetic solution is stable for $\beta \leq \beta_c$, where the critical inverse temperature $\beta_c$ fulfills the following equation \cite{Touless}
\begin{equation}
\overline{\left(\,\tanh\left(\,\beta_c J\,\right)\,\right)^2}=\frac{1}{c-1}.
\end{equation}

In the low-temperature regime, the Bethe approximation is no more correct and the BP free energy \eqref{freeRSvar2} misses the true free energy of the system. 

An indication that the above procedure gives a wrong result is the fact that, in the limit of infinite connectivity $c \to \infty$ and $c\Braket{J^2} = 1$, the free energy \eqref{freeRSvar2} converges to the RS solution of the SK model \cite{SK1,SK2}, that is known to be unstable below the dAT critical temperature \cite{dAT}. 

Since the early years after the Parisi solution, many derivations of a replica theory for diluted spin glass have been proposed by applying the Bethe approximation to the $n-$times replicated system \cite{VB,G2,mottishow,MyArticle,ParMezRRG1}. The free energy of the replicated system is then given by
\begin{equation}
\label{freeeRep}
-\beta n f_n=\frac{c}{2}\log\left(\sum_{\bm{\sigma}}\sum_{\bm{\tau}}\rho(\,\bm{\sigma}\,)^{c-1}\rho(\,\bm{\tau}\,)^{c-1}\exp\left(\sum^n_{a=1}\beta \sigma_{a}\tau_{a}\right)\right)
\end{equation}
where $\bm{\sigma}$ and $\bm{\tau}$ denote two set of $n$ spins $\sigma_1,\cdots \sigma_n$ and $\tau_1,\cdots \tau_n$. The symbols $\sum_{\bm{\sigma}} \Cdot$ and $\sum_{\bm{\sigma}} \Cdot$ denote the sums over all the $2^n$ configurations of the two sets of $n$ spins $\bm{\sigma}$ and $\bm{\tau}$ and $\rho(\Cdot)$ is the replica order parameter, consisting in a function of the $n$ spins satisfying the following self-consistency equation:
\begin{equation}
\label{repEq}
\rho(\bm{\sigma})=\frac{\sum_{\bm{\tau}}\rho(\,\bm{\tau}\,)^{c-1}\exp\left(\sum^n_{a=1}\beta \sigma_{a}\tau_{a}\right)}{\sum_{\bm{\tau}}\rho(\,\bm{\tau}\,)^{c-1}}.
\end{equation}
The quenched average of the free energy is formally given by the replica limit
\begin{equation}
\label{repLim}
f=\underset{n\to 0}{\lim} f_n.
\end{equation}
As in the SK model, the limit $n\to 0$ cannot be computed in a rigorous way, but one have to consider a proper ansatz about the dependence of the function $\rho$ to the set of $n$ spins.

The RS ansatz is given by imposing that the function $\rho$ depends on the spins through the sum $\Sigma=\sum^n_{a=1} \sigma_a$:
\begin{equation}
\label{RSMotti}
\rho(\bm{\sigma})=\rho_{\text{RS}}(\Sigma)= \int \text{d}h p(h) e^{\beta h\Sigma}.
\end{equation}
After some straightforward calculations, it can be shown that the RS solution \eqref{RSMotti} is equivalent to the BP solution \cite{ParMezRRG1}. The limit \eqref{repLim} of the free energy \eqref{freeeRep} gives the free energy \eqref{freeRSvar1} and the equation \eqref{repEq} is equivalent to the equation \eqref{BPEq}.

The stability of the Bethe assumption, presented in the previous section, can be investigated, in a non-rigorous way, using replica method, by adding to the RS solution \eqref{RSMotti} a "small" perturbation that breaks the replica symmetry, in such a way:
\begin{equation}
\rho(\bm{\sigma})=\rho_{\text{RS}}(\Sigma)+\delta \rho(\bm{\sigma}),
\end{equation}
where
\begin{equation}
\delta \rho(\bm{\sigma})=\sum^n_{k=1}\sum_{a_1,\cdots,a_k} q_{a_1,\cdots,a_k} \sigma_{a_1}\cdots \sigma_{a_k}.
\end{equation}
and computing the second variation of the free energy \eqref{freeeRep} with respect the perturbation $\delta \rho$ around the RS solution.

Mottishow proved that for $\beta>\beta_c$, the Ising spin glass on RRG undergoes a transition toward a RSB phase. The Mottishow stability condition is acltually a generalization of the dAT stability line for the fully connected models.

Unfortunately, the Parisi solution cannot be extended to diluted models in a simple way. The non-Gaussianity of the cavity fields implies that the problem involves an infinity of order parameters which are the multi-spin overlaps \cite{VB,G2,mottishow,MyArticle}. This is the main issue when we deal with diluted models. 

In the replica theory framework, approximated solutions have been obtained near the critical temperature\cite{mottishow} or in the limit of large connectivities \cite{G2}, where the contribution of multi-spin overlap is negligible, and the solution can be derived by applying the Parisi ansatz to the two spins overlap matrix.

\section{The $1-$RSB cavity method}
\label{1RSBcav}
The spin glass dAT transition from the RS to the RSB phase is characterized by the divergence of the spin-glass susceptibility $\chi_{\text{SG}}$, defined as:
\begin{equation}
\chi_{\text{SG}}=\frac{1}{N}\sum^N_{i=1}\sum^N_{j=1}\overline{\left(\Braket{\sigma_i \sigma_j}-\Braket{\sigma_i}\Braket{\sigma_j}\right)^2},
\end{equation}
where the angular brakets means that we take the thermal average with respect the Boltzmann-Gibbs distribution \cite{VPM}.

In diluted models, divergence of spin glass susceptibility is equivalent to the condition that the following quantity vanishes 
\begin{equation}
\lambda=\underset{l\to \infty}{\lim}\frac{1}{l} \log\left(\frac{\Big(\Braket{\sigma_i \sigma_j}-\Braket{\sigma_i}\Braket{\sigma_j}\Big)^2}{(c-1)^l}\right)
\end{equation}
where $\sigma_0$ is a reference starting spin and $\sigma_l$ is a spin at distance $l$ from $\sigma_0$. If $\lambda<0$, the correlation between spins decays with the distance, by contrast, if $\lambda\geq 0$ the system is characterized by long range correlations, invalidating the hypothesis of the Bethe approximation.

In the low temperature regime, the Gibbs state is given by a statistical mixture of pure states. As a consequence, the connected correlation functions do not vanish with the distance \cite{ParMezRRG1,ParisiMarginal}. As in fully connected systems (see section \ref{ultrametricity}), this is the basic mechanism underlying the RSB phenomenon.

In this section we present the $1-$RSB solution for diluted spin glasses using the cavity method. Such approach considers the presence of multiple pure states and defines an iterative approach on the graph, under some assumptions. The results presented in this section were originally derived in \cite{ParMezRRG1}. 

The key assumption of the RSB cavity method is that there is a one to one correspondence among the pure states before and after graph operations described in Subsection \ref{graph_operations}, at least among the pure states with lowest free-energy. Under this hypothesis, the iteration \eqref{BPEq} is fulfilled within each given pure state.

However, the free-energy shifts due to graph operations may differ from state to state, so after iteration the pure states with the lowest free energy (i.e. the equilibrium states) may be different than the ones before. In this case, we cannot map the old equilibrium states to the new ones by simply applying the iterative rule \eqref{BPEq} for the new cavity field.

Here we present a detailed description of the $1-$RSB cavity method. This kind of RSB pattern is actually the only one that has been implemented numerically in an efficient way \cite{ParMezRRG1,ParMezRRG2,ParMezRRG4,SP}.

\subsection{The $1-$RSB hypothesis}
Here we state the basic assumptions of the $1-$RSB solutions to spin glass in the cavity method, derived from the $1-$RSB theory of fully connected spin glass\cite{VPM}.
\begin{enumerate} 
\item \label{firstHyp} The cavity spins are uncorrelated within each pure state. Given a pure state labeled by the index $\alpha$, after merging $c-1$ cavity vertices in a single new vertex $0$, the corresponding cavity field $h_i^{\alpha}$ is given by:
\begin{equation}
\label{IterOneState}
h^{\alpha}_0=\sum^c_{i=1} u(J_{0,i},h_j)\,,
\end{equation}
where $h^{\alpha}_1,\cdots,h^{\alpha}_{c-1}$ are the cavity field of the merged vertices.
\item \label{secondHyp} The free energies $\{f^{\alpha},\alpha\in \mathbb{N}\}$ of the pure states are independent and identically distributed random variables, with an exponential probability distribution given by 
\begin{equation}
\label{expoDist}
\text{d}\rho(f)= \text{d}f\,\beta x\, \exp\left(\,x\beta (f-f_R)\,\right).
\end{equation}
where $f_R$ is a reference free energy and $x$ is the Parisi 1 RSB order parameter. In the thermodynamic limit, two different pure states, with the same free energy per particle, may have a finite random difference in the total free energy \cite{ParMezRRG1,REM}, so each equilibrium state $\alpha$ has a random probability given by:
\begin{equation} 
\label{randomweight}
w_{\alpha}=\frac{e^{-\beta f^{\alpha}}}{\sum_{\alpha}e^{-\beta f^{\alpha}}}.
\end{equation}
The family $\{ w_{\alpha},\alpha \in\mathbb{N}\}$ is a point process. Note that probabilities $w_{\alpha}$ and $w_{\gamma}$ of two different states are not independent, since the sum over all $(w_{\alpha})$ is normalized:
\begin{equation}
\sum_{\alpha}w_{\alpha}=1.
\end{equation}
The hypothesis of exponential distribution of the free energy is simply derived by analogy with the Parisi 1RSB solution for fully connected spin glass \cite{VPM,REM,DerridaGREM,ASS} and it can be justified by considering the pure states as extremes of the free energy, and with the Gumbel universality class for extremes \cite{26ParMezRRG3,27ParMezRRG3}.
\item \label{thirdHyp} On a given vertex $i$, the population of cavity fields on various states $\{h^{\alpha}_i,\,\alpha\in \mathbb{N}\}$ is a population of independent random variables generated according to the same distribution $\pi_i(h)$. The fact that the cavity fields of different states are independent from each other is the basic hallmark of the PM 1 RSB ansatz (as we have discussed in section \ref{overview}).

Note that, in the RS solution, the probability distribution $p(h)$ describes how the cavity fields are distributed over the realization of the quenched disorder, whilst $\pi_i(h)$ describes the distribution of the cavity fields $h_i$ over the pure states. We will call $\pi_i(h)$ the \emph{one site cavity field distribution}.

 The distribution $\pi_i(h)$ depends on the vertex label $i$. The order parameter of the system is a site-independent functional $\mathbb{P}[\pi(h)]$, that represents the probability that the cavity fields $h^{\alpha}_i$, associated to a randomly picked cavity vertex $i$, is generated with probability distribution $\pi_i(\cdot)=\pi(\cdot)$ (a probability distribution of distributions).
\end{enumerate}

The 1 RSB ansatz, obviously, does not cover the possibility that the cavity fields of different pure states may be correlated. This situation appears in higher orders of RSB.

As we expect, the RS solution can be considered as a special case of the 1$-$RSB solution, where the order parameter distribution $\mathbb{P}$ is a functional Dirac delta around the solution of the BP recursion equation \eqref{recursion}.
\subsection{$1-$RSB equations}
 The aim of this section is obtaining a recursive equation, for the 1$-$RSB order parameter $\mathbb{P}$.

By hypothesis \ref{firstHyp} the BP recursion \eqref{BPEq} is valid within a pure state, so we start by the RS iterative equation for the state-dependent and site-dependent variables, then we get a new iterative equation for the state-independent and site-dependent variables and finally we get the equation for the state-independent and site-independent order parameter.

 Let us denote with $h_1^{\alpha},h_2^{\alpha},\cdots,h_{c-1}^{\alpha}$ the cavity fields corresponding to a given pure state $\alpha$; by hypothesis \ref{firstHyp} they are uncorrelated.

For each state, we repeat the same iteration operation that we have described for the RS solution. We merge the $c-1$ cavity vertices to a new vertex with a new cavity field $h_0^{\alpha}$, given by equation \eqref{IterOneState}. 

The free energy of the cavity system changes after iteration by a quantity $\Delta \phi_{\text{iter}}^{\alpha}$ depending on the cavity fields $\bm{h}=(h_1^{\alpha},h_2^{\alpha},\cdots,h_{c-1}^{\alpha})$ and the couping $\bm{J}$
\begin{multline}
\label{Fiter}
-\beta \Delta \phi^{\alpha}_{\text{iter}}=-\beta \Delta \phi_{\text{iter}}(\bm{h}^{\alpha})\\=\log\left(\,2\cosh\left(\,\sum^{c-1}_{i=1} u(J_{0,i},h^{\alpha}_i)\,\right)\,\right)+\sum^{c-1}_{i=1}\log\left(\, \frac{\cosh \left(\,\beta J_{0,i}\,\right)}{ \cosh\left(\,\beta u(J_{0,i},h^{\alpha}_i)\,\right)}\,\right)\,.
\end{multline}
Note that the new cavity field $h^{\alpha}_0$ and the free energy shift $\Delta F^{\alpha}_{\text{iter}}$ are correlated. 

By the hypothesis \ref{thirdHyp}, the family of pairs $\{\,(h^{\alpha}_0,\Delta \phi_{\text{iter}}^{\alpha}),\alpha\in \mathbb{N}\,\}$ is a family of independent and identically distributed random pairs with distribution $P_0(h_0,\Delta \phi)$, given by:
\begin{equation}
\label{recursion11}
 P_0(h_0,\Delta \phi)=\int \prod^{c-1}_{i=1} [\text{d}h_i \pi_i(h_i)] \delta\left(\Delta \phi-\Delta \phi_{\text{iter}}(\bm{h})\right) \delta\left(\,h_0-\sum^{c-1}_{i=1}u(J_{0,i},h_i)\,\right)
\end{equation}
Let us call $f^{\alpha}$ the free energy of the state $\alpha$ on the cavity graph before the addition of the new spin $\sigma_0$. By hypothesis \ref{secondHyp}, the free energies $f^{\alpha}$ and $f^{\gamma}$ of two different states are independent random variables with the exponential distribution \eqref{expoDist}. 

For each state $\alpha$ the free energy $f^{\alpha}$ and the free energy shift $\Delta f^{\alpha}_{\text{iter}}$ are independent random variables, since the free energy shift depends on the new couplings $J_{0,1},\cdots J_{0,c-1}$ of the added links between the old cavity spins $\sigma_{0,1},\cdots \sigma_{0,c-1}$ and the new cavity spin $\sigma_0$, while the free energy $f^{\alpha}$ depends only on the old couplings that were already present on the cavity graph before iteration.

Let $g^{\alpha}$ be the new free energy in the state $\alpha$, after adding the spin $\sigma_0$ in the iteration
\begin{equation}
g^{\alpha}=f^{\alpha}+\Delta f^{\alpha}.
\end{equation}
The family of the new free energies $\{\,g^{\alpha},\alpha \in \mathbb{N}\,\}$ is obviously a family of independent and identically distributed random variables. 

A standard argument of the cavity method \cite{VPM,ParMezRRG1} proves that the new free energy $g^{\alpha}$ is uncorrelated with the local field $h_0^{\alpha}$, with an exponential distribution as in \ref{secondHyp} (equations (44) \cite{ParMezRRG1}). In particular, the computation of the joint distribution $R_0(h_0,g)$ of the local field and the new free energy yields:
\begin{multline}
\label{independence}
R_0(h_0,g)=\\
\beta x \int \text{d}(\Delta \phi)\,df \,\,e^{\beta x (f-f_{\text{ref}} )} \theta(f_{\text{ref}}-f) P(h_0,\Delta f)\delta(g-f-\Delta \phi)\\
\propto 
\beta x\, e^{\beta x (g-f'_{\text{ref}}) } \pi_0(h_0)
\end{multline}
 where $\pi_0(h_0)$ is the one site cavity field distribution of of the cavity vertex $0$, given by (equation (45) in \cite{ParMezRRG1}):
\begin{equation}
\label{cavityDist1}
\pi_0(h_0)\propto \int \text{d}(\Delta \phi) P(h_0,\Delta F) e^{-\beta x \Delta \phi }\,.
\end{equation}
The symbol $\propto$ means that the left-hand member is proportional to the right-hand member.

The exponential distribution \eqref{expoDist}, then, is stable under iteration. The only effect of the iteration is a shift of the reference free energy.

Note that, the fact that the new free energy and the new cavity field are uncorrelated is a consequence of the exponential distribution \eqref{expoDist} of the free energy.

Substituting the equation \eqref{recursion11} in \eqref{cavityDist1}, one get the iterative equations for the one site probability vertex distributions.
\begin{equation}
\label{recursion12}
\pi_0(h_0)=\frac{1}{Z_{0}}\int \prod^{c-1}_{i=1} [dh_i \pi_i(h_i)] e^{-\beta x \Delta \phi_{\text{iter}}(h_1,\cdots,h_{c-1})} \delta\left(\,h_0-\sum^{c-1}_{i=1}u(J_{0,i},h_i)\,\right)\,,
\end{equation}
where $Z_{0}$ is the constant such as $\int dh\pi_0(h)=1$.

As in the $RS$ solution, we have obtained a set of iterative equations over a population of one site variables, that in this case are the one site probabilities $\pi_i(h)$, that depends on the realization of the disorder and on the vertex.

 The above iterative relation induces a recursion equation on the global probability distribution $\mathbb{P}$ defined in the hypothesis \ref{thirdHyp} stated in the previous subsection.

Let us define the following functional of the old cavity field distribution:
\begin{equation}
\mathfrak{F}^{(1)}[\pi_1,\cdots,\pi_{c-1}]=\frac{1}{Z_{0}}\int \prod^{c-1}_{i=1} [dh_i \pi(h_i)] e^{-\beta\,x \Delta \phi_{\text{iter}}(h_1,\cdots,h_{c-1})} \delta\left(\,h_0-\sum^{c-1}_{i=1}u(J_{0,i},h_i)\,\right)
\end{equation} 
then we formally have:
\begin{equation}
\label{recursion13}
\mathbb{P}^{(1)}[\pi_0]= \overline{\int \prod^{c-1}_{i=1}\big [\,\text{d}\pi_i \mathbb{P}^{(1)}[\pi_i]\,\big]\,\delta\big[\pi_0- \mathfrak{F}^{(1)}[\pi_1,\cdots,\pi_{c-1}]\,\big]}
\end{equation}
where $\delta$ is a Dirac delta function. The above equality means the functional $\mathfrak{F}^{(1)}[\pi_1,\cdots,\pi_{c-1}]$ and the random single site probability $\pi$ are equivalent in distribution. 

The equation \eqref{recursion12} has been deeply studied for all the past decade. The solution has been obtained by population dynamic algorithms on populations of cavity fields $\{h_i^{\alpha},\alpha\in\mathbb{N}\}$ defined for each vertex of a generic sparse graph \cite{ParMezRRG1,ParMezRRG2}. The population dynamic algorithm has been improved in a propagation algorithm, the so-called Survey Propagation \cite{ParMezRRG2}, in order to deal with a given realization of the disorder, without needing to consider the quenched average. The discussion of the population dynamic algorithms and Survey Propagation is beyond the aim of this thesis. Note that if $x=0$ and the one site distributions do not fluctuate from site to site, the iterative 1 RSB equations \eqref{recursion12} recovers the RS recursion equation \eqref{recursion}. 
 \subsection{Variational formulation and free energy}
\label{Var_1RSB}
As in the RS case, we can derive the self-consistency equation \eqref{recursion13} and the free energy by a variational principle on a proper $1-$RSB free energy functional $\Phi_{\text{1RSB}}\left[\mathbb{P},x_1\,\right]$, depending on the Parisi 1 RSB parameter $x_1$ and on the order parameter $\mathbb{P}$. A detailed discussion about these topics is in \cite{ParMezRRG4}

The $1$RSB free energy functional is a generalization of the RS one. 

Within a given pure state $\alpha$, the free energy shift due to vertex addition and ling addition are respectively given by:
\begin{gather}
\begin{multlined}
-\beta\Delta \phi_{\text{vertex}}^{\alpha}=-\beta\Delta \phi_{\text{vertex}}(h_1,\cdots,h_{c})= \phi \left(\,h^{\alpha}_1,\sum^{c}_{i=2} u(J_{0,i},h^{\alpha}_i)\,\right)\\+\sum^c_{i=1}\log\left(\, \frac{\cosh \left(\,\beta J_{0,i}\,\right)}{ \cosh\left(\,\beta u(J_{0,i},h^{\alpha}_i)\,\right)}\,\right)\,,
\end{multlined}\\
-\beta\Delta \phi_{\text{edge}}^{\alpha}=-\beta\Delta \phi_{\text{edge}}(h^{\alpha}_1,h^{\alpha}_2)=\phi(\,h^{\alpha}_1,h^{\alpha}_2)\,.
\end{gather}
where the function $\phi$ on the right.hand side is defined in \eqref{important_phi}.

The total free energy shifts are given by averaging over all the states. Each state $\alpha$ has a random probability $w_{\alpha}$, defined in \eqref{randomweight}, that does not depends on the cavity fields. Thus we get
\begin{equation}
\label{free11}
\Delta F^{(1)}=
\overline{\mathbb{E}_{\mathbb{P},\pi}\left[\,\mathbb{E}_w\left[\sum_{\alpha}w_{\alpha}\Delta \phi_{\text{vertex}}^{\alpha}\right]\right]}
\end{equation}
and
\begin{equation}
\label{fee12}
\Delta F^{(2)}=
\overline{\mathbb{E}_{\mathbb{P},\pi}\left[\,\mathbb{E}_w\left[\sum_{\alpha}w_{\alpha}\Delta \phi_{\text{edge}}^{\alpha}\right]\right]}
\end{equation}
where $\mathbb{E}_w$ is the average with respect the random probabilities, $\overline{\Cdot}$ denotes, as usual, the average over the couplings, and $\mathbb{E}_{\mathbb{P},\pi}$ is the average with respect the field and the single site distribution $\pi$.

The most remarkable property of the point process $\{w_{\alpha},\alpha \in \mathbb{N}\}$ is the \emph{quasi-stationarity} \cite{ASS,ParMezRRG1}. Let $\{a_{\alpha},\alpha\in \mathbb{N}\}$ be a family of independent and identically distributed positive random variables, that are independent of the random weights $w_{\alpha}$. We have the following identity
\begin{equation}
\mathbb{E}_a\left[\,\mathbb{E}_w\left[\sum_{\alpha\in \mathbb{N}}w_{\alpha}a_{\alpha}\right]\,\right]=\left(\mathbb{E}_a\left[\,a^x\,\right]\,\right)^{\frac{1}{x}},
\end{equation}
where $\mathbb{E}_a$ denotes the average over the variables $a_{\alpha}$.

Using the above identity, we can rewrite the two free energy shift as explicit functional of the order parameter $\mathbb{P}$ and the parameter $x$:
\begin{equation}
-\beta \Delta F^{(1)}\\=\frac{1}{x} \overline{\int \prod^{c}_{i=1}\big [\,\text{d}\pi_i \mathbb{P}[\pi_i]\,\big]\,\log \left(\int \prod^{c}_{i=1} [\text{d}h_i \pi_i(h_i)]\,e^{-\beta x \Delta \phi_{\text{vertex}}(h_1,\cdots,h_{c})\,}\right)}
\end{equation}
\begin{equation}
-\beta\Delta F^{(2)}=\frac{1}{x} \overline{\int \prod^{2}_{i=1}\big [\,\text{d}\pi_i \mathbb{P}[\pi_i]\,\big]\,\log \left(\int \prod^{2}_{i=1} [\text{d}h_i \pi_i(h_i)]e^{-\beta x \Delta \phi_{\text{edge}}(h_1,h_2)\,}\right)}\,.
\end{equation}
The 1 RSB variational free energy functional is finally given by:
\begin{equation}
\label{1RSB_varFun}
\Phi_{\text{1RSB}}\left[\,\mathbb{P},x\,\right]=\Delta F^{(1)}-\frac{c}{2}\Delta F^{(1)}
\end{equation}
As in the RS case, the equilibrium free energy is given by
\begin{equation}
f=\underset{\mathbb{P},x}{\max}\,\, \Phi_{\text{1RSB}}\left[\,\mathbb{P},x\,\right].
\end{equation}
For a fixed value of the Parisi 1$-$RSB parameter $x$, the maximum is attained by the solution of the equation \eqref{recursion13}, or by imposing the stationary condition over the order parameter $\mathbb{P}$, under the constrained $ \int \text{d}\mathbb{P}[\pi]=1$:
\begin{gather}
\frac{\delta}{\delta \mathbb{P}[\pi]}\left(\Phi_{\text{1RSB}}\left[\,\mathbb{P},x\,\right]+\mu \left( \int \text{d}\mathbb{P}[\pi]-1\right)\,\right)=0,\\
 \int \text{d}\mathbb{P}[\pi]=1.
\end{gather}
where $\mu$ is a Lagrange multiplier.
\section{Extension of the PM cavity method to many steps RSB}
The PM cavity method can be formally generalized to higher number of steps of Replica Symmetry Breaking.
We start by considering the 2$-$RSB case and then we will present the general case.

As in the 1$-$RSB case, all the manipulations below are based on the assumption that there is a one to one correspondence among the pure states before and after graph operations presented in section \ref{graph_operations}. 

In the 2$-$RSB ansatz, the pure states are assumed to be grouped in \emph{clusters}. A cluster $\alpha$ is a random event that associates, at each cavity vertex $i$, a family of random cavity fields $\{h_i^{(\alpha,\alpha')},\alpha'\in \mathbb{N}\}$ with the property to be exchangeable, i.e. the distribution of the whole family is invariant under permutation of the fields in the family. Two families of cavity fields $\{h_i^{(\alpha,\alpha')},\alpha'\in \mathbb{N}\}$ and $\{h_i^{(\gamma,\gamma')},\gamma'\in \mathbb{N}\}$ are assumed to be independent and identically distributed.

Within a given cluster $\alpha$, a cavity field $h_i^{(\alpha,\alpha')}$ is generated with probability distribution $\pi_i^{(\alpha)}\big(h_i^{(\alpha,\alpha')}\big)$.

The family of marginal distributions associated with each cluster $\{\pi_i^{(\alpha)},\alpha\in\mathbb{N}\}$ is a family of independent and identically distributed random probability distributions. Let $\mathbb{P}_i$ be the distribution of the random probabilities $\pi_i^{\alpha}$.

We associate to each state $(\alpha,\alpha')$ a random free energy $f_{\alpha,\alpha'}=f^{(0)}_{\alpha}+f^{(1)}_{\alpha,\alpha'}$, where $f^{(0)}_{\alpha}$ and $f^{(1)}_{\alpha,\alpha'}$ are independent random variables with exponential distributions, respectively 
\begin{equation}
\label{expoDist21}
\text{d}\rho^{(0)}\big(f^{(0)}\big)= \text{d}f\,\beta x_1 \, e^{\beta x_1\big(f^{(0)}-f^{(0)}_{\text{ref}}\big)}
\end{equation}
and
\begin{equation}
\label{expoDis22}
\text{d}\rho^{(1)}\big(f^{(1)}\big)= \text{d}f\,\beta x_2 \,e^{\beta x_2\big(f^{(1)}-f^{(1)}_{\text{ref}}\big)}
\end{equation}
with $x_2<x_1$. Te free energy contributions $f^{(0)}_{\alpha}$ and $f^{(1)}_{\alpha,\alpha'}$ are independent for different label $\alpha$ and $\alpha'$.

In the 1$-$RSB ansatz, the RS iterative equations \eqref{BPEq} are assumed to be valid within a pure state. In the 2$-$RSB ansatz we assume that the iterative 1$-$RSB iterative equation \eqref{recursion12} holds within a cluster, so we get an equation for the $\alpha-$dependent one site cavity field distribution:
\begin{equation}
\label{recursion21}
\pi^{\alpha}_0(h_0)=\frac{1}{Z_0^{\alpha}}\int \prod^{c-1}_{i=1} [dh_i \pi^{\alpha}(h_i)] e^{-\beta x_1 \Delta \phi_{\text{iter}}(h_1,\cdots,h_{c-1})} \delta\left(\,h_0-\sum^{c-1}_{i=1}u(J_{0,i},h_i)\,\right).
\end{equation}
Where $Z^{\alpha}_0$ is the normalization constant, that depends on the site label $0$ and the cluster $\alpha$:
\begin{equation}
Z_0^{\alpha}=Z_0[\pi^{\alpha}_1,\cdots,\pi^{\alpha}_{c-1}]=\int \prod^{c-1}_{i=1} [dh_i \pi_i^{\alpha}(h_i)] e^{-\beta x_1 \Delta \phi_{\text{iter}}(h_1,\cdots,h_{c-1})}
\end{equation}
Let us also introduce the functional
\begin{equation}
\Delta \Phi^{\alpha}_{\text{iter}}=\Delta \Phi_{\text{iter}}[\pi^{\alpha}_1,\cdots,\pi^{\alpha}_{c-1}]=-\frac{1}{\beta x_1}\log\,Z_0[\pi^{\alpha}_1,\cdots,\pi^{\alpha}_{c-1}]
\end{equation}
The above quantity plays the role in the second step of RSB as the free energy shift \eqref{Fiter} in the 1-RSB.

By proceeding in a similar manner to the 1$-$RSB case, we may obtain an iterative equation for the $\alpha-$ independent one site distribution $\Pi_i[\pi]$:
\begin{equation}
\label{recursion23}
\Pi_0[\pi_0]=\mathfrak{F}^{(2)}[\Pi_1,\cdots,\Pi_{c-1}]
\end{equation}
where
\begin{multline}
\label{recursionF}
\mathfrak{F}^{(2)}[\Pi_1,\cdots,\Pi_{c-1}]\\=\frac{1}{\mathfrak{Z}_0} \int \prod^{c-1}_{i=1}\big [\,\text{d}\pi_i \Pi_i[\pi_i]\,\big] e^{-\beta x_2 \Delta \Phi_{\text{iter}}[\pi_1,\cdots,\pi_{c-1}]}\,\delta\big[\pi_0- \mathfrak{F}_{1}[\pi_1,\cdots,\pi_{c-1}]\,\big]
\end{multline}
and $\mathfrak{Z}_0$ is the normalization constant.

We finally get a recursion equation for the non-random 2$-$RSB order parameter
\begin{equation}
\mathbb{P}^{(2)}[\Pi_0]=\overline{\int \prod^{c-1}_{i=1}\big [\,\text{d}\Pi_i \mathbb{P}^{(2)}[\Pi_i]\,\big] \delta\big[\Pi_0- \mathfrak{F}^{(2)}[\Pi_1,\cdots,\Pi_{c-1}]\,\big]}
\end{equation}
Such procedure can now be easily generalized to more step of RSB.

In the case of $k$ step of RSB ($k-$RSB), the states are assumed to be grouped in a hierarchical structure of clusters \cite{VPM}.

We label the bigger clusters with an index $\alpha_0\in \mathbb{N}$; ech sub-cluster inside a cluster $\alpha_0$ is labeled by two indices $(\alpha_0,\alpha_1)$, with $\alpha_1\in \mathbb{N}$; sub-sub-cluster inside a sub-cluster $(\alpha_0,\alpha_1)$ are labeled by $(\alpha_0,\alpha_1,\alpha_2)$, with $\alpha_2\in \mathbb{N}$ and so on.

The $n-$RSB iterative equations, for $n\leq k$, are valid within each cluster of level $n$ of clustering (the bigger clusters are the highest levels and the pure states are the lower level).

Given the functional $\Delta\Phi^{(k-1)}$, at some step $k-1$ of RSB, then: 
\begin{multline}
\label{recursion33}
-\beta\Delta \Phi^{(k)}_{\text{iter}}\left[\Pi^{(k)}_1,\cdots,\Pi^{(k)}_{c-1}\right]\\=\log\,\int \prod^{c-1}_{i=1} \left[\,\,\text{d}\Pi^{(k-1)}_i \Pi_i^{(k)}\left[\Pi^{(k-1)}_i\right]\,\,\right] e^{-\beta x_{k-1} \Delta \Phi^{(k-1)}_{\text{iter}}\left[\Pi^{(k-1)}_1,\cdots,\Pi^{(k-1)}_{c-1}\right]} \,.
\end{multline}
and
\begin{multline}
\mathfrak{F}^{(k)}\left[\Pi^{(k)}_1,\cdots,\Pi^{(k)}_{c-1}\right]\\=e^{\beta x_{k-1}\Delta \Phi^{(k)}_{\text{iter}}\left[\Pi^{(k)}_1,\cdots,\Pi^{(k)}_{c-1}\right]}\int \prod^{c-1}_{i=1}\left[\,\,\text{d}\Pi^{(k-1)}_i \Pi_i^{(k)}\left[\Pi^{(k-1)}_i\right]\,\,\right] e^{-\beta x_{k-1} \Delta \Phi^{(k-1)}_{\text{iter}}\left[\Pi^{(k-1)}_1,\cdots,\Pi^{(k-1)}_{c-1}\right]}\,.
\end{multline}
The equation of the $k-$RSB order parameter is finally given by:
\begin{equation}
\mathbb{P}^{(k)}\left[\Pi_0^{(k)}\right]=\overline{\int \prod^{c-1}_{i=1}\left [\,\,\text{d}\Pi^{(k)}_i \mathbb{P}^{(k)}\left[\Pi^{(k)}_i\right]\,\,\right] \delta\left[\Pi^{(k)}_0- \mathfrak{F}^{(k)}\left[\Pi^{(k)}_1,\cdots,\Pi^{(k)}_{c-1}\right]\,\right]}
\end{equation}
As we have already discussed, the $k-$RSB theory involves an order parameter that is a distribution of distributions of distributions...

A population dynamic algorithm \cite{ParMezRRG1,ParMezRRG2,MonaMez} is actually intractable already at the $2-$RSB level. Moreover the $k\to \infty$ limit of the $k-$RSB equations cannot be obtained with the cavity method and the order parameter $\mathbb{P}^{(k)}$ is not well-defined in this limit. 

In the next chapter we will provide a different approach that allows to obtain full$-$RSB theory

\chapter{The full Replica Symmetry Breaking free energy}
\label{C4}
\thispagestyle{empty}
In this chapter we present the main results of this thesis: the full-RSB formula for Ising spin glass on random regular graph.

We start by reformulating the discrete-RSB scheme of the previous chapter in a martingale formalism \cite{YoRev}. The power of the martingale approach becomes clear in section \ref{Cont_extension}, where we obtain the full-RSB free energy functional, by using a variational representation principle à la Boué-Dupuis \cite{BoueDepuis}.

In the last section we reduce the problem by considering only a certain class of martingale. 

Under some restriction on the parameters of the theory, the full$-$RSB formula of the free energy may recover the $k$-RSB formula, for any number of step of $k$. This implies that 
This chapter is reprinted from \cite{MyArty}.

\section{Martingale formulation of the discrete-RSB theory}
In this section, we describe the discrete-RSB scheme for this model. 

In the first subsection, we define the $r$$-$RSB cavity free energy functional for sparse graphs. We provide an accurate description of the Parisi RSB ansatz for diluted models from the point of view of pure states probabilities and the cavity field distributions \cite{Panchenko2015,Panchenko2016}; we recall the notion of discrete Ruelle random probability cascade, or GREM \cite{DerridaGREM, Ruelle, ASS, Arguin}. 

In the second subsection, we recast the progressive steps of replica symmetry breaking in a discrete time recursive map, that generalizes the Parisi replica computation for the SK models \cite{Par1_0,Par1_1,Par1_2, VPM}. 

In the third section, we prove that the free energy obtained by the recursive map is equivalent to the one obtained with the cavity method.

In the last subsection, we derive a new variational representation of the $r$$-$RSB free energy, using a progressive iteration of the Gibbs variational principle: the iterated Gibbs variational principle.

The iterated Gibbs variational principle is the basic tool in the derivation of the full$-$RSB theory.
\subsection{Pure states distributions}
\label{finitersbpara}
Let us assume that the system has many equilibrium states, that are labeled by an index $\bm{\alpha}$. The cavity spins are uncorrelated within a given state $\bm{\alpha}$, leading to a factorized cavity spins distribution, that depends on the label $\bm{\alpha}$. Since each spin $\sigma_i$, with $1\leq i \leq N$, can take only two values, the cavity probability distribution, for a given state $\bm{\alpha}$, depends only on the cavity magnetization $m_{i|\bm{\alpha}}$ or, equivalently, on the cavity field $h_{i|\bm{\alpha}}=1/\beta\,\,\text{atanh}\, m_{i|\bm{\alpha}}$. The cavity fields depend on the random couplings, so they are also random quantities and their distribution is not known a priori. The equilibrium free energy is finally given by the Gibbs state, that is a statistical mixture of the states $\bm{\alpha}$.

The cavity free energy functional is given by \cite{ParisiMarginal}
\begin{equation}
\label{cavityFE}
\Phi= \overline{\mathbb{E}\log\left(\frac{\sum_{\bm{\alpha}}\xi_{\bm{\alpha}}\Delta^{\text{(v)}}(\bm{J},\bm{h}_{\bm{\alpha}})}{\sum_{\bm{\alpha}}\xi_{\bm{\alpha}}}\right)}-\frac{c}{2}\overline{\mathbb{E} \log\left(\frac{\sum_{\bm{\alpha}}\xi_{\bm{\alpha}}\Delta^{\text{(e)}}(J_{1,2},h_{1|\bm{\alpha}},h_{2|\bm{\alpha}})}{\sum_{\bm{\alpha}}\xi_{\bm{\alpha}}}\right)}\,.
\end{equation}
Here $\bm{J}=(J_{0,1},J_{0,2},\cdots,J_{0,c})$ and $\bm{h}_{\bm{\alpha}}=(h_{1|\bm{\alpha}},h_{2|\bm{\alpha}},\cdots,h_{c|\bm{\alpha}})$ and the variables $\{\xi_{\bm{\alpha}}\}_{\bm{\alpha}}$ are the (non-normalized ) statistical weights of the states. All the $c+2$ couplings in the functional \eqref{cavityFE} are independent.

The functions $\Delta^{\text{(v)}}$ and $\Delta^{\text{(e)}}$ are defined as:

\begin{gather}
\label{deltaterms}
\Delta^{\text{(v)}}(\bm{J},\bm{h}_{\bm{\alpha}})=\cosh\big(\beta U_c(\bm{J},\bm{h}_{\bm{\alpha}})\,\big)\prod^c_{i=1}\frac{\cosh(\beta J_{0,i})}{\cosh\big(\,\beta u(J_{0,i},h_{i|\bm{\alpha}})\,\big)}\,,\\
\Delta^{\text{(e)}}(J_{1,2},h_{1|\bm{\alpha}},h_{2|\bm{\alpha}})=\cosh(\beta J_{1,2})\big(1+\tanh(\beta J_{1,2})\tanh(\beta h_{1|\bm{\alpha}})\tanh(\beta h_{2|\bm{\alpha}})\,\big)\,,
\end{gather}

with
\begin{gather}
\label{U}
u(J_{1,2},h_{2|\bm{\alpha}})=\frac{1}{\beta}\text{atanh}\big(\tanh(\beta J_{1,2})\tanh(\beta h_{2|\bm{\alpha}})\,\big)\,,\\
U_c(\bm{J},\bm{h}_{\bm{\alpha}})=\sum^c_{i=1}u(J_{0,i},h_{i|\bm{\alpha}})\,.
\end{gather}
The overline $\overline{\Cdot}$ stands for the average over the quenched disorder and the expectation value $\mathbb{E}$ is over all the cavity fields and the random weights $\{\xi_{\bm{\alpha}}\}_{\bm{\alpha}}$.

The contribution to the free energy \eqref{cavityFE} that depends on the function $\Delta^{\text{(v)}}$ is usually called \textit{vertex contribution}, whilst the contribution depending on $\Delta^{\text{(e)}}$ is the \textit{edge contribution}.

The equilibrium free energy is, formally given by \cite{ParisiMarginal}
\begin{equation}
\label{EqFreeEnRPC}
-\beta F=\min_{\mathbb{P}(\{\xi _{\bm{\alpha}}\}_{\bm{\alpha}},\{h_{i|\bm{\alpha}}\}_{i,\bm{\alpha}})}\Phi\,,
\end{equation}
where the supremum must be take over the set of all the possible probability distributions of the cavity fields $\{h_{i|\bm{\alpha}}\}_{i,\bm{\alpha}}$ and the random weight of the state $\{\xi _{\bm{\alpha}}\}_{\bm{\alpha}}$.
This set is huge and too general, then further assumptions are needed to face up the problem.

In the Parisi-Mézard RSB ansatz, the sum $\sum_{\bm{\alpha}}\Cdot$ runs over the leaves of an infinitary rooted taxonomic tree and $\bm{\xi}:=\{\xi_{\bm{\alpha}}\}_{\bm{\alpha}}$ is a collection of positive random variables generated by a Ruelle random probability cascade defined along the tree; for each site $i$, the set $\{h_{i|\bm{\alpha}}\}_{\bm{\alpha}}$ is a random hierarchical population of fields generated along the same tree. Such hierarchical populations are independent for different site index $i$ and identically distributed. 

More specifically, the $r$$-$RSB ansatz, for a finite integer $r$, is defined as a generalization of the Aizenman-Sims-Starr (ASS) \cite{ASS}
 construction of the hierarchal Random Overlap Structucture (ROSt) for the SK model \cite{Panchenko2016,PanchenkoExchange}. 

Let $X$ be a non-decreasing sequence of $r+2$ numbers, for some $r\in\mathbb{N}$:
\begin{equation}
\label{X_sequence}
0=x_0 \leq x_1\leq \cdots\leq x_r \leq x_{r+1}=1\,.
\end{equation}
We first define a Poisson point process $\bm{\xi}^{(1)}:=\{\xi_{\alpha_1}^{(1)};\,\alpha_1 \in \mathbb{N}\}$ on $[0,\infty)$, with density given by $\rho(d\xi)=x_1 \xi^{-x_1-1} d\xi$; such a process is usually referred to as $REM_{x_1}$.

Next, for each $\alpha_1$, a $REM_{x_2}$ process $\bm{\xi}_{\alpha_1}^{(2)}:=\{\xi_{(\alpha_1,\alpha_2)}^{(2)};\,\alpha_2 \in \mathbb{N}\}$ is generated, independently for different values of $\alpha_1$. We then iterate the procedure: at the $n-$th level, up to $n=r$, independent realizations of the $REM_{x_n}$ process $\bm{\xi}_{(\alpha_1,\cdots,\alpha_{n-1})}^{(n)}:=\{\xi_{(\alpha_1,\cdots,\alpha_{n-1},\alpha_{n})}^{(n)};\,\alpha_{n} \in \mathbb{N}\}$ are generated for each of the distinct values of the multi-index $(\alpha_1,\alpha_2,\cdots,\alpha_{n-1})$ of the previous iteration.
Let also introduce the quantity $\xi_{\star}=1$.

Such structure defines an infinitary rooted taxonomic tree of depth $r$, with the vertex set given by
\begin{equation}
\mathcal{A}=\{\star\}\cup\mathbb{N}^1\cup\mathbb{N}^2\cup\cdots\cup\mathbb{N}^r
\end{equation}
with each vertex $(\alpha_1,\cdots,\alpha_{n-1})$ branching to the vertices $(\alpha_1,\cdots,\alpha_{n-1},\alpha_n)$, for all $\alpha_n\in \mathbb{N}$. We denote by $|\bm{\alpha}|$ the level, i.e. the lenght, of the multi-index $\bm{\alpha}\in\mathcal{A}$, with $|\star|=0$.

Each $\bm{\alpha}=(\alpha_1,\alpha_2,\cdots,\alpha_r)\in\mathbb{N}^r$, at the boundary, identifies a path along the tree, defined as:
\begin{equation}
\bm{\alpha}\mapsto p(\bm{\alpha})=\left\{\,\star,(\alpha_1),(\alpha_1,\alpha_2),\cdots,(\alpha_1,\alpha_2,\cdots,\alpha_r)\,\right\}.
\end{equation}
The vertex $\star$ is the starting point of all the paths. 

The $r-$step Ruelle random probability cascade, for the sequence $X$, or GREM$_X$, is then defined as the point process $\{\xi_{\bm{\alpha},r}\}_{\bm{\alpha}\in\mathbb{N}^r}$ such that:
\begin{equation}
\xi_{\bm{\alpha},r}=\prod_{\bm{\beta}\in p(\bm{\alpha})}\xi_{\bm{\beta}}^{(|\bm{\beta}|)}=\xi_{\star}^{(0)}\xi_{(\alpha_1)}^{(1)}\xi_{(\alpha_1,\alpha_2)}^{(2)}\cdots \xi_{(\alpha_1,\alpha_2,\cdots,\alpha_r)}^{(r)}.
\end{equation}

Note that a rigorous definition of the Ruelle probability cascade point process requires the reordering, for each level $0\leq k\leq r+1$, of the random variables, generated in $REM_{x_k}$, in a decreasing order \cite{DerridaGREM, Ruelle, ASS, Arguin, Panchenko2015}.

For any given site $i$, the population of cavity fields $\{h_{i|\bm{\alpha},r}\}_{\bm{\alpha}\in\mathbb{N}^r}$ is a random array, that is assumed to be independent of the random weights $\{\xi_{\bm{\alpha},r}\}_{\bm{\alpha}\in\mathbb{N}^r}$
 and \emph{hierarchical exchangeable}, i.e. the distribution is invariant under permutations that preserve the tree structure; such assumption is the key of the Parisi-Mézard ansatz \cite{ParMezRRG1,ParMezRRG2,ParMezRRG4,MonaMez,Panchenko2015,Panchenko2016,FranzLeone,FranzLeone2,ParisiMarginal} and it turns out to be exact, assuming the validity of the Ghirlanda Guerra identities \cite{PanchenkoExchange}. 

Furthermore, by general argument, we can safely argue that all the cavity fields have zero mean and are almost surely bounded:

\begin{equation}
\mathbb{E}[h_{i|\bm{\alpha},r}]=0
\end{equation}
\begin{equation}
\mathbb{E}[|h_{i|\bm{\alpha},r}|]<\infty\quad \text{for all allowed $i$ and $\bm{\alpha}$}\,.
\end{equation}

In the ASS hierarchal ROSt \cite{ASS}, the populations of cavity fields are generated by defining, independently for each site $i$, a set of independent Gaussian variables, labelled by the vertices of the taxonomic tree $\mathcal{A}$, and representing each cavity fields $h_{i|\bm{\alpha},r}$ by the sum over the Gaussian variables corresponding to the vertices of the path $p(\bm{\alpha})\subset\mathcal{A}$.

Gaussianity is too restrictive for the actual model, and a more general distribution must be considered.

A more generic hierarchical exchangeable random array can always be represented by the hierarchical version of the Aldous-Hoover theorem, presented in \cite{Austin,AustinPanchenko}.

As in the ASS work, for any given index $i$, let $\{W^{(|\bm{\alpha}|)}_{i|\bm{\alpha}}\}_{\bm{\alpha}\in \mathcal{A}}$ be a collection of independent and identical normal distributed random variables\footnote{In the original works by Austin and Panchenko \cite{Austin,AustinPanchenko}, a random array is generated by a function of unifom random variables on $[0,1]$. A uniform random variable, however, can be generated in distribution as a function of a Gaussian variable, than the representation presented here is equivalent to Austin and Panchenko representation. }, labelled by the vertices of the taxonomic tree $\mathcal{A}$, and consider a measurable function $h:\mathbb{R}^{r+1}\to \mathbb{R}$, which we will refer to as \emph{cavity field functional}. The cavity field population, at the site $i$, can be generated by presenting each cavity fields $h_{i|\bm{\alpha},r}$ as follow:
\begin{equation}
\label{AustinRep}
h_{i|\bm{\alpha},r}=h_r\big(\,\big\{\,W^{(|\bm{\beta}|)}_{i|\bm{\beta}}\big\}_{\bm{\beta}\in p(\bm{\alpha})}\,\,\big)=h_r\big(\,W^{(0)}_{i|\star},W^{(1)}_{i|(\alpha_1)},W^{(2)}_{i|(\alpha_1,\alpha_2)},\cdots,W^{(r)}_{i|(\alpha_1,\alpha_2,\cdots,\alpha_r)}\,\big)\,.
\end{equation}
 
The variable $W^{(0)}_{i|\star}$ is the root random variable of the site $i$ and it is shared amongst all the $\bm{\alpha}$s. The collections $\big\{\,W^{(|\bm{\beta}|)}_{i|\bm{\beta}}\big\}_{\bm{\beta}\in p(\bm{\alpha})}$ are independent for different sites $i$.
 
Taking the average over all the random quantities, the functional $\Phi$ will depends only on the sequence $X$ and on the cavity field functional $h$. The equilibrium free energy is given by the extremizing the functional $\Phi$ with respect to such two parameters. 

The cavity field functional is the actual order parameter of the model and encodes entirely the Parisi-Mézard ansatz for the cavity fields distributions inside the pure states \cite{Panchenko2016}, as shown for the $1-$RSB case in subsection \ref{appendix}.

The cavity field functional turns out to be a handier order parameter than the recursive tower of distributions on the set of distributions presented in the Parisi-Mézard original works \cite{ParMezRRG1,ParMezRRG2,ParMezRRG4,MonaMez} and can be easily extended to the full$-$RSB case. It is worth noting, however, that the representation \eqref{AustinRep} is redundant; indeed there are many choices of the function $h$ that will produce the same array in distribution \cite{Panchenko2016}.

If the cavity field functional is linear, the representation \eqref{AustinRep} recovers the ASS hierarchal ROSt scheme. As discussed in the next subsection, in case of linearity, or additive separability at least, the free energy can be represented as the solution of a proper partial (integro-)differential equation, like the Parisi solution of the SK model \cite{Par1_1,VPM}.
Additive separability, however, fails to fit the results emerging at $1-$RSB levels \cite{ParMezRRG1,ParMezRRG2,ParMezRRG4,MonaMez}. 

As we shall see below, the martingale approach to the cavity method allows dealing with a generic cavity field functional $h$, leading to a well-defined full$-$RSB theory, with an explicit definition of the order parameter, an explicit representation of the functional \eqref{cavityFE} and a proper self-consistency mean-field equation.

Note that, for a fixed state $\bm{\alpha}$ and $i$, the distribution of the $r+1$ random variables $\{W^{(|\bm{\beta}|)}_{i|\bm{\beta}}\}_{\bm{\beta}\in p(\bm{\alpha})}$ does not depend explicitly on the multi-index $\bm{\alpha}$, and, in the following, we will drop it away without ambiguities:
\begin{equation}
\label{realizationfixed}
\{W^{(|\bm{\beta}|)}_{i|\bm{\beta}}\}_{\bm{\beta}\in p(\bm{\alpha})}\longrightarrow \{W^{(m)}_i\}_{m\leq r}:= \,\,\{W_i^{(0)},W_i^{(1)},W_i^{(2)},\cdots,W_i^{(r)}\}\,,\\
\end{equation}

We will also indicate with $\{W^{(m)}_i\}_{m\leq n}$ the set of all the variables $W_i^{(m)}$, along a given path, from the level $m=0$, to the level $m=n\leq r$:
\begin{equation}
\begin{gathered}
\label{realizationfixed2}
 \{W^{(m)}_i\}_{m\leq n}:= \,\,\{W_i^{(0)},W_i^{(1)},W_i^{(2)},\cdots,W_i^{(n)}\}\,.
\end{gathered}
\end{equation}

We have not defined the probability setting which the cavity field functional is defined on. Let us consider the space $\Omega_r:=\mathbb{R}^{r+1}$ as the sample space of the random variables $\{W^{(m)}\}_{m\leq r}$, endowed with the Borel $\sigma -$algebra $\mathcal{B}_r$ and with the filtration $\{\mathcal{B}^{W}_n\}_{n\leq r}$ such as, for each $0\leq n\leq r$, $\mathcal{B}^{W}_n$ is the $\sigma-$algebra generated by the random variables $\{W^{(m)}\}_{m\leq n}$ (natural filtration)\cite{Billingsley}. Let $\nu$ denote the one dimensional normal distribution and $\mathbb{W}_r:=\nu^{\otimes r}$ be the product probability measure of $r$ normal distribution on $(\Omega_r,\mathcal{B}_r)$. In this formalism, the cavity field functional is a real-valued $\mathcal{B}_r-$measurable function $h:\Omega_r\to \mathbb{R}$.

 We also define the probability spaces $(\Omega_r,\mathcal{B}_r)^{\otimes 2}$ and $(\Omega_r,\mathcal{B}_r)^{\otimes c}$, given, respectively, by the $2-$fold and $c-$fold Cartesian product of the probability space $(\Omega_r,\mathcal{B}_r)$, together with the respectively filtrations $\{(\mathcal{B}^{W}_n)^{\otimes 2}\}_{n\leq r} $ and $\{(\mathcal{B}^{W}_n)^{\otimes c}\}_{n\leq r}$.

\subsection{Composition of non-linear expectation values}

The average in the vertex and the edge contributions of the free energy \eqref{cavityFE} over the random weights $\{\xi_{\bm{\alpha}}\}_{\bm{\alpha}}$ can be evaluated by exploiting the quasi-stationarity property of Ruelle RPC under the cavity dynamics \cite{ASS}. In particular, the average of a population of hierarchical random variables, weighted by the Ruelle random probability cascade configurations $\{\xi_{\bm{\alpha},r}\}_{\bm{\alpha}\in\mathbb{N}^r}$, is equivalent to a recursive composition of non-linear expectation values over only such variables, so we can get rid of the cumbersome random weights.

By this property, we can compute the average over all the states by considering only one path on the taxonomic tree $\mathcal{A}$, so that we can omit the state label $\alpha$ in our computation. 

The edge and the vertex free energy contributions have a quite similar form; then, for simplicity, we will use a unique notation representing both the cases. 

In the rest of the chapter, the edge/vertex superscript $\Cdot^{\text{(e/v)}}$ will denote that a given result must be considered both for the two contributions. The symbol $(2/c)$ will denote that one has to consider $2$ or $c$ variables respectively for the edge and vertex contributions. The boldface symbols $\bm{J}$, $\bm{h}_{\bm{\alpha},r}$ and $\bm{W}^{(m)}$represent arrays of $2$ or $c$ independent random variables, one for each site which the considered function depends on, according to the definitions \eqref{deltaterms}. The quantities without the edge/vertex superscript have the same probability law in both the contributions.

The vertex and the edge contributions satisfies the following identity:
\begin{equation}
\mathbb{E}\log\left(\frac{\sum_{\bm{\alpha}}\xi_{\bm{\alpha}}\Delta^{\text{(e/v)}}(\bm{J},\bm{h}_{\bm{\alpha},r})}{\sum_{\bm{\alpha}}\xi_{\bm{\alpha}}}\right)= \int_{\mathbb{R}^{(2/c)}}\left(\prod^{(2/c)}_{i=1} d\nu\big(W^{(0)}_{i}\big)\,\right) \, \phi^{\text{(e/v)}}_0\big(\bm{J},\bm{W}^{(0)}\big)\,.
\end{equation}
The function $\phi^{\text{(e/v)}}_0\big(\bm{J},\bm{W}^{(0)}\big)$ is obtained from the backward map, with starting condition given by
\begin{equation}
\label{start}
\phi^{\text{(e/v)}}_{r}\big(\,\bm{J},\big\{\bm{W}^{(m)}\big\}_{m\leq r}\,\big)\\=\log\left(\,\Delta^{\text{(e/v)}}\big(\,\bm{J},\bm{h}_r\big(\,\big\{\bm{W}^{(m)}\big\}_{m\leq r}\big)\,\big)\,\right)\,,
\end{equation}
and the recursion
\begin{multline}
\label{start1}
\phi^{\text{(e/v)}}_{n}\big(\,\bm{J},\big\{\bm{W}^{(m)}\big\}_{m\leq n}\,\big)\\
=\frac{1}{x_{n+1}}\log \mathbb{E}_{\mathbb{W}_r^{\otimes (2/c)}}\left[\,\exp\left(\,x_{n+1} \phi^{\text{(e/v)}}_{n+1}\big(\,\bm{J},\big\{\bm{W}^{(m)}\big\}_{m\leq n+1}\,\big)\,\right)\,\big|\big\{\bm{W}^{(m)}\big\}_{m\leq n}\right]\\\text{for} 1\leq n\leq r-1
\end{multline}
and finally
\begin{equation}
\label{start2}
\phi^{\text{(e/v)}}_{0}(\bm{J},\bm{W}^{(0)}\,)=\frac{1}{x_{1}}\log \mathbb{E}_{\mathbb{W}_r^{\otimes (2/c)}}\left[\,\exp\left(\,x_{1} \phi^{\text{(e/v)}}_{1}\big(\bm{J},\bm{W}^{(0)},\,\bm{W}^{(1)}\,\big)\,\right)\,\big|\bm{W}^{(0)}\right]\,,
\end{equation}
where
\begin{equation}
\bm{h}_r\big(\,\big\{\bm{W}^{(m)}\big\}_{m\leq r}\big)=\left(\,h_1\big(\,\big\{W_1^{(m)}\big\}_{m\leq r}\big),\,\cdots\,,h_{c}\big(\,\big\{W_c^{(m)}\big\}_{m\leq r}\big)\,\right)\,,
\end{equation}
and
\begin{equation}
\bm{h}_r\big(\,\big\{\bm{W}^{(m)}\big\}_{m\leq r}\big)=\left(\,h_1\big(\,\big\{W_1^{(m)}\big\}_{m\leq r}\big),\,h_{2}\big(\,\big\{W_2^{(m)}\big\}_{m\leq r}\big)\,\right)\,,
\end{equation}
respectively for the vertex and adge contribution.

The symbol $\mathbb{E}_{\mathbb{W}_r^{\otimes (2/c)}}[\cdot|\{\bm{W}^{(m)}\}_{m\leq n}]$ is the expectation over the random variables corresponding to the last $r-n$ steps $\bm{W}^{(n+1)},\bm{W}^{(n+2)},\cdots,\bm{W}^{(r)}$, taking the values of the random variables $\bm{W}^{(0)},\bm{W}^{(1)},\cdots,\bm{W}^{(n)}$ fixed. The subscript $\mathbb{W}_r^{\otimes (2/c)}$ means that the expectation value is with respect the probability measure given by the $2-$fold or $c-$fold (according to the case) product of the probability measure $\mathbb{W}_r$. 

The functional $\phi^{\text{(e/v)}}_{n}$, for each level $0\leq n\leq r$, depends only on the first $n$ random variables $\{\bm{W}^{(m)}\}_{m\leq n}$, so the process $\phi^{\text{(e/v)}}:=\{\phi^{\text{(e/v)}}_{n}(\bm{J},\{\bm{W}^{(m)}\}_{m\leq n}\,)\}_{1\leq n\leq r}$ is adapted (or non-anticipative) to the natural filtration $\{(\mathcal{B}^{W}_n)^{\otimes (2/c)}\}_{n\leq r}$. 

The process $\phi^{\text{(e/v)}}$ will be called \emph{edge/vertex free energy process}. It is easy to show that the free energy process is a supermartingale \cite{YoRev}. 

For each level n, with $0\leq n\leq r$, we calls the the first $n$ random variables $\{\bm{W}^{(m)}\}_{m\leq n}$ as \emph{past random variables}.

Let us define also the $r-$steps free energy stochastic process $\phi$ such that:
\begin{equation}
\phi=\phi^{\text{(v)}}-\frac{c}{2}\phi^{\text{(e)}}
\end{equation}

Note that the process $\phi$ depends on the variables $\{\bm{W}^{(m)}\}_{m\leq r}$ through the cavity field functional $h_r$.

This notation can be applied to the Parisi solution of the SK model \cite{VPM} by using the ASS construction. The cavity field functional in the Parisi solution is linear, so the free energy process is Markovian and the functional $\phi_n$ depends on the random variables $\{\bm{W}^{(m)}\}_{m\leq n}$ only through the linear combination
\begin{equation}
h_n=\sum^n_{m=1}\sqrt{q_{m}-q_{m-1}} \,\bm{W}^{(m)}\,,
\end{equation}
where $q_1,q_2,\cdots, q_r$ are the overlaps \cite{VPM,ASS}. 

In the case of Markovianity, the functional $\phi_n$ is actually a function of one variable, for all the levels $0\leq n\leq r$ and for any number of RSB steps $r\in\mathbb{N}$. The expectation $\mathbb{E}_{\mathbb{W}_r^{\otimes (2/c)}}[\cdot|\{\bm{W}^{(m)}\}_{m\leq n}]$ in \eqref{start1} can be replaced by the expectation over $h_{n+1}$, conditionally to $h_n$. As a consequence, for each level $0\leq n\leq r$, the expectation value of $\phi_{n+1}$ can be evaluated by the Kolmogorov backward equation, which is a deterministic (non-random) partial differential equation (PDE) \cite{YoRev}. 

The function $\phi_n$, thus, can be represented as the solution at time $q_{n}$ of a proper PDE, starting from $\phi_{n+1}$ at time $q_{n+1}$. The juxtaposition of such PDEs gives a "continuous version" of the iteration \eqref{start1}, that is the Parisi antiparabolic PDE in the $r\to \infty$ limit\cite{VPM}.

A similar construction can be generalized to a wider class of cavity field functional $h_r:\Omega_r\to \mathbb{R}$, provided that, at least in the $r\to\infty$ limit, the process $H:=\{H_n(\{W^{(m)}\}_{m\leq n})\}_{l\leq r}$, defined as
\begin{equation} 
H_r(\{W^{(m)}\}_{m\leq n})=h_r(\{W^{(m)}\}_{m\leq n})
\end{equation}
and
\begin{equation} 
H_n(\{W^{(m)}\}_{m\leq n})=\mathbb{E}_{\mathbb{W}_r^{\otimes (2/c)}}[h_r(\{W^{(m)}\}_{m\leq n})\,|\{W^{(m)}\}_{m\leq n}]\,,
\end{equation}
is a Markov martingale. In this case, a Parisi-like equation can be achieved from the master equation of the process $H$ \cite{YoRev}.

In a more general case, the free energy process is not Markovian, i.e., for each level $n$, the functional $\phi_n$ depends on the specific values of all the past variables of the list $\{\bm{W}^{(m)}\}_{m\leq n}$. Non-Markovianity is the basic difference with the replica symmetry breaking scheme in the fully connected model \cite{VPM,ParisiMarginal}. 

Because non-Markovianity, we cannot get rid of the randomness represented by the past variables and the free energy process cannot be evaluated by a deterministic PDE. Moreover, in the full$-$RSB limit, i.e. in the $r\to \infty$ limit, the free energy process is a functional, depending on an infinite number of variables.

 One may consider a functional extension of the Parisi PDE, using the recent results about the functional It\^o calculus and functional Kolmogorov equations \cite{Dupire,ContFur}. However, a continuous version of the iteration \eqref{start1} has a complicated dependence on the cavity field functional, so computing the first variation of the functionals $\phi^{\text{(e/v)}}_{0}$ with respect the cavity field functional is a quite tricky task with this approach.

In the next section, we introduce an auxiliary variational approach that allows evaluating the map $\phi_r\mapsto \phi_0$, without any iteration procedure as in \eqref{start1}.
 
In the following, the dependence of the process $\phi^{\text{(e)}}$ and $\phi^{\text{(v)}}$ on the random couplings will be omitted for convenience; in all the equations below one has to consider a particular realization of the random couplings. 
\subsection{Equivalence with the cavity method}
\label{appendix}
In this subsection, we consider the case of the 1$-$RSB solution. We show that the variational problem on the functional \eqref{cavityFE} with respect the cavity field functional $h$ defined in\eqref{AustinRep}, together with the ansatzes described in the previous subsections, is equivalent to the $1-$RSB cavity method described in the subsection \ref{1RSBcav}

Let us define the functions
\begin{equation}
F^{\text{(v)}}(h_1,\cdots,h_c\,)=\log\Delta^{\text{(v)}}(J_{0,1},\cdots,J_{0,c},h_1,\cdots,h_c\,),\\
\end{equation}
and
\begin{equation}
F^{\text{(e)}}(h_1,h_2\,)=\log\,\Delta^{\text{(e)}}(J_{1,2},h_{1},h_{2})\,.
\end{equation}
For convenience, we omit the dependence on the coupling.

Putting $r=1$, the cavity field functional \eqref{AustinRep} is a measurable function of two independent normal random variables $W^{(0)}$ and $W^{(1)}$:
\begin{equation}
\{W^{(0)},W^{(1)}\}\mapsto h\big(\,W^{(0)},W^{(1)}\,\big).
\end{equation}
The edge and vertex contributions to the free energy are given by a single iteration of the iterative rule \eqref{start1}:
\begin{multline}
\label{start1RSB}
\phi^{\text{(e/v)}}_{0}\big(\bm{J},\,\bm{W}^{(0)}\,\big)=\frac{1}{x_{1}}\log \mathbb{E}_{\mathbb{W}_1^{\otimes (2/c)}}\left[\,\exp\left(\,x_{1} \phi^{\text{(e/v)}}_{1}\big(\bm{J},\bm{W}^{(0)},\,\bm{W}^{(1)}\,\big)\,\right)\,\big|\bm{W}^{(0)}\right]=\\
\frac{1}{x_{1}}\log\left(\int \left(\prod^{(2/c)}_{i=1} d\nu\big(W^{(1)}_{i}\big)\,\right) e^{\,x_{1}\, F^{\text{(e/v)}}\left(\bm{J},\bm{h}\big(\,\bm{W}^{(0)},\,\bm{W}^{(1)}\,\big)\,\,\right)\,}\,\,\right)\,,
\end{multline}
where $\nu$ is the normal distribution.

Now, let us define the probability density distribution of the cavity field $h$, conditionally to a fixed value for the random variable $W^{(0)}$:
\begin{equation}
\label{Mypi}
\widehat{\pi}\big(y\big|W^{(0)}\,\big)=\mathbb{E}_{\mathbb{W}_1^{\otimes (2/c)}}\left[\delta\big(\,y-h\big(\,W^{(0)},W^{(1)}\,\big)\,\big)\big|W^{(0)}\right]= \int \text{d}\nu\big(W^{(1)}\big)\delta\big(\,h\big(\,W^{(0)},W^{(1)}\,\big)-y\,\big)\,,\quad y\in \mathbb{R}
\end{equation}
For each value of $y\in \mathbb{R}$ fixed, the quantity $\widehat{\pi}(y|\Cdot)$ is a positive random variable, since it depends on $W^{(0)}$. Then, we can define the probability density distribution of the random probability density distribtion $\widehat{\pi}:=\big\{\widehat{\pi}\big(y\big|\Cdot\big);\,y\in \mathbb{R}\,\big\}$ in such a way:
\begin{equation}
\label{MyPi}
\mathbb{P}[\,\pi\,]=\mathbb{E}_{\mathbb{W}_0^{\otimes (2/c)}}\left[\delta\big[\,\pi\big(\Cdot\big)-\widehat{\pi}\big(\Cdot\big|W^{(0)}\,\big)\,\big]\,\right]= \int \text{d}\nu\big(W^{(0)}\big)\delta\big[\,\pi\big(\Cdot\big)-\widehat{\pi}\big(\Cdot\big|W^{(0)}\,\big)\,\big]\,,
\end{equation}
where $\delta[\Cdot]$ is the functional Dirac delta.
Substituting the equation \eqref{Mypi} and \eqref{MyPi} in\eqref{start1RSB}, one get
\begin{multline}
\label{ParMez}
x_1 \Phi=x_1 \Phi\big[\mathbb{P},\pi,x_1\big]\\=\overline{\int \left(\prod^c_{i=1} d[\pi_i]\mathbb{P}[\,\pi_i\,]\right)\,\,\log\left(\int\left(\prod^{c}_{i=1} dy \,\pi\big(y\big)\,\right) e^{\,x_{1}\, F^{\text{(v)}}\left(\,y_1,\cdots\,y_c\,\big)\,\right)\,}\,\,\right)}\\-\frac{c}{2}\overline{ \int \text{d}[\pi_1]\mathbb{P}[\,\pi_1\,]\,\,d[\pi_2]\mathbb{P}[\,\pi_2\,] \log\left(\int_{\mathbb{R}^{(2/c)}}\,dy_1\,\pi_i(y_1)\,dy_2\,\pi_i(y_2)\,e^{\,x_{1}\, F^{\text{(e)}}\left(y_1,\,y_2\,\right)\,}\,\,\right)}\,.
\end{multline}
Then we recover the 1$-$RSB variational free enrgy \eqref{1RSB_varFun}. The functional $\mathbb{P}$ is the 1$-$RSB cavity method order parameter and $x_1$ is the Parisi 1$-$RSB parameter.

\subsection{Iterated Gibbs principle}
In this subsection, we get a variational representation of the recursive law \eqref{start}. The variational representation turns out to be a powerful tool to get the $r\to \infty$ limit. 

In this subsection, and in the rest of the thesis, the martingale formalism \cite{YoRev} is deeply used.
Let first introduce some notation.

Let $ D^{\text{(e/v)}}_r$ be the set of the stochastic processes adapted to the filtration $\{(\mathcal{B}^{W}_n)^{\otimes (2/c)}\}_{n\leq r}$ (both for edge and vertex contribution).

Let $\mathfrak{M}^{\text{(e/v)}}_{r}\subset D^{\text{(e/v)}}_r$ the subspace of $\{(\mathcal{B}^{W}_n)^{\otimes (2/c)}\}_{n\leq r}-$adapted martingales and $\mathfrak{M}^{\text{(e/v)}}_{r,1,>}\subset \mathfrak{M}^{\text{(e/v)}}_r$ the subset of strictly positive $\{(\mathcal{B}^{W}_n)^{\otimes (2/c)}\}_{n\leq r}-$adapted martingales with average equal to $1$.

Furthermore, for each level $0<n\leq r$, let $\mathfrak{R}^{\text{(e/v)}}_{n,1,>}$ be the set of strictly positive random variables, depending on the random variables $\big\{\bm{W}^{(m)}\big\}_{m\leq n}$, with expectation value over $\bm{W}^{(n)}$, conditionally to a fixed realization of the variables $\big\{\bm{W}^{(m)}\big\}_{m\leq n-1}$, equal to $1$.

The second member of the recursion formula \eqref{start} has the form of the usual Helmotz free energy in the canonical ensemble, with $x_{n+1}$ as inverse temperature and $-\phi_{n+1}$ as Hamiltonian. As a consequence, it can be represented via the Gibbs variational principle.

For each level $n\leq r$ and each fixed realization of the first $n$ random variables $\big\{\bm{W}^{(m)}\big\}_{m\leq n}$, let us generate a strictly positive random variable $\rho_{n+1}^{\text{(e/v)}}(\Cdot|\,\{\bm{W}^{(m)}\}_{m\leq n}\,\big)$, depending on the random variable $\bm{W}^{(n+1)}$, satisfying the following normalization condition:
\begin{equation}
\label{normalRho}
\mathbb{E}_{\mathbb{W}_r^{\otimes (2/c)}}\left[\rho_{n+1}^{\text{(e/v)}}\big(\bm{W}^{(n+1)}\big|\,\{\bm{W}^{(m)}\}_{m\leq n}\,\big) \,\big|\{\bm{W}^{(m)}\}_{m\leq n}\right]=1\,\, \longrightarrow\,\, \rho_{n+1}^{\text{(e/v)}}\in \mathfrak{R}^{\text{(e/v)}}_{n,1,>}\,.
\end{equation}
The variable $\rho_{n+1}^{\text{(e/v)}}$, actually, plays the role of an effective conditional probability density function for the variable $\bm{W}^{(n+1)}$, given the realization of the past variables $\big\{\bm{W}^{(m)}\big\}_{m\leq n}$. The Gibbs principle provides a variational criterion on the space of the density functions
\begin{multline}
\label{iterVariatinal}
\phi^{\text{(e/v)}}_{n}\big(\{\bm{W}^{(m)}\}_{m\leq n}\,\big)=\\\max_{\rho_{n+1}^{\text{(e/v)}}\in\mathfrak{R}^{\text{(e/v)}}_{n,1,>}}\bigg\{
\mathbb{E}_{\mathbb{W}_r^{\otimes (2/c)}}\left[\rho_{n+1}^{\text{(e/v)}}\big(\bm{W}^{(n+1)}|\,\{\bm{W}^{(m)}\}_{m\leq n}\,\big) \phi_{n+1} ^{\text{(e/v)}}\big(\{\bm{W}^{(m)}\}_{m\leq n+1}\,\big)\,\Big|\{\bm{W}^{(m)}\}_{m\leq n}\right]\\
-\frac{1}{x_{n+1}}\mathbb{E}_{\mathbb{W}_r^{\otimes (2/c)}}\left[\rho_{n+1}^{\text{(e/v)}}\big(\bm{W}^{(n+1)}|\,\{\bm{W}^{(m)}\}_{m\leq n}\,\big)\log\,\rho_{n+1}^{\text{(e/v)}}\big(\bm{W}^{(n+1)}|\,\{\bm{W}^{(m)}\}_{m\leq n}\,\big)\,\Big|\{\bm{W}^{(m)}\}_{m\leq n}\right] \bigg\}\,,
\end{multline}
where the maximum is attained by:
\begin{multline}
\label{extremal}
\rho_{n+1}^{(e/v)\star}\big(\bm{W}^{(n+1)}|\,\{\bm{W}^{(m)}\}_{m\leq n}\,\big)=\frac{1}{Z^{\text{(e/v)}}_{n}\big(\{\bm{W}^{(m)}\}_{m\leq n}\,\big)}\exp\left(x_{n+1}\phi_{n+1} ^{\text{(e/v)}}\big(\{\bm{W}^{(m)}\}_{m\leq n+1}\,\big)\,\right),\\
\text{with}\quad Z_{n}\big(\{\bm{W}^{(m)}\}_{m\leq n}\,\big)=\mathbb{E}_{\mathbb{W}_r^{\otimes (2/c)}}\left[\exp\left(x_{n+1}\phi_{n+1} ^{\text{(e/v)}}\big(\{\bm{W}^{(m)}\}_{m\leq n+1}\,\big)\,\right)\big|\{\bm{W}^{(m)}\}_{m\leq n}\right]\,.
\end{multline}

The non-linear maps $\phi_{n} ^{\text{(e/v)}}\,\mapsto \phi_{n-1} ^{\text{(e/v)}}$ in \eqref{start} are now represented as linear maps \eqref{extremal}, with the help of suitable variational parameters. 

In physics literature, the first part in the representation \eqref{iterVariatinal}, depending on $\phi_{n+1}^{e/v}$, is called "energy", whilst the second one, with the logarithm, is the "entropic" part.

Such kind of manipulation is also at the basis of the Boué-Dupuis representation formula for the expectation value of exponential Brownian functionals\cite{BoueDepuis}.

From the effective conditional density functions $\{\,\rho_{1}^{\text{(e/v)}}, \rho_{2}^{\text{(e/v)}},\cdots,\rho_{r}^{\text{(e/v)}}\}\in \mathfrak{R}^{\text{(e/v)}}_{0,1,>}\times \mathfrak{R}^{\text{(e/v)}}_{1,1,>}\times \cdots \times \mathfrak{R}^{\text{(e/v)}}_{r,1,>}$, we can compute an effective probability density function $R_r^{\text{(e/v)}}$ for the whole collection of random varables $\{\bm{W}^{(n)}\}_{n\leq r}$:
\begin{equation}
R_r^{\text{(e/v)}}\big(\{\bm{W}^{(m)}\}_{m\leq r}\big)=\prod_{m=1}^r \rho_m^{\text{(e/v)}} \big(\bm{W}^{(m)}|\{\bm{W}^{(l)}\}_{l\leq m-1}\big)\,.\\
\end{equation}
By condition \eqref{normalRho}, the function $R_r^{\text{(e/v)}}$ is normalized:
\begin{equation}
\label{RtoRho}
\mathbb{E}_{\mathbb{W}_r^{\otimes (2/c)}}\left[R_r^{\text{(e/v)}}\big(\{\bm{W}^{(m)}\}_{m\leq r}\big)\right]=1.\\
\end{equation}

For any level $0\leq n \leq r$, the effective marginal probability density $R_n^{\text{(e/v)}}$ over the first $n$ random variables $\{\bm{W}^{(m)}\}_{m\leq n}$ is given by averaging $R_r^{\text{(e/v)}}$ over the last $r-n$ random variables $\{\bm{W}^{(m)}\}_{n\leq m\leq r}$
\begin{multline}
\label{marginalR}
R_n^{\text{(e/v)}}\big(\{\bm{W}^{(m)}\}_{m\leq n}\big)=\mathbb{E}_{\mathbb{W}_r^{\otimes (2/c)}}\left[R_r^{\text{(e/v)}}\big(\{\bm{W}^{(m)}\}_{m\leq r}\big) \big|\{\bm{W}^{(m)}\}_{m\leq n}\,\right]\\=\prod_{m=1}^n \rho_m^{\text{(e/v)}} \big(\bm{W}^{(m)}|\{\bm{W}^{(l)}\}_{l\leq m-1}\big)
\end{multline}
and
\begin{equation}
\label{R0}
R_0^{\text{(e/v)}}\big(\bm{W}^{(0)}\big)=1
\end{equation}

All the marginal density functions, defined by \eqref{marginalR} and \eqref{R0}, are already normalized, by construction.

The "entropic" part in \eqref{iterVariatinal} can be rewitten as a functional of the probability density function $R_r^{\text{(e/v)}}$ and the marginals, in such a way:
\begin{equation}
\begin{aligned}
&\mathbb{E}_{\mathbb{W}_r^{\otimes (2/c)}}\left[\rho_{n+1}^{\text{(e/v)}}\big(\{\bm{W}^{(m)}\}_{m\leq n+1}\,\big)\log\,\rho_{n+1}^{\text{(e/v)}}\big(\{\bm{W}^{(m)}\}_{m\leq n+1}\,\big)\,\big|\{\bm{W}^{(m)}\}_{m\leq n}\right]=\\
& \! \begin{multlined}
\mathbb{E}_{\mathbb{W}_r^{\otimes (2/c)}}\left[\,\frac{R_{n+1}^{\text{(e/v)}}\big(\{\bm{W}^{(m)}\}_{m\leq n+1}\big)}{R_n^{\text{(e/v)}}\big(\{\bm{W}^{(m)}\}_{m\leq n}\big)}\,\log\left(\frac{R_{n+1}^{\text{(e/v)}}\big(\{\bm{W}^{(m)}\}_{m\leq n+1}\big)}{R_{n}^{\text{(e/v)}}\big(\{\bm{W}^{(m)}\}_{m\leq n}\big)}\,\right)\,\,\Bigg| \{\bm{W}^{(m)}\}_{m\leq n}\right]=
\end{multlined}\\
& \! \begin{multlined}
\frac{1}{R_n^{\text{(e/v)}}\big(\{\bm{W}^{(m)}\}_{m\leq n}\big)}\mathbb{E}_{\mathbb{W}_r^{\otimes (2/c)}}\bigg[\,R_{r}^{\text{(e/v)}}\big(\{\bm{W}^{(m)}\}_{m\leq r}\big)\log\left(\frac{R_{n+1}^{\text{(e/v)}}\big(\{\bm{W}^{(m)}\}_{m\leq n+1}\big)}{R_{n}^{\text{(e/v)}}\big(\{\bm{W}^{(m)}\}_{m\leq n}\big)}\,\right)\,\Bigg| \{\bm{W}^{(m)}\}_{m\leq n}\bigg]
\end{multlined}
\end{aligned}
\end{equation}

The monotonicity of the expectation value on $ D^{\text{(e/v)}}_r$ and the tower property \cite{YoRev}\cite{Billingsley}, the iteration of the representations \eqref{iterVariatinal} leads to a unique variational representation for the entire map $\phi_{r} ^{\text{(e/v)}}\,\mapsto \phi_{0} ^{\text{(e/v)}}$:

\begin{multline}
\label{GibbsFunctional}
\phi^{\text{(e/v)}}_{0}(\bm{W}^{(0)}\,)=
\max_{R^{\text{(e/v)}}\in \mathfrak{M}^{\text{(e/v)}}_{r,1,>}}
\Bigg\{\,\mathbb{E}_{\mathbb{W}_r^{\otimes (2/c)}}\left[R_r^{\text{(e/v)}}\big(\{\bm{W}^{(m)}\}_{m\leq r}\big) \phi_{r} ^{\text{(e/v)}}\big(\{\bm{W}^{(m)}\}_{m\leq r}\big)\,\big| \bm{W}^{(0)}\, \right]\\-\mathbb{E}_{\mathbb{W}_r^{\otimes (2/c)}}\Bigg[\,R_r^{\text{(e/v)}}\big(\{\bm{W}^{(m)}\}_{m\leq r}\big)\sum_{n=1}^{r}\frac{1}{x_n}\Delta_n^{\text{(e/v)}}\big(\{\bm{W}^{(m)}\}_{m\leq n}\big)\,\bigg| \,\bm{W}^{(0)}\,\Bigg]\,\Bigg\}\,,
\end{multline}
where
\begin{equation}
\Delta_n^{\text{(e/v)}}\big(\{\bm{W}^{(m)}\}_{m\leq n}\big)=
\,\log\Big(R_n^{\text{(e/v)}}\big(\{\bm{W}^{(m)}\}_{m\leq n}\big)\,\Big)-\log\Big(R_{n-1}^{\text{(e/v)}}\big(\{\bm{W}^{(m)}\}_{m\leq n-1}\big)\,\Big)\,.
\end{equation}
The maximum is formally attained substituting the solutions \eqref{extremal} in \eqref{RtoRho}, for each level $n$.

Note that the collection of effective marginal probability densities, defined in \eqref{marginalR}, defines an average $1$ strictly positive martingale:
\begin{equation}
R^{\text{(e/v)}}:=\big\{R_n^{\text{(e/v)}}\big(\{\bm{W}^{(m)}\}_{m\leq n}\big)\big\}_{n\leq r}\in \mathfrak{M}^{\text{(e/v)}}_{r,1,>}\,. 
\end{equation}
The martingale property, indeed, is stated by the definition of marginal density functions in the equation \eqref{marginalR}.

The Gibbs variational principle, then, combines a cumbersome recursive composition of conditional non-linear expectations values in a single variational problem over the space $\mathfrak{M}^{\text{(e/v)}}_{r,1,>}$ of positive, average $1$, martingales on the filterd probability space $(\Omega_r,\mathcal{B}_r,\{(\mathcal{B}^{W}_n)\}_{n\leq r},\mathbb{W}_r)^{\otimes (2/c)}$. This is a big deal of such approach, since martingales are well-defined mathematical object in any generic probability space. 

The total free energy functional is given by the sum of the edge and vertex contributions, averaged over the root random variables $\bm{W}^{(0)}$ and the random couplings:
\begin{equation}
\label{GibbsFunctional2}
\Phi= \int_{\mathbb{R}^{c}}\left(\prod^{c}_{i=1} d\nu\big(W^{(0)}_{i}\big)\,\right) \, \phi^{\text{(v)}}_0\big(\bm{J},\bm{W}^{(0)}\big)-\frac{c}{2}\int_{\mathbb{R}^{2}}\left(\prod^{2}_{i=1} d\nu\big(W^{(0)}_{i}\big)\,\right) \, \phi^{\text{(e)}}_0\big(\bm{J},\bm{W}^{(0)}\big)\,.
\end{equation}

The equilibrium free-energy is given by the extremization of the total free energy functional \eqref{GibbsFunctional2} with respect to the physical order parameter, i.e. the cavity field functional $h_r\in \mathfrak{F}(\Omega_r,\mathbb{R})$; the representations of the two free energy contributions, given by\eqref{GibbsFunctional}, constitute two independent variational problems inside a larger variational problem.

In the following, we will refer to \eqref{GibbsFunctional} as edge and vertex auxiliary variational problems, whilst the extremization over the cavity field functional is the physical variational problem. The auxiliary variational problems must be solved before, keeping the variational parameters of the physical variational problem fixed.

The careful reader may argue that the Gibbs variational principle seems not to provides a real simplification since the computation of the non-linear expectations still remains in the solutions \eqref{extremal}. This is actually true in the discrete replica symmetry breaking case.

In the "continuous limit" of replica symmetry breaking, however, the martingale $R^{\text{(e/v)}}$ has a nice representation and we do not need to deal with the solution \eqref{extremal}, as explained in the next section.

\section{The full replica symmetry breaking theory}
\label{Cont_extension}
In the previous section, the $r$$-$RSB free energy \eqref{cavityFE} is obtained by an auxiliary variational representation over the space of positive discrete time martingales.

In this section, the auxiliary variational problem \eqref{GibbsFunctional} is extended to continuous-time martingales that have also a continuous sample path \cite{YoRev}.

It is worth stressing that the sample path continuity is not a restrictive assumption; indeed, we will prove, in the next chapter, that the free energy functional that we present at the end of this section contains all the discrete-RSB theories as possible solutions.

From a rigorous mathematical point of view, the extension to continuous martingale is not the $r\to \infty$ limit of the $r$$-$RSB theory, but rather a generalization of the previous stochastic analysis to another class of martingales.

The first subsection provides a formal derivation of the auxiliary variational problem with continuous martingale.

In the second subsection, we consider the case of the It\^o stochastic processes and an explicit formulation of the full$-$RSB free energy functional is finally obtained.

\subsection{The generalized Chen-Auffinger variational representation}
\label{subsec4.4.2}
In this subsection, the variational auxiliary problem for continuous martingales is derived.

We rephrase the martingale approach of the previous section, with a suitable formalism, representing both continuous and discrete martingales. Then we concentrate on path continuous martingales, getting a generalization of the Auffinger-Chen variational representation of the Parisi solution of the SK models \cite{ChenAuf}. 

The extension of the auxiliary variational problem \eqref{GibbsFunctional} to the continuous martingale space relies on several steps.

\subsubsection{Continuous-time formalism.} Given any ordered collection of random variables $\{W^{(m)}\}_{m\leq r} \in \Omega_r$, together with the increasing sequence
\begin{equation}
\label{Q_sequence}
Q:=(q_0,\, q_1,\, \cdots ,\, q_{r+1})\,,
\end{equation}
with
\begin{equation}
\label{increasing}
0=q_0\leq q_1\leq \cdots \leq q_{r+1}=1\,,
\end{equation}
we can define a continuous-time, piece-wise constant, bounded random function $W_{\text{c}}:=\{W_{\text{c}}(q);\,q\in [0,1]\}$, such as for each level $0\leq n\leq r$ and time $q_{n}< q\leq q_{n+1}$, the random quantity $W_{\text{c}}(q)$ depends only on the first $n$ variables $\{W^{(m)}\}_{m\leq n}$. A possible choice may be
\begin{equation}
\label{continuation}
W_{\text{c}}(q)= W^{(0)}+\sum_{i=1}^{r} \sqrt{q_{n}-q_{n-1}}\, \,W^{(n)} \theta(q-q_{n-1}),\quad q\in [0,1],
\end{equation}
 where the function $\theta$ is the Heaviside function and the function $\mathbb{1}_{\{0\}}(q)$ has the value $1$ at $q=0$ and the value $0$ for all $q>0$.

Any $r-$ step adapted process $O:=\{O_0,O_1,\cdots,O_r\} \in D^{(e/v)}_r$ can now be considered as an ordered collection of functionals of the vector random function $\bm{W_{\text{c}}}$, constituted by $2$ or $c$ independent realizations of the random function defined in \eqref{continuation}:
\begin{equation}
O_n=O_n\big(\,\{\bm{W}^{(m)}\}_{m\leq n}\,\big)\longrightarrow O_n\big(\{\bm{W}_\text{c}(q);0\leq q\leq q_n\}\big)
\end{equation}
where the symbol $\{\bm{W}_\text{c}(q);0\leq q\leq q_n\}$ denotes that the functional $O_n$ depends on all the values $\bm{W}_\text{c}(q)$, assumed by the realization of the random function $\bm{W}_\text{c}$ at each time $q\in[0,q_n]$.

We define a continuous-time process, depending on $\bm{W}_{\text{c}}$ and based on $O$, in such a way:
\begin{equation}
\label{continuation2}
O_{\text{c}}(q)= O_0 \mathbb{1}_{[0,q_1]}(q)+\sum_{i=1}^{r} O_n \mathbb{1}_{(q_{n},q_{n+1}]}(q),\quad q\in [0,1].
\end{equation}
where the functions $\mathbb{1}_{(q_{n},q_{n+1}]}(q)$, for $1\leq n\leq r$, are equal to $1$ if $q_{n}<q\leq q_{n+1}$ and vanish elsewhere.

With this construction, the values of $O_{\text{c}}(q)$, at each time $0\leq q\leq 1$, depends on the values $\bm{W}_{\text{c}}(q')$, assumed by a given realization of the stepwise random function $\bm{W}_{\text{c}}$, at all the times $q'\in[0,q]$. So the process $O_{\text{c}}$ is adapted to the continuous-time stepwise random function $\bm{W}_{\text{c}}$. 

In the following, the subscript $\Cdot_{\text{c}}$ will be omitted and we will deal only with continuous-time stochastic processes.

In the $r\to\infty$ limit, the sequence $Q$ "fills" the $[0,1]$ segment, so this limit can be formally achieved enlarging the space of stepwise processes to a wider class of continuous-time processes.

The probability space will be carefully defined in the last paragraph of this section. For now, we consider that the process $W$ is an element of a generic sample space of functions $\Omega$, with a proper $\sigma-$algebra $\mathcal{B}$ and a given probability measure $\mathbb{W}$.
A process $O:\Omega^{\otimes(2/c)}\to \mathbb{R}$ is said to be adapted (or non-anticipating) if it is adapted with respect to the usual $\mathbb{W}^{\otimes (2/c)}$-augmentation\footnote{The usual augmentation of a continuous-time filtration, with respect to a given probability measure, is the smallest right-continuous filtration that contains the original one, enlarged with the set with probability $0$ with respect to the given probability measure. Usually, a rigorous treatment of continuous-time stochastic processes requires the usual completion.} of the natural filtration of the vector process $\bm{W}$, defined on product probability space $(\Omega,\mathcal{B},\mathbb{W})^{\otimes(2/c)}$ (\cite{YoRev}). The augmented natural filtration of the process $\bm{W}$ is denoted by $\{\mathcal{F}_q^{\bm{W}}\}_{q\in [0,1]}$.

Let us remind some standard crucial mathematical tools, defined for continuous-time stochastic processes, that will be useful in the next paragraphs \cite{YoRev}. 

We call \emph{quadratic variation} of a process $O$ the process $[O]$ defined as:
\begin{equation}
\label{var}
[O](q)=\lim_{\substack{M\to \infty\\ \delta \to 0}} \sum^M_{n=0} \left(O(q_{n+1})-O(q_{n})\right)^2,\,\,\text{with} \,\,0=q_0\leq q_1\leq \cdots\leq q_M=q\,,
\end{equation}
where $\delta$ is the mesh of the partition. 

For stepwise processes, as \eqref{continuation2}, this quantity is reduced to the sum over the discontinuity jumps in such a way:
\begin{equation}
\label{varfinit}
[O](q)= \sum^r_{n=1} \mathbb{1}_{[q_{n},q_{n+1})}(x)\left(O_{n}-O_{n-1}\right)^2,\,\,\text{with}\, 0=q_0\leq q_1\leq \cdots\leq q_{r+1}=1\,.
\end{equation}

In the same manner, the \emph{covariation} between two processes $O$ and $P$ is the process $\Braket{O,P}$ such that:
\begin{multline}
\label{covar}
\Braket{O,P}(q)=\lim_{\substack{M\to \infty\\ \delta \to 0}} \sum^M_{n=0} \left(O(q_{n+1})-O(q_{n})\right)\left(P(q_{n+1})-P(q_{n})\right),\\\,\text{with}\, 0=q_0\leq q_1\leq \cdots\leq q_M=q\,.
\end{multline}
Obviously, the quadratic variation and the covariation vanish for smooth functions, but they are not trivial for all continuous martingale.

We can also define the integration with respect to a continuous-time stochastic process $O$ by the Lebesgue-Stieltjes integral \cite{Billingsley} or by the It\^o integral \cite{Ito} whether the integration process $O$ is of bounded variation or of bounded quadratic variation respectively\cite{YoRev}.

Note that, by definition, for any bounded function $f:[0,1]\to \mathbb{R}$ and stepwise process $O$, we have:
\begin{equation}
\int^1_0 f(q) dO(q)=\sum^r_{n=1} f(q_{n-1})(\,O_{n}-O_{n-1}\,).
\end{equation}
\subsubsection{The discrete-RSB auxiliary problem with continuous-time martingales }
By this formalism, the auxiliary variational representation \eqref{GibbsFunctional} can be easily reformulated in term of generic continuous-time processes as:
\begin{multline}
\label{GibbsFunctional3}
\phi^{\text{(e/v)}}(0,\bm{W}{(0)}\,)=
\max_{R^{\text{(e/v)}}\in \mathfrak{M}^{\text{(e/v)}}_{[0,1],1,>}}
\Bigg\{\mathbb{E}_{\mathbb{W}^{\otimes(2/c)}}\left[ R^{\text{(e/v)}}\left(\,1,\,\bm{W}\,\right)\phi ^{\text{(e/v)}}\left(\,1,\,\bm{W}\,\right)\,\bigg| \,\bm{W}{(0)} \right]\\-\mathbb{E}_{\mathbb{W}^{\otimes (2/c) }}\left[R^{\text{(e/v)}}\left(\,1,\,\bm{W}\,\right)\int^1_0 \frac{1}{x(q)}\,d\left(\,\log\, R^{\text{(e/v)}}\left(\,q,\,\bm{W}\,\right)\,\,\right)\,\,\bigg| \,\bm{W}{(0)}\,\right]\,\Bigg\}\,,
\end{multline}
where we have used the short-hand notation
\begin{equation}
\label{short}
R^{\text{(e/v)}}\left(\,q,\,\bm{W}\,\right)=R^{\text{(e/v)}}\left(\,x,\,\{\bm{W}(q);0\leq q'\leq q\}\,\right)\,,\quad \forall 0 \leq q \leq 1\,,
\end{equation}
and
\begin{equation}
\phi ^{\text{(e/v)}}\left(\,1,\,\bm{W}\,\right)=\log \,\Delta^{\text{(e/v)}} \left(\,\bm{h}\left(\,\{\bm{W}(q);0\leq q\leq 1\}\,\right)\,\right)\,.
\end{equation}
and $x(q)$ is the deterministic function obtained by the sequence \eqref{X_sequence} and \eqref{Q_sequence} in such a way
\begin{equation}
x(q)=\sum^r_{i=0}x_{n+1}\mathbb{1}_{(q_n,q_{n+1}]}(q)\,.
\end{equation}
The function $x:[0,1]\to[0,1]$ is the so-called Parisi order parameter (POP), that we defined in subsection \ref{POSK} for fully connected systems.

In fully connected system the "time" parameter $q$ is the overlap of the magnetizations of two pure states (see chapter \ref{C2}). In the next chapters, we will that $q$ is related to the correlation between two different states, but it is not actually an overlap. Moreover, it is clear that the choice of the sequence $Q$ is arbitrary, provided the increasing condition \eqref{increasing}.

The range set $\mathfrak{M}^{\text{(e/v)}}_{[0,1],1,>}$ of the variational parameter $R^{\text{(e/v)}}$ is the space of positive martingales on the probability space $(\Omega,\mathcal{B},\mathbb{W})^{\otimes(2/c)}$, with average $1$.

The representation \eqref{GibbsFunctional3} holds both for stepwise and continuous martingales, providing a general formulation for all levels of discrete replica symmetry breaking and for the full replica symmetry breaking.

Note that the auxiliary variational representation turns out to be a powerful tool in getting the continuous limit of replica symmetry breaking free energy functional. The extension to the continuous case of the representation \eqref{GibbsFunctional2} has been easily defined in \eqref{GibbsFunctional3}, simply by introducing a proper notation, allowing us to deal with quantities that are well defined even in the $r\to \infty$ limit. By contrast, the $r\to \infty$ limit of the iterative low \eqref{start1} appears to be quite cumbersome, since the cavity field functional \eqref{AustinRep} depends on the random quantities $\{W^{(m)}\}_{m\leq r}$ in a non trivial way.

The set $\mathfrak{M}^{\text{(e/v)}}_{[0,1],1,>}$ is too generic for many practical computation, so in the next paragraphs further assumptions on the martingale $R^{\text{(e/v)}}$ will be imposed.

\subsubsection{Path continuity assumption}
\label{pathcontinuity}
A generic martingale $R^{\text{(e/v)}}\in\mathfrak{M}^{\text{(e/v)}}_{[0,1],1,>}$ can be decomposed as the sum of a purely continuous martingale and a purely discontinuous martingale.

In the $r-$step RSB described in the previous section, the continuous part vanishes, and the martingale $R^{\text{(e/v)}}$ is purely discontinuous. Here, we are interested in the case where $R^{\text{(e/v)}}$ is purely continuous. We will see, in the next chapter, that also the discrete-RSB can be described by path continuous martingales.

In this paragraph, we use the shorthand notation $(q,\bm{W})$ to indicate the non-anticipating dependence of the processes to the random function $\bm{W}$, as in \eqref{short}.

Any strictly positive and path continuous martingale, can be represented as the exponential function of a path continuous supermartingale \cite{YoRev}. More precisely, using the It\^o lemma of the stochastic differential calculus \cite{Ito}, the martingale condition and the normalization of $R^{\text{(e/v)}}$ lead to the following representation ( Proposition 1.6 in Chapter VIII of \cite{YoRev}):
\begin{equation}
\label{exp}
R^{\text{(e/v)}}\left(\,q,\bm{W}\,\right)=\mathcal{E}\left(\,L^{\text{(e/v)}};q,\bm{W}\,\right)=e^{L^{\text{(e/v)}}(q,\bm{W})-\frac{1}{2}[L^{\text{(e/v)}}](q,\bm{W})}\,,
\end{equation}
where $L^{\text{(e/v)}}(q,\bm{W})$ is a sample continuous martingale with
\begin{equation}
\label{0Cond}
\mathbb{E}_{\mathbb{W}^{\otimes (2/c) }}\left[L^{\text{(e/v)}}(q,\bm{W})\big| \,\bm{W}^{(0)}\right]=0\,
\end{equation}
and $\mathcal{E}(L^{\text{(e/v)}})$ is the Doléans-Dade exponential (DDE) \cite{YoRev} of the martingale $L^{\text{(e/v)}}$.

It is worth noting that, for a generic martingale that verifies \eqref{0Cond}, the associated DDE, defined as in \eqref{exp} is a\emph{local martingale} (Definition (1.5) in Chapter IV of \cite{YoRev}). However, In order to verify the normalization condition is verified if and only if the DDE is a \emph{true martingale}. This condition provides some further requirements on the martingale $L^{\text{(e/v)}}$ \cite{YoRev,Hitsuda,Lipster}.

Using the stochastic integral representation of the exponential \eqref{exp}
\begin{equation}
\label{stocInt}
R^{\text{(e/v)}}(1,\bm{W})=1+\int^1_0R^{\text{(e/v)}}(1,\bm{W}) \,dL^{\text{(e/v)}}(q,\bm{W})
\end{equation}
and substituting the equations \eqref{0Cond}, \eqref{exp} and \eqref{stocInt} in the entopic part of \eqref{GibbsFunctional3}, one gets
\begin{multline}
\mathbb{E}_{\mathbb{W}^{\otimes (2/c) }}\left[R^{\text{(e/v)}}(1,\bm{W})\int^1_0 \frac{1}{x(q)}\,d\left(\,\log\, R^{\text{(e/v)}}(q,\bm{W})\,\,\right)\,\,\bigg| \,\bm{W}^{(0)}\,\right]\\=
\mathbb{E}_{\mathbb{W}^{\otimes (2/c) }}\left[\,\int^1_0 \frac{1}{x(q)}\,\,dL^{\text{(e/v)}}(q,\bm{W})\,\,\,\,\bigg| \,\bm{W}^{(0)}\,\right]\\+\mathbb{E}_{\mathbb{W}^{\otimes (2/c) }}\left[\,\int^1_0 R^{\text{(e/v)}}(q,\bm{W})dL^{\text{(e/v)}}(q,\bm{W})\,\int^1_0 \frac{1}{x(q)}\,\,dL^{\text{(e/v)}}(q,\bm{W})\,\,\,\,\bigg| \,\bm{W}^{(0)}\,\right]\\
-\frac{1}{2}\mathbb{E}_{\mathbb{W}^{\otimes (2/c) }}\left[\int^{1}_0 \frac{1}{x(q)}\,\mathbb{E}_{\mathbb{W}^{\otimes (2/c) }}\left[R^{\text{(e/v)}}(1,\bm{W})\,\big| \,\mathcal{F}^{\bm{W}}_{q}\,\right]d[L]^{\text{(e/v)}}(q,\bm{W})\bigg| \,\bm{W}^{(0)}\,\right]=\\
\frac{1}{2}\mathbb{E}_{\mathbb{W}^{\otimes (2/c) }}\left[\int^{1}_0 \frac{1}{x(q)}\,R^{\text{(e/v)}}(q,\bm{W})d[L]^{\text{(e/v)}}(q,\bm{W})\bigg| \,\bm{W}^{(0)}\,\right]\,
\end{multline}
and the functional \eqref{GibbsFunctional3} becomes:
\begin{multline}
\label{GibbsFunctional4}
\phi^{\text{(e/v)}}(0,\bm{W}{(0)}\,)=
\max_{L^{\text{(e/v)}}\in \mathfrak{M}^{\text{(e/v)}}_{[0,1],C}}
\Bigg\{\mathbb{E}_{\bm{h},\mathbb{W}^{\otimes (2/c)}}\left[e^{L^{\text{(e/v)}}(1,\bm{W})-\frac{1}{2}[L^{\text{(e/v)}}](1,\bm{W})}\phi ^{\text{(e/v)}}\left(\,1,\,\bm{W}\,\right)\,\bigg| \,\bm{W}^{(0)} \right]\\-\frac{1}{2}\mathbb{E}_{\bm{h},\mathbb{W}^{\otimes(2/c)}}\left[\int^1_0\frac{1}{x(q)}\,\, e^{L^{\text{(e/v)}}(q,\bm{W})-\frac{1}{2}[L^{\text{(e/v)}}](q,\bm{W})}\,d[\,L^{\text{(e/v)}}\,](q,\bm{W})\,\,\bigg| \,\bm{W}^{(0)}\,\right]\,\Bigg\},
\end{multline}
where $\mathfrak{M}^{\text{(e/v)}}_{[0,1],C}$ is the set of adapted martingales under the probability measure $\mathbb{W}^{\otimes(2/c)}$, with continuous sample path and starting from $0$.

The total full$-$RSB free energy functional is finally given by:
\begin{equation}
\label{freeFull1}
\Phi=\overline{\int \prod^c_{i=1} d\nu\big(W_i^{(0)}\,\big)\,\,\,\, \phi^{\text{(v)}}\big(0,\bm{W}^{(0)}\,\big)}-\frac{c}{2}\overline{ \int \text{d}\nu\big(W_1^{(0)}\big)d\nu\big(W_2^{(0)}\big) \phi^{\text{(e)}}\big(\,0,\,W_1^{(0)},W_2^{(0)}\,\big)}\,,
\end{equation}
where the overline $\overline{\Cdot}$ represents the average over the random couplings.

The representation \eqref{GibbsFunctional4}, is an extension to the actual model of the Chen-Auffinger representation of the Parisi functional, defined for the $SK$ model. Note that, in the Chen-Auffinger representation, the process $L^{\text{(e/v)}}$ is a Markov process \cite{ChenAuf}.

The free energy \eqref{freeFull1} actually provides a mathematical representation of the full$-$RSB free energy, then it may represent an interesting starting point for further qualitative analysis about replica symmetry breaking on sparse graphs.

For the quantitative evaluation of the free energy, we reduce the present computation to It\^o processes. 

\subsection{Full-RSB: final formulation}
\label{fullRSBsec}
In the preceding subsection, the full$-$RSB ansatz is presented using a variational representation based on continuous martingales on a whatever probability space. This formalism allows deriving a well defined "continuous version” of the variational representation \eqref{GibbsFunctional} of the edge and vertex contributions. However, the generality of the probability space does not enable practical calculations. 

In this section, we consider that the cavity field and the auxiliary martingale $L^{\text{(e/v)}}$ are explicit functionals of a vectorial Brownian motion on $[0,1]$, with a random starting point. 

We set $\Omega:=C([0,1],\mathbb{R})$ to be the space of continuous functions $\omega:[0,1]\to \mathbb{R}$ and we endow this space by a proper $\sigma-$algebra $\mathcal{F}$. Let $\mathbb{W}_{\nu}$ be the probability measure such as the coordinate map process
\begin{equation}
W(\omega):=\{W(q,\omega)=\omega(q);\,q\in[0,1]\}\,,
\end{equation}
together with its natural filtration, is a Brownian motion, starting from a random normal distributed point $\omega(0)$ \cite{YoRev}. Such Brownian motion will be simply indicated by $\omega$. The normal distribution will be indicated by $\nu$, according to the notation described at the end of the subsection \eqref{finitersbpara}. The usual augmentation of the natural filtration of the Brownian motion is denoted by $\{\mathcal{F}_q\}_{q\in[0,1]}$.

The cavity field functional $h$ is a real valued functional $h:C([0,1],\mathbb{R})\to \mathbb{R}$, measurable with respect to the completion of the $\sigma-$algebra generated by the Brownian motion, and the martingales $L^{\text{(v)}}$ and $L^{\text{(e)}}$ are adapted to the usual augmentation of the natural filtration of the vectorial Brownian motion with $2$ or $c$ components, respectively for the edge and the vertex contribution.

The martingale representation theorem \cite{Clark}\cite{ItoRep}, assures that the matingale $L^{\text{(e/v)}}$ can be represented as It\^o integrals \cite{Ito}:
\begin{equation}
\label{RRep}
L^{\text{(e/v)}}(q,\bm{\omega}\,)=\sum^{(2/c)}_{j=1}\int^{q}_0x(q)\, r_j^{\text{(e/v)}}(q',\,\bm{\omega}\,)d\omega_j(q')=\int^{q}_0 x(q) \,\bm{r}^{\text{(e/v)}}(q', \bm{\omega}\,)\cdot \text{d}\bm{\omega}(q')\,,
\end{equation}
where $\bm{\omega}$ is the vectorial Brownian motion, with $2$ or $c$ independent components. The subscripts $i$ and $j$ indicate the single components of the vectorial processes $\bm{r}^{\text{(e/v)}}$, and $\bm{\omega}$. The components of the vector $\bm{r}^{\text{(e/v)}}$ are locally square integrable and adapted processes and the integrals are It\^o stochastic integrals \cite{YoRev}. 

By abuse of notation, we denote by $\mathcal{E}\big(x \bm{r}^{\text{(e/v)}}\big)$ the DDE associated to the martinagale $L^{\text{(e/v)}}$ :
\begin{multline}
\label{DDE_xr}
\mathcal{E}\big(x \bm{r}^{\text{(e/v)}};q,\bm{\omega}\big)\\=\exp\left(\int^{q}_0 x(q) \,\bm{r}^{\text{(e/v)}}(q', \bm{\omega}\,)\cdot \text{d}\bm{\omega}(q')-\frac{1}{2}\int^{q}_0 dq x^2(q) \,\big\|\bm{r}^{\text{(e/v)}}(q', \bm{\omega}\,)\big\|^2\right)
\end{multline}
The processes $\bm{r}^{\text{(e)}}$ and $\bm{r}^{\text{(v)}}$ are determined by the edge and vertex auxiliary variational problems, respectively. 
 
The shorthand notation $(q,\bm{\omega})$ ( or $(q,\omega)$), after a symbol indicating a stochastic process, denotes that such quantity depends on $q$ and has a non-anticipating functional dependence on the Brownian motion, i.e. it depends on the realization of the Brownian motion $\bm{\omega}(q')$ (or $\omega(q')$ ) at each time $0\leq q' \leq q$:
\begin{equation}
\label{adapted_notation}
\begin{gathered}
h(1,\omega_i)=h(q,\{\omega_i(q');0\leq q'\leq 1\}),\\
\bm{r}^{\text{(e/v)}}(q,\bm{\omega})=\bm{r}^{\text{(e/v)}}(q,\{\bm{\omega}(q');0\leq q'\leq q\}),\\
L^{\text{(e/v)}}(q,\bm{\omega})=L^{\text{(e/v)}}(q,\{\bm{\omega}(q');0\leq q'\leq q\}).
\end{gathered}
\end{equation}
Obviously, the functionals $h$ and $\bm{r}^{\text{(e/v)}}$ must be assumed to be regular enough to ensure the functional \eqref{freeFull1} to be bounded.

Let $ D([0,1]\times \Omega^{\otimes 2},\mathbb{R}^2)$ and $ D([0,1]\times \Omega^{\otimes c},\mathbb{R}^c)$, be some proper spaces of the processes $\bm{r}^{\text{(e)}}$ and $\bm{r}^{\text{(v)}}$ respectively (a definition will be provided in the next chapter).
Let us define the following random variables
\begin{gather}
\label{claimV}
\Psi^{\text{(e)}}\left(\,1,\,\omega_1,\omega_2\,\right)=\log \,\Delta^{\text{(e)}} \left(h(1,\omega_1),\,h(1,\omega_2)\right)\,,\\
\label{claimE}
\Psi ^{\text{(v)}}\left(\,1,\,\bm{\omega}\,\right)=\log \,\Delta^{\text{(v)}} \left(h(1,\omega_1),\cdots,\,h(1,\omega_c)\right).
\end{gather}
Note that at non-zero temperature ($\beta\leq \infty$), the above quantities are bounded:
\begin{equation}
\begin{gathered}
\label{boundedness0}
-\beta\leq  \Psi^{\text{(e)}}\left(\,1,\,\omega_1,\omega_2\,\right)\leq \beta\,,\\
-c \beta\leq  \Psi^{\text{(v)}}\left(\,1,\,\bm{\omega}\,\right)\leq c\beta\,,
\end{gathered}
\end{equation}
for all $\bm{\omega}\in \Omega^{\otimes\text{(2/c)}}$.

With this notation, we define 
\begin{multline}
\label{funcAux}
\Gamma^{\text{(2/c)}}\big(\,\Psi^{\text{(e/v)}},\,x,\,\bm{r}^{\text{(e/v)}};\,0,\,\bm{\omega}(0)\,\big)\\=\mathbb{E}_{(\mathbb{W}_{\nu})^{\otimes(2/c)}}\left[\mathcal{E}\big(x \bm{r}^{\text{(e/v)}};1,\bm{\omega}\big)\Psi^{\text{(e/v)}} ( \,1,\bm{\omega}\,)\Big| \,\{\bm{\omega}(0)\}\, \right]\\
-\frac{1}{2}\mathbb{E}_{(\mathbb{W}_{\nu})^{\otimes(2/c)}}\left[\int^1_0 dq\,x(q)\mathcal{E}\big(x \bm{r}^{\text{(e/v)}};q,\bm{\omega}\big)\,\big\|\,\bm{r}^{\text{(e/v)}}(\,q, \,\bm{\omega}\,) \,\big\|^2\,\,\bigg| \,\{\bm{\omega}(0)\}\,\right]\,,
\end{multline}
where $\mathbb{E}_{\mathbb{W}^{\otimes(2/c)}}\left[\cdot|\{\bm{\omega}(0)\}\right]$ is the expectation value with respect to the vectorial Brownian motion $\bm{\omega}$, conditionally to a fixed realization of the starting point $\bm{\omega}(0)$.

 The auxiliary variational representation \eqref{GibbsFunctional4} is given by

\begin{equation}
\label{GibbsTimeFunctional2}
\phi^{\text{(e/v)}}(\,0,\bm{\omega}{(0)}\,)=
\max_{\bm{r}^{\text{(e/v)}}\in D([0,1]\times \Omega^{\otimes {\text{(2/c)}}},\mathbb{R}^{\text{(2/c)}})} \Gamma^{\text{(2/c)}}\left(\,\Psi^{\text{(e/v)}},\,x,\,\bm{r}^{\text{(e/v)}};\,0,\,\bm{\omega}(0)\,\right)\,.
\end{equation}

The maximum in \eqref{GibbsTimeFunctional2} must be obtained by taking the cavity field functional $h$ fixed.

By Cameron-Martin/Girsanov (CMG) theorem \cite{CameronMartin,Girsanov}, the DDE in the expectation values can be reabsorbed in a proper change of probability measure, so the representation \eqref{GibbsTimeFunctional2} actually recovers a Chen-Auffinger-like variational representation. Change of measure, however, acts in a non-trivial way on the cavity field $h$, so, for the actual problem, we avoid the CMG transformation and just deal with the representation \eqref{GibbsTimeFunctional2}.

We may guess that the auxiliary variational problem \eqref{GibbsTimeFunctional2} can be solve by imposing a proper stationary condition:
\begin{equation}
\label{stationary0}
\frac{\delta \Gamma^{\text{(2/c)}}\left(\,\Psi^{\text{(e/v)}},\,x,\,\bm{r}^{\text{(e/v)}};\,0,\,\bm{\omega}(0)\,\right)}{\delta r_i^{\text{(e/v)}}(\,q,\bm{\omega}\,)} =0\,.
\end{equation}
In the next section we will provide a detailed mathematical analysis of the auxiliary variational problem.

The total free energy functional is given by replacing the solution of \eqref{GibbsTimeFunctional2} in \eqref{freeFull1}.
\chapter{The replica symmetry breaking expectation}
\label{C5}
\thispagestyle{empty}
In the previous chapter, we derived the full-RSB free energy functional, using a proper variational representation, the so-called auxiliary variational problem, of both the edge contribution and vertex contribution that appear in\eqref{cavityFE}. Moreover, we guessed that the auxiliary variational problem may be solved 
by imposing a proper stationary condition (equation \eqref{stationary0}) on the variational parameter.

This chapter provides a deep mathematical analysis of the auxiliary variational problem. We consider a generalization of the functional $\Gamma^{\text{(2/c)}}$, defined \eqref{GibbsTimeFunctional2}; the solution of such generalized variational problem yield a functional that we call \emph{RSB expectation}.

In the first section we give the notation that we will use throughout the rest of the thesis and provides a definition of the RSB expectation operator.

In Section \ref{sec5.2} we give an explicit expression of the stationary condition \eqref{stationary0} and we explore some property of the solution, proving that the solution of the stationary condition is the global maximum of the auxiliary variational functional.. 

Section \ref{sec5.3} is devoted to the proof of the existence of the solution of the stationary equation obtained in Section \ref{sec5.2}. We also prove that, for a particular choice of the Parisi order parameter $x:[0,1]\to [0,1]$, the solution of the full-RSB auxiliary variational problem is equivalent to the solution of the discrete-RSB recursion \eqref{start1}.

\section{Formulation of the problem}
\label{sec5.1}
In this section we define a generalization of the auxiliary variational problem \eqref{GibbsTimeFunctional2}. In particular, we aim to define a functional with the same form as \eqref{GibbsTimeFunctional2}, depending on a random variables $\Psi$ and on a function $x:[0,1]\to[0,1]$.

Throughout this and the next chapter, the symbol $\bm{\omega}$ denotes a $n-$dimensional Brownian motion ($n\in\mathbb{N}$), starting from a random point $\bm{\omega}(0)$ at $q=0$, defined on a given probability space $(\Omega,\mathcal{F},\nu\times \mathbb{W})$ (it is not the same probability space of the previous chapter). The symbol $\nu$ denotes the probability measure associated to $\bm{\omega}(0)$, while $\mathbb{W}$ is the probability measure  associated to $\bm{\omega}-\bm{\omega}(0)$. Let $\{\mathcal{F}_q\}_{q\in[0,1]}$ be the usual augmented natural filtration of $\bm{\omega}$. 

We denote by $\mathbb{E}_{\nu}[\Cdot]$ the expectation value with respect the probability $\nu\times \mathbb{W}$. We use the short-hand notation $\mathbb{E}[\Cdot]$ for the expectation value with respect the probability measure $\mathbb{W}$, conditionally to a given realization of $\bm{\omega}(0)$:
\begin{equation}
\mathbb{E}[\Cdot]=\mathbb{E}_{\nu}[\Cdot|\mathcal{F}_0]=\mathbb{E}_{\nu}[\Cdot|\{\bm{\omega}(0)\}]\,.
\end{equation}
We denote by $\int \text{d}\nu(\bm{\omega}(0))\,\Cdot$ the average with respect the starting point:
\begin{equation}
\mathbb{E}_{\nu}[\Cdot]=\int \text{d}\nu(\bm{\omega}(0))\mathbb{E}[\Cdot]\,.
\end{equation}
 Let us define define
\begin{itemize}
\item  $L_1^{\infty}(\Omega)$,the space of $\mathcal{F}_1-$measurable bounded\footnote{A $\mathcal{F}_1-$measurable random variable $X: \Omega\to\mathbb{R}$ is bounded if there exist a constant $M\leq \infty$ such as $|X(\bm{\omega})|\leq M$ with probability $1$} random variables $X: \Omega\to\mathbb{R}$.
\item  $L_1^{p}(\Omega)$,the space of $\mathcal{F}_1-$measurable bounded random variables $X: \Omega\to\mathbb{R}$, such as $\mathbb{E}_{\nu}\left[|X(1,\bm{\omega})|^p\right]<\infty$.
\item $H_{[0,1]}^p(\Omega)$, the space of $n-$dimensional adapted processes $\bm{r}: [0,1]\times\Omega\to\mathbb{R}^n$ satisfying $\mathbb{E}_{\nu}\left[\,\left(\int^1_0\text{d}q\,\|\bm{r}(q,\bm{\omega})\|^2\right)^{\frac{p}{2}}\,\right]<\infty$, with $p\geq 1$.
\item $S_{[0,1]}^p(\Omega)$, the space of $n-$dimensional adapted processes $\phi:[0,1]\times \Omega\to\mathbb{R}$ satisfying $\mathbb{E}_{\nu}\Big[\,\underset{q\in[0,1]}{\sup}|\phi(q,\bm{\omega})|^{p}\,\Big]< \infty$.
\end{itemize}
For convenience, we introduce the following notation
\begin{equation} 
\label{norms}
\begin{gathered}
X\in L_1^{\infty}(\Omega) \,:\quad \underset{\bm{\omega}\in\Omega}{\Max} |X(1,\bm{\omega})|=\inf \,\left\{M\in \mathbb{R}; |X(1,\bm{\omega})|\leq M\,\,a.s.\right\}\,,\\
\bm{r}\in H_{[0,1]}^p(\Omega) \,:\quad \|\bm{r}\|_{2,p}=\mathbb{E}_{\nu}\left[\,\left(\int^1_0\text{d}q\,\|\bm{r}(q,\bm{\omega})\|^2\right)^{\frac{p}{2}}\,\right]^{\frac{1}{p}}\,,\\
\phi\in S_{[0,1]}^p(\Omega) \,:\quad \|\phi\|_{\infty,p}=\mathbb{E}_{\nu}\Big[\,\underset{q\in[0,1]}{\sup}|\phi(q,\bm{\omega})|^{p}\,\Big]^{\frac{1}{p}}\,.
\end{gathered}
\end{equation}
We denote by $\chi$ the set of increasing deterministic function $x:[0,1]\to[0,1]$:
\begin{equation}
\chi :=\left\{x:[0,1]\to[0,1];\,\, \forall\,0\leq q\leq q' \leq 1\,,\,\,x(q)\leq x(q')\,\,\right\}.
\end{equation}
Let us endow the set $\chi$ wuth the uniform norm $\|x\|_{\infty}=\sup_{q\in[0,1]}\,x(q)$.

Given a process $\bm{r}\in H_{[0,1]}^p(\Omega)$, with $p\geq 2$, a function $x\in\chi$ and a number $q'\in[0,1]$, let $\zeta(\bm{r},x|q')$ be the process defined as:
\begin{multline}
\zeta(\bm{r},x; q,\bm{\omega}|q')\\=
\int^q_{q'}x(q') \bm{r}(q'\,, \,\bm{\omega}) \cdot \text{d}\bm{\omega}(q') -\frac{1}{2}\int^q_{q'}dq'\,x^2(q')\,\left\|\bm{r}(q'\,, \,\bm{\omega})\right\|^2\,,\quad\text{if}\,\quad q\geq q'
\end{multline}
and
\begin{equation}
\zeta(\bm{r},x; q,\bm{\omega}|q')=0\,,\quad \text{if}\,\quad q<q'\,,
\end{equation}
where the shorthand notation $(q,\bm{\omega})$ ( or $(q,\omega)$) denotes the non-anticipating functional dependence on the Brownian motion as in \eqref{adapted_notation}. We set $\zeta(\bm{r},x)=\zeta(\bm{r},x|0)$.

As in \eqref{DDE_xr}, let $\mathcal{E}(x \bm{r})$ and $\mathcal{E}(x \bm{r}|q')$ be the DDE defined as
\begin{equation}
\begin{gathered}
\mathcal{E}(x \bm{r})=e^{\zeta(\bm{r},x)}\to \mathcal{E}(x \bm{r};q,\bm{\omega})=e^{\zeta(\bm{r},x;q,\bm{\omega})}\,,\\
\mathcal{E}(x \bm{r}|q')=e^{\zeta(\bm{r},x|q')}\to \mathcal{E}(x \bm{r};q,\bm{\omega}|q')=e^{\zeta(\bm{r},x;q,\bm{\omega}|q')}\,.
\end{gathered}
\end{equation}
The symbol $x\bm{r}\in$ denotes the process taking values $x(q)\bm{r}(q,\bm{\omega})$.

In the following, given any symbols $K$ and $\alpha$ and a number $q'\in[0,1]$, the notation $K(\alpha|q)$ refers to an adapted process with a functional dependency on a parameter $\alpha$ and the number $q'$, while $K(\alpha;q,\bm{\omega}|q')=K(\alpha;q,\{\bm{\omega}(q'');0\leq q''\leq q\}|q')$ is the a value of the process at a given "time" $q\in [0,1]$ and a given realization of the Brownian motion $\bm{\omega}$; we also set $K(\alpha)=K(\alpha|0)$.

Now, let us define the set $\widehat{D}_{[0,1]}(\Omega)\subset H_{[0,1]}^p(\Omega)$, with $p\geq 1$, where, for each adapted process $\bm{r}\in \widehat{D}_{[0,1]}(\Omega)$, there exist a constant $C_{\bm{r}}\geq 0$ such as
\begin{equation}
\label{boundAux}
\left| \zeta(\bm{r},x; 1,\bm{\omega}) \right|\leq C_{\bm{r}}\quad a.s.\,.
\end{equation}
As far as we know, the set $\widehat{D}_{[0,1]}(\Omega)$ is not a vector space. Note that, if $\bm{r}\in \widehat{D}_{[0,1]}(\Omega)$, then the DDE $\mathcal{E}(x \bm{r})$ is a true martingale. This property assures that:
\begin{equation}
\label{martingality}
\mathbb{E}[\mathcal{E}(x \bm{r};q,\bm{\omega}|q')|\mathcal{F}_{q''}]=1,\quad \forall \,\,\, 0\leq q''\leq q'\leq q\leq 1\,\,\,\text{and}\,\,\,\bm{r}\in \widehat{D}_{[0,1]}(\mathbb{R}^d)\,.
\end{equation}
and
\begin{equation}
\label{positivity}
\mathcal{E}(x \bm{r};q,\bm{\omega}|q')\geq e^{-C_{\bm{r}}}\quad a.s.\,.
\end{equation}
For this reason, the martingale $\mathcal{E}(x \bm{r})$ can be considered as a probability density function of a probability measure $\widetilde{\mathbb{W}}_{x \bm{r}}$ equivalent\footnote{Two probability measures $\widetilde{\mathbb{W}}$ and $\mathbb{W}$, defined on the same measurable space $(\Omega,\mathcal{F})$, are equivalent if, given any set $A\in\mathcal{F}$, then $\mathbb{W}[A]=0$ if and only if $\widetilde{\mathbb{W}}[A]=0$} to $\mathbb{W}$.

Given two processes $\bm{r}$ and $\bm{v}$ in $\widehat{D}_{[0,1]}(\Omega)$,  let us introduce the binary functional $\mathbb{D}_{KL}(\, \cdot \, \|\,\cdot \,):\widehat{D}_{[0,1]}(\Omega)\times \widehat{D}_{[0,1]}(\Omega)\to [0,\infty)$ defined by:
\begin{multline}
\mathbb{D}_{KL}(\bm{r}  \, \|\,\bm{v} \,)=\mathbb{E}_{\nu}\left[\mathcal{E}(x \bm{r};q,\bm{\omega})\log\left(\frac{\mathcal{E}(x \bm{r};q,\bm{\omega})}{\mathcal{E}(x \bm{v};q,\bm{\omega})}\right)\right]\\=\mathbb{E}_{\nu}\left[\mathcal{E}(x \bm{r};q,\bm{\omega})\int^1_0 dq\,x^2(q)\|\bm{r}(q,\bm{\omega})-\bm{v}(q,\bm{\omega})\|^2\right]
\end{multline}
Because of the property \eqref{boundAux}, the above quantity is defined for all pair of processes in $\widehat{D}_{[0,1]}(\Omega)\times \widehat{D}_{[0,1]}(\Omega)$. Let also define
\begin{equation}
\label{metric}
\mathbb{D}^{\text{sym}}_{KL}(\bm{r}  \, \|\,\bm{v} \,)=\max\big\{\mathbb{D}_{KL}(\bm{r}  \, \|\,\bm{v} \,),\mathbb{D}_{KL}(\bm{v}  \, \|\,\bm{r} \,)\big\}
\end{equation}
Note that the quantity $\mathbb{D}_{KL}(\bm{r}  \, \|\,\bm{v} \,)$ and $\mathbb{D}_{KL}(\bm{v}  \, \|\,\bm{r} \,)$ are actually the \emph{relative entropies} (or Kullback–Leibler divergences) between the two probability-densities/DDEs $\mathcal{E}(x \bm{v})$ and $\mathcal{E}(x \bm{r})$. Indeed, for any $\bm{r} \in \widehat{D}_{[0,1]}(\Omega)$, $\mathbb{D}_{K,L}(\bm{r}  \, \|\,\bm{r} \,)=0$.

Moreover, if two processes $\bm{r}$ and $\bm{v}$ in $\widehat{D}_{[0,1]}(\Omega)$ verify the relation
\begin{equation}
\label{eqRel}
\mathbb{D}^{\text{sym}}_{KL}(\bm{r}  \, \|\,\bm{v} \,)=0\,,
\end{equation}
then the two corresponding DDEs are two statistical equivalent densities, i.e for each $\mathcal{F}-$measurable bounded random variable $A:\Omega\to \mathbb{R}$, they verify the equivalence:
\begin{equation}
\label{MyEquality}
\mathbb{E}\left[\,\mathcal{E}(x \bm{r})A(1,\bm{\omega})\right]=\mathbb{E}\left[\,\mathcal{E}(x \bm{v})A(1,\bm{\omega})\right].
\end{equation} 
The relations \eqref{eqRel} and \eqref{MyEquality} are equivalence relations.

By the relation \eqref{MyEquality}, we may argue that if \eqref{eqRel} holds, then the processes $\bm{r}$ and $\bm{v}$ are "similar", in some sense. This observation justifies the introduction of the quotient set $D_{[0,1]}(\Omega)$.
\begin{definition}
\label{define_D}
Given a process $\bm{r}\in\widehat{D}_{[0,1]}(\Omega)$, let $[\bm{r}]$ be the equivalence class
\begin{equation}
[\bm{r}]:=\{\bm{v}\in\widehat{D}_{[0,1]}(\Omega);\,\,\mathbb{D}^{\text{sym}}_{KL}(\bm{r}  \, \|\,\bm{v} \,)=0\}\,.
\end{equation}
We denote by $D_{[0,1]}(\Omega)$ the set of the equivalence classes:
\begin{equation}
D_{[0,1]}(\Omega):=\{[\bm{r}];\,\,\bm{r}\in\widehat{D}_{[0,1]}(\Omega)\}\,.
\end{equation}
\end{definition}
By abuse of notation, we will henceforth omit the square bracket around the elements of the space $D_{[0,1]}(\Omega)$.

Now, we provides the fundamental definitions.
\begin{definition}
\label{definitions}
For a given $n\in\mathbb{N}$, we call RSB value process the functional $\Gamma^{(n)}: L_1^{\infty}(\Omega)\times \chi \times D_{[0,1]}(\Omega)\to S_{[0,1]}^p(\mathbb{R})$ defined as follows:
\begin{multline}
\label{value_function}
\Gamma^{(n)}(\Psi,x,\bm{r};q,\bm{\omega})=\mathbb{E}\left[\mathcal{E}(x \bm{r};1,\bm{\omega}|q)\Psi ( \,1,\bm{\omega}\,)\bigg|\mathcal{F}_q \right]\\
-\frac{1}{2}\mathbb{E}\left[\int^1_q dq'\,x(q')\mathcal{E}(x \bm{r};q',\bm{\omega}|q)\,\|\bm{r}(q', \,\bm{\omega}) \,\|^2\bigg|\mathcal{F}_q\,\right]\,,
\end{multline}
where
\begin{itemize}
\item the random variable $\Psi\in L_1^{\infty}(\Omega)$ is the claim;
\item the function $x\in\chi$ is the POP (Parisi Order Parameter);
\item the process $\bm{r}\in D_{[0,1]}(\Omega)$ is the control parameter.
\end{itemize}
We say that a pair $(\Psi,x)\in L_1^{\infty}(\Omega)\times \chi$ allows the RSB expectation if there exists a solution pair $(\phi^{(n)}(\Psi,x),\bm{r}(\Psi,x))\in S_{[0,1]}^p(\Omega)\times  D_{[0,1]}(\Omega)$ such that:
\begin{equation}
\phi^{(n)}(\Psi,x;q,\bm{\omega})=\Gamma^{(n)}(\Psi,x,\bm{r}(\Psi,x);q,\bm{\omega}))=\sup_{\bm{r}\in  D_{[0,1]}(\Omega)}\Gamma^{(n)}(\Psi,x,\bm{r};q,\bm{\omega}))\,.
\end{equation}
and the following quantity
\begin{equation}
\label{auxiliary_variational_problem}
\Sigma^{(n)}(\Psi,x)=\int \text{d}\nu(\bm{\omega}(0))\phi^{(n)}(\Psi,x;0,\bm{\omega}(0))=\int \text{d}\nu(\bm{\omega}(0))\sup_{\bm{r}\in  D_{[0,1]}(\Omega)}\Gamma^{(n)}(\Psi,x,\bm{r};0,\bm{\omega}(0) )
\end{equation}
is the RSB expectation of $\Psi$, driven by $x$.
\end{definition}
For the rest of the chapter we will omit the superscript $\cdot^{(n)}$ and we consider a generic dimension $n$. We consider only bounded claims because of the fact that, at non-zero temperature, the random variable $\Psi^{\text{(e)}}$ and $\Psi^{\text{(v)}}$ , defined in \eqref{claimE} and \eqref{claimV}, are bounded. 

Throughout the chapter, we consider a real constant $c< \infty$ and assume that the claim $\Psi$ is bounded by:
\begin{equation}
\label{boundedness}
|\Psi(1,\bm{\omega})|\leq c\,,\quad a.s. \,.
\end{equation}

The aim of this chapter is obtaining the equation for the solution pair.

\section{Backward Stochastic Differential equations}
\label{sec5.2}
In this section we compute the variation of RSB$-$value process with respect the control parameter.

In the first subsection, we remind some properties of the Doléan-Dade exponential. In the second subsection, we provide a proper definition of the stationary condition \eqref{stationary0} and we get an equation for the control parameter.
\subsection{Properties of the Doléans-Dade exponential (DDE)}
\label{subsec6.2.1}
In this subsection we remind some fundamental facts about the DDEs and fix some notations. The following relations relies on the fact that, for any given process $\bm{r}\in D_{[0,1]}(\Omega)$, the DDE $\mathcal{E}(x\bm{r})$ defined in \eqref{DDE_xr} is a true martingale. The DDEs play a crucial role in stochastic theory and a vast literature has been produced about (see for example Chapter VIII of \cite{YoRev}).

The Doéans-Dade exponential (DDE) is defined as the unique strong solution of the following stochastic differential equation \cite{YoRev,Oksendal}:
\begin{equation}
\label{DDE_derivative}
\mathcal{E}(x\bm{r};q',\bm{\omega}|q)=1+\int^{q'}_{q}\mathcal{E}(x\bm{r};q'',\bm{\omega}|q') x(q)\bm{r}(q'',\bm{\omega})\cdot d\bm{\omega}(q)\,,\quad 0\leq q\leq q'\leq 1\,.
\end{equation}
As we stated above, the martingale condition implies that DDE  $\mathcal{E}(x\bm{r})$ is a positive process with
\begin{equation}
\label{MartiProp}
\mathbb{E}[\mathcal{E}(x\bm{r};q,\bm{\omega})]=1\quad \forall q\in[0,1].
\end{equation}
Then, we can define a probability measure $\widetilde{\mathbb{W}}_{x \bm{r}}$ on the measurable space $(\Omega,\mathcal{F})$, equivalent to the Wiener measure $\mathbb{W}$ and such as the DDE $\mathcal{E}(x\bm{r})$ is the Radon-Nikodym derivative of $\widetilde{\mathbb{W}}_{x \bm{r}}$ with respect to $\mathbb{W}$ \cite{YoRev,Billingsley}:
\begin{equation}
\frac{d\widetilde{\mathbb{W}}_{x \bm{r}} }{d\mathbb{W}}(\bm{\omega})=\mathcal{E}\left(\,x\bm{r};1,\bm{\omega}\,\right)\,.
\end{equation}

\begin{definition}
\label{DDE_expectation}
Let $\bm{r}\in D_{[0,1]}(\Omega)$ and $x\in \chi$. For any $\mathcal{F}_1-$measurable random variable $A$, the expectation of $A$ with respect the probability measure $\widetilde{\mathbb{W}}_{x \bm{r}}$ is the linear functional $A\mapsto \widetilde{\mathbb{E}}_{x\bm{r}}[A]\in \mathbb{R}$, defined as follows:
\begin{equation}
\widetilde{\mathbb{E}}_{x\bm{r}}\left[\,A(\,\bm{\omega}\,)\right]=\mathbb{E}\left[\mathcal{E}\left(\,x\bm{r};1,\bm{\omega}\,\right)\,A(\,\bm{\omega}\,)\,\right].
\end{equation}
Moreover, from Bayes Theorem, the conditional expectation value is given by
\begin{equation}
\label{conditionalNot}
\widetilde{\mathbb{E}}_{x\bm{r}}\left[\,A(\,\bm{\omega}\,)\big|\mathcal{F}_q\right]=\mathbb{E}\left[\mathcal{E}(x\bm{r};1,\bm{\omega}|q)\,A(\,\bm{\omega}\,)\,\big|\mathcal{F}_q\right],\quad \forall q\in[0,1]\,.
\end{equation}
\end{definition}
The symbol $\widetilde{\mathbb{E}}_{x\bm{r}}$ will be widely used throughout the thesis. For any process $\bm{r}\in D_{[0,1]}(\Omega)$, we define the vector semimartingale $\bm{W}_{x\bm{r}}$ such as:
\begin{equation}
\label{semimart}
\bm{W}_{x\bm{r}}\left(\,q,\bm{\omega}\right)=\bm{\omega}(q)-\int^q_0 dq'\, x(q')\, \bm{r}(\bm{\omega},\,q'\,)
\end{equation}
and
\begin{equation}
\label{AuxProb}
d\bm{W}_{x\bm{r}}\left(\,q,\bm{\omega}\right)=\text{d}\bm{\omega}(q)- \text{d}q\, x\, \bm{r}\left(\bm{\omega},\,x\,\right).
\end{equation}

By CMG Theorem \cite{CameronMartin,Girsanov}, the vector semimartingale $\bm{W}_{x\bm{r}}$ is a vector Brownian motion with respect the probability measure $\widetilde{\mathbb{W}}_{x \bm{r}}$ and the filtration $\{\mathcal{F}_q\}_{q\in[0,1]}$.

As a consequence, the stochastic integral of any vector processes $\bm{u} \in H^p_{[0,1]}(\Omega)$ (for any $p\geq 1$)  with respect the process $\bm{W}_{x\bm{r}}$, is a $\{\mathcal{F}_q\}_{q\in[0,1]}-$martingale with respect $\widetilde{\mathbb{W}}_{x \bm{r}}$, that implies:
\begin{equation}
\label{MartiPropMod}
\widetilde{\mathbb{E}}_{x\bm{r}}\left[\,\int^{q_1}_{0} \,\bm{u}(\,q\,, \,\bm{\omega}\,) \cdot d\bm{W}_{x\bm{r}}\left(\,q,\bm{\omega}\right)\Bigg|\mathcal{F}_{q_2}\right]=
\begin{cases}
\int^{q_1}_{0} \,\bm{u}(\,q\,, \,\bm{\omega}\,) \cdot d\bm{W}_{x\bm{r}}\left(\,q,\bm{\omega}\right),\, \text{if}\, q_1\leq q_2,\\
\int^{q_2}_{0} \,\bm{u}(\,q\,, \,\bm{\omega}\,) \cdot d\bm{W}_{x\bm{r}}\left(\,q,\bm{\omega}\right),\, \text{if}\,q_1>q_2,
\end{cases}
\end{equation}
and
\begin{equation}
\label{MartiPropMod1}
\widetilde{\mathbb{E}}_{x\bm{r}}\left[\,\int^{1}_0 \,\bm{u}(\,q\,, \,\bm{\omega}\,) \cdot d\bm{W}_{x\bm{r}}\left(\,q,\bm{\omega}\right)\,\}\right]=0.
\end{equation}
Moreover, the expectation value of the product between two stochastic integrals verify the It\^o isometry
\begin{multline}
\label{MartiPropMod2}
\widetilde{\mathbb{E}}_{x\bm{r}}\left[\,\int^{1}_0 \,\bm{u}(q,\bm{\omega}) \cdot d\bm{W}_{x\bm{r}}\left(\,q,\bm{\omega}\right)\int^{1}_0 \,\bm{v}(q,\bm{\omega}) \cdot d\bm{W}_{x\bm{r}}\left(\,q,\bm{\omega}\right)\,\}\right]=\\
\widetilde{\mathbb{E}}_{x\bm{r}}\left[\,\int^{1}_0 \text{d}q\,\bm{u}(q,\bm{\omega}) \cdot \bm{v}(q,\bm{\omega})\,\}\right],
\end{multline}
for any pair of processes $\bm{u}$ and $\bm{v}$ in $H_{[0,1]}^p(\Omega)$, with $p\geq 2$.

By combining \eqref{MartiPropMod1} with the definition \eqref{semimart}, we also have:
\begin{equation}
\label{integration_stoc}
\widetilde{\mathbb{E}}_{x\bm{r}}\left[\,\int^{1}_0 \,\bm{u}(\,q\,, \,\bm{\omega}\,) \cdot d\bm{\omega}(q)\right]=\widetilde{\mathbb{E}}_{x\bm{r}}\left[\,\int^{1}_0 \text{d}q\,x(q)\bm{u}(q,\bm{\omega}) \cdot \bm{r}(q,\bm{\omega})\right]
\end{equation}
We end the subsection by providing the following notation. Let $\bm{r}$ and $\bm{v}$ be two processes in $D_{[0,1]}(\Omega)$, we define
\begin{equation}
\mathcal{E}(x \bm{v};q',\bm{W}_{x \bm{r}}|q)=\exp\left(\int^{q'}_q \,\bm{v}(q,\bm{\omega}) \cdot d\bm{W}_{x\bm{r}}(q,\bm{\omega})-\frac{1}{2}\int^{q'}_q \text{d}q\,x(q)\|\bm{v}(q,\bm{\omega}) \|^2\right)\,.
\end{equation}
The DDE $\mathcal{E}(x \bm{v})$ is a martingale with respect the probability measure $\widetilde{\mathbb{W}}_{x \bm{r}} $ and verify all the above relation, by substituting $\mathbb{E}\to\widetilde{\mathbb{E}}_{x\bm{r}}$ and $\bm{\omega}(q)\to \bm{W}_{x\bm{r}}(q,\bm{\omega})$.

Now, we have all the necessary tools to address the study of the RSB-expectation.
\subsection{The stationary condition}
Now, we provides a proper definition of "stationary condition". Throughout the subsection the  claim $\Psi$ and the POP $x$ are kept fixed.

Given two vector processes $\bm{r}\in D_{[0,1]}(\Omega)$ and $\bm{u}\in D_{[0,1]}(\Omega)$ and a number $\epsilon \in [0,1]$, let
\begin{equation}
\Theta^{\epsilon}\left(\,\Psi,\,x,\,\bm{r},\bm{u};  q,\bm{\omega}\,\right)=\epsilon \Gamma\left(\,\Psi,x,\bm{r};\,q,\bm{\omega}\,\right)+(1-\epsilon)\Gamma\left(\,\Psi,x,\bm{u};\,q,\bm{\omega}\,\right)\,.
\end{equation}
We may consider the probability density function, defined as follow
\begin{equation}
\rho(\Psi,\,x,\,\bm{r},\bm{u};  q,\bm{\omega})=(1-\epsilon)\, \mathcal{E}(\,x\bm{r};q,\bm{\omega})+\epsilon\mathcal{E}(\,x\bm{u};q,\bm{\omega})\,.
\end{equation}
Note that, by \eqref{martingality} and \eqref{positivity}, since $0\leq \epsilon\leq 1$, then $\rho(\Psi,\,x,\,\bm{r},\bm{u})$ is a strictly positive and bounded martingale of mean $1$. As a consequence, there exists a process $\bm{v}^{\epsilon}(\bm{r},\bm{u})\in D_{[0,1]}(\Omega)$ such as:
\begin{equation}
\rho(\Psi,\,x,\,\bm{r},\bm{u};  q,\bm{\omega})=\mathcal{E}(\,x\bm{v}^{\epsilon}(\bm{r},\bm{u});q,\bm{\omega})\,,
\end{equation}
and
\begin{equation}
\Theta^{\epsilon}\left(\,\Psi,\,x,\,\bm{r},\bm{u};  q,\bm{\omega}\,\right)= \Gamma\left(\,\Psi,x,\bm{v}^{\epsilon}(\bm{r},\bm{u});\,q,\bm{\omega}\,\right)\,.
\end{equation}
A straightforward computation yields:
\begin{equation}
\bm{v}^{\epsilon}(\bm{r},\bm{u};q,\bm{\omega})=\frac{(1-\epsilon)\,\bm{r}(q,\bm{\omega}) \mathcal{E}(\,x\bm{r};q,\bm{\omega})+\epsilon\,\bm{u}(q,\bm{\omega})\mathcal{E}(\,x\bm{u};q,\bm{\omega})}{(1-\epsilon)\, \mathcal{E}(\,x\bm{r};q,\bm{\omega})+\epsilon\,\mathcal{E}(\,x\bm{u};q,\bm{\omega})}\,.
\end{equation}
Let
\begin{equation}
\label{direction}
\delta \bm{u}(q,\bm{\omega})=\partial_{\epsilon}\bm{v}^{\epsilon}(\bm{r},\bm{u};q,\bm{\omega})\big|_{\epsilon=0}\,,
\end{equation}
where the symbol $\partial_{\epsilon}$ denote the derivative over $\epsilon$ by taking all the other parameters fixed. 

We define the directional derivative of the functional $\bm{r}\mapsto \Gamma(\Psi,x,\bm{r})$ along the path $\{\bm{r},\bm{u}\}$ by
\begin{multline}
\label{GatDer}
\begin{aligned}
&\Pi\left(\,\Psi,\,x,\,\bm{r},\delta \bm{u};  q,\bm{\omega}\,\right)=\partial_{\epsilon}\Gamma\left(\,\Psi,x,\bm{v}^{\epsilon}(\bm{r},\bm{u});\,q,\bm{\omega}\,\right)\big|_{\epsilon=0}\\
&=\widetilde{\mathbb{E}}_{x\bm{r}}\left[\,\Psi ( \,1,\bm{\omega}\,)\int^1_q x(q')\,\delta \bm{u}(q',\bm{\omega}) \cdot d\bm{W}_{x\bm{r}}(q',\bm{\omega})\,\,\Bigg|\mathcal{F}_q \right]
\end{aligned}\\
-\frac{1}{2}\,\,\widetilde{\mathbb{E}}_{x\bm{r}}\left[\int^1_q \text{d}q'\,x(q')\,\int^{q'}_q x(q'')\,\delta \bm{u}(q'',\bm{\omega}) \cdot d\bm{W}_{x\bm{r}}(q'',\bm{\omega})\,\,\left \|\,\bm{r}(q', \bm{\omega}) \,\right\|^2\,\Bigg|\mathcal{F}_q\right]\\
-\,\widetilde{\mathbb{E}}_{x\bm{r}}\left[\int^1_q \text{d}q'\,x(q') \,\delta \bm{u}(q', \,\bm{\omega})\cdot\bm{r}(q', \bm{\omega}) \Bigg|\mathcal{F}_q\right].
\end{multline}
In the following, the process $x\delta \bm{u}$ will be called \emph{direction}.

We guess that the process $x\delta \bm{u}$, defined in \eqref{direction}, is actually a generic process on $H_{[0,1]}^p(\Omega)$. For this reason, we give the following definition of stationary point.
\begin{definition}[Stationary point]
A stationary point $\bm{r}^{*}$ of the functional $\Gamma(\Psi,x,\Cdot)$ is a vector stochastic process in $D_{[0,1]}(\Omega)$, such as the directional derivative \eqref{GatDer} vanishes  for any direction $x\,\delta\bm{u}\in H_{[0,1]}^p(\Omega)$:
\begin{equation}
\label{statCondPi}
\Pi\left(\,\Psi,\,x,\,\bm{r}^*,\delta \bm{u};  q,\bm{\omega}\,\right)=0\,,\quad \forall\,x\,\delta\bm{u}\in H^p_{[0,1]}(\Omega)\,.
\end{equation}
The above equation is the stationary condition.
\end{definition}
Such definition will be justified a posteriori. We want to obtain an equation for the auxiliary order parameter that is equivalent to the above stationary condition and does not depend on the derivative direction $x\delta \bm{u}$. 

By relation \eqref{MartiPropMod}, the second expectation value in \eqref{GatDer} can be rewritten in such a way
\begin{multline}
\frac{1}{2}\,\,\widetilde{\mathbb{E}}_{x\bm{r}}\Bigg[\,\int^1_q \text{d}q'\,x(q')\int^{q'}_q x(q'')\,\bm{u}(q'',\bm{\omega}) \cdot d\bm{W}_{x\bm{r}}(q'',\bm{\omega})\|\,\bm{r}(q', \,\bm{\omega}) \,\|^2\Bigg|\mathcal{F}_q\Bigg]=\\
\frac{1}{2}\widetilde{\mathbb{E}}_{x\bm{r}}\left[\,\int^1_q x(q')\,\bm{u}(q',\bm{\omega}) \cdot d\bm{W}_{x\bm{r}}\left(\,q,\bm{\omega}\right)\,\,\int^1_q \text{d}q'\,x(q')\, \|\,\bm{r}(q', \,\bm{\omega}) \|^2\Bigg|\mathcal{F}_q\right]
\end{multline}
and,  by formula \eqref{MartiPropMod2}, the third expectation value is
\begin{multline}
\widetilde{\mathbb{E}}_{x\bm{r}}\left[\int^1_q \text{d}q'\,x(q') \,\delta \bm{u}(q', \,\bm{\omega})\cdot\bm{r}(q', \,\bm{\omega}) \Bigg|\mathcal{F}_q\right]\\=\widetilde{\mathbb{E}}_{x\bm{r}}\left[\int^1_q \,x(q') \,\delta \bm{u}(q', \,\bm{\omega})\cdot d\bm{W}_{x\bm{r}}(q',\bm{\omega})\int^1_q \bm{r}(q', \,\bm{\omega}) \cdot d\bm{W}_{x\bm{r}}(q',\bm{\omega})\Bigg|\mathcal{F}_q\right]\,.
\end{multline}
Combining the above formulas in \eqref{GatDer}, we can rewrite the directional derivative \eqref{GatDer} in such a way:
\begin{equation}
\label{NewGat}
\Pi\left(\,\Psi,\,x,\,\bm{r},\delta\bm{u};  q,\bm{\omega}\,\right)
=\widetilde{\mathbb{E}}_{x\bm{r}}\left[\pi\left(\,\Psi,\,x,\,\bm{r};  1,\bm{\omega}\,|q\right)\int^1_q x(q')\,\bm{u}(q',\bm{\omega}) \cdot d\bm{W}_{x\bm{r}}(q',\bm{\omega})\Bigg|\mathcal{F}_q\right]\,,
\end{equation}
with
\begin{multline}
\label{piii}
\pi\left(\,\Psi,\,x,\,\bm{r};  1,\bm{\omega}\,|q\right)\\=\Psi ( \,1,\bm{\omega}\,)-\int^{1}_q \,\bm{r}(\,q'\,, \bm{\omega}\,) \cdot d\bm{W}_{x\bm{r}}(q',\bm{\omega})-\frac{1}{2}\int^1_q \text{d}q'\,x(q')\left \|\,\bm{r}(q', \,\bm{\omega}) \,\right\|^2.
\end{multline}
In the following we will refer to this quantity as \emph{random RSB}. 

A process $\bm{r}^{*}$ is a stationary point, according to the definition \eqref{statCondPi}, if and only if the random RSB $\pi\left(\,\Psi,\,x,\,\bm{r}|q\right)$ is uncorrelated, under the measure $\widetilde{\mathbb{W}}_{x \bm{r}}$, to all the random variables of the form
\begin{equation}
\label{A_direction}
A(1,\bm{\omega}|q\,)=\int^1_q x(q')\,\delta\bm{u}(q',\bm{\omega}) \cdot d\bm{W}_{x\bm{r}}(q',\bm{\omega}),\quad\text{with}\quad x\delta\bm{u}\in H^p_{[0,1]}(\Omega).
\end{equation}
This condition provides the stationary equation for the control parametr $\bm{r}$.  The following theorem is the most important results of this chapter.

\begin{theorem}
\label{stationariTheo}
Let us consider a claim $\Psi \in L_1^{\infty}(\Omega)$ and a POP $x\in \chi$. A control parameter $\bm{r}^{*}\in D_{[0,1]}(\Omega)$  verifies the condition \eqref{statCondPi} if and only if, at any time $q\in[0,1]$, the random RSB $\pi\left(\,\Psi,\,x,\,\bm{r}^*|q\right)$ is $\mathcal{F}_q-$measurable. 

In particular, this implies that there exists a process $\phi:[0,1]\times \Omega \to \mathbb{R}$, adapted to the filtration $\{\mathcal{F}_{q'}\}_{q'\in[0,1]}$, such as
\begin{equation}
\pi\left(\,\Psi,\,x,\,\bm{r};  1,\bm{\omega}\,|q\right)=\phi(q,\bm{\omega})\,,
\end{equation}
so we find the equation:
\begin{equation}
\label{selfEq1Aux}
\phi(q,\bm{\omega})=\Psi ( \,1,\bm{\omega}\,)-\int^{1}_q \,\bm{r}^{*}(q'\,, \bm{\omega}\,) \cdot d\bm{\omega}(q')+\frac{1}{2}\int^1_q \text{d}q'\,x(q') \|\,\bm{r}^{*}(q', \,\bm{\omega}) \|^2.
\end{equation}
The above equation is the stationary equation that generate the RSB expectation.
\end{theorem}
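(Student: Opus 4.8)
The plan is to read the stationarity requirement \eqref{statCondPi} through the already-rewritten directional derivative \eqref{NewGat}, which exhibits $\Pi$ as a conditional covariation, under the tilted measure $\widetilde{\mathbb{W}}_{x\bm{r}^*}$, between the random RSB $\pi(\Psi,x,\bm{r}^*|q)$ of \eqref{piii} and the stochastic integral $A(1,\bm{\omega}|q)=\int_q^1 x\,\delta\bm{u}\cdot d\bm{W}_{x\bm{r}^*}$ of \eqref{A_direction}. Since, by Cameron--Martin--Girsanov, the process $\bm{W}_{x\bm{r}^*}$ of \eqref{semimart} is a Brownian motion under $\widetilde{\mathbb{W}}_{x\bm{r}^*}$, the family $\{A(1,\cdot|q):x\delta\bm{u}\in H^p_{[0,1]}(\Omega)\}$ is exactly the set of It\^o integrals on $[q,1]$, which spans densely, in $L^2(\widetilde{\mathbb{W}}_{x\bm{r}^*})$, the orthogonal complement of the $\mathcal{F}_q$-measurable random variables. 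The theorem is then the assertion that a random variable is conditionally uncorrelated with every such increment if and only if it is itself $\mathcal{F}_q$-measurable, and I would prove the two implications separately.

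For the easy implication ($\Leftarrow$), assume $\pi(\Psi,x,\bm{r}^*|q)$ is $\mathcal{F}_q$-measurable. Then it factors out of the conditional expectation in \eqref{NewGat}, leaving $\widetilde{\mathbb{E}}_{x\bm{r}^*}[A(1,\bm{\omega}|q)\,|\mathcal{F}_q]$, which vanishes because $A$ is a $\widetilde{\mathbb{W}}_{x\bm{r}^*}$-martingale increment over $[q,1]$ by \eqref{MartiPropMod}; hence $\Pi\equiv 0$ and \eqref{statCondPi} holds.

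For the substantive implication ($\Rightarrow$), I would first check that $\pi(\Psi,x,\bm{r}^*|q)$ lies in $L^2(\widetilde{\mathbb{W}}_{x\bm{r}^*})$: the claim $\Psi$ is bounded by \eqref{boundedness}, the bound \eqref{boundAux} makes the two measures equivalent with a density bounded away from $0$ and $\infty$, and $\bm{r}^*\in D_{[0,1]}(\Omega)\subset H^p_{[0,1]}(\Omega)$ controls the It\^o and drift terms. The martingale representation theorem then furnishes an adapted integrand $\bm{\psi}$ with
\[
\pi(\Psi,x,\bm{r}^*|q)=\widetilde{\mathbb{E}}_{x\bm{r}^*}[\pi\,|\mathcal{F}_q]+\int_q^1 \bm{\psi}(q',\bm{\omega})\cdot d\bm{W}_{x\bm{r}^*}(q',\bm{\omega}).
\]
Substituting this into \eqref{NewGat}, the $\mathcal{F}_q$-measurable part contributes nothing by the same argument as above, and the conditional It\^o isometry \eqref{MartiPropMod2} collapses the product of the two stochastic integrals into $\widetilde{\mathbb{E}}_{x\bm{r}^*}[\int_q^1 x\,\bm{\psi}\cdot\delta\bm{u}\,dq'\,|\mathcal{F}_q]$. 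Requiring this to vanish for every admissible direction forces $\bm{\psi}=0$: testing against $\delta\bm{u}=\bm{\psi}\,\mathbb{1}_{\{\|\bm{\psi}\|\le K\}}$, which is adapted and bounded hence admissible, and letting $K\to\infty$ gives $x\bm{\psi}=0$, i.e.\ $\bm{\psi}=0$ on $\{x>0\}$; the values on $\{x=0\}$ are immaterial because $D_{[0,1]}(\Omega)$ is a quotient by the equivalence \eqref{eqRel}, which only sees $\bm{r}$ where $x>0$. Consequently the martingale part of $\pi$ vanishes and $\pi(\Psi,x,\bm{r}^*|q)=\widetilde{\mathbb{E}}_{x\bm{r}^*}[\pi|\mathcal{F}_q]$ is $\mathcal{F}_q$-measurable.

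To finish, writing $\pi(\Psi,x,\bm{r}^*|q)=\phi(q,\bm{\omega})$ for an adapted $\phi$, I would convert the defining expression \eqref{piii} back to the physical Brownian motion via $d\bm{W}_{x\bm{r}^*}=d\bm{\omega}-x\bm{r}^*\,dq$, which turns the combination $-\int_q^1\bm{r}^*\cdot d\bm{W}_{x\bm{r}^*}-\tfrac12\int_q^1 x\|\bm{r}^*\|^2$ into $-\int_q^1\bm{r}^*\cdot d\bm{\omega}+\tfrac12\int_q^1 x\|\bm{r}^*\|^2$, yielding exactly \eqref{selfEq1Aux}. I expect the main obstacle to be the $(\Rightarrow)$ direction: rigorously justifying the $L^2$ integrability needed for the martingale representation and, above all, extracting $\bm{\psi}=0$ from the variational identity while correctly handling the degeneracy of $x$ through the quotient structure of $D_{[0,1]}(\Omega)$; the density of the admissible directions and the conditional, rather than unconditional, nature of the It\^o isometry are the delicate points.
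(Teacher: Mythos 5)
Your proof is correct, and for the substantive direction it takes a genuinely different route from the paper's. The easy implication is identical in both: an $\mathcal{F}_q$-measurable $\pi$ factors out of \eqref{NewGat} and the martingale property \eqref{MartiPropMod} kills the remaining conditional expectation. For the converse, the paper tests the stationary condition against the specific directions $x(q')\delta\bm{u}(q')=\bm{v}(q')\mathcal{E}(\bm{v};q',\bm{W}_{x\bm{r}^*}|q)$ with $\bm{v}=\bm{f}-x\bm{r}^*$ and $\bm{f}$ deterministic, which converts \eqref{statCondPi} into the identity $\mathbb{E}[\mathcal{E}(\bm{f};1,\bm{\omega}|q)\,\pi\,|\mathcal{F}_q]=\widetilde{\mathbb{E}}_{x\bm{r}^*}[\pi|\mathcal{F}_q]$ for all $\bm{f}\in L^p([0,1],\mathbb{R}^n)$, and then invokes the density in $L^2$ of the linear span of stochastic exponentials of deterministic integrands (\O ksendal, Lemma 4.3.2) to conclude $\pi=\widetilde{\mathbb{E}}_{x\bm{r}^*}[\pi|\mathcal{F}_q]$. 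You instead apply the martingale representation theorem under $\widetilde{\mathbb{W}}_{x\bm{r}^*}$ to write $\pi=\widetilde{\mathbb{E}}_{x\bm{r}^*}[\pi|\mathcal{F}_q]+\int_q^1\bm{\psi}\cdot d\bm{W}_{x\bm{r}^*}$, collapse \eqref{NewGat} via the conditional It\^o isometry to $\widetilde{\mathbb{E}}_{x\bm{r}^*}[\int_q^1 x\,\delta\bm{u}\cdot\bm{\psi}\,dq'|\mathcal{F}_q]$, and extract $\bm{\psi}=0$ by testing against truncations of $\bm{\psi}$ itself; the integrability check and the final Girsanov rewriting into \eqref{selfEq1Aux} are as in the paper. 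Your route buys a clearer view of exactly where the degeneracy of the POP enters: the quadratic form $\int x\|\bm{\psi}\|^2$ only controls $\bm{\psi}$ on $\{x>0\}$, so the argument needs $x>0$ Lebesgue-a.e.\ on $(q,1]$ --- but this is precisely the hypothesis the paper's own proof slips in (``$x(q)>0$ for all $q>0$'') in order to legitimize dividing by $x$ when exhibiting its test directions as admissible elements $x\delta\bm{u}\in H^p_{[0,1]}(\Omega)$, so neither argument is more restrictive than the other. One small correction: your appeal to the quotient structure \eqref{eqRel} of $D_{[0,1]}(\Omega)$ to dismiss the set $\{x=0\}$ is not quite the right justification, since the martingale-representation integrand $\bm{\psi}$ of $\pi$ is not an element of that quotient (and $\pi$ in \eqref{piii} sees $\bm{r}^*$ itself, not $x\bm{r}^*$, on $\{x=0\}$); the correct reason the degenerate set is harmless is that it is Lebesgue-null under the standing non-degeneracy assumption on $x$. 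This does not affect the validity of the proof.
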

We remind that the random RSB is $\mathcal{F}_q-$measurable if, given a realization of the vector Brownian motion $\bm{\omega}$, $\pi\left(\,\Psi,\,x,\,\bm{r};1,\bm{\omega}|q\right)$ depends only on $\{\bm{\omega}(q'), \,0\leq q' \leq q\,\}$.
 The equation \eqref{selfEq1Aux} is the stationary equation of the control parameter.

Note that all the components of $\bm{r}^{*}$ and the process $\phi$ are unknowns of the equation. However, such class of equations may have a unique solution, since the condition that both the process $\phi$ and $\bm{r}$ are adapted provides a remarkable constrained. The right-hand member of the equation, indeed, is a sum of random quantities that are $\mathcal{F}_1$ measurable. We must look for a control parameter $\bm{r}^{*}$, depending only on the past, such as to "delete the dependence of the future".

The stationary equation can be rewritten in stochastic differential notation as
\begin{equation}
d\phi(q,\bm{\omega})=d\bm{\omega}(q)\cdot\bm{r}^{*}(q\,, \bm{\omega}\,)  -\frac{1}{2}\text{d}q\,x(q) \|\,\bm{r}^{*}(q, \,\bm{\omega}) \|^2
\end{equation}
together with the end point condition
\begin{equation}
\phi(1,\bm{\omega})=\Psi(1,\bm{\omega}).
\end{equation}
This kind of equation are called \emph{backward stochastic differential equation} (BSDE) \cite{PaPeng}.

BSDEs arise in many optimization and control problems, where we aim to fulfill a given "claim" (the claim $\Psi(1,\bm{\omega})$) and the control parameters depend only on the past(so we consider only adapted process).

\begin{proof}[Proof of Theorem \ref{stationariTheo}]
If the control parameter $\bm{r}^{*}\in D_{[0,1]}(\Omega)$ verifies the condition of \eqref{selfEq1Aux}, then there exists an adapted process $\phi$ such as
\begin{multline}
\Pi(\Psi,\,x,\,\bm{r}^{*},\delta\bm{u};q,\bm{\omega})\\
=\widetilde{\mathbb{E}}_{x\bm{r}^{*}}\left[\,\pi(\Psi,\,x,\,\bm{r}^{*};1,\bm{\omega}|q)\,\int^1_0 x(q)\delta \bm{u}(q,\bm{\omega}) \cdot d\bm{W}_{x\bm{r}^{*}}\left(\,q,\bm{\omega}\right)\Bigg|\mathcal{F}_q\right]\\
=\phi(q,\bm{\omega})\widetilde{\mathbb{E}}_{x\bm{r}^{*}}\left[\,\int^1_0 x(q)\delta \bm{u}(q,\bm{\omega}) \cdot d\bm{W}_{x\bm{r}^{*}}\left(\,q,\bm{\omega}\right)\Bigg|\mathcal{F}_q\right].
\end{multline}
The process $\phi$ can be pulled outside the conditional expectation value, since it is $\mathcal{F}_q-$measurable. Then, the stationarity \eqref{statCondPi} follows from the martingale property \eqref{MartiPropMod1}.

Conversely, suppose that the process $\bm{r}^{*}$ is a stationary point, according to the definition \eqref{statCondPi}. We have to show that the set of the random variables of the form \eqref{A_direction} is dense in $L_1^p(\Omega)$.

Consider the process $\bm{v}\in D_{[0,1]}(\Omega)$, defined as
\begin{equation}
\bm{v}(q,\bm{\omega})=\bm{f}(q) -x(q)\bm{r}^{*}(q,\bm{\omega})\,,
\end{equation}
where $\bm{f}:[0,1]\to \mathbb{R}^m$ is any deterministic and function such as $\int^1_0\text{d}q \,\|\bm{f}(q)\|^p=1$, with $p\geq 2$ . Using standard notation \cite{Berry}, we wright that $\bm{f}\in L^p([0,1],\mathbb{R}^n)$.

Since $x\, \delta \bm{u}\in H_{[0,1]}^p(\Omega)$, and $x(q)>0$ for all $q>0$, then we can consider a class of directions of the form:
\begin{equation}
\label{protoTypeder}
x(q') \delta \bm{u}(q',\bm{\omega})=\bm{v}(q',\bm{\omega})\mathcal{E}\big( \bm{v};q',\bm{W}_{x \bm{r}^{*}}\big|q\big)\,,\quad \text{with}\quad q'\geq q
\end{equation}
Then by property \eqref{DDE_derivative} of DDEs, one get
\begin{equation}
\int^1_q x(q')\delta \bm{u}(q',\bm{\omega}) \cdot d\bm{W}_{x\bm{r}^{*}}(q',\bm{\omega})\\
=\frac{\mathcal{E}(\bm{f};q',\bm{\omega}|q)}{\mathcal{E}(x\bm{r}^{*};q',\bm{\omega}|q)}-1\,.
\end{equation}
Replacing the above direction in \eqref{NewGat}, the stationary condition \eqref{statCondPi} yields
\begin{multline}
\mathbb{E}\left[\mathcal{E}(\bm{f};q',\bm{\omega}|q)\pi(\Psi,\,x,\,\bm{r}^{*};1,\bm{\omega}|q)\,\big|\mathcal{F}_q\right]\\
=\widetilde{\mathbb{E}}_{x\bm{r}^{*}}\left[\,\pi(\Psi,\,x,\,\bm{r}^{*};1,\bm{\omega}|q)\,\big|\mathcal{F}_q\right]\,,\,\forall\,\bm{f}\in L^p([0,1],\mathbb{R}^n).
\end{multline}
Note that the expectation value on the left-hand side member of the equation is with respect the probability measure $\mathbb{W}$ and on the right-hand the expectation is with respect $\widetilde{\mathbb{W}}_{x \bm{r}^{*}}$.

The linear span of the set $\left\{\mathcal{E}(\bm{f};q',\bm{\omega}|q),\bm{f} \in L^p([0,1],\mathbb{R}^n) \right\}$ is dense in $L^2(\Omega)$ (Lemma 4.3.2. in \cite{Oksendal}), so the above equation implies:
\begin{equation}
\label{in_the_proof}
\pi(\Psi,\,x,\,\bm{r}^{*};1,\bm{\omega}|q)
=\widetilde{\mathbb{E}}_{x\bm{r}^{*}}\left[\,\pi(\Psi,\,x,\,\bm{r}^{*};1,\bm{\omega}|q)\,|\mathcal{F}_q\right]\,\quad a.s..
\end{equation}
By the definition of conditional expectation, the right member in the above equation is an $\mathcal{F}_q-$measurable random variable, so we may consider an adapted process $\phi$, such as
\begin{equation}
\label{MeaningOfC}
\phi(q,\bm{\omega} )=\pi(\Psi,\,x,\,\bm{r}^{*};1,\bm{\omega}|q)\,.
\end{equation}
that conclude the proof.
\end{proof}
The meaning of the process $\phi$ is stated in the following 
\begin{corollary}
\label{GlobalMinimumC}
Given a claim $\Psi \in L_1^{\infty}(\Omega)$ and a POP $x\in \chi$, if the pair of processes $(\phi,\bm{r}^*)\in S_{[0,1]}^p(\Omega)\times D_{[0,1]}(\Omega)$ is a solution of the BSDE \eqref{selfEq1Aux}, then
\begin{equation}
\label{value_process_solution}
\phi(q,\bm{\omega})=\Gamma(\Psi,x,\bm{r}^*;q,\bm{\omega})\,.
\end{equation}
\end{corollary}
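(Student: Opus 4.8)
The plan is to identify both $\phi$ and $\Gamma(\Psi,x,\bm{r}^{*};q,\cdot)$ with one and the same conditional expectation under the tilted measure $\widetilde{\mathbb{W}}_{x\bm{r}^{*}}$, and then read off \eqref{value_process_solution}. First I would rewrite the driving term of the BSDE \eqref{selfEq1Aux} in terms of the Girsanov Brownian motion $\bm{W}_{x\bm{r}^{*}}$ of \eqref{semimart}--\eqref{AuxProb}. Since $\int_q^1 \bm{r}^{*}\cdot d\bm{\omega}=\int_q^1 \bm{r}^{*}\cdot d\bm{W}_{x\bm{r}^{*}}+\int_q^1 dq'\,x(q')\|\bm{r}^{*}\|^2$, the solution of \eqref{selfEq1Aux} becomes exactly the random RSB $\pi(\Psi,x,\bm{r}^{*};1,\bm{\omega}|q)$ of \eqref{piii}, recovering the identification \eqref{MeaningOfC}. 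Because $(\phi,\bm{r}^{*})\in S_{[0,1]}^p(\Omega)\times D_{[0,1]}(\Omega)$, the process $\phi(q,\cdot)$ is $\mathcal{F}_q$-measurable, hence $\phi(q,\bm{\omega})=\widetilde{\mathbb{E}}_{x\bm{r}^{*}}[\phi(q,\bm{\omega})\,|\,\mathcal{F}_q]=\widetilde{\mathbb{E}}_{x\bm{r}^{*}}[\pi(\Psi,x,\bm{r}^{*};1,\bm{\omega}|q)\,|\,\mathcal{F}_q]$.

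Next I would evaluate this tilted conditional expectation term by term on the expression \eqref{piii} for $\pi$. The stochastic-integral contribution $\int_q^1\bm{r}^{*}\cdot d\bm{W}_{x\bm{r}^{*}}$ disappears after conditioning: by the Cameron--Martin/Girsanov theorem $\bm{W}_{x\bm{r}^{*}}$ is a Brownian motion under $\widetilde{\mathbb{W}}_{x\bm{r}^{*}}$, so this It\^o integral is a $\widetilde{\mathbb{W}}_{x\bm{r}^{*}}$-martingale and property \eqref{MartiPropMod} gives $\widetilde{\mathbb{E}}_{x\bm{r}^{*}}[\int_q^1\bm{r}^{*}\cdot d\bm{W}_{x\bm{r}^{*}}\,|\,\mathcal{F}_q]=0$. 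This leaves $\phi(q,\bm{\omega})=\widetilde{\mathbb{E}}_{x\bm{r}^{*}}[\Psi(1,\bm{\omega})|\mathcal{F}_q]-\tfrac{1}{2}\widetilde{\mathbb{E}}_{x\bm{r}^{*}}[\int_q^1 dq'\,x(q')\|\bm{r}^{*}(q',\bm{\omega})\|^2|\mathcal{F}_q]$.

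It then remains to match these two tilted expectations with the two expectations defining $\Gamma$ in \eqref{value_function}. The first term is immediate from the Bayes formula \eqref{conditionalNot}. For the second I would apply Fubini to bring the conditional expectation inside the $dq'$-integral, and then, for each fixed $q'$, use that $\|\bm{r}^{*}(q',\bm{\omega})\|^2$ is $\mathcal{F}_{q'}$-measurable together with the tower property and the martingale identity \eqref{martingality}, which yields $\mathbb{E}[\mathcal{E}(x\bm{r}^{*};q',\bm{\omega}|q)\|\bm{r}^{*}\|^2|\mathcal{F}_q]=\widetilde{\mathbb{E}}_{x\bm{r}^{*}}[\|\bm{r}^{*}\|^2|\mathcal{F}_q]$; reassembling the $dq'$-integral reproduces exactly the quadratic (``entropic'') term of \eqref{value_function}. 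Comparing with the displayed formula for $\phi$ gives \eqref{value_process_solution}.

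The individual steps are short, and the single point that requires real care is the replacement, in the quadratic term of $\Gamma$, of the running exponential $\mathcal{E}(x\bm{r}^{*};q',\bm{\omega}|q)$ (stopped at the running time $q'$, not at $1$) by the density $\mathcal{E}(x\bm{r}^{*};1,\bm{\omega}|q)$ that defines the change of measure $\widetilde{\mathbb{E}}_{x\bm{r}^{*}}$. This is precisely where the martingale property \eqref{martingality} of the Dol\'eans-Dade exponential is indispensable, since it is what allows the non-anticipating weight sitting inside the time integral to be absorbed into the global measure change. I expect this identification to be the main obstacle to state cleanly, the remaining manipulations being routine applications of Fubini's theorem and the Girsanov martingale property \eqref{MartiPropMod}.
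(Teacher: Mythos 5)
Your proof is correct and follows essentially the same route as the paper's: both start from the $\mathcal{F}_q$-measurability identity $\phi(q,\bm{\omega})=\widetilde{\mathbb{E}}_{x\bm{r}^{*}}[\phi(q,\bm{\omega})\,|\,\mathcal{F}_q]$, substitute the BSDE, and eliminate the stochastic integral via the Girsanov martingale property (the paper cites \eqref{integration_stoc}, which is just \eqref{MartiPropMod} combined with the definition of $\bm{W}_{x\bm{r}^{*}}$, so your detour through the random RSB $\pi$ is the same computation). The only difference is that you explicitly justify replacing the running exponential $\mathcal{E}(x\bm{r}^{*};q',\bm{\omega}|q)$ inside the time integral of $\Gamma$ by the terminal density via the tower property and \eqref{martingality} — a step the paper leaves implicit.
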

\begin{proof} 
Since $\phi(q,\Cdot)$ is $\mathcal{F}_q-$measurable, then $\phi(q,\bm{\omega})=\widetilde{\mathbb{E}}_{x\bm{r}^{*}}\left[\phi(q,\bm{\omega})|\mathcal{F}_q\right]$; then the proof is given by replacing $\phi$ by \eqref{selfEq1Aux} and using the relation \eqref{integration_stoc}.
\end{proof}
If the solution of the stationary condition has a unique solution and provides the global maximum of the RSB value process, then the BSDE \eqref{selfEq1Aux} determines completely the RSB expectation. We will discuss this matter in the next subsection.

\subsection{Global maximum condition}
In this subsection we prove that the solution of the stationary condition provides the global maximum of the RSB value function. We also discuss some property of the so-called RSB expectation, that we defined in \eqref{auxiliary_variational_problem}.

First of all, we need to state the following result. 
\begin{theorem}
\label{existence_uniqueness}
For any give claim and POP $(\Psi,x)\in L_1^{\infty}(\Omega)\times \chi $, there exist a unique pair of processes $(\,\phi(\Psi,x),\bm{r}(\Psi,x)\,)\in S_{[0,1]}^p(\Omega)\times D_{[0,1]}(\Omega)$ that is a soluion of the BSDE \eqref{selfEq1Aux}.
\end{theorem}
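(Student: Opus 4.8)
The plan is to read \eqref{selfEq1Aux} as a \emph{backward stochastic differential equation} (BSDE) with bounded terminal datum $\Psi$, with $\bm{r}^{*}$ playing the role of the martingale integrand and with driver $\tfrac{1}{2}x(q)\|\bm{r}^{*}\|^{2}$. The essential difficulty is that this driver grows \emph{quadratically} in the unknown control $\bm{r}^{*}$, so the classical Lipschitz theory of Pardoux--Peng \cite{PaPeng} does not apply; one is in the realm of quadratic BSDEs, where the boundedness of $\Psi$ is exactly the structural hypothesis that makes the problem tractable. I would organise the argument in three stages: a priori bounds, construction of a solution (first for step Parisi order parameters, then for a general $x\in\chi$ by monotone approximation), and a linearisation argument for uniqueness.

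First I would establish a priori bounds. Using Corollary \ref{GlobalMinimumC}, any solution satisfies $\phi(q,\bm{\omega})=\Gamma(\Psi,x,\bm{r}^{*};q,\bm{\omega})$; choosing the control $\bm{r}=0$ in the definition \eqref{value_function} gives $\phi(q,\bm{\omega})\geq \mathbb{E}[\Psi\,|\,\mathcal{F}_q]\geq -c$, while dropping the manifestly nonnegative entropic term and using $\widetilde{\mathbb{E}}_{x\bm{r}}[\Psi|\mathcal{F}_q]\leq c$ gives $\phi(q,\bm{\omega})\leq c$. Hence $|\phi|\leq c$ almost surely, independently of $x$. From this uniform bound, an energy estimate — applying It\^o's formula to $\phi^{2}$ in \eqref{selfEq1Aux} and using $x\leq 1$ — yields control of $\|\bm{r}^{*}\|_{2,2}$ in terms of $c$, guaranteeing $\bm{r}^{*}\in H^{2}_{[0,1]}(\Omega)$.

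For existence, I would first treat the case where $x$ is a step function, which is precisely the discrete-RSB setting. On each interval $(q_{n},q_{n+1}]$ where $x\equiv x_{n+1}$ is constant, the exponential (Hopf--Cole) substitution $M_q=\exp(x_{n+1}\phi_q)$ linearises the BSDE: It\^o's formula turns \eqref{selfEq1Aux} into $dM_q=x_{n+1}M_q\,\bm{r}^{*}\cdot d\bm{\omega}(q)$, so $M$ is a local martingale, and since $|\phi|\leq c$ forces $M$ to be bounded it is a true martingale. This gives $\phi_q=\tfrac{1}{x_{n+1}}\log\mathbb{E}[e^{x_{n+1}\phi_{q_{n+1}}}\,|\,\mathcal{F}_q]$, recovering exactly the recursion \eqref{start1}, and the martingale representation theorem produces the integrand $\bm{r}^{*}$; telescoping the bounded increments of $\log M=x_{n+1}\phi$ shows that $|\zeta(\bm{r}^{*},x;1,\bm{\omega})|$ is bounded, so $\bm{r}^{*}\in D_{[0,1]}(\Omega)$. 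Solving the intervals backward from $\phi(1,\bm{\omega})=\Psi$ yields a solution and simultaneously proves the equivalence with the discrete-RSB scheme announced in this section. For a general $x\in\chi$ I would approximate it in $\|\cdot\|_{\infty}$ by a monotone sequence of step functions $x^{(k)}$, collect the solutions $(\phi^{(k)},\bm{r}^{(k)})$ together with the uniform bounds $|\phi^{(k)}|\leq c$ and the uniform $H^{2}$-bounds on $\bm{r}^{(k)}$, and pass to the limit using the monotone stability theory for quadratic BSDEs, the quadratic term being controlled through the resulting BMO bounds on $\bm{r}^{(k)}$.

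Uniqueness I would obtain by linearisation. Given two solutions $(\phi,\bm{r})$ and $(\phi',\bm{r}')$, subtracting the two copies of \eqref{selfEq1Aux} and writing $\|\bm{r}\|^{2}-\|\bm{r}'\|^{2}=(\bm{r}+\bm{r}')\cdot(\bm{r}-\bm{r}')$ recasts the difference $\delta\phi=\phi-\phi'$ as a stochastic integral against the semimartingale $d\bm{\omega}-\tfrac{1}{2}x(\bm{r}+\bm{r}')\,dq$; by the Cameron--Martin--Girsanov theorem this is a Brownian motion under an equivalent measure, so $\delta\phi$ is a martingale in that measure with zero terminal value, whence $\delta\phi\equiv 0$ and then, by the It\^o isometry \eqref{MartiPropMod2}, $\bm{r}=\bm{r}'$ as elements of $D_{[0,1]}(\Omega)$. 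I expect the main obstacle to be twofold: controlling the quadratic term in the limit passage for non-step $x$ (which is why the uniform bound $|\phi|\leq c$ and the attendant BMO estimates are indispensable), and justifying that the drift $\tfrac{1}{2}x(\bm{r}+\bm{r}')$ generates a genuine change of measure even though $\widehat{D}_{[0,1]}(\Omega)$ is not a vector space, so that the Dol\'eans-Dade exponential associated to the averaged control remains a true martingale.
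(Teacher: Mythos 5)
Your overall architecture for existence coincides with the paper's: on each interval where the step POP is constant you apply the exponential (Hopf--Cole) substitution, identify $e^{x_{n+1}\phi}$ as a bounded true martingale, recover the recursion \eqref{iterative_discrete_cont}/\eqref{start1}, extract $\bm{r}^{*}$ from the martingale representation theorem, and telescope the bounded increments of $x\phi$ to get $|\zeta(\bm{r}^{*},x;1,\bm{\omega})|\leq 2c$ --- this is exactly Propositions \ref{inequality_prop_theo} and \ref{trivial_bb}. Where you genuinely diverge is in (i) the passage to a general $x\in\chi$ and (ii) uniqueness. For (i) you invoke the monotone-stability/BMO machinery of quadratic BSDEs; the paper instead proves a quantitative Lipschitz estimate of the solution map $x\mapsto(\phi(x),\bm{r}(x))$ in the uniform norm (Theorem \ref{uniform_convergence_theo}, obtained by differentiating along the segment $x^{(t)}=(1-t)x^{(0)}+tx^{(1)}$ in Proposition \ref{derivative_theo}) and then concludes by Borel--Cantelli. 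The paper's route is more self-contained and yields the explicit rate $\|x^{(2)}-x^{(1)}\|_{\infty}$, which is reused later for the derivative formula in $x$; yours imports a black box but would work. For (ii) you linearise and apply Girsanov, whereas the paper exploits the variational identity of Lemma \ref{lemma_Global_minimum}: both solutions are global maximisers of $\Gamma(\Psi,x,\Cdot)$, so comparing the two copies of the identity forces the Kullback--Leibler--type distance between them to equal its own negative, hence to vanish. The paper's argument entirely avoids the question you correctly flag as delicate, namely whether the Dol\'eans--Dade exponential of the averaged drift $\tfrac{1}{2}x(\bm{r}+\bm{r}')$ is a true martingale; if you keep the linearisation route you must actually prove this (e.g.\ via the moment bounds of Proposition \ref{bound} and a Cauchy--Schwarz argument on stochastic exponentials), since $\widehat{D}_{[0,1]}(\Omega)$ is not closed under averaging of controls.

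Two smaller technical points. First, your a priori $H^{2}$ bound on $\bm{r}^{*}$ via It\^o's formula applied to $\phi^{2}$ does not close: you obtain
$\mathbb{E}\left[\int_0^1\|\bm{r}^{*}\|^{2}\,\text{d}q\right]\leq 2c^{2}+c\,\mathbb{E}\left[\int_0^1\|\bm{r}^{*}\|^{2}\,\text{d}q\right]$,
which is vacuous unless $c<1$. The standard remedy for a quadratic driver is to apply It\^o's formula to $e^{\lambda\phi}$ with $\lambda>\sup_{q}x(q)$, so that the quadratic term appears with a strictly positive sign and is absorbed on the left; this is morally what the paper's Proposition \ref{bound} does, with an additional localisation and change of measure. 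Second, your upper bound $\phi\leq c$ obtained by ``dropping the entropic term'' and your lower bound via the control $\bm{r}=0$ both presuppose the representation $\phi=\Gamma(\Psi,x,\bm{r}^{*})$ of Corollary \ref{GlobalMinimumC} together with the maximality of Lemma \ref{lemma_Global_minimum}; that is legitimate as an a priori estimate on any putative solution, but in the paper the bound $|\phi|\leq c$ is instead proved directly by backward induction on the discrete recursion, which is what makes it available \emph{before} any variational machinery is invoked. Neither point undermines your strategy, but both need to be repaired for the argument to stand on its own.
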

We will devote the next chapter to the proof of the existence result. The uniqueness is discussed in this subsection.

We proceed in the same way as in the proof of Corollary \ref{GlobalMinimumC}. Let $(\phi,\bm{r}^*)$ be a solution of \eqref{selfEq1Aux} corresponding to a claim $\Psi$ and a POP $x$. Since $\phi(q,\Cdot)$ is a $\mathcal{F}_q-$measurable random variable, the conditional expectation $\widetilde{\mathbb{E}}_{x\bm{v}}[\Cdot|\mathcal{F}_q]$ of both side in the equation \eqref{selfEq1Aux}, for any $\bm{v}\in D_{[0,1]}(\Omega)$, yields
\begin{multline}
\label{StraightfordComputation}
\phi(q,\bm{\omega})=\widetilde{\mathbb{E}}_{x\bm{v}}\left[\pi(\Psi,\,x,\,\bm{r}^{*};\,1,\bm{\omega}|q)\,|\mathcal{F}_q\right]\\=\widetilde{\mathbb{E}}_{x\bm{v}}\left[\Psi( \,1,\bm{\omega}\,)|\mathcal{F}_q\right]+\frac{1}{2}\widetilde{\mathbb{E}}_{x\bm{v}}\left[ \int \text{d}q x(q)\,\bm{r}^{*}(q, \,\bm{\omega})\,\cdot\left(\bm{r}^{*}(q, \,\bm{\omega})-2 \bm{v}(q, \,\bm{\omega})\right)\,\Bigg|\mathcal{F}_q\right].
\end{multline}
In the rest of the thesis, we will prefer to use a notation that explicitates the dependence of $\phi$ and $\bm{r}^*$ on the calim and the POP.The solution of the BSDE \eqref{selfEq1Aux}, for a given pair $(\Psi,x)\in L_1^{\infty}\times \chi$ will be denoted by $(\,\phi(\Psi,x),\bm{r}(\Psi,x)\,)$.

The following Lemma is an immediate consequence of the above identity.
\begin{lemma}
\label{lemma_Global_minimum}
Let us consider a claim $\Psi$ and a POP $x$ and the corresponding BSDE solution $(\,\phi(\Psi,x),\bm{r}(\Psi,x)\,)$. For any $\bm{v}\in D_{[0,1]}(\Omega)$ the RSB value process $\Gamma$ verifies:
\begin{multline}
\Gamma(\Psi,x,\bm{r}(\Psi,x);q,\bm{\omega})\\=\Gamma(\Psi,x,\bm{v};q,\bm{\omega})+\frac{1}{2}\mathbb{E}_{x \bm{v}}\left[ \int \text{d}q x(q)\,\|\bm{r}(\Psi,x;q, \,\bm{\omega})- \bm{v}(q, \,\bm{\omega})\|^2\Bigg|\mathcal{F}_q\right]
\end{multline}
\end{lemma}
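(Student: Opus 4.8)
The plan is to treat this lemma exactly as the text announces it, namely as a short algebraic consequence of the identity \eqref{StraightfordComputation} combined with Corollary \ref{GlobalMinimumC}. The whole idea is to express \emph{both} value processes $\Gamma(\Psi,x,\bm{r}(\Psi,x);q,\bm{\omega})$ and $\Gamma(\Psi,x,\bm{v};q,\bm{\omega})$ as conditional expectations under the single tilted measure $\widetilde{\mathbb{W}}_{x\bm{v}}$, and then to read off the difference by completing the square in the integrand $x(q')\,\bm{r}\cdot(\bm{r}-2\bm{v})+x(q')\,\|\bm{v}\|^2=x(q')\,\|\bm{r}-\bm{v}\|^2$. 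Throughout I write $\bm{r}^{*}:=\bm{r}(\Psi,x)$ and read the $\mathbb{E}_{x\bm{v}}$ of the statement as the expectation $\widetilde{\mathbb{E}}_{x\bm{v}}[\,\cdot\,|\mathcal{F}_q]$ with respect to $\widetilde{\mathbb{W}}_{x\bm{v}}$ defined in \eqref{conditionalNot}.

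First I would record the compact form of the value process under its own tilted measure. Starting from the definition \eqref{value_function} with control $\bm{v}$, the first term is $\widetilde{\mathbb{E}}_{x\bm{v}}[\Psi(1,\bm{\omega})\,|\,\mathcal{F}_q]$ by \eqref{conditionalNot}. For the second term I would use the tower property together with the martingale property \eqref{martingality}: since $\|\bm{v}(q',\bm{\omega})\|^2$ is $\mathcal{F}_{q'}$-measurable and $\mathcal{E}(x\bm{v};q',\bm{\omega}|q)=\mathbb{E}[\mathcal{E}(x\bm{v};1,\bm{\omega}|q)\,|\,\mathcal{F}_{q'}]$, one gets $\mathbb{E}[\mathcal{E}(x\bm{v};q',\bm{\omega}|q)\|\bm{v}(q',\bm{\omega})\|^2\,|\,\mathcal{F}_q]=\widetilde{\mathbb{E}}_{x\bm{v}}[\|\bm{v}(q',\bm{\omega})\|^2\,|\,\mathcal{F}_q]$; after Fubini this yields
\begin{equation}
\Gamma(\Psi,x,\bm{v};q,\bm{\omega})=\widetilde{\mathbb{E}}_{x\bm{v}}\!\left[\Psi(1,\bm{\omega})\,\big|\,\mathcal{F}_q\right]-\frac{1}{2}\widetilde{\mathbb{E}}_{x\bm{v}}\!\left[\int_q^1 dq'\,x(q')\,\|\bm{v}(q',\bm{\omega})\|^2\,\Big|\,\mathcal{F}_q\right].
\end{equation}

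Next I would invoke Corollary \ref{GlobalMinimumC} to identify $\Gamma(\Psi,x,\bm{r}^{*};q,\bm{\omega})=\phi(q,\bm{\omega})$, and then substitute the expression for $\phi(q,\bm{\omega})$ already proved in \eqref{StraightfordComputation}, namely $\phi(q,\bm{\omega})=\widetilde{\mathbb{E}}_{x\bm{v}}[\Psi\,|\,\mathcal{F}_q]+\tfrac12\widetilde{\mathbb{E}}_{x\bm{v}}[\int_q^1 dq'\,x(q')\,\bm{r}^{*}\!\cdot(\bm{r}^{*}-2\bm{v})\,|\,\mathcal{F}_q]$. Subtracting the display above cancels the two $\widetilde{\mathbb{E}}_{x\bm{v}}[\Psi\,|\,\mathcal{F}_q]$ terms and leaves $\tfrac12\widetilde{\mathbb{E}}_{x\bm{v}}[\int_q^1 x\,(\bm{r}^{*}\!\cdot\bm{r}^{*}-2\bm{r}^{*}\!\cdot\bm{v}+\bm{v}\cdot\bm{v})\,|\,\mathcal{F}_q]$, which is precisely $\tfrac12\widetilde{\mathbb{E}}_{x\bm{v}}[\int_q^1 x\,\|\bm{r}^{*}-\bm{v}\|^2\,|\,\mathcal{F}_q]$, the claimed identity. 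Since \eqref{StraightfordComputation} is derived just before the lemma, there is no genuine obstacle here; the only points requiring care are the measure-theoretic bookkeeping I would flag explicitly: the applicability of the change of measure (guaranteed because $\bm{r}^{*},\bm{v}\in D_{[0,1]}(\Omega)$ make $\mathcal{E}(x\bm{v})$ a true martingale via \eqref{martingality}), the vanishing of the $\widetilde{\mathbb{W}}_{x\bm{v}}$-martingale stochastic integral that produces \eqref{StraightfordComputation} (relation \eqref{MartiPropMod1}), and the boundedness of $\Psi$ in \eqref{boundedness} together with $\bm{r}^{*},\bm{v}\in H^p_{[0,1]}(\Omega)$ that justify interchanging the time integral with the conditional expectation by Fubini.
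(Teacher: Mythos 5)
Your proposal is correct and follows essentially the same route as the paper's own proof: both express $\Gamma(\Psi,x,\bm{v})$ under the tilted measure $\widetilde{\mathbb{W}}_{x\bm{v}}$, combine it with the identity \eqref{StraightfordComputation} and Corollary \ref{GlobalMinimumC}, and finish by completing the square $\bm{r}^{*}\cdot(\bm{r}^{*}-2\bm{v})+\|\bm{v}\|^2=\|\bm{r}^{*}-\bm{v}\|^2$. The only difference is cosmetic — the paper completes the square inside $\Gamma(\Psi,x,\bm{v})$ before substituting, while you subtract the two representations first — and your added remarks on the martingale property, Fubini, and boundedness merely make explicit what the paper leaves implicit.
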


\begin{proof}
Let us consider two processes  $\bm{r}$ and $\bm{v}$ in $ D_{[0,1]}(\Omega)$. By definition \ref{definitions}, the RSB value process $\Gamma(\Psi,x,\bm{v})$ is given by
\begin{equation}
\begin{aligned}
&\Gamma(\Psi,\,x,\,\bm{v};\,q,\bm{\omega})\\&=\widetilde{\mathbb{E}}_{x\bm{v}}\left[\Psi( \,1,\bm{\omega}\,)\,|\mathcal{F}_q\right]-\frac{1}{2} \widetilde{\mathbb{E}}_{x\bm{v}}\left[\int \text{d}q x(q)\,\left \|\,\bm{v}(q, \,\bm{\omega}) \,\right\|^2\,\Bigg|\mathcal{F}_q\right]\\
&\begin{multlined}
=\widetilde{\mathbb{E}}_{x\bm{v}}\left[\Psi ( \,1,\bm{\omega}\,)\,|\mathcal{F}_q\right]\\+\frac{1}{2} \widetilde{\mathbb{E}}_{x\bm{v}}\left[\int \text{d}q x(q)\,\,\bm{r}(q, \,\bm{\omega})\,\cdot\left(\bm{r}(q, \,\bm{\omega})-2 \bm{v}(q, \,\bm{\omega})\right)\,\Bigg|\mathcal{F}_q\right]\\-\frac{1}{2} \widetilde{\mathbb{E}}_{x\bm{v}}\left[\int \text{d}q x(q)\,\left \|\,\bm{v}(q, \,\bm{\omega}) -\bm{r}(q, \,\bm{\omega})\,\right\|^2\,\Bigg|\mathcal{F}_q\right]\,.
\end{multlined}
\end{aligned}
\end{equation}
Then, if $\bm{r}=\bm{r}(\Psi,x)$, then, by replacing the identity \eqref{StraightfordComputation} in the above formula, we get
\begin{equation}
\Gamma(\Psi,\,x,\,\bm{v};\,q,\bm{\omega})=\phi(\Psi,x;q,\bm{\omega})-\frac{1}{2} \widetilde{\mathbb{E}}_{x\bm{v}}\left[\int \text{d}q x(q)\,\left \|\,\bm{v}(q, \,\bm{\omega}) -\bm{r}(\Psi,x;q, \,\bm{\omega})\,\right\|^2\,\Bigg|\mathcal{F}_q\right]
\end{equation}
so, using Corollary \eqref{MeaningOfC}, we end the prove.
\end{proof}
From the above lemma, we obtain the most remarkable result of this section.
\begin{theorem}
\label{GlobalMinTheo}
Given the pair $(\Psi,x)\in L_1^{\infty}(\Omega)\times \chi$, let $(\,\phi(\Psi,x),\bm{r}(\Psi,x)\,)$ be the solution of the BSDE \eqref{selfEq1Aux}, then
\begin{equation}
\phi(\Psi,x;q,\bm{\omega})=\underset{\bm{r}\in D_{[0,1]}(\Omega)}{\max}\Gamma(\Psi,x,\bm{r};q,\bm{\omega})\,.
\end{equation}
\end{theorem}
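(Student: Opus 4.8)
The plan is to derive this statement essentially for free from the two results that precede it, namely Lemma \ref{lemma_Global_minimum} and Corollary \ref{GlobalMinimumC}, together with the existence assertion of Theorem \ref{existence_uniqueness}. First I would invoke Theorem \ref{existence_uniqueness} to guarantee that the solution pair $(\phi(\Psi,x),\bm{r}(\Psi,x))\in S_{[0,1]}^p(\Omega)\times D_{[0,1]}(\Omega)$ of the BSDE \eqref{selfEq1Aux} actually exists, so the claimed identity is not vacuous and $\bm r(\Psi,x)$ is a legitimate element of the admissible set $D_{[0,1]}(\Omega)$ over which the supremum is taken.

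The heart of the argument is Lemma \ref{lemma_Global_minimum}, which for an arbitrary competitor $\bm{v}\in D_{[0,1]}(\Omega)$ gives the exact decomposition
\begin{equation}
\Gamma(\Psi,x,\bm{r}(\Psi,x);q,\bm{\omega})=\Gamma(\Psi,x,\bm{v};q,\bm{\omega})+\frac{1}{2}\widetilde{\mathbb{E}}_{x\bm{v}}\left[\int^1_q \text{d}q'\,x(q')\,\|\bm{r}(\Psi,x;q', \,\bm{\omega})- \bm{v}(q', \,\bm{\omega})\|^2\Bigg|\mathcal{F}_q\right].
\end{equation}
The key observation is that the correction term is manifestly non-negative: the integrand $x(q')\|\bm{r}(\Psi,x)-\bm{v}\|^2$ is non-negative pointwise because $x\in\chi$ takes values in $[0,1]$ so $x(q')\geq 0$, and a squared norm is non-negative; since $\widetilde{\mathbb{W}}_{x\bm{v}}$ is a genuine probability measure equivalent to $\mathbb{W}$ (its density being the true martingale $\mathcal{E}(x\bm{v})$, which is well defined precisely because $\bm{v}\in\widehat{D}_{[0,1]}(\Omega)$), the conditional expectation $\widetilde{\mathbb{E}}_{x\bm{v}}[\,\cdot\,|\mathcal{F}_q]$ of a non-negative random variable is again non-negative almost surely. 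Consequently
\begin{equation}
\Gamma(\Psi,x,\bm{r}(\Psi,x);q,\bm{\omega})\geq \Gamma(\Psi,x,\bm{v};q,\bm{\omega})\qquad\text{for all }\bm{v}\in D_{[0,1]}(\Omega),
\end{equation}
which shows that $\bm{r}(\Psi,x)$ is a maximizer of $\bm{v}\mapsto \Gamma(\Psi,x,\bm{v};q,\bm{\omega})$ and that the supremum is attained.

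It then remains to identify the value of this maximum, and here I would apply Corollary \ref{GlobalMinimumC}: since $(\phi,\bm{r}(\Psi,x))$ solves the BSDE \eqref{selfEq1Aux}, we have $\phi(\Psi,x;q,\bm{\omega})=\Gamma(\Psi,x,\bm{r}(\Psi,x);q,\bm{\omega})$. Combining this with the inequality above yields
\begin{equation}
\phi(\Psi,x;q,\bm{\omega})=\Gamma(\Psi,x,\bm{r}(\Psi,x);q,\bm{\omega})=\max_{\bm{r}\in D_{[0,1]}(\Omega)}\Gamma(\Psi,x,\bm{r};q,\bm{\omega}),
\end{equation}
which is the assertion. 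I do not expect any serious obstacle at this stage, since the difficult analytic content has been isolated in Lemma \ref{lemma_Global_minimum} (whose proof rests on the change-of-measure identity \eqref{StraightfordComputation}) and in the existence Theorem \ref{existence_uniqueness}; the only points meriting care are the equivalence-class nature of $D_{[0,1]}(\Omega)$, where one checks that both $\Gamma$ and the correction term depend on $\bm{v}$ only through its class under $\mathbb{D}^{\text{sym}}_{KL}$ so that the statement is well posed on the quotient, and the verification that the conditional expectation defining the correction term is finite, which follows from the a.s.\ boundedness \eqref{boundedness} of the claim and the integrability built into $D_{[0,1]}(\Omega)$.
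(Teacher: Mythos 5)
Your argument is exactly the paper's: Theorem \ref{GlobalMinTheo} is stated as an immediate consequence of Lemma \ref{lemma_Global_minimum} (non-negativity of the quadratic correction term under the probability measure $\widetilde{\mathbb{W}}_{x\bm{v}}$) combined with Corollary \ref{GlobalMinimumC} identifying $\phi(\Psi,x)$ with $\Gamma(\Psi,x,\bm{r}(\Psi,x))$, and your invocation of Theorem \ref{existence_uniqueness} for existence matches the paper's logic. The proposal is correct and takes essentially the same route; your added remarks on well-posedness on the quotient $D_{[0,1]}(\Omega)$ and finiteness of the correction term are sensible care the paper leaves implicit.
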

An other important consequence of Lemma \ref{lemma_Global_minimum}, is the uniqueness result.
\begin{proof}[Proof of Theorem \ref{existence_uniqueness}: uniqueness]
For any pair of processes $\bm{r}_1$ and $\bm{r}_2$ in $D_{[0,1]}(\Omega)$, let
\begin{equation}
\mathbb{D}(\bm{r}_1,\bm{r}_2;q,\bm{\omega})=\widetilde{\mathbb{E}}_{x\bm{r}_1}\left[\int \text{d}q x(q)\,\left \|\,\bm{r}_1(q, \,\bm{\omega}) -\bm{r}_2(q, \,\bm{\omega})\,\right\|^2\,\Bigg|\mathcal{F}_q\right]
\end{equation}
The above quantity is obviously non-negative. 

Assume, by contradiction, that there exist two distinct pairs $(\phi_1,\bm{r}_1)$ and $(\phi_2,\bm{r}_2)$, on $ S_{[0,1]}^p(\Omega)\times D_{[0,1]}(\Omega)$, that are solutions of the BSDE \eqref{selfEq1Aux}. Applying Lemma \ref{lemma_Global_minimum} for both, we get
\begin{equation}
\Gamma\left(\Psi,x,\bm{r}_2\big|\,\bm{\omega}(0)\,\right)=\Gamma\left(\Psi,x,\bm{r}_1\big|\,\bm{\omega}(0)\,\right)-\frac{1}{2}\mathbb{D}(\bm{r}_2,\bm{r}_1;q,\bm{\omega})
\end{equation}
and
\begin{equation}
\Gamma\left(\Psi,x,\bm{r}_1\big|\,\bm{\omega}(0)\,\right)=\Gamma\left(\Psi,x,\bm{r}_2\big|\,\bm{\omega}(0)\,\right)-\frac{1}{2}\mathbb{D}(\bm{r}_1,\bm{r}_2;q,\bm{\omega}).
\end{equation}
After some straightforward manipulations, we get
\begin{equation}
\mathbb{D}(\bm{r}_1,\bm{r}_2;q,\bm{\omega})=-\mathbb{D}(\bm{r}_1,\bm{r}_2;q,\bm{\omega}).
\end{equation}
Since both $\mathbb{D}(\bm{r}_2,\bm{r}_1;q,\bm{\omega})$ and $\mathbb{D}(\bm{r}_1,\bm{r}_2;q,\bm{\omega})$ are non-negative, then the above relation implies
\begin{equation}
\mathbb{D}(\bm{r}_1,\bm{r}_2;q,\bm{\omega})=\mathbb{D}(\bm{r}_2,\bm{r}_1;q,\bm{\omega})=0.
\end{equation}
Moreover it is easy to show that:
\begin{equation}
0\leq \mathbb{D}_{KL}(\bm{r}_1\|\bm{r}_2)\leq\widetilde{\mathbb{E}}_{x\bm{r}_1}\left[\mathbb{D}(\bm{r}_1,\bm{r}_2;q,\bm{\omega})\right]\,.
\end{equation}
Then, by the definition of the vector space $ D_{[0,1]}(\Omega)$ \ref{define_D}, if the comparing function between two process $\bm{r}^{*}_1$ and $\bm{r}^{*}_2$ vanishes, than the two processes are equivalent and correspond to the same element of $ D_{[0,1]}(\Omega)$.
\end{proof}

We end the section by providing some property of the solution $(\phi(\Psi,x),\bm{r}(\Psi,x))$ associate to $(\Psi,x)$. The following statement are direct consequences of the maximum principle \ref{GlobalMinTheo} and the uniqueness in \ref{existence_uniqueness}.
\begin{proposition}
\label{Non:llinear_prop}
The solution of the BSDE \eqref{selfEq1Aux} verifies
\begin{itemize}
\item Let $\alpha$ be a constant, $(\phi(\alpha,x;q,\bm{\omega}) ,\bm{r}(\alpha,x;q,\bm{\omega}))=(\alpha,\bm{0})$ $a.s.$, $\forall q\in[0,1]$;
\item Let $\alpha$ be a constant, $(\phi(\alpha+\Psi,x;q,\bm{\omega}) ,\bm{r}(\alpha+\Psi,x;q,\bm{\omega}))=(\alpha+\phi(\Psi,x;q',\bm{\omega}) ,\bm{r}(\Psi,x;q,\bm{\omega}))$ $a.s.$, $\forall q\in[0,1]$;
\item If $\Psi_1\leq \Psi_2\,a.s.$, then $\phi(\Psi_1,x;q,\bm{\omega})\leq \phi(\Psi_2,x;q,\bm{\omega})\,a.s.$
\item $\phi(\alpha\,\Psi_1+\beta \,\Psi_2,x;q,\bm{\omega})\leq \alpha\,\phi(\Psi_1,x;q,\bm{\omega})+\beta\,\phi(\Psi_2,x;q,\bm{\omega})\,a.s.$ for any constant $\alpha$ and $\beta$.
\end{itemize}
\end{proposition}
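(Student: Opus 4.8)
The plan is to derive all four items from the two structural facts just established: the maximum principle of Theorem~\ref{GlobalMinTheo}, which represents $\phi(\Psi,x;q,\bm{\omega})=\sup_{\bm{r}\in D_{[0,1]}(\Omega)}\Gamma(\Psi,x,\bm{r};q,\bm{\omega})$, and the uniqueness part of Theorem~\ref{existence_uniqueness}. The first two items are of a ``guess-and-check'' nature: I would exhibit an explicit candidate pair and verify it solves the BSDE~\eqref{selfEq1Aux}, after which uniqueness forces it to be \emph{the} solution. The last two items instead exploit how $\Gamma$ depends on the claim $\Psi$ through the definition~\eqref{value_function}, namely affinely in its first term while its ``entropic'' second term is independent of $\Psi$, combined with elementary monotonicity and subadditivity of the supremum.

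For the first item I take $\Psi\equiv\alpha$ constant and test the candidate $(\phi,\bm{r}^{*})=(\alpha,\bm{0})$. Since $\bm{0}\in D_{[0,1]}(\Omega)$ with $\mathcal{E}(x\bm{0})\equiv 1$, both integral terms in \eqref{selfEq1Aux} vanish and the right-hand side collapses to $\alpha$, so the pair solves the BSDE; uniqueness then yields the claim. For the second item, let $(\phi(\Psi,x),\bm{r}(\Psi,x))$ be the solution attached to $\Psi$ and test $(\alpha+\phi(\Psi,x),\bm{r}(\Psi,x))$ against the BSDE for the claim $\alpha+\Psi$: the constant $\alpha$ passes through the terminal value unchanged and does not enter either integral term, so the pair again solves \eqref{selfEq1Aux}, and uniqueness identifies it with $(\phi(\alpha+\Psi,x),\bm{r}(\alpha+\Psi,x))$.

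For the monotonicity item I would fix an arbitrary control $\bm{r}$ and compare $\Gamma(\Psi_1,x,\bm{r})$ with $\Gamma(\Psi_2,x,\bm{r})$. These differ only in the first term of \eqref{value_function}; since $\mathcal{E}(x\bm{r})$ is strictly positive by \eqref{positivity} and conditional expectation is monotone, $\Psi_1\leq\Psi_2$ a.s. gives $\Gamma(\Psi_1,x,\bm{r};q,\bm{\omega})\leq\Gamma(\Psi_2,x,\bm{r};q,\bm{\omega})$ for every $\bm{r}$, whence taking suprema and invoking Theorem~\ref{GlobalMinTheo} produces $\phi(\Psi_1,x)\leq\phi(\Psi_2,x)$. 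For the convexity item the key observation is that when $\alpha+\beta=1$ with $\alpha,\beta\geq 0$ the $\Psi$-independent entropic term cancels correctly and $\Gamma(\alpha\Psi_1+\beta\Psi_2,x,\bm{r})=\alpha\,\Gamma(\Psi_1,x,\bm{r})+\beta\,\Gamma(\Psi_2,x,\bm{r})$ for each $\bm{r}$; then $\sup_{\bm{r}}(\alpha\,\Gamma(\Psi_1,x,\bm{r})+\beta\,\Gamma(\Psi_2,x,\bm{r}))\leq\alpha\sup_{\bm{r}}\Gamma(\Psi_1,x,\bm{r})+\beta\sup_{\bm{r}}\Gamma(\Psi_2,x,\bm{r})$, and Theorem~\ref{GlobalMinTheo} turns this into the stated inequality.

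Each step is individually short, so the effort goes into bookkeeping rather than into any single hard estimate. The one genuine subtlety I would flag lies in the last item: the identity $\Gamma(\alpha\Psi_1+\beta\Psi_2,x,\bm{r})=\alpha\Gamma(\Psi_1,x,\bm{r})+\beta\Gamma(\Psi_2,x,\bm{r})$ holds only under the normalization $\alpha+\beta=1$, because the entropic term of \eqref{value_function} does not scale with the claim; for general $\alpha,\beta$ a nonnegative remainder proportional to $(\alpha+\beta-1)$ survives, so the convexity inequality should be read with the convex-combination hypothesis $\alpha,\beta\geq 0$, $\alpha+\beta=1$ (to which the general scaling reduces via the shift property of the second item). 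The only other point needing a line of justification is verifying that the candidate controls genuinely belong to $D_{[0,1]}(\Omega)$ — in particular that $\bm{0}$ and $\bm{r}(\Psi,x)$ qualify — so that the maximum principle and the uniqueness theorem are applicable to them.
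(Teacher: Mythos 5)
Your proof is correct and follows essentially the same route as the paper's: guess-and-check plus uniqueness for the first two items, and the maximum principle $\phi(\Psi,x;q,\bm{\omega})=\sup_{\bm{r}}\Gamma(\Psi,x,\bm{r};q,\bm{\omega})$ combined with monotonicity and affinity of $\Gamma$ in the claim for the last two. Your remark that the convexity identity requires $\alpha,\beta\geq 0$ with $\alpha+\beta=1$ (because the entropic term of $\Gamma$ does not scale with the claim) is a valid and worthwhile sharpening of the statement, which the paper's one-line treatment of the fourth item glosses over.
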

\begin{proof}
The first and the second properties are trivially proved by observing that $(\alpha,\bm{0})$ (resp. $(\alpha+\phi(\Psi,x;q,\bm{\omega}) ,\bm{r}(\Psi,x;q,\bm{\omega}))$ ) is actually a solution of the BSDE.

For the third property, let $(\phi_1,\bm{r}_1)$ and $(\phi_2,\bm{r}_2)$ be the solutions associated to $(\Psi_1,x)$ and $(\Psi_2,x)$ respectively, with $\Psi_1\leq \Psi_2$, then
\begin{multline}
\phi(\Psi_1,x;q,\bm{\omega})=\Gamma(\Psi_1,x,\bm{r}_1;q,\bm{\omega})\\
\leq \Gamma(\Psi_2,x,\bm{r}_1;q,\bm{\omega})\leq \Gamma(\Psi_2,x,\bm{r}_2;q,\bm{\omega})=\phi(\Psi_2,x;q,\bm{\omega}).
\end{multline}
The fourth relation can be proved in a similar way.
\end{proof}
The above results extends to the non-Markovian RSB the stochastic representation of the Parisi PDE \eqref{start}, proposed by Chen and Auffinger (Theorem 3 in \cite{ChenAuf}). It is worth noting that Chen and Auffinger prove the variational representation of the Parisi formula, starting from the Parisi PDE and providing a stochastic representation.

Because non-Markovianity, we can not deal with a PDE, so the result is obtained by a completely different approach, based on the stochastic representation of the cavity method in \ref{C4}.

\section{Solution of Backward Stochastic Differential equations}
\label{sec5.3}
In this section, the existence of the solution of the stationary equation \eqref{selfEq1Aux} is proved. In the first paragraph, the solution is explicitly derived for a piecewise constant Parisi order parameter function $x$. Thence, in the second paragraph, the existence result is extended to any allowable Parisi order parameter by continuity. The results of the first section prove that the full-RSB-scheme of this thesis provides a complete theory, that take into account all the discrete-RSB solutions.

\subsection{The discrete-RSB solution}
We start by explicitly deriving the solution in the case where the Parisi parameter $x$ is a piecewise constant function:
\begin{equation}
x\in \chi^{\circ}
\end{equation}
A remarkable result of this paragraph is that, in this case, the free energy functional is equivalent to the one obtained in the discrete$-$RSB case. This proves that the continuity assumption described in section \eqref{subsec4.2.1} incorporates all the discrete$-$RSB theories, as we claimed previously.

Consider two increasing sequences of $K+2 \in \mathbb{N}$ numbers $q_0,\dots, q_{K+1}$ and $ x_0,\dots, x_{K+1}$ with
\begin{equation}
0=q_0\leq q_1\leq\dots\leq q_K\leq q_{K+1}=1
\end{equation}
and
\begin{equation}
0=x_0<x_1\leq\dots\leq x_K\leq x_{K+1}=1.
\end{equation}
The number $x_1$ must be non-zero.

The piecewise constant Parisi parameter function is constructed by such two sequences in such a way:
\begin{equation}
\label{discrete}
x(q)=\sum^{K+1}_{n=1} x_i \mathbb{1}_{(q_{i-1},q_{i}]}(q).
\end{equation}

In this case, the right hand member of the BSDE \eqref{selfEq1Aux} can be written as a sum of integrals defined on the intervals $[q_{i},q_{i+1})$; in each interval, the Parisi parameter $x$ is a constant and it can be put outside the integral:
\begin{multline}
\label{equation_Discrete}
\phi\left(\,q\,,\bm{\omega}\,\right)=\Psi(q_{K+1},\bm{\omega})\\-
\sum^{K-1}_{i=n_q}\left(\int^{q_{i+1}}_{q_{i}\wedge q} \,\bm{r}(\,p\,, \bm{\omega}\,) \cdot \text{d}\bm{\omega}(p)-\frac{1}{2} x_{i+1}\int^{q_{i+1}}_{q_{i}\wedge q} dp\,\,\left \|\,\bm{r}(\,p, \,\bm{\omega}\,) \,\right\|^2\,\right),
\end{multline}
where $n_q$ is the integer number such as
\begin{equation}
q_{n_q}\leq q<q_{n_q+1}
\end{equation}
and
\begin{equation}
q_{i}\wedge q=\max \left\{\,q,\,q_i\,\right\}.
\end{equation}
We want to derive a self-consistency equation for the process $\phi$ that is equivalent to the BSDE \eqref{equation_Discrete}.

Let us consider the sub-martingale $\zeta$, defined in such a way:
\begin{multline}
\zeta(q,\bm{\omega})=\log \,\mathcal{E}(x\bm{r};1,\bm{\omega}|q)\\
=\sum^{K}_{i=n_q}\left(x_{i+1}\int^{q_{i+1}}_{q_{i}\wedge q} \,\bm{r}(\,p\,, \bm{\omega}\,) \cdot \text{d}\bm{\omega}(p)-\frac{1}{2} x^2_{i+1}\int^{q_{i+1}}_{q_{i}\wedge q} dp\,\,\left \|\,\bm{r}(\,p, \,\bm{\omega}\,) \,\right\|^2\,\right),
\end{multline}
Using the equation \eqref{equation_Discrete}, one finds:
\begin{equation}
\label{relation_r_phi}
\zeta(q,\bm{\omega})=\sum^{K}_{i=n_q} x_{i+1}\left(\phi (\,q_{n+1}\,,\bm{\omega}\,)-\phi (\,q_{n}\wedge q\,,\bm{\omega}\,)\right),
\end{equation}
thus
\begin{equation}
\label{relation_G_phi}
\exp\left(x(q)\,\phi(q_{n_q+1},\bm{\omega})\right)=\exp\left(x(q)\,\phi(q,\bm{\omega})\right)\mathcal{E}\left(\,x\bm{r};q,\bm{\omega}|q_{n_q+1}\,\right).
\end{equation}
Let us assume for now that the DDE $\mathcal{E}(x\bm{r})$ is a true martingale. This assumption will be tested a posteriori. The martingale property implies that
\begin{equation}
\mathbb{E}\left[\mathcal{E}\left(\,x\bm{r};q,\bm{\omega}|q_{n_q+1}\,\right)\Big|\mathcal{F}_{q}\right]=1,
\end{equation}
and the random variable $\phi(q, \Cdot )$ is $\mathcal{F}_{q}$ measurable. Then the self-consistency equation for the RSB value process $\phi$ can be derived by considering the following identity:
\begin{multline}
\label{discrete_equation_manipulation}
\phi\left(\,q\,,\bm{\omega}\,\right)=\frac{1}{x(q)} \log\, \exp \left(\,x(q) \phi\left(\,q\,,\bm{\omega}\,\right)\,\right)\\
=\frac{1}{x(q)} \log\left(\, \exp \left(\,x(q) \phi\left(\,q\,,\bm{\omega}\,\right)\,\right)\mathbb{E}\left[\mathcal{E}\left(\,x\bm{r};q,\bm{\omega}|q_{n_q+1}\,\right)\Big|\mathcal{F}_{q}\right]\,\right).
\end{multline}
By replacing the equality \eqref{relation_G_phi} in the above representation, we get:
\begin{equation}
\label{discrete_equation}
\phi\left(\,q\,,\bm{\omega}\,\right)=\frac{1}{x(q)} \log\,\mathbb{E}\left[\exp \left(\,x(q)\phi(\,q_{n_q+1}\,,\bm{\omega}\,)\,\,\right)\big|\mathcal{F}_{q}\right].
\end{equation}

The equation \eqref{discrete_equation} computed at the discontinuity points $0,\,q_1,\cdots,\,q_{K}$ leads to an iterative backward map that allows to derive progressively the $K+2$ random variables $\phi(\,q_{K},\bm{\omega}\,)$, $\cdots$, $\phi(\,0,\bm{\omega}\,)$ from the Wiener functional $\Psi(\,1,\bm{\omega}\,)$:
\begin{equation}
\label{iterative_discrete_cont}
\phi\left(\,q_n,\bm{\omega}\,\right)=\frac{1}{x_{n+1}} \log\,\mathbb{E}\left[\exp \left(\,x_{n+1}\phi(\,q_{n+1}\,,\bm{\omega}\,)\,\,\right)\big|\mathcal{F}_{q_n}\right].
\end{equation} 
Note that the above iteration is equivalent to the discrete-RSB iteration \eqref{start1}, proving that the random variable $\phi\left(\,0,\cdot\,\right)$ has the same distribution of \eqref{start2}, as we claimed at the beginning of this paragraph.
\begin{proposition}
\label{inequality_prop_theo}
For any function $x\in \chi^{\circ}$, the process $\phi$, solution of the equation \eqref{discrete_equation}, is bounded, with
\begin{equation}
\label{inequality_prop}
\underset{\bm{\omega}\in\Omega}{\Max} |\phi\left(\,q,\bm{\omega}\,\right)|\leq c.
\end{equation}
\end{proposition}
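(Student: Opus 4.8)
The plan is to prove the two-sided bound by backward induction along the grid $q_{K+1}>q_K>\cdots>q_0$, exploiting that each step of the recursion \eqref{iterative_discrete_cont} is governed by the nonlinear operator $A\mapsto \frac{1}{x_{n+1}}\log\mathbb{E}\left[\exp\left(x_{n+1}A\right)\big|\mathcal{F}_{q_n}\right]$, which for $x_{n+1}>0$ is monotone and fixes constants. These two features together make the operator non-expansive for the essential-supremum bound: it cannot enlarge an a.s.\ two-sided bound on its argument, so a uniform bound at one grid point propagates backward unchanged.

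First I would record the base case. At the endpoint $q_{K+1}=1$ the terminal condition is $\phi(1,\bm{\omega})=\Psi(1,\bm{\omega})$, and by the standing hypothesis \eqref{boundedness} we have $|\Psi(1,\bm{\omega})|\le c$ a.s., i.e.\ $-c\le\phi(q_{K+1},\bm{\omega})\le c$. For the inductive step, suppose $-c\le\phi(q_{n+1},\bm{\omega})\le c$ a.s. Since the sequence $X$ is increasing and $x_1\neq 0$, every tilting parameter used in \eqref{iterative_discrete_cont} satisfies $x_{n+1}\ge x_1>0$ (no division by $x_0=0$ ever occurs). Hence exponentiation and the monotonicity of the conditional expectation give $e^{-x_{n+1}c}\le \mathbb{E}\left[\exp\left(x_{n+1}\phi(q_{n+1},\bm{\omega})\right)\big|\mathcal{F}_{q_n}\right]\le e^{x_{n+1}c}$ a.s.; applying the increasing map $\frac{1}{x_{n+1}}\log(\cdot)$ and using \eqref{iterative_discrete_cont} yields $-c\le\phi(q_n,\bm{\omega})\le c$ a.s. This carries the bound from $n=K$ down to $n=0$.

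Finally I would extend the bound to arbitrary $q\in[0,1]$. For $q\in(q_{n_q},q_{n_q+1}]$ the self-consistency relation \eqref{discrete_equation} expresses $\phi(q,\bm{\omega})$ through the same operator applied to $\phi(q_{n_q+1},\bm{\omega})$, now with the constant $x(q)=x_{n_q+1}>0$. The identical monotonicity argument, together with the grid bound $|\phi(q_{n_q+1},\bm{\omega})|\le c$ just established, gives $|\phi(q,\bm{\omega})|\le c$ a.s., which is exactly \eqref{inequality_prop}.

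The hard part will be essentially bookkeeping rather than analysis: the one genuine point to watch is the strict positivity of the tilting parameter at every step, so that $\frac{1}{x_{n+1}}\log(\cdot)$ is well defined and order-preserving; this is guaranteed by $x_1\neq 0$ and the monotonicity of $X$. Everything else reduces to the monotonicity and constant-preservation of conditional expectation, which are routine. As an alternative I could deduce the claim directly from the monotonicity and normalization properties collected in Proposition \ref{Non:llinear_prop}, by sandwiching $\Psi$ between the constants $\pm c$; however, the backward induction has the advantage of being self-contained within the discrete construction and of not presupposing the general existence result.
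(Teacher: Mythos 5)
Your proof is correct and follows essentially the same route as the paper's: backward induction along the grid $q_{K+1},\dots,q_0$, using that the map $A\mapsto \frac{1}{x_{n+1}}\log\mathbb{E}[\exp(x_{n+1}A)\,|\,\mathcal{F}_{q_n}]$ is monotone, fixes constants, and is well defined because every tilting parameter is at least $x_1>0$; the paper phrases the step with the essential supremum of $|\phi(q_n,\cdot)|$ while you write out the two-sided sandwich, but these are the same estimate. The final extension from the grid points to all $q\in[0,1]$ via \eqref{discrete_equation} also matches the paper.
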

As a consequence $\phi\in S^p_{[0,1]}(\Omega)$, for any $p\geq 1$.
\begin{proof}
We start by proving the boundedness of the random variables $\phi(\,q_{K},\bm{\omega}\,)$, $\cdots$, $\phi(\,0,\bm{\omega}\,)$, by decreasing induction on $q_n$. For $q=q_{K+1}=1$, the Wiener functional $\phi(1, \Cdot )$ is bounded by \eqref{boundedness}:
\begin{equation}
\label{inequality_tot}
 \underset{\bm{\omega}\in\Omega}{\Max} |\phi\left(\,1,\bm{\omega}\,\right)|\leq c 
\end{equation} 
and using the decreasing induction hypothesis on $q_n$ we get
\begin{multline}
\label{bound}
 \underset{\bm{\omega}\in\Omega}{\Max} |\phi\left(\,q_{n-1},\bm{\omega}\,\right)|= \underset{\bm{\omega}\in\Omega}{\max}\left|\frac{1}{x_{n}} \log\,\mathbb{E}\left[\exp \left(\,x_{n}\phi(\,q_{n}\,,\bm{\omega}\,)\,\,\right)\Big|\mathcal{F}_{q_{n-1}}\right]\right|\\
\leq \frac{1}{x_{n}} \log\,\mathbb{E}\left[\exp \left(\,x_{n}\underset{\bm{\omega}\in\Omega}{\Max}|\phi(\,q_{n}\,,\bm{\omega}\,)|\,\,\right)\Bigg|\mathcal{F}_{q_{n-1}}\right]\\= \underset{\bm{\omega}\in\Omega}{\Max}|\phi(\,q_{n}\,,\bm{\omega}\,)|,
\end{multline} 
proving that
\begin{equation}
\label{inequality_tot}
 \underset{\bm{\omega}\in\Omega}{\Max} |\phi\left(\,q_n,\bm{\omega}\,\right)|\leq \underset{\bm{\omega}\in\Omega}{\Max} |\phi\left(\,1,\bm{\omega}\,\right)|\leq c.
\end{equation} 
Boundedness property trivially extends to the whole process $\phi$ through the equation \eqref{bound}. 
\end{proof}
Now, it remains to derive the vector process $\bm{r}$ (the auxiliary order parameter) that verifies the equation \eqref{equation_Discrete} together with the process $\phi$ and such as the DDE $\mathcal{E}(x\bm{r})$ is a martingale.

The auxiliary order parameter is derived as follows. In each interval $[q_n,q_{n+1})$, with $0\leq n\leq K$, we define a process $J_n$ adapted to the filtration $\{\mathcal{F}_q,\,q\in[q_n,q_{n+1})\}$, in such a way:
\begin{equation}
J_n(q,\bm{\omega})=\mathbb{E}\left[\exp \left(\,x_{n+1}\phi(\,q_{n+1}\,,\bm{\omega}\,)\,\,\right)\Big|\mathcal{F}_{q}\right],\,\,q_{n}\leq q\leq q_{n+1}.
\end{equation}
Since the Wiener functional $\phi(\,q_{n+1}\,,\cdot\,)$ is bounded, the process $J_n$ is a strictly positive bounded martingale for $q\in[q_n, q_{n+1}]$. By the martingale representation theorem for the Brownian motion, there exists a unique process $\bm{M}_n$ in $H^p_{[q_n, q_{n+1}]}(\Omega)$, for any $p\geq1$, such as
\begin{equation}
J_n(q,\bm{\omega})=J_n(q_{n},\bm{\omega})+\int^{q}_{q_{n}}\bm{M}_n(p,\bm{\omega})\cdot \text{d}\bm{\omega}(p),\,\,q_{n}\leq q\leq q_{n+1}.
\end{equation}
Since the process $J_n$ is strictly positive and continuous for all $q\in[q_{n},q_{n+1})$, we can appy the It\^o formula to the process $\log J_n$, with the result that
\begin{multline}
\log J_{n_q}(q,\bm{\omega})=\\\log J_{n}(q_{n},\bm{\omega})+\int^q_{q_{n}}\frac{1}{J_n(p,\bm{\omega})}\bm{M}_n(p,\bm{\omega})\cdot \text{d}\bm{\omega}(p)-\frac{1}{2}\int^q_{q_{n}}\frac{1}{J^2_n(p,\bm{\omega})}\|\bm{M}(p,\bm{\omega})\|^2\,dp
\end{multline}
and thus we get
\begin{multline}
\label{proof_discrete}
\phi(q_{n_q+1},\bm{\omega})-\phi(q,\bm{\omega})=\frac{1}{x_{n_q}}\log J_{n_q}(q_{n_q},\bm{\omega})-\frac{1}{x_{n_q}}\log J_{n_q}(q,\bm{\omega})\\
=\int^{q_{n_q+1}}_{q}\frac{1}{x_{n_q+1} J_{n_q}(p,\bm{\omega})}\bm{M}_{n_q}(p,\bm{\omega})\cdot \text{d}\bm{\omega}(p)-\frac{1}{2}\int^{q_{n_q+1}}_{q}\frac{x_{n_q+1}}{x^2_{n_q+1}J^2_{n_q}(p,\bm{\omega})}\|\bm{M}_{n_q}(p,\bm{\omega})\|^2\,dp.
\end{multline}
Since $x_n>0$ for all $n>0$, then the integrals in the above equation are defined. 
Put
\begin{equation}
\label{auxiliry_Order_piecewise}
\bm{r}(q,\bm{\omega})=\sum^K_{n=1} \frac{1}{x_i J_{i-1}(q,\bm{\omega})}\bm{M}_{i-1}(q,\bm{\omega})\mathbb{1}_{[q_{i-1},q_{i})}(q)
\end{equation}
then the pair $(\phi,\,\bm{r})$ satisfies the BSDE \eqref{equation_Discrete}. Moreover the process $\bm{r}$ verifies the following remarkable property, that will be crucial for the rest of the section.
\begin{proposition}
\label{trivial_bb}
For all $\bm{\omega}\in \Omega$, the process $\bm{r}$, given by \eqref{auxiliry_Order_piecewise}, verifies the following inequality 
\begin{equation}
\label{bound_zet}
\left| \int^1_0x(q') \bm{r}(x'\,, \,\bm{\omega}) \cdot \text{d}\bm{\omega}(q') -\frac{1}{2}\int^1_0dq'\,x^2(q')\,\left\|\bm{r}(q'\,, \,\bm{\omega})\right\|^2\right|
\leq2 c\,
\end{equation}
that implies:
\begin{equation}
\label{bound_DDE}
e^{-2 c}\leq \mathcal{E}(\bm{r};q,\bm{\omega}) \leq e^{2c},\quad \forall \bm{\omega}\in\Omega
\end{equation}
\end{proposition}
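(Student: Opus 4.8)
The crucial remark is that the expression inside the absolute value in \eqref{bound_zet} is precisely the exponent $\zeta(\bm{r},x;1,\bm{\omega})$ of the Doléans-Dade exponential, so that $\mathcal{E}(x\bm{r};1,\bm{\omega})=e^{\zeta(\bm{r},x;1,\bm{\omega})}$ and \eqref{bound_DDE} will follow from \eqref{bound_zet} by exponentiation. Hence the whole content of the proposition reduces to the almost-sure estimate $|\zeta(\bm{r},x;1,\bm{\omega})|\leq 2c$, which is purely algebraic and uses no probabilistic input beyond the boundedness of $\phi$ already proved in Proposition \ref{inequality_prop_theo}. I would therefore establish the bound on the exponent first and obtain the statement about $\mathcal{E}(x\bm{r})$ as an immediate corollary; this also retroactively discharges the "tested a posteriori" martingality assumption made just before \eqref{auxiliry_Order_piecewise}, since \eqref{bound_DDE} places $\bm{r}$ in $\widehat{D}_{[0,1]}(\Omega)$ and hence makes $\mathcal{E}(x\bm{r})$ a true martingale.

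The plan to bound the exponent rests on identity \eqref{relation_r_phi}, read interval by interval: on each $[q_i,q_{i+1}]$ the DDE exponent equals $x_{i+1}\big(\phi(q_{i+1},\bm{\omega})-\phi(q_i,\bm{\omega})\big)$. Summing over all intervals gives
\begin{equation}
\zeta(\bm{r},x;1,\bm{\omega})=\sum_{i=0}^{K}x_{i+1}\big(\phi(q_{i+1},\bm{\omega})-\phi(q_i,\bm{\omega})\big)\,.
\end{equation}
I would then apply summation by parts (Abel summation) to recast the right-hand side as
\begin{equation}
\zeta(\bm{r},x;1,\bm{\omega})=x_{K+1}\,\phi(q_{K+1},\bm{\omega})-x_1\,\phi(q_0,\bm{\omega})+\sum_{k=1}^{K}\big(x_k-x_{k+1}\big)\phi(q_k,\bm{\omega})\,.
\end{equation}
This rearrangement is the heart of the argument: it trades the telescoping increments of $\phi$ (not individually controlled) for increments of the \emph{monotone} function $x$ weighted by the uniformly bounded values $\phi(q_k,\bm{\omega})$.

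Finishing is then immediate. Using $|\phi(q_k,\bm{\omega})|\leq c$ for every $k$ (Proposition \ref{inequality_prop_theo}), together with $x_{K+1}=1$ and the monotonicity of $x$, which makes $|x_k-x_{k+1}|=x_{k+1}-x_k$ telescope to $\sum_{k=1}^K(x_{k+1}-x_k)=1-x_1$, I obtain
\begin{equation}
|\zeta(\bm{r},x;1,\bm{\omega})|\leq c+x_1\,c+c\,(1-x_1)=2c\,,
\end{equation}
which is \eqref{bound_zet}, and exponentiating gives \eqref{bound_DDE} at $q=1$. For a general time $q\in[0,1]$ I would run the same computation for the exponent on $[0,q]$, truncating the final interval to $[q_{n_q},q]$; the analogous summation by parts then produces the sharper bound $2c\,x(q)\leq 2c$, so the estimate on $\mathcal{E}(x\bm{r};q,\bm{\omega})$ holds at every $q$ and \eqref{bound_DDE} follows in full. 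The only delicate point is the bookkeeping of the Abel-summation indices and the truncated last interval; there is no genuine analytic obstacle, precisely because the nonlinearity has been fully absorbed into the a priori $L^\infty$ control of $\phi$.
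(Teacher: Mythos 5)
Your proposal is correct and follows essentially the same route as the paper: the paper's proof likewise combines identity \eqref{relation_r_phi} with the bound $|\phi|\leq c$ from Proposition \ref{inequality_prop_theo}, arriving at $|\zeta(0,\bm{\omega})|\leq|\Psi(1,\bm{\omega})|+\sum_i(x_{i+1}-x_i)|\phi(q_i,\bm{\omega})|\leq 2c$, which is exactly the Abel-summation rearrangement you spell out. Your version merely makes the summation by parts and the treatment of intermediate times $q$ explicit, which the paper leaves implicit.
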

The above results implies that the DDE process $\mathcal{E}(\bm{r})$ is a true martingale, so the vector process $\bm{r}$ is a proper solution of the auxiliary variational problem and identify an element of the domain set $ D_{[0,1]}(\Omega)$.
\begin{proof}
The proof of the inequality \eqref{bound_zet} is given by combining the equation \eqref{relation_r_phi} with the inequality \eqref{inequality_prop}, we obtain that
\begin{equation}
\begin{multlined}
\left| \int^1_0x(q') \bm{r}(q'\,, \,\bm{\omega}) \cdot \text{d}\bm{\omega}(q) -\frac{1}{2}\int^1_0dq'\,x^2(q')\,\|\bm{r}(q'\,, \,\bm{\omega})\|^2\right|
=|\zeta(0,\bm{\omega})|\\
\leq |\,\Psi(1,\bm{\omega})\,|+\sum^{K-1}_{i=n_q} (x_{i+1}-x_{i})|\phi (\,q_{i}\,,\bm{\omega})\,|\leq2 c.
\end{multlined}
\end{equation}
\end{proof}
It is worth noting that, since the processes $J_{i-1}$ and the function $x$ are strictly positive, and the process $\bm{M}_i$ is in $H_{[0,1]}^p(\Omega)$, then the process $\bm{r}$ is in $H_{[0,1]}^p(\Omega)$, for any $p\geq 1$. However, the boundedness of the process $\phi$ implies a stronger properties for the process $\bm{r}$ that is stated in the following proposition.

\begin{proposition}
\label{bound}
For every $p\in [0,\infty)$, there exist a universal constant $K_p$ such as for all the functions $x\in \chi^{\circ}$, the vector process $\bm{r}$ obtained by solving the equation \eqref{selfEq1Aux} verifies:
\begin{equation}
\label{bound_rr}
\widetilde{\mathbb{E}}_{x\bm{r}}\left[\left(\int^{1}_{q}dq'\,\|\bm{r}(q',\bm{\omega})\|^{2}\right)^p\right]\leq K_p
\end{equation}
and
\begin{equation}
\label{bound_rr_2}
\mathbb{E}\left[\left(\int^{1}_{q}dq'\,\|\bm{r}(q',\bm{\omega})\|^{2}\right)^p\right]\leq e^{2c} K_p.
\end{equation}
\end{proposition}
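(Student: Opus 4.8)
The plan is to prove the tilted bound \eqref{bound_rr} first, working under the measure $\widetilde{\mathbb{W}}_{x\bm{r}}$ where the Doléans--Dade exponential has been reabsorbed, and then to deduce \eqref{bound_rr_2} from it by the two--sided bound \eqref{bound_DDE}. The crux is that every constant produced along the way must depend only on the uniform bound $c$ on the claim (hence on $\phi$, through \eqref{inequality_prop}) and never on the particular piecewise--constant $x\in\chi^{\circ}$. First I would rewrite the BSDE \eqref{selfEq1Aux} under $\widetilde{\mathbb{W}}_{x\bm{r}}$: by the Cameron--Martin--Girsanov theorem the process $\bm{W}_{x\bm{r}}$ of \eqref{semimart} is a Brownian motion under $\widetilde{\mathbb{W}}_{x\bm{r}}$, and substituting $d\bm{\omega}=d\bm{W}_{x\bm{r}}+x\bm{r}\,dq$ into \eqref{selfEq1Aux} turns it into
\begin{equation}
\phi(q,\bm{\omega})=\Psi(1,\bm{\omega})-\int_q^1\bm{r}(q',\bm{\omega})\cdot d\bm{W}_{x\bm{r}}(q',\bm{\omega})-\tfrac12\int_q^1 dq'\,x(q')\,\|\bm{r}(q',\bm{\omega})\|^2 ,
\end{equation}
so that the $\widetilde{\mathbb{W}}_{x\bm{r}}$--quadratic variation of $\phi$ is exactly $\int\|\bm{r}\|^2\,dq$.

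The heart of the argument is then an exponential (energy) estimate. I would apply the It\^o formula to $e^{\lambda\phi}$ under $\widetilde{\mathbb{W}}_{x\bm{r}}$, for a fixed $\lambda>0$. The drift of $e^{\lambda\phi}$ carries the term $\tfrac12\lambda e^{\lambda\phi}(x+\lambda)\,\|\bm{r}\|^2$, whose coefficient is bounded below by $\tfrac12\lambda^2 e^{-\lambda c}$, because $x\ge0$ and $\phi\ge -c$ by \eqref{inequality_prop}. Discarding the $\widetilde{\mathbb{W}}_{x\bm{r}}$--martingale part (via a localization and monotone--convergence step, legitimate since $\int\|\bm{r}\|^2$ is nondecreasing and $\phi$ is bounded) and conditioning on $\mathcal{F}_q$ yields
\begin{equation}
\tfrac12\lambda^2 e^{-\lambda c}\,\widetilde{\mathbb{E}}_{x\bm{r}}\!\left[\int_q^1\|\bm{r}(q',\bm{\omega})\|^2\,dq'\,\big|\,\mathcal{F}_q\right]\le \widetilde{\mathbb{E}}_{x\bm{r}}\!\left[e^{\lambda\phi(1,\bm{\omega})}\,\big|\,\mathcal{F}_q\right]\le e^{\lambda c},
\end{equation}
hence $\widetilde{\mathbb{E}}_{x\bm{r}}[\int_q^1\|\bm{r}\|^2\,dq'\mid\mathcal{F}_q]\le 2e^{2\lambda c}/\lambda^2=:\Phi$, uniformly over $\chi^{\circ}$. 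In other words, $\int\bm{r}\cdot d\bm{W}_{x\bm{r}}$ is a BMO martingale under $\widetilde{\mathbb{W}}_{x\bm{r}}$ with norm controlled only by $c$ (take $\lambda=1$, $\Phi=2e^{2c}$).

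Next I would upgrade this first--moment control to all moments by the classical energy inequality: an adapted nondecreasing process $A_q=\int_0^q\|\bm{r}\|^2$ satisfying $\widetilde{\mathbb{E}}_{x\bm{r}}[A_1-A_\tau\mid\mathcal{F}_\tau]\le\Phi$ for every stopping time $\tau$ obeys $\widetilde{\mathbb{E}}_{x\bm{r}}[(A_1-A_q)^n\mid\mathcal{F}_q]\le n!\,\Phi^n$ for every integer $n$. Setting $K_p:=\lceil p\rceil!\,\Phi^{\lceil p\rceil}$ (and interpolating by Jensen/Lyapunov for noninteger $p$) gives \eqref{bound_rr} with $K_p$ depending only on $p$ and $c$. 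Finally \eqref{bound_rr_2} follows by the change of measure: since $\mathcal{E}(x\bm{r})^{-1}\le e^{2c}$ by \eqref{bound_DDE}, for the nonnegative variable $X=(\int_q^1\|\bm{r}\|^2)^p$ one has $\mathbb{E}[X]=\widetilde{\mathbb{E}}_{x\bm{r}}[\mathcal{E}(x\bm{r})^{-1}X]\le e^{2c}\,\widetilde{\mathbb{E}}_{x\bm{r}}[X]\le e^{2c}K_p$.

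The main obstacle is precisely the uniform--in--$x$ energy bound of the second step: one must control $\int\|\bm{r}\|^2$, which carries no $x$ weight, starting from a BSDE whose quadratic generator $\tfrac12 x\|\bm{r}\|^2$ degenerates wherever $x$ is small, so a naive Gronwall/Itô--isometry estimate would produce a constant blowing up as $\inf x\to 0$. The exponential change of function is what decouples the bound from the size of $x$: only $x\ge 0$ (not a positive lower bound on $x$) enters the drift inequality, and together with the a priori bound $|\phi|\le c$ this renders the BMO norm $\Phi$, and therefore every $K_p$, genuinely universal over $\chi^{\circ}$. The remaining points — that the It\^o stochastic integral is a true $\widetilde{\mathbb{W}}_{x\bm{r}}$--martingale after localization, and the passage to the limit — are routine given that $\bm{r}\in H^p_{[0,1]}(\Omega)$ and $\phi$ is bounded.
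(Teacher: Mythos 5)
Your proposal is correct, and it reaches the bound by a genuinely different route than the one used here. The proof in the text localizes with stopping times and then passes to the \emph{half-tilted} measure $\widetilde{\mathbb{W}}_{x\bm{r}/2}$: under that measure the BSDE \eqref{selfEq1Aux} says precisely that $\phi(q)-\phi(0)=\int_0^q\bm{r}\cdot d\bm{W}_{x\bm{r}/2}$, i.e.\ $\phi$ \emph{is} the stochastic integral of $\bm{r}$, so the Burkholder--Davis--Gundy inequality converts the a priori bound $|\phi|\le c$ of Proposition \ref{inequality_prop_theo} directly into moment bounds on the quadratic variation $\int\|\bm{r}\|^2$, and a H\"older step transfers the estimate from $\widetilde{\mathbb{W}}_{x\bm{r}/2}$ back to $\widetilde{\mathbb{W}}_{x\bm{r}}$. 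You instead stay under the full tilted measure, apply It\^o to $e^{\lambda\phi}$, and observe that the drift $\tfrac12\lambda e^{\lambda\phi}(x+\lambda)\|\bm{r}\|^2$ is bounded below by $\tfrac12\lambda^2e^{-\lambda c}\|\bm{r}\|^2$ using only $x\ge0$ and $|\phi|\le c$; this gives a conditional first-moment (BMO) bound $\widetilde{\mathbb{E}}_{x\bm{r}}[\int_q^1\|\bm{r}\|^2\,dq'\mid\mathcal{F}_q]\le 2e^{2c}$ uniform over $\chi^{\circ}$, which the classical energy inequality for increasing processes upgrades to all moments. Both arguments rest on the same two inputs --- the uniform bound on $\phi$ and the fact that only $x\ge0$, never a lower bound on $x$, is needed --- and both produce a $K_p$ depending only on $p$ and $c$. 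Your route dispenses with the intermediate measure, the H\"older interpolation and BDG, and yields the slightly stronger conditional form of \eqref{bound_rr}; the route in the text buys the clean structural identity that $\phi$ is itself a bounded martingale under the half-tilt, which makes the appeal to BDG immediate. Your deduction of \eqref{bound_rr_2} from \eqref{bound_rr} via $\mathcal{E}(x\bm{r})^{-1}\le e^{2c}$ coincides with the one given before the proof. The only point to make fully explicit is that the energy inequality requires the conditional increment bound at stopping times, not just deterministic times; your exponential argument delivers this verbatim with $\tau$ in place of $q$ by optional stopping, so nothing is missing.
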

The second inequality is a trivial consequence of the first one and the proposition \ref{trivial_bb}, indeed:
\begin{equation}
\mathbb{E}\left[\left(\int^{1}_{q}dq'\,\|\bm{r}(q',\bm{\omega})\|^{2}\right)^p\right]\leq e^{2c}\widetilde{\mathbb{E}}_{x\bm{r}}\left[\left(\int^{1}_{q}dq'\,\|\bm{r}(q',\bm{\omega})\|^{2}\right)^p\right].
\end{equation}
So, we just prove \eqref{bound_rr}.
\begin{proof}
Let $(\tau_n,n\in\mathbb{N})$ be the sequence of stopping times defined as follows
\begin{equation}
\tau_n=\sup \left\{q\in[0,1];\,\,\int^q_0 dq'\,x(q')^2\|\bm{r}(q',\bm{\omega})\|^2\leq n^2 \right\}
\end{equation}
and put $\inf\, \emptyset=1$. For each $n\in\mathbb{N}$, we set
\begin{equation}
\bm{r}_n(q,\bm{\omega})=\bm{r}(q,\bm{\omega})\theta(\tau_n-q),
\end{equation}
where the function $\theta$ is the Heaviside theta function. Since the stochastic integral $\int^1_0 dq\,x(q)^2\|\bm{r}(q,\bm{\omega})\|^2$ has a finite expectation, then $\tau_n\uparrow 1\,a.s.$. We start by proving the proposition for $\bm{r}_n$ and then we take the limit $n\to \infty$.

Let us define
\begin{equation}
\zeta_n(\alpha,\beta; \bm{\omega})=\,\alpha\, \int^{\tau_{n}}_{0}x(q)\bm{r}(q,\bm{\omega})\cdot \text{d}\bm{\omega}(q)- \frac{\beta\,}{2} \int^{\tau_{n}}_{0}dp x(q)^2\,\|\bm{r}(q,\bm{\omega})\|^2,
\end{equation}
where $\alpha$ and $\beta$ are two real numbers. We have
\begin{equation}
\zeta(\bm{r},x;\tau_n,\bm{\omega})=\zeta_n\left(\tfrac{1}{2},\tfrac{1}{4},\bm{\omega}\right)+\zeta_n\left(\tfrac{1}{2},\tfrac{3}{4},\bm{\omega}\right)\leq \zeta_n\left(\tfrac{1}{2},\tfrac{1}{4},\bm{\omega}\right)+\tfrac{1}{2}\zeta(\bm{r},x;\tau_n,\bm{\omega}).
\end{equation}
The definition of $\tau_n$ and of the process $\bm{r}_n$ implies that DDE $\mathcal{E}(x\bm{r}_n/2)$, is a true strictly positive martingale, that is
\begin{equation}
\mathbb{E}\left[\mathcal{E}\left(\tfrac{1}{2}x\bm{r}_n;1,\bm{\omega}\right)\,\right]=\mathbb{E}\left[e^{\zeta_n\left(\frac{1}{2},\frac{1}{4},\bm{\omega}\right)}\,\right]=1,
\end{equation}
so we can consider the Girsanov change of measure from the $n-$component Wiener measure $\mathbb{W}$ to the equivalent measure $\widetilde{\mathbb{W}}_{x\bm{r}_n/2}$.

As usual, the symbol $\widetilde{\mathbb{E}}_{x\bm{r}_n/2}[\Cdot]$ will denotes the expectation value with respect the measure $\widetilde{\mathbb{W}}_{x\bm{r}_n/2}$ and the process $\bm {W}_{x\bm{r}_n/2}$ is the $n-$components vector Brownian motion with respect the measure $\widetilde{\mathbb{W}}_{x\bm{r}_n/2}$:
\begin{equation}
\label{brownian_proofrr}
\bm{W}_{\bm{r}_n/2}(q,\bm{\omega})=\bm{\omega}(q)-\frac{1}{2}\int^q_0 \text{d}q' x(q') \bm{r}(q',\bm{\omega}).
\end{equation}
From a straightforward computation, we get
\begin{equation}
\widetilde{\mathbb{E}}_{x\bm{r}_n}\left[\left(\int^{\tau_n}_0 \text{d}q'\,\|\bm{r}(q',\bm{\omega})\|^{2}\,\right)^{\frac{p}{2}}\,\right]
\leq \widetilde{\mathbb{E}}_{\frac{1}{2}x\bm{r}_n}\left[e^{\frac{1}{2}\zeta(\bm{r},x;\tau_n,\bm{\omega})}\left(\int^{\tau_n}_0 \text{d}q'\,\|\bm{r}(q',\bm{\omega})\|^{2}\Bigg|\,\right)^{\frac{p}{2}}\,\right].
\end{equation}
and H\"{o}lder inequality for any $p\geq1$ yields 
\begin{multline}
\label{holder_prrof}
\widetilde{\mathbb{E}}_{\frac{1}{2}x\bm{r}_n}\left[e^{\tfrac{1}{2}\zeta(\bm{r},x;\tau_n,\bm{\omega})}\left(\int^1_0 \text{d}q'\,\|\bm{r}_n(q',\bm{\omega})\|^{2}\Bigg|\,\right)^{\frac{p}{2}}\,\right]\\
\leq \widetilde{\mathbb{E}}_{\frac{1}{2}x\bm{r}_n}\left[\left(\int^1_0 \text{d}q'\,\|\bm{r}_n(q',\bm{\omega})\|^{2}\Bigg|\,\right)^{p}\,\right]^{\tfrac{1}{2}}\widetilde{\mathbb{E}}_{\frac{1}{2}x\bm{r}_n}\left[e^{\zeta(\bm{r},x;\tau_n,\bm{\omega})}\,\right]^{\tfrac{1}{2}}.
\end{multline}
and by Burkholder-Davis-Gundy inequality\cite{YoRev}, there exist a universal constant $C_p$, depending on $p$, such as:
\begin{multline}
\label{Burk}
\widetilde{\mathbb{E}}_{\frac{1}{2}x\bm{r}_n}\left[\left(\int^1_0 \text{d}q'\,\|\bm{r}_n(q',\bm{\omega})\|^{2}\Bigg|\,\right)^{p}\,\right]\\
\leq C_p \widetilde{\mathbb{E}}_{\frac{1}{2}x\bm{r}_n}\left[\left(\underset{q\in[0,1]}{\sup}\left|\int^1_0 d\bm{W}_{\frac{1}{2}\bm{r}_n}(q',\bm{\omega}) \,\bm{r}_n(q',\bm{\omega})\right|\,\right)^{2p}\,\right].
\end{multline}
By definition \eqref{brownian_proofrr} and the stationary equation \eqref{selfEq1Aux}, we have
\begin{multline}
\int^{q}_0 d\bm{W}_{\frac{1}{2}\bm{r}_n}(q',\bm{\omega}) \,\bm{r}_n(q',\bm{\omega})\\
=\int^q_0 \text{d}\bm{\omega}(q') \,\bm{r}_n(q',\bm{\omega})-\frac{1}{2}\int^q_0 \text{d}q' x(q') \,\|\bm{r}_n(q',\bm{\omega})\|^2\\=\phi(q)-\phi(0).
\end{multline}
By propositions \ref{inequality_prop} and \ref{bound_DDE}, the inequalities \eqref{holder_prrof} and \eqref{Burk} yield:
\begin{equation}
\widetilde{\mathbb{E}}_{x\bm{r}_n}\left[\left(\int^{\tau_n}_0 \text{d}q'\,\|\bm{r}(q',\bm{\omega})\|^{2}\,\right)^{\frac{p}{2}}\,\right]\leq \sqrt{C_p} e^{c}c^{p},\quad \forall n\in\mathbb{N}
\end{equation}
that proves the inequality \eqref{bound_rr} with $K_p= \sqrt{C_p} e^{c}c^{p}$.

The inequality \eqref{bound_rr_2} is an immediate consequence of the inequality \eqref{bound_DDE} and \eqref{bound_rr}.
\end{proof}
By proposition \ref{bound} the $H_{[0,1]}^p(\Omega)-$norm of the process $\bm{r}$ is dominated by a constant that doe not depends on the POP. This property will play a crucial role in the next paragraph.
\subsection{Extension to continuous POP}
In this paragraph we prove the existence of the solution of the BSDE \eqref{selfEq1Aux} when the Parisi order parameter is a generic increasing function $x\in \chi$. 
The proof is quite technical and relies on several intermediate results, that will be important also in the next chapters.

Intuitively, we may proceed by approximating the POP through elements of $\chi^{\circ}$. We show that, given a proper sequence of functions in $\chi^{\circ}$ that converges uniformly to the POP $x\in\chi$, the sequence of the solutions converges to a solution of the stationary equation \eqref{selfEq1Aux} corresponding to $x$. 

To this aim, we need to study the dependence of the processes defined in \eqref{discrete_equation} and \eqref{auxiliry_Order_piecewise} on the corresponding POP. Let us denote by $(\phi(x),\bm{r}(x) )$ the solution of the BSDE \eqref{selfEq1Aux} corresponding to a given POP $x\in\chi^{\circ}$.

Note that, since the elements of $\chi^{\circ}$ are strictly positive functions, the map $\chi^{\circ}\ni x\mapsto \phi(x)\in S^p_{[0,1]}(\Omega)$ is continuous and infinitely differentiable. By contrast, a continuous POP $x$ may be arbitrary close to $0$ at $q\to 0$, so the extension of this property to the general case is not obvious. 

The results in the next proposition allows to compare two process $\phi(x^{(1)})$ and $\phi(x^{(2)})$, corresponding to the piecewise constant POPs $x^{(0)}$ and $x^{(1)}$. 
\begin{proposition}
\label{derivative_theo}
Let $(\phi^{(t)},\bm{r}^{(t)})$ be the solutions relating to the Parisi order parameters $x^{(t)}$ respectively. The next results
Consider two POPs $x^{(0)}$ and $x^{(1)}$ in $\chi^{\circ}$. Let
\begin{equation}
\delta x=x^{(1)}-x^{(0)},
\end{equation}
and consider
\begin{equation}
x^{(t)}=(1-t)x^{(0)}+t x^{(1)}\in \chi^{\circ}.
\end{equation}
Let $(\phi^t,\bm{r}^t)$ be the solution corresponding to the POP $x^{(t)}$. Then, for all $q\in [0,1]$ and $t\in [0,1]$ and almost all $\bm{\omega}\in \Omega$, the quantity $\phi^t(q,\bm{\omega})$ is derivable on $t$ and
\begin{equation}
\label{first_derivative}
\frac{\partial \phi^t(q,\bm{\omega})}{\partial t}=\frac{1}{2}\widetilde{\mathbb{E}}_{\bm{r}^t}\left[\int^{1}_{q}\text{d}p\,\delta x(p)\,\|\bm{r}^t(p,\bm{\omega})\|^2\Bigg| \mathcal{F}_q\right],
\end{equation}
\end{proposition}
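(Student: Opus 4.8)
The plan is to read the derivative off from the optimality of the control by an envelope argument, reducing it to the \emph{explicit} dependence of the functional on the order parameter, and then to evaluate that explicit derivative with the help of the BSDE \eqref{selfEq1Aux} and the It\^o isometry \eqref{MartiPropMod2}. Since $x^{(0)},x^{(1)}\in\chi^{\circ}$, the interpolation $x^{(t)}=(1-t)x^{(0)}+tx^{(1)}$ stays in $\chi^{\circ}$, so the solution $t\mapsto(\phi^t,\bm r^t)$ is differentiable. By Theorem~\ref{GlobalMinTheo}, $\phi^t=\Gamma(\Psi,x^{(t)},\bm r^t;q,\bm\omega)=\sup_{\bm r}\Gamma(\Psi,x^{(t)},\bm r;q,\bm\omega)$. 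Exploiting maximality at $t$ and at $s$ gives the two-sided bound
\[
\Gamma(\Psi,x^{(t)},\bm r^s;q,\bm\omega)-\Gamma(\Psi,x^{(s)},\bm r^s;q,\bm\omega)\le \phi^t-\phi^s\le \Gamma(\Psi,x^{(t)},\bm r^t;q,\bm\omega)-\Gamma(\Psi,x^{(s)},\bm r^t;q,\bm\omega),
\]
with the chain reversed for $s>t$. Dividing by $t-s$ and letting $s\to t$, the variation of the control drops out and
\[
\frac{\partial\phi^t}{\partial t}=\partial_t\,\Gamma(\Psi,x^{(t)},\bm r;q,\bm\omega)\big|_{\bm r=\bm r^t},
\]
where $\partial_t$ acts only on the explicit occurrences of $x^{(t)}$ (recall $\partial_t x^{(t)}=\delta x$); equivalently, the directional derivative \eqref{GatDer} along $\partial_t\bm r^t$ vanishes by the stationary condition \eqref{statCondPi}.

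To compute this explicit derivative I would first differentiate the Dol\'eans--Dade exponential,
\[
\partial_t\mathcal E(x^{(t)}\bm r;q'|q)=\mathcal E(x^{(t)}\bm r;q'|q)\Big(\int_q^{q'}\delta x\,\bm r\cdot d\bm\omega-\int_q^{q'}x^{(t)}\,\delta x\,\|\bm r\|^2\Big),
\]
pass to the tilted measure through \eqref{conditionalNot}, and rewrite $d\bm\omega=d\bm W_{x\bm r}+x\,\bm r\,dq$ by \eqref{AuxProb}; the two pieces then merge into the single $\widetilde{\mathbb W}_{x\bm r}$-martingale increment $\int_q^{q'}\delta x\,\bm r\cdot d\bm W_{x\bm r}$. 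Carrying this through both terms of $\Gamma$ and setting $\bm r=\bm r^t$ yields
\[
\partial_t\Gamma\big|_{\bm r^t}=\widetilde{\mathbb E}_{x\bm r^t}\Big[\Psi\int_q^1\delta x\,\bm r^t\cdot d\bm W_{x\bm r^t}\,\Big|\,\mathcal F_q\Big]-\tfrac12\widetilde{\mathbb E}_{x\bm r^t}\Big[\int_q^1\delta x\,\|\bm r^t\|^2\,\Big|\,\mathcal F_q\Big]-\tfrac12\,C,
\]
with
\[
C=\widetilde{\mathbb E}_{x\bm r^t}\Big[\int_q^1 x(p)\|\bm r^t(p)\|^2\Big(\int_q^{p}\delta x\,\bm r^t\cdot d\bm W_{x\bm r^t}\Big)dp\,\Big|\,\mathcal F_q\Big].
\]

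Finally I would substitute for $\Psi$ from \eqref{selfEq1Aux}, which under $\widetilde{\mathbb W}_{x\bm r^t}$ reads $\Psi=\phi^t(q)+\int_q^1\bm r^t\cdot d\bm W_{x\bm r^t}+\tfrac12\int_q^1 x\|\bm r^t\|^2$. In the first expectation the $\mathcal F_q$-measurable term $\phi^t(q)$ drops (the stochastic integral has zero conditional mean), the product of the two It\^o integrals collapses by the conditional version of the isometry \eqref{MartiPropMod2} into exactly $\widetilde{\mathbb E}_{x\bm r^t}[\int_q^1\delta x\,\|\bm r^t\|^2|\mathcal F_q]$, and the cross term produced by the penalty part of $\Psi$ equals $\tfrac12 C$ after a tower/Fubini identity, cancelling the $-\tfrac12 C$ above. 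Combining with the remaining $-\tfrac12$ prefactor leaves precisely $\tfrac12\widetilde{\mathbb E}_{x\bm r^t}[\int_q^1\delta x\,\|\bm r^t\|^2|\mathcal F_q]$, which is \eqref{first_derivative}.

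The main obstacle is not the algebra but the justification of the limit in the envelope bound: one must show that the lower difference quotient, which still carries the control $\bm r^s$, converges to the explicit derivative evaluated at $\bm r^t$. This needs continuity of $s\mapsto\bm r^s$ together with uniform integrability of all the integrands, and it is exactly here that the POP-uniform estimate \eqref{bound_rr}, the bound $|\phi|\le c$ of \eqref{inequality_prop}, and the two-sided control $e^{-2c}\le\mathcal E\le e^{2c}$ of \eqref{bound_DDE} enter, legitimizing differentiation under the conditional expectation and the Fubini exchange in the cross term.
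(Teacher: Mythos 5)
Your argument is correct in substance, but it follows a genuinely different route from the paper. The paper never touches the functional $\Gamma$ directly: it exploits the piecewise-constant structure of $x^{(t)}\in\chi^{\circ}$, differentiates the finite backward recursion \eqref{iterative_discrete_cont} level by level via the chain rule, solves the resulting linear recursion for $\partial_t\phi^{(t)}$ to get a telescoping sum of terms $\tfrac{\delta x_{n+1}}{x_{n+1}^t}\widetilde{\mathbb{E}}_{\bm r^t}[\phi^t(q_{n+1})-\phi^t(q_n\wedge q)\,|\,\mathcal{F}_q]$, and only then substitutes the BSDE \eqref{selfEq1Aux} to convert each increment into $\tfrac{\delta x_{n+1}}{2}\widetilde{\mathbb{E}}_{\bm r^t}[\int_{q_n}^{q_{n+1}}\|\bm r^t\|^2]$. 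You instead run a Danskin/envelope argument on $\phi^t=\sup_{\bm r}\Gamma(\Psi,x^{(t)},\bm r)$ and compute the explicit $x$-derivative of $\Gamma$ at the optimizer by It\^o calculus; your cancellation of the two $\tfrac12 C$ terms is the exact analogue of the manipulation the paper performs in the $\bm r$-direction when passing from \eqref{GatDer} to \eqref{NewGat}, and the final formula agrees. Your approach is structurally cleaner and does not use piecewise constancy in the algebra, so it would extend verbatim to general POPs once existence is known; the paper's approach is more elementary and entirely self-contained on $\chi^{\circ}$.

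Two caveats you should address. First, the lower envelope bound requires controlling $\Gamma(\Psi,x^{(t)},\bm r^s)-\Gamma(\Psi,x^{(s)},\bm r^s)$ as $s\to t$, hence some continuity of $s\mapsto\bm r^s$; you must not obtain this from Theorem \ref{uniform_convergence_theo}, since in the paper that theorem is \emph{deduced from} the present proposition. For $x^{(0)},x^{(1)}\in\chi^{\circ}$ with a common partition this continuity follows by elementary means from the finite recursion \eqref{iterative_discrete_cont} and the martingale representation (as the paper itself remarks), so the circularity is avoidable, but it should be made explicit. Second, evaluating $\Gamma(\Psi,x^{(s)},\bm r^t)$ requires $\bm r^t$ to be an admissible control for the POP $x^{(s)}$; since admissibility in $\widehat{D}_{[0,1]}(\Omega)$ is phrased through the boundedness of $\zeta(\bm r,x;1,\bm\omega)$, which depends on $x$, you should check (e.g.\ via Proposition \ref{trivial_bb} and the uniform estimate \eqref{bound_rr}) that the optimizers remain admissible across nearby POPs. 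Neither point breaks the proof, but both must be filled in for it to stand on its own.
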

An immediate consequence of the above proposition is:
\begin{corollary}
Given tow POPs $x^{(0)}$ and $x^{(1)}$ in $\chi^{\circ}$ such as
\begin{equation}
x^{(0)}(q)\leq x^{(1)}(q),\,\forall q\in[0,1]
\end{equation}
then
\begin{equation}
\phi(x^{(0)})\leq \phi(x^{(1)}).
\end{equation}
\end{corollary}
\begin{proof}[Proof of Proposition \ref{derivative_theo}]
Let $K$ be the number of discontinuity points $0= q_0<q_1\leq \cdots q_K<q_{K+1}=1$ of the function $x^{(t)}$. 

We start by proving the formula of the first derivative. At $q=1$, the random variable $\phi^{(t)}(1,\Cdot)$ does not depends on $t$, that is
\begin{equation}
\label{recursion_derivative_start}
\frac{\partial}{\partial t}\phi(1,\bm{\omega})=0.
\end{equation}

For $q<1$, we proceed by differentiating the right member of the recursion \eqref{iterative_discrete_cont}. The chain rule yields a recursive equation for the derivative of $\phi^{(t)}$. For $q\in[q_n,q_{n+1})$, with $0\leq n\leq K$, we have:
\begin{multline}
\label{iterative_der_proto}
\frac{\partial}{\partial t}{\phi^{(t)}}(q,\bm{\omega})=\frac{\partial}{\partial t}\left(\frac{1}{x^{(t)}(q)} \log\,\mathbb{E}\left[\exp \left(\,x^{(t)}(q)\phi^{(t)}(\,q_{n_q+1}\,,\bm{\omega}\,)\,\,\right)\bigg|\mathcal{F}_{q}\right]\right)\\
=\frac{\delta x(q)}{x^{(t)}(q)}\frac{\mathbb{E}\left[\exp \left(\,\phi^{(t)}(\,q_{n_q+1}\,,\bm{\omega}\,)\,\,\right)\phi^{(t)}(\,q_{n_q+1}\,,\bm{\omega}\,)\big|\mathcal{F}_{q}\right]}{\mathbb{E}\left[\exp \left(\,x^{(t)}(q)\phi^{(t)}(\,q_{n_q+1}\,,\bm{\omega}\,)\,\,\right) \big|\mathcal{F}_{q}\right]}\\-\frac{\delta x(q)}{(x^{(t)}(q))^2} \log\,\mathbb{E}\left[\exp \left(\,x^{(t)}(q)\phi^{(t)}(\,q_{n_q+1}\,,\bm{\omega}\,)\,\,\right)\big|\mathcal{F}_{q}\right]\\
+\frac{\mathbb{E}\left[\exp \left(\,x^{(t)}(q)\phi^{(t)}(\,q_{n_q+1}\,,\bm{\omega}\,)\,\,\right)\tfrac{d}{dt}\phi^{(t)}(\,q_{n_q+1}\,,\bm{\omega}\,)\big|\mathcal{F}_{q}\right]}{\mathbb{E}\left[\exp \left(\,x^{(t)}(q)\phi^{(t)}(\,q_{n_q+1}\,,\bm{\omega}\,)\,\,\right) \big|\mathcal{F}_{q}\right]}.
\end{multline}
Now, the equation \eqref{iterative_discrete_cont} implies
\begin{equation}
\mathbb{E}_{\mathbb{W}^{\otimes(2/c)}}\left[\exp \left(\,x^{(t)}(q)\phi^{(t)}(\,q_{n_q+1}\,,\bm{\omega}\,)\,\,\right) \big|\mathcal{F}_{q}\right]=\exp(x^{(t)}_n\phi^{(t)}\left(\,q_{n_q+1}\,,\bm{\omega}\,)\right )
\end{equation}
and
\begin{equation}
\frac{\delta x(q)}{(x^{(t)}(q))^2} \log\,\mathbb{E}_{\mathbb{W}^{\otimes(2/c)}}\left[\exp \left(\,x^{(t)}(q)\phi^{(t)}(\,q_{n_q+1}\,,\bm{\omega}\,)\,\,\right)\big|\mathcal{F}_{q}\right]=\frac{\delta x(q)}{x^{(t)}(q)}\phi^{(t)}\left(\,q\,,\bm{\omega}\,\right)
\end{equation}
and, since $\phi(q,\Cdot)$ is $\mathcal{F}_q$ measurable, we have the following identity:
\begin{equation}
\phi(q,\bm{\omega})
=\widetilde{\mathbb{E}}_{\bm{r}^t}\left[\phi^t(q,\bm{\omega})\big| \mathcal{F}_q\right].
\end{equation}
By replacing the above three relations in the equation \eqref{iterative_der_proto}, we finally get
\begin{equation}
\label{iterative_der_0}
\frac{\partial \phi^{(t)}(q,\bm{\omega})}{\partial t}=
\frac{\delta x_{n_q}}{x_{n_q}^t}\widetilde{\mathbb{E}}_{\bm{r}^t}\left[\phi^t(q_{n+1},\bm{\omega})-\phi^t(q,\bm{\omega})\big| \mathcal{F}_q\right]+\widetilde{\mathbb{E}}_{\bm{r}^t}\left[\frac{\partial \phi^{(t)}(q_{n+1},\bm{\omega})}{\partial t}\big| \mathcal{F}_q\right].
\end{equation}
The solution of the above recursive equation, together with the starting condition \eqref{recursion_derivative_start}, is
\begin{equation}
\label{der_1}
\frac{\partial \phi^{(t)}(q,\bm{\omega})}{\partial t}=\sum^K_{n=n_q}\frac{\delta x_{n+1}}{x_{n+1}^t}\widetilde{\mathbb{E}}_{\bm{r}^t}\left[\phi^t(q_{n+1},\bm{\omega})-\phi^t(q_n\wedge q,\bm{\omega})\big| \mathcal{F}_{q}\right].
\end{equation}
Substituting the process $\phi$ with the stationary equation for discrete Parisi order parameter \eqref{equation_Discrete}, one finds:
\begin{equation}
\begin{multlined}
\frac{\delta x_{n}}{x_{n}^t}\widetilde{\mathbb{E}}_{\bm{r}^t}\left[\phi^t(q_{n+1},\bm{\omega})-\phi^t(q_n,\bm{\omega})\big| \mathcal{F}_{q_{n_q}}\right]\\
=\frac{\delta x_{n}}{x_{n}^t}\widetilde{\mathbb{E}}_{\bm{r}^t}\left[\int^{q_{n+1}}_{q_n} \text{d}\bm{\omega}(q') \cdot\bm{r}(q',\bm{\omega})-\frac{x_n^{(t)}}{2}\int^{q_{n+1}}_{q_n} \text{d}q' \,\|\bm{r}(q',\bm{\omega})\|^2\big| \mathcal{F}_{q_{n_q}}\right]\\
=\frac{\delta x_{n}}{2}\widetilde{\mathbb{E}}_{\bm{r}^t}\left[\int^{q_{n+1}}_{q_n} \text{d}q' \,\|\bm{r}(q',\bm{\omega})\|^2\big| \mathcal{F}_{q_{n_q}}\right],
\end{multlined}
\end{equation}
that proves \eqref{first_derivative}.
\end{proof}
From the above results and the proposition \ref{bound}, we deuce that the process $\partial_t \phi^t$ is almost surely bounded.

Now, we state the most remarkable property of the map $\chi^{\circ}\ni x\mapsto (\phi(x),\bm{r}(x)\,)\in S^p_{[0,1]}(\Omega)$
\begin{theorem}
\label{uniform_convergence_theo}
Let $x^{(1)}$ and $x^{(2)}$ be two elements of $\chi^{\circ}$. Then, for any $p>1$ there exist a constant $K_p$ depending only $p$ such that:
\begin{equation}
\label{uniform_converge_phi}
\mathbb{E}_{\mathbb{W}}\left[\left(\underset{q\in [0,1]}{\sup}\left|\phi(x^{(2)};q,\bm{\omega})-\phi(x^{(1)};q,\bm{\omega})\right|\right)^p\right]^{\frac{1}{p}}\leq K_p \|x^{(2)}-x^{(1)}\|_{\infty}
\end{equation}
and
\begin{equation}
\mathbb{E}_{\mathbb{W}}\left[\left(\int^1_0\left\|\bm{r}(x^{(2)};q,\bm{\omega})-\bm{r}(x^{(1)};q,\bm{\omega})\right\|\right)^p\right]^{\frac{1}{p}}\leq K_p \|x^{(2)}-x^{(1)}\|_{\infty}.
\end{equation}
\end{theorem}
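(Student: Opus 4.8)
The plan is to prove both estimates by interpolating linearly between the two order parameters and combining the derivative formula of Proposition~\ref{derivative_theo} with the POP-uniform moment bound of Proposition~\ref{bound}. Set $\delta x = x^{(2)}-x^{(1)}$, let $x^{(t)}=(1-t)\,x^{(1)}+t\,x^{(2)}\in\chi^{\circ}$, and write $(\phi^{t},\bm{r}^{t})$ for the associated solution of \eqref{selfEq1Aux}. Note that $\|\delta x\|_{\infty}=\|x^{(2)}-x^{(1)}\|_{\infty}$ and $\|x^{(t)}\|_{\infty}\le 1$ throughout.

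\textbf{First inequality.} By the fundamental theorem of calculus, $\phi(x^{(2)};q,\bm{\omega})-\phi(x^{(1)};q,\bm{\omega})=\int_0^1 \partial_t\phi^{t}(q,\bm{\omega})\,dt$, and Proposition~\ref{derivative_theo} (formula \eqref{first_derivative}) gives the pointwise bound $|\partial_t\phi^{t}(q,\bm{\omega})|\le \tfrac{1}{2}\|\delta x\|_{\infty}\,M^{t}(q,\bm{\omega})$, where I abbreviate $M^{t}(q)=\widetilde{\mathbb{E}}_{\bm{r}^{t}}[\int_q^1\|\bm{r}^{t}\|^2\,|\,\mathcal{F}_q]$. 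Since $M^{t}(q)\le \widetilde{\mathbb{E}}_{\bm{r}^{t}}[\int_0^1\|\bm{r}^{t}\|^2\,|\,\mathcal{F}_q]$ and the right-hand side is a $\widetilde{\mathbb{W}}_{\bm{r}^{t}}$-martingale in $q$, Doob's maximal inequality (here $p>1$ is essential) bounds $\widetilde{\mathbb{E}}_{\bm{r}^{t}}[(\sup_q M^t(q))^p]$ by $(p/(p-1))^p\,\widetilde{\mathbb{E}}_{\bm{r}^{t}}[(\int_0^1\|\bm{r}^{t}\|^2)^p]$, which is dominated uniformly in the order parameter by $(p/(p-1))^p K_p$ thanks to the uniform moment bound \eqref{bound_rr} of Proposition~\ref{bound}. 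The density bounds $e^{-2c}\le\mathcal{E}(\bm{r}^{t})\le e^{2c}$ of Proposition~\ref{trivial_bb} then convert this into the same bound under $\mathbb{W}$ up to the factor $e^{2c}$, and Minkowski's integral inequality (pulling the $L^p(\mathbb{W})$ norm inside $\int_0^1 dt$) delivers the asserted Lipschitz estimate, since $\sup_q|\phi(x^{(2)};q)-\phi(x^{(1)};q)|\le \tfrac{1}{2}\|\delta x\|_{\infty}\int_0^1\sup_q M^{t}(q)\,dt$.

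\textbf{Second inequality.} Here I would work directly with the BSDE. Subtracting the two equations \eqref{selfEq1Aux} and writing $\Delta\phi=\phi(x^{(2)})-\phi(x^{(1)})$, $\Delta\bm{r}=\bm{r}(x^{(2)})-\bm{r}(x^{(1)})$, the difference of the drivers equals $-\tfrac{1}{2}\,x^{(2)}(\bm{r}^{(2)}+\bm{r}^{(1)})\cdot\Delta\bm{r}-\tfrac{1}{2}\,\delta x\,\|\bm{r}^{(1)}\|^2$, i.e. a part \emph{linear} in $\Delta\bm{r}$ plus a source term controlled by $\|\delta x\|_{\infty}$. The linear part is absorbed by a Cameron--Martin/Girsanov change of measure $\mathbb{Q}$ removing the drift $-\tfrac12 x^{(2)}(\bm{r}^{(2)}+\bm{r}^{(1)})$, under which the $\Delta\bm{r}$-stochastic integral becomes a martingale. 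Applying It\^o's formula to $|\Delta\phi|^2$ under $\mathbb{Q}$ and taking expectations yields $\mathbb{E}^{\mathbb{Q}}[\int_q^1\|\Delta\bm{r}\|^2]\le 2c\,\|\delta x\|_{\infty}\,\mathbb{E}^{\mathbb{Q}}[\int_q^1\|\bm{r}^{(1)}\|^2]$, using $|\Delta\phi|\le 2c$ from \eqref{inequality_prop}; the right-hand side is bounded by a universal constant times $\|\delta x\|_{\infty}$ by the $\mathbb{Q}$-analogue of \eqref{bound_rr}, and Proposition~\ref{trivial_bb} transfers the estimate back to $\mathbb{W}$. The passage to the $p$-th moment (and the $L^1$-in-$q$ norm as written) is then obtained through the Burkholder--Davis--Gundy inequality, exactly as in the proof of Proposition~\ref{bound}.

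\textbf{Main obstacle.} The hard part will be the control of $\Delta\bm{r}$, because the driver is \emph{quadratic} in the control, so a naive Young-type splitting of the cross term $x^{(2)}(\bm{r}^{(2)}+\bm{r}^{(1)})\cdot\Delta\bm{r}$ destroys the $\|\delta x\|_{\infty}$ smallness. The delicate point is to keep this term linear, handling it by the Girsanov change of measure rather than by an absorbing inequality, and then to verify that the density $\mathcal{E}\big(-\tfrac12 x^{(2)}(\bm{r}^{(2)}+\bm{r}^{(1)})\big)$, although built from the a.s.\ bounded Dol\'eans--Dade exponentials of Proposition~\ref{trivial_bb}, enjoys the integrability needed for the moment bound of Proposition~\ref{bound} to apply under $\mathbb{Q}$ --- most cleanly after localizing with the stopping times $\tau_n$ introduced in that proof and then letting $n\to\infty$.
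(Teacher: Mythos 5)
Your proof of the first inequality is essentially the paper's own argument: linear interpolation $x^{(t)}$, the derivative formula \eqref{first_derivative} of Proposition~\ref{derivative_theo}, Doob's maximal inequality applied to the martingale $\widetilde{\mathbb{E}}_{\bm{r}^t}\big[\int_0^1\delta x\,\|\bm{r}^t\|^2\,\big|\,\mathcal{F}_q\big]$, the POP-uniform moment bound of Proposition~\ref{bound}, and the density bounds of Proposition~\ref{trivial_bb} to move between $\widetilde{\mathbb{W}}_{\bm{r}^t}$ and $\mathbb{W}$. Nothing to add there.

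For the second inequality your route genuinely diverges from the paper's. Both arguments start from the difference of the two BSDEs and apply It\^o's formula to $(\delta\phi)^2$, but the paper stays under the original Wiener measure: it keeps the cross term $\int x^{(1)}\,\delta\phi\,(\bm{r}_1+\bm{r}_2)\cdot\delta\bm{r}$ and the martingale term $2\int\delta\phi\,\delta\bm{r}\cdot d\bm{\omega}$ explicitly, estimates each via H\"older/BDG so that the unknown $X=\mathbb{E}_{\mathbb{W}}\big[(\int\|\delta\bm{r}\|^2)^p\big]^{1/2}$ enters only linearly, and closes the argument by solving the resulting quadratic inequality $X^2\le\alpha_p\|\delta x\|_\infty^pX+\beta_p\|\delta x\|_\infty^{2p}$. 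You instead linearize the driver difference and absorb the cross term into a Girsanov change of measure $\mathbb{Q}$, after which $\Delta\bm{r}$ appears only through a $\mathbb{Q}$-martingale and the source term carries the $\|\delta x\|_\infty$ smallness directly. Your version is structurally cleaner (no self-improving inequality is needed), but it buys that cleanliness at a real cost which you correctly identify: the linearizing density $\mathcal{E}\big(\tfrac12 x^{(2)}(\bm{r}^{(1)}+\bm{r}^{(2)})\big)$ is \emph{not} covered by Proposition~\ref{trivial_bb} — the mixed term $\int\! x^{(2)}\bm{r}^{(1)}\cdot d\bm{\omega}$ does not reduce to a bounded Dol\'eans--Dade exponent, so one must separately verify that this exponential is a true martingale with enough integrability (e.g.\ via Novikov using the all-order moment bounds of Proposition~\ref{bound}, or by the localization with $\tau_n$ you propose) before the $\mathbb{Q}$-analogue of \eqref{bound_rr} and the transfer back to $\mathbb{W}$ are legitimate. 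The paper's quadratic-inequality route avoids introducing any new measure precisely so that only the already-controlled densities $\mathcal{E}(x^{(i)}\bm{r}^{(i)})$ ever appear. Provided you carry out that integrability check, your argument closes; as written it is a plan with one nontrivial verification outstanding rather than a complete proof.
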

This implies, in particular, that if two POP $x^{(1)}$ and $x^{(2)}$ are "close to each other”, then the "level of approximation” of the solutions $(\phi(x^{(1)},\bm{r}(x^{(1)})$ provided by the solution $(\phi(x^{(2)},\bm{r}(x^{(2)})$ depends only by the $\|\cdot\|_{\infty}-$distance between the two POPs. This result is very important. In fact, any POP in $\chi$ is arbitrary close to a POP in $\chi^{\circ}$.
\begin{proof}
 The inequality \eqref{uniform_convergence_theo} is an immediate consequence of the equation \eqref{first_derivative} and the proposition \ref{bound}.
Put
\begin{equation}
\delta x=x^{(2)}-x^{(1)},\quad \delta\phi=\phi(x^{(2)})-\phi(x^{(1)}),\quad \delta\bm{r}=\bm{r}(x^{(2)})-\bm{r}(x^{(1)}).
\end{equation} 
Since $\phi(x^{(1)})$ and $\phi(x^{(2)})$ are bounded and the processes $\bm{r}(x^{(1)})$ and $\bm{r}(x^{(2)})$ are in $H^p_{[0,1]}(\Omega)$, then the process $\delta \phi$ is bounded and $\delta\bm{r}$ is in $H^p_{[0,1]}(\Omega)$.

We use the same notation of theorem \ref{derivative_theo}. Let
\begin{equation}
x^{(t)}(q)=t x^{(1)}(q)+(1-t)x^{(2)}(q)\in \chi^{\circ}, \quad t\in [0,1].
\end{equation}
By theorem \ref{derivative_theo}, the process $\phi^{(t)}$ is derivable over $t$, that implies 
\begin{equation}
\delta \phi(q,\bm{\omega})=\int^1_0 dt \frac{\partial \phi^t(q,\bm{\omega})}{\partial t}= \frac{1}{2}\int^1_0 dt\,\widetilde{\mathbb{E}}_{\bm{r}^t}\left[\int^{1}_{q}\text{d}p\,\delta x(p)\,\|\bm{r}^t(p,\bm{\omega})\|^2\Bigg| \mathcal{F}_q\right]
\end{equation}
from which it follows that
\begin{equation}
\label{obvious_inequality}
|\delta \phi(q,\bm{\omega})|\leq\frac{1}{2}\int^1_0 dt\,\widetilde{\mathbb{E}}_{\bm{r}^t}\left[\int^{1}_{0}\text{d}p\,\delta x(q')\,\|\bm{r}^t(q',\bm{\omega})\|^2\Bigg| \mathcal{F}_q\right]
\end{equation}
Since the process $\delta\bm{r}$ is in $H^p_{[0,1]}(\Omega)$, then the process integrated over $t$ in the right-hand side of the above inequality is a non-negative martingale bounded in $L_p$ with respect the probability measure $\widetilde{\mathbb{W}}_{\bm{r}^t}$, for all $t\in [0,1]$. As a consequence, Doob inequality and the proposition \ref{bound} yield
\begin{equation}
\begin{multlined}
\widetilde{\mathbb{E}}_{\bm{r}^t}\left[\left(\underset{q\in[0,1]}{\sup}\widetilde{\mathbb{E}}_{\bm{r}^t}\left[\int^{1}_{0}\text{d}p\,\delta x(q')\,\|\bm{r}^t(q',\bm{\omega})\|^2\Bigg| \mathcal{F}_q\right]\right)^p\right]\\\leq\left(\frac{p}{p-1}\right)^p\widetilde{\mathbb{E}}_{\bm{r}^t}\left[\left(\,\int^{1}_{0}\text{d}q'\,\delta x(q')\,\|\bm{r}^t(q',\bm{\omega})\|^2\right)^p\right]\leq\left(\frac{p}{p-1}\right)^p(K_p\|\delta x\|_{\infty})^p,\\\quad \forall t\in[0,1]\,\,\,\text{and}\,\,\, p>1
\end{multlined}
\end{equation}
Combining the above result with the inequalities \eqref{obvious_inequality} and the proposition \ref{bound}, and since the process $\delta \phi$ is bounded, we finally get:
\begin{equation}
\begin{multlined}
\mathbb{E}_{\mathbb{W}}\left[\underset{q\in[0,1]}{\sup}|\delta \phi(q,\bm{\omega})|^p\right]\leq \,\mathbb{E}_{\mathbb{W}}\left[\underset{q\in[0,1]}{\sup}\left(\int^1_0 dt\,\frac{\partial \phi^t(q,\bm{\omega})}{\partial t}\right)^p\right]\\
\leq e^{2 c}\left(\frac{p}{2(p-1)}\right)^p\underset{t\in [0,1]}{\sup}\,\widetilde{\mathbb{E}}_{\bm{r}^t}\left[\left(\,\int^{1}_{0}\text{d}q'\,\delta x(q')\,\|\bm{r}^t(q',\bm{\omega})\|^2\right)^p\right]\\\leq \left(\frac{p\,K_p}{2(p-1)}\right)^pe^{2 c}\|\delta x\|^p_{\infty},\,\forall p>1
\end{multlined}
\end{equation}
that proves the inequality \eqref{uniform_converge_phi}, with $a_p=e^{2 c/p}p\,K_p/(2p-2)$.

Now we prove that the process $\delta\bm{r}$ has the same bound. At $q=1$, the process $\delta \phi$ verifies:
\begin{equation}
\delta\phi(1,\bm{\omega})=0,
\end{equation}
and from the two auxiliary stationary equations associated to the POPs $x^{(1)}$ and to $x^{(2)}$ and the above relation, one gets
\begin{equation}
\begin{multlined}
\label{equation_delta}
0=\delta \phi(1,\bm{\omega})
=\delta\phi(0,\bm{\omega})+\int^1_0 \delta \bm{r}(q,\bm{\omega}) \cdot \text{d}\bm{\omega}(q)-\frac{1}{2}\int^1_0 \text{d}q\,\delta x(q)\,\|\bm{r}_{2}(q,\bm{\omega})\|^2\\-\frac{1}{2}\int^1_0 \text{d}q\,x^{(1)}(q)\left(\bm{r}_{1}(q,\bm{\omega})+\bm{r}_{2}(q,\bm{\omega})\right)\cdot\delta \bm{r}(q,\bm{\omega}).
\end{multlined}
\end{equation}
By applying the It\^o formula to $(\delta \phi(1,\bm{\omega}))^2$, it follows that
\begin{equation}
\begin{multlined}
(\delta\phi(0,\bm{\omega}))^2+2\int^1_0\delta\phi(q,\bm{\omega})\bm{r}(q,\bm{\omega}) \cdot \text{d}\bm{\omega}(q)
-\int^1_0 \text{d}q\,\delta x(q)\,\delta\phi(q,\bm{\omega})\|\bm{r}_{2}(q,\bm{\omega})\|^2\\-\int^1_0 \text{d}q\,x^{(1)}(q)\delta\phi(q,\bm{\omega})\left(\,\bm{r}_{1}(q,\bm{\omega})+\bm{r}_{2}(q,\bm{\omega})\,\right)\cdot\delta \bm{r}(q,\bm{\omega})+\int^1_0 \text{d}q\|\delta\bm{r}_{2}(q,\bm{\omega})\|^2=0.
\end{multlined}
\end{equation}
All the quantities in the above expression are in $L_{[0,1]}^p(\Omega)$. We put $\int^1_0 \text{d}q\|\delta\bm{r}_{2}(q,\bm{\omega})\|^2$ on the left-hand side of the equation and the other terms in the right-hand side and take the absolute value raised to the power $p$ of both side. Using the inequality $|A+B+C+D|^p\leq4^{p-1}(|A|^p+|B|^p+|C|^p+|D|^{p})$ and taking the expectation value, we gets
\begin{equation}
\label{inequality_r}
\mathbb{E}_{\mathbb{W}}\left[\left(\int^1_0 \text{d}q\|\delta\bm{r}_{2}(q,\bm{\omega})\|^2\right)^p\right]\leq\text{I}+\text{II}+\text{III}+\text{IV}
\end{equation}
where
\begin{equation}
\text{I}=4^{p-1}\mathbb{E}_{\mathbb{W}}\left[\left|\delta\phi(0,\bm{\omega})\right|^{2p}\right]\leq 4^{p-1}a^{2p}_{2p}\|\delta x\|^{2p}_{\infty},
\end{equation}
\begin{equation}
\begin{multlined}
\text{II}=2*8^{p-1}\mathbb{E}_{\mathbb{W}}\left[\left|\int^1_0\delta\phi(q,\bm{\omega})\delta\bm{r}(q,\bm{\omega}) \cdot \text{d}\bm{\omega}(q)\right|^p\right]\\\leq 2^{\frac{5}{2}p-2}p^{p-1}(p-1)\mathbb{E}_{\mathbb{W}}\left[\left(\int^1_0\text{d}q \left(\delta\phi(q,\bm{\omega})\right)^{2}\|\delta\bm{r}(q,\bm{\omega})\|^2\right)^{\frac{p}{2}}\right]\\
\leq 2^{\frac{5}{2}p-2}p^{p-1}(p-1)\mathbb{E}_{\mathbb{W}}\left[\left(\underset{q\in[0,1]}{\sup}|\delta \phi(q,\bm{\omega})|^{p}\right)\left(\int^1_0\text{d}q \|\delta\bm{r}(q,\bm{\omega})\|^2\right)^{\frac{p}{2}}\right]\\
\leq 2^{\frac{5}{2}p-2}p^{p-1}(p-1)a_{2p}^{p}\,\|\delta x\|_{\infty}^{p} \mathbb{E}_{\mathbb{W}}\left[\left(\int^1_0\text{d}q \|\delta\bm{r}(q,\bm{\omega})\|^2\right)^{p}\right]^{1/2},
\end{multlined}
\end{equation}
\begin{equation}
\begin{multlined}
\text{III}=4^{p-1}\mathbb{E}_{\mathbb{W}}\left[\left|\int^1_0 \text{d}q\,\delta x(q)\,\delta\phi(q,\bm{\omega})\|\bm{r}_{2}(q,\bm{\omega})\|^2\right|^{p}\right]\\
\leq 4^{p-1}\mathbb{E}_{\mathbb{W}}\left[\left(\underset{q\in[0,1]}{\sup}|\delta \phi(q,\bm{\omega})|^{p}\right)\left(\int^1_0 \text{d}q\,|\delta x(q)|\|\bm{r}_{2}(q,\bm{\omega})\|^2\right)^{p}\right]\\
\leq 4^{p-1}\mathbb{E}_{\mathbb{W}}\left[\underset{q\in[0,1]}{\sup}|\delta \phi(q,\bm{\omega})|^{2p}\right]^{\frac{1}{2}}\mathbb{E}_{\mathbb{W}}\left[\left(\int^1_0 \text{d}q\,|\delta x(q)|\|\bm{r}_{2}(q,\bm{\omega})\|^2\right)^{2p}\right]^{\frac{1}{2}}\\
\leq 4^{p-1}a^p_{2p}\,e^c K^p_{2p}\|\delta x\|_{\infty}^{2p}
\end{multlined}
\end{equation}
\begin{equation}
\begin{multlined}
\text{IV}=4^{p-1}\mathbb{E}_{\mathbb{W}}\left[\left|\int^1_0 \text{d}q\,x^{(1)}(q)\delta\phi(q,\bm{\omega})\left(\,\bm{r}_{1}(q,\bm{\omega})+\bm{r}_{2}(q,\bm{\omega})\,\right)\cdot\delta \bm{r}(q,\bm{\omega})\right|^{p}\right]\\
\leq 4^{p-1}\mathbb{E}_{\mathbb{W}}\left[\underset{q\in[0,1]}{\sup}|\delta \phi(q,\bm{\omega})|^{p}\left|\int^1_0 \text{d}q\,x^{(1)}(q)\left(\,\bm{r}_{1}(q,\bm{\omega})+\bm{r}_{2}(q,\bm{\omega})\,\right)\cdot\delta \bm{r}(q,\bm{\omega})\right|^{p}\right]\\
\leq 4^{p-1}e^{\frac{ c}{2}}K^{p/2}_{2p}a_{4p}^{p}\|\delta x\|^{p}_{\infty}\mathbb{E}_{\mathbb{W}}\left[\left(\int^1_0 \text{d}q\,\|\delta \bm{r}(q,\bm{\omega})\|^2\right)^{p}\right]^{1/2}
\end{multlined}
\end{equation}
If we set
\begin{equation}
X=\mathbb{E}_{\mathbb{W}}\left[\left(\int^1_0\text{d}q \|\delta\bm{r}(q,\bm{\omega})\|^2\right)^{p}\right]^{1/2}
\end{equation}
and combine the above inequalities in \label{inequality_r}, the inequality \label{inequality_r} has the form:
\begin{equation}
\label{inequality_2}
X^2\leq \alpha_p \|\delta x\|_{\infty}^p X+\beta_p\|\delta x\|_{\infty}^{2p}
\end{equation}
where $\alpha_p$ and $\beta_p$ are two positive constants that depends only on $p$. That implies that
\begin{equation}
X\leq \,\frac{1}{2}\left(\alpha_p+\sqrt{\alpha^2_p+4\beta_p\,}\right)\,\|\delta x\|_{\infty}^{p}.
\end{equation}
and the proof is ended.
\end{proof}
We now come to the main result of this paragraph.
\begin{theorem}
\label{convergence_result}
Given a POP $x\in\chi$, consider a sequence of piecewise constant POPs $(x^{(k)})\subset \chi^{\circ}$, where
\begin{equation}
\|x^{(k)}-x\|_{\infty}\leq 2^{-k}
\end{equation}
The sequence of the solutions $\left(\,(\phi_{x^{(k)}},\bm{r}_{x^{(k)}})\,\right)$ converges almost surely and in $H^p_{[0,1]}(\Omega) \times H^p_{[0,1]}(\Omega)$ norm to a pair $(\phi,\bm{r})$ that is a solution of the auxiliary stationary equation \eqref{selfEq1Aux} corresponding to the POP $x$.
\end{theorem}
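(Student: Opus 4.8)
The plan is to promote the Lipschitz estimates of Theorem~\ref{uniform_convergence_theo} into a Cauchy property for the approximating solutions, and then to pass to the limit inside the backward equation \eqref{selfEq1Aux}. First I would note that for any two indices $k,l$ the triangle inequality gives $\|x^{(k)}-x^{(l)}\|_{\infty}\leq 2^{-k}+2^{-l}$, so by Theorem~\ref{uniform_convergence_theo}, for every $p>1$,
\begin{equation}
\mathbb{E}_{\mathbb{W}}\Big[\sup_{q\in[0,1]}\big|\phi_{x^{(k)}}(q,\bm{\omega})-\phi_{x^{(l)}}(q,\bm{\omega})\big|^{p}\Big]^{\frac1p}\leq K_p\,(2^{-k}+2^{-l})\,,
\end{equation}
together with the analogous bound for $\bm{r}_{x^{(k)}}-\bm{r}_{x^{(l)}}$ in the $H^p_{[0,1]}(\Omega)$ norm. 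Hence $(\phi_{x^{(k)}})_k$ is Cauchy in $S^p_{[0,1]}(\Omega)$ and $(\bm{r}_{x^{(k)}})_k$ is Cauchy in $H^p_{[0,1]}(\Omega)$, and by completeness of these Banach spaces they converge in norm to processes $\phi$ and $\bm{r}$. The geometric decay of the right-hand side, combined with the Markov inequality and the Borel--Cantelli lemma, upgrades this to almost sure convergence (of $\phi_{x^{(k)}}$ uniformly in $q$, and of $\bm{r}_{x^{(k)}}$ in the $L^2([0,1],dq)$ norm along the sequence).

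Next I would pass to the limit in the stationary equation. For each $k$ the pair $(\phi_{x^{(k)}},\bm{r}_{x^{(k)}})$ solves \eqref{selfEq1Aux} with POP $x^{(k)}$, namely
\begin{equation}
\phi_{x^{(k)}}(q,\bm{\omega})=\Psi(1,\bm{\omega})-\int_q^1 \bm{r}_{x^{(k)}}(q',\bm{\omega})\cdot d\bm{\omega}(q')+\frac12\int_q^1 x^{(k)}(q')\,\|\bm{r}_{x^{(k)}}(q',\bm{\omega})\|^2\,dq'\,.
\end{equation}
The left-hand side converges to $\phi$, and the It\^o integral converges in $L^p(\mathbb{W})$ to $\int_q^1\bm{r}\cdot d\bm{\omega}$ by the It\^o isometry (under $\mathbb{W}$) and the $H^p$ convergence of $\bm{r}_{x^{(k)}}$. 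The only genuinely delicate term is the quadratic one, which I would split as
\begin{equation}
\int_q^1 x^{(k)}\|\bm{r}_{x^{(k)}}\|^2-\int_q^1 x\|\bm{r}\|^2=\int_q^1 (x^{(k)}-x)\,\|\bm{r}_{x^{(k)}}\|^2+\int_q^1 x\,\big(\|\bm{r}_{x^{(k)}}\|^2-\|\bm{r}\|^2\big)\,.
\end{equation}
The first summand is bounded by $\|x^{(k)}-x\|_{\infty}\int_0^1\|\bm{r}_{x^{(k)}}\|^2$, which tends to $0$ because $\|x^{(k)}-x\|_{\infty}\to0$ while Proposition~\ref{bound} bounds $\int_0^1\|\bm{r}_{x^{(k)}}\|^2$ uniformly in $k$; the second summand tends to $0$ because $0\leq x\leq1$ and $\bm{r}_{x^{(k)}}\to\bm{r}$ in $H^p$, so that $\int_0^1\big|\,\|\bm{r}_{x^{(k)}}\|^2-\|\bm{r}\|^2\,\big|\to0$ in probability. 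Passing to the limit therefore shows that $(\phi,\bm{r})$ satisfies \eqref{selfEq1Aux} with the prescribed $x$.

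Finally I would verify that the limit is admissible, i.e. that $\bm{r}\in D_{[0,1]}(\Omega)$, which amounts to showing that the Dol\'eans--Dade exponential $\mathcal{E}(x\bm{r})$ is a true martingale. Here the key is the \emph{POP-uniform} a priori bound of Proposition~\ref{trivial_bb}: for each $k$ one has $e^{-2c}\leq\mathcal{E}(x^{(k)}\bm{r}_{x^{(k)}};q,\bm{\omega})\leq e^{2c}$ almost surely, with $c$ independent of $k$. Using the almost sure convergence of $\zeta(\bm{r}_{x^{(k)}},x^{(k)};\cdot,\bm{\omega})$ established above (which follows from the convergence of the stochastic and Lebesgue integrals along a subsequence) the two-sided bound survives in the limit, giving $|\zeta(\bm{r},x;1,\bm{\omega})|\leq 2c$ a.s.; hence $\bm{r}\in\widehat{D}_{[0,1]}(\Omega)$ and $\mathcal{E}(x\bm{r})$ is a genuine martingale, so $\bm{r}$ defines an element of $D_{[0,1]}(\Omega)$. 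I expect the main obstacle to be exactly this interplay at the limit: controlling the nonlinear quadratic term and simultaneously preserving admissibility of the control process. Both difficulties are resolved only because the bounds of Propositions~\ref{bound} and \ref{trivial_bb} are uniform over $\chi^{\circ}$, so that the (possibly degenerate) behaviour of $x$ near $q=0$ never enters the estimates.
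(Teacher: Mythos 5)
Your proposal is correct and follows essentially the same route as the paper: the Lipschitz estimates of Theorem \ref{uniform_convergence_theo} give the Cauchy property, Markov plus Borel--Cantelli upgrade to almost sure convergence, and the limit is passed term by term in the BSDE with the quadratic term split into a piece controlled by $\|x^{(k)}-x\|_{\infty}$ and a piece controlled by the $H^p$ convergence of $\bm{r}_{x^{(k)}}$. Your final step, checking that the limit control $\bm{r}$ remains in $D_{[0,1]}(\Omega)$ by passing the uniform two-sided bound of Proposition \ref{trivial_bb} to the limit, is a point the paper's proof leaves implicit, and it is a worthwhile addition.
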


\begin{proof}
Let us consider the sequence of pairs of non-negative random variables $\left(\,(U_{k},V_{k})\,\right)$, where
\begin{equation}
U_k=\underset{q\in[0,1]}{\sup}\left|\phi_{x^{(k+1)}}(q,\bm{\omega})-\phi_{x^{(k)}}(q,\bm{\omega})\right|
\end{equation}
and
\begin{equation}
V_k=\int^1_0 \text{d}q\|\bm{r}_{x^{(k+1)}}(q,\bm{\omega})-\bm{r}_{x^{(k)}}(q,\bm{\omega})\|^2.
\end{equation}
Theorem \eqref{uniform_convergence_theo} yields:
\begin{equation}
\mathbb{E}_{\mathbb{W}}\big[ U^p_k\big]\leq a_p2^{-p k},\quad
\mathbb{E}_{\mathbb{W}}\big[V^p_k\big]\leq b_p2^{-p k},
\end{equation}
 consequently
\begin{equation}
\label{inequality_start}
\sum^{\infty}_{k=1}\mathbb{E}_{\mathbb{W}}\big[ U^p_k\big]\leq \infty,\quad \sum^{\infty}_{k=1}\mathbb{E}_{\mathbb{W}}\big[V^p_k\big]\leq \infty.
\end{equation}
from which it is straightforward to obtain that the sequence $\left(\,(\phi_{x^{(k)}},\bm{r}_{x^{(k)}})\,\right)$ converges in $S^p_{[0,1]}(\Omega) \times H^p_{[0,1]}(\Omega)$ to a pair $(\phi,\bm{r})$. Moreover, by Markov inequality and \eqref{inequality_start}, one gets
\begin{equation}
\mathbb{W}\left[\left\{\bm{\omega}\in\Omega;\,U_k>\epsilon \right\}\right]\leq \frac{\mathbb{E}\big[U^p_k\big]}{\epsilon^p}\leq a_p\left(\frac{1}{2^k\epsilon}\right)^p,
\end{equation}
and in the same way:
\begin{equation}
\mathbb{W}\left[\left\{\bm{\omega}\in\Omega;\,V_k>\epsilon \right\}\right]\leq b_p\left(\frac{1}{2^k\epsilon}\right)^p,
\end{equation}
where $\mathbb{W}[A]$ denotes the probability that the event $A$ occurs, according to the probability measure $\mathbb{W}$.

By Borel-Cantelli lemma \cite{Billingsley}, the above two Markov inequalities together with the convergence results in \eqref{inequality_start} imply that the sequence $\left((U^p_k,V_k^p)\right)$ converges almost surely to $(0,0)$ with respect the probability measure $\mathbb{W}$. In particular that implies that the sequence $\left(\,(\phi_{x^{(k)}},\bm{r}_{x^{(k)}})\,\right)$ converges almost surely to the pair $(\phi,\bm{r})$. Moreover, the definition of $(U_k)$ implies that $(\phi_{x^{(k)}})$ converges to $\phi$ almost surely uniformly in the interval $[0,1]$, so the process $\phi$ is continuous.

It remains to show that the pair $(\phi,\bm{r})$ is a solution of the equation \eqref{selfEq1Aux} corresponding to the POP $x$. Since the process $\bm{r}$ is in $ H^p_{[0,1]}(\Omega)$, for any $p\geq1$, then the It\^o integral $\int^1_q \text{d}\bm{\omega}(q')\cdot \bm{r}(q',\bm{\omega})$ and the integral $\int^1_q \text{d}q' x(q') \|\bm{r}(q',\bm{\omega})\|^2$ exist and are in $S_{[0,1]}^{q}(\Omega)$, for any $q\geq1$. Let
\begin{equation}
\text{I}_k(q;\bm{\omega})=\int^1_q \text{d}\bm{\omega}(q')\cdot \bm{r}_{x^{(k)}}(q',\bm{\omega}),\quad \text{II}_k(q,\bm{\omega})=\int^1_q \text{d}q' x^{(k)}(q') \|\bm{r}_{x^{(k)}}(q',\bm{\omega})\|^2
\end{equation}
and
\begin{equation}
\text{I}(q;\bm{\omega})=\int^1_q \text{d}\bm{\omega}(q')\cdot \bm{r}(q',\bm{\omega}),\quad \text{II}(q,\bm{\omega})=\int^1_q \text{d}q' x(q') \|\bm{r}(q',\bm{\omega})\|^2.
\end{equation} 
Note that $\text{I}_k$, $\text{II}_k$, $\text{I}$ and $\text{II}$ are not adapted process, so we use the notation $(q;\bm{\omega})$ instead of $(q,\bm{\omega})$. We must prove the almost sure convergence of the sequences $(\text{I}_k)$ and $(\text{II}_k)$ to $\text{I}$ and $\text{II}$ respectively.

Let us define the following non-negative random variables
\begin{equation}
G_k=\underset{q\in[0,1]}{\sup}\left|\,\text{I}_k(q;\bm{\omega})-\text{I}(q;\bm{\omega})\,\right|=\underset{q\in[0,1]}{\sup}\left|\int^1_q \text{d}\bm{\omega}(q')\cdot \left(\bm{r}_{x^{(k)}}(q',\bm{\omega})-\bm{r}(q',\bm{\omega})\right)\right|
\end{equation}
and
\begin{equation}
\begin{multlined}
F_k=\underset{q\in[0,1]}{\sup}\left|\,\text{II}_k(q;\bm{\omega})-\text{II}(q;\bm{\omega})\,\right|\\=\underset{q\in[0,1]}{\sup}\left|\int^1_q \text{d}q'\left(x^{(k)}(q')\|\bm{r}_{x^{(k)}}(q',\bm{\omega})\|^2-x(q') \|\bm{r}(q',\bm{\omega})\|^2\right)\right|.
\end{multlined}
\end{equation}
By BDG inequality, there is a positive constant $C_p$, depending only on $p$, such as
\begin{equation}
\mathbb{E}_{\mathbb{W}}\left[G_k^p\right]\leq C_p \mathbb{E}_{\mathbb{W}}\left[V_k^{p/2}\right]\leq C_p b_p 2^{-kp}.
\end{equation}
 Moreover, the inequality \eqref{bound_rr} yields
\begin{equation}
\begin{multlined}
\mathbb{E}_{\mathbb{W}}\left[F_k^p\right]\\\leq \big\|x^{(k)}-x\big\|^{p}_{\infty}\mathbb{E}_{\mathbb{W}}\left[\left(\int^1_q \text{d}q' \|\bm{r}^{(k)}(q',\bm{\omega})\|^2\right)^p\right]\\+\mathbb{E}_{\mathbb{W}}\left[\left(\int^1_q \text{d}q' x(q') \|\bm{r}^{(k)}(q',\bm{\omega})-\bm{r}(q',\bm{\omega})\|^2\right)^p\right]\\
\leq 2^{-k} e^{2c}(2c)^2+\mathbb{E}_{\mathbb{W}}\left[V_k^{p/2}\right]\leq 2 ^{-k} c_p
\end{multlined}
\end{equation}
where $c_p$ is a positive constant depending only on $p$. As for the sequences $(U_k)$ and $(V_k)$, the above two inequalities imply that the sequences $(G_k)$ and $(F_k)$ converge in $L^p(\mathbb{W},\Omega)$ and almost surely to $0$, so the random variables $(\text{I}_k)$ and $(\text{II}_k)$ converge almost surely uniformly in $q\in [0,1]$ to $\text{I}$ and $\text{II}$ respectively, and the proof is ended.
\end{proof}
Finally, we end this section with the following obvious, but important result
\begin{theorem}
\label{extending_theo}
The propositions \ref{inequality_prop_theo}, \ref{bound}, \ref{derivative_theo} and \ref{uniform_convergence_theo} hold for all the allowable POP $x\in \chi$. In particular, we have
\begin{equation}
\label{derivative_formula_on_X}
\frac{\delta\phi(x;\,q',\bm{\omega}) }{\delta x(q)}=\theta(q-q')\widetilde{\mathbb{E}}_{x\bm{r}}\left[ \|\bm{r}(q,\bm{\omega})\|^2\big|\mathcal{F}_{q'}\right].
\end{equation}
\end{theorem}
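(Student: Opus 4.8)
The plan is to treat Theorem~\ref{convergence_result} as the bridge carrying everything from $\chi^{\circ}$ to $\chi$. Given an arbitrary $x\in\chi$, I would fix a sequence $(x^{(k)})\subset\chi^{\circ}$ with $\|x^{(k)}-x\|_{\infty}\leq 2^{-k}$ and invoke Theorem~\ref{convergence_result}, so that the solutions $(\phi_{x^{(k)}},\bm{r}_{x^{(k)}})$ converge almost surely and in $S^{p}_{[0,1]}(\Omega)\times H^{p}_{[0,1]}(\Omega)$ to the solution $(\phi,\bm{r})$ of the BSDE~\eqref{selfEq1Aux} driven by $x$. Each claimed property would then be obtained by passing to the limit, the decisive structural fact being that the constants in Propositions~\ref{inequality_prop_theo} and~\ref{bound} and in Theorem~\ref{uniform_convergence_theo} are universal, i.e.\ independent of the particular piecewise-constant POP.

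First I would dispose of the two boundedness statements. Since $\phi_{x^{(k)}}\to\phi$ almost surely and uniformly in $q$, and each $\phi_{x^{(k)}}$ satisfies $\sup_{\bm{\omega}}|\phi_{x^{(k)}}(q,\bm{\omega})|\leq c$, the bound survives in the limit, giving Proposition~\ref{inequality_prop_theo} for $x\in\chi$. For Proposition~\ref{bound} I would use the $H^{p}$ convergence $\bm{r}_{x^{(k)}}\to\bm{r}$ together with the convergence of the exponentials $\mathcal{E}(x^{(k)}\bm{r}_{x^{(k)}})\to\mathcal{E}(x\bm{r})$ to carry the estimate~\eqref{bound_rr} through the change of measure; Fatou's lemma under $\widetilde{\mathbb{W}}_{x\bm{r}}$ then reproduces $\widetilde{\mathbb{E}}_{x\bm{r}}[(\int_{q}^{1}\|\bm{r}\|^{2})^{p}]\leq K_{p}$ with the same $K_{p}$, and the companion inequality follows from~\eqref{bound_DDE}. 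The Lipschitz estimate of Theorem~\ref{uniform_convergence_theo} I would extend to a pair $x^{(1)},x^{(2)}\in\chi$ by approximating each by a $\chi^{\circ}$ sequence, applying the known $\chi^{\circ}$ estimate, and closing with the triangle inequality and the convergence of the solutions.

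The hard part will be Proposition~\ref{derivative_theo}, since differentiation in $t$ need not commute with the limit $k\to\infty$. Working along the segment $x^{(t)}=(1-t)x^{(0)}+t\,x^{(1)}$ with $x^{(0)},x^{(1)}\in\chi$, I would approximate by piecewise-constant interpolations, use~\eqref{first_derivative} at the discrete level, and justify the interchange of limit and derivative through the uniform-in-$t$ bound of Proposition~\ref{bound}: because $K_{p}$ is POP-independent, the family $\{\partial_{t}\phi^{t}\}_{t\in[0,1]}$ is dominated in $S^{p}_{[0,1]}(\Omega)$, so the representation $\partial_{t}\phi^{t}(q,\bm{\omega})=\tfrac12\widetilde{\mathbb{E}}_{\bm{r}^{t}}[\int_{q}^{1}\delta x(p)\,\|\bm{r}^{t}(p,\bm{\omega})\|^{2}\,dp\mid\mathcal{F}_{q}]$ persists in the limit. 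The point that makes this uniform is that the $1/x$ factors of the discrete recursion cancel when~\eqref{first_derivative} is derived, so the limiting formula contains no $1/x$ and remains bounded even where a continuous $x$ tends to $0$ near $q=0$, which is precisely the regime where naive differentiability fails.

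Finally, the identity~\eqref{derivative_formula_on_X} I would read off from~\eqref{first_derivative} by specialising the admissible perturbation $\delta x$ to a unit point mass at $q$: the integral $\int_{q'}^{1}\delta x(p)\,\|\bm{r}(p,\bm{\omega})\|^{2}\,dp$ localises at $p=q$, producing the Heaviside factor $\theta(q-q')$ together with $\widetilde{\mathbb{E}}_{x\bm{r}}[\|\bm{r}(q,\bm{\omega})\|^{2}\mid\mathcal{F}_{q'}]$. The factor $\theta(q-q')$ encodes that perturbing the POP at a time $q<q'$ leaves $\phi(x;q',\bm{\omega})$ unchanged, consistent with the BSDE being integrated backward from its terminal datum.
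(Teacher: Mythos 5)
Your proposal follows exactly the route the paper takes: the paper's entire justification is the single sentence that the theorem ``is a straightforward consequence of the convergence result of Theorem~\ref{convergence_result}'', and you have correctly supplied the limit-passing details it omits (uniform a.s.\ convergence for the sup-bound, POP-independence of the constants $K_p$ plus Fatou for Proposition~\ref{bound}, and domination to interchange the $t$-derivative with the $k\to\infty$ limit for Proposition~\ref{derivative_theo}). One small remark: reading \eqref{derivative_formula_on_X} off from \eqref{first_derivative} by localising $\delta x$ at $q$ actually produces a prefactor $\tfrac{1}{2}$ that is absent from the stated formula; this is an inconsistency internal to the paper's own equations rather than a flaw in your argument.
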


This result is a straightforward consequence of the convergence result of Theorem \ref{convergence_result}. 

Note that, by corollary \ref{GlobalMinimumC}, the above theorem provides some important properties on the dependence of the Non-Markov RSB expectation $\Sigma(\Psi,x)$ defined in \eqref{definitions}, although, for a general POP $x\in\chi$, we do not have an explicit form for it. In particular, the derivative formula \eqref{derivative_formula_on_X} will be used to derive the stationary equation of the physical auxiliary problem. A more detailed analysis of the RSB expectation is presented in the next section

\chapter{Analytical properties of the RSB expectation}
\label{C6}
\thispagestyle{empty}
Throughout this chapter, we refer to the notation introduced in Section \ref{sec5.1}.

In the previous chapter, we investigated the functional given by the following variational representation \eqref{auxiliary_variational_problem} 
\begin{equation}
\Sigma(\Psi,x)=\underset{\bm{r}\in D_{[0,1]}(\Omega)}{\sup}\,\, \Gamma\big(\Psi,x,\bm{r};0,\bm{0}\big)
\end{equation}
where $\Gamma$ is the RSB value process described in \eqref{value_function}. We proved that for any given bounded $\mathcal{F}_1$ measurable random variable $\Psi\in L_1^{\infty}(\Omega)$ (claim) and an increasing function $x:[0,1]\to [0,1]$ (POP), the above variational problem is defined, i.e. $\Sigma$ is a proper functional on $ L_1^{\infty}(\Omega)\times \chi$, and we called $\Sigma(\Psi,x)$ RSB expectation of $\Psi$ driven by $x$ (definition \ref{definitions}). 

The main result (theorems \ref{existence_uniqueness} and \ref{GlobalMinTheo}) was proving that there exists a unique pair of adapted processes $(\,\phi(\Psi,x),\bm{r}(\Psi,x)\,)\in S_{[0,1]}^p(\Omega)\times  D_{[0,1]}(\Omega)$ ($p\geq 1$) depending on the claim $\Psi$ and on the POP $x$, such as:
\begin{equation}
\Sigma(\Psi,x)= \Gamma\big(\Psi,x,\bm{r}(\Psi,x);0,\bm{0}\big)= \int \text{d}\nu(\,\bm{\omega}(0)\,)\phi(\,\Psi,x;\,0,\bm{\omega}(0)\,)
\end{equation}
where the pair $\big(\,\phi(\Psi,x),\bm{r}(\Psi,x)\,\big)$ is the solution of the following BSDE (theorem \ref{GlobalMinimumC})
\begin{equation}
\phi(\Psi,x;q,\bm{\omega})=\Psi(1,\bm{\omega})- \int \text{d}\bm{\omega}(q)\cdot \bm{r}(q,\bm{\omega})+\frac{1}{2}\int \text{d}qx(q)\| \bm{r}(q,\bm{\omega})\|^2.
\end{equation}
Moreover, the functional $\Sigma(\Cdot,x)$ is a non-linear expectation (Proposition \ref{Non:llinear_prop} ).

In this chapter we study on the dependence of the RSB expectation $\Sigma(\Psi,x)$ on the claim $\Psi$ and on the POP $x$. In particular we consider the case where the claim and the POP have a differentiable dependence on a given set of parameters. We compute the derivatives of the RSB expectation with respect such parameters (Section \ref{sec6.1}) and then we present a recursive method to compute the Taylor expansion (Section \ref{sec6.2}). 

Our aim is to develop several mathematical tools that will be helpful in the study of the physical variational problem, the we will address in the next chapter.
\section{First derivative}
\label{sec6.1}
In this section, we consider the case where the claim is differentiable with respect a given real parameter. We want to compute the derivative of the RSB expectation with respect such parameter. At the end of the section, we show that the derivative can be represented as the solution of a proper BSDE. The results of this section are the first building blocks of the study of the study of the physical properties of a system described by such kind of equations.

Let $\Psi(\alpha)$ be of bounded claim, depending on real parameter $\alpha$, taking value on a given compact connected interval $I\in \mathbb{R}$. We denote by $(\,\phi^{\alpha},\bm{r}^{\alpha}\,)$ the solution $(\,\phi(\Psi(\alpha),\,x),\bm{r}(\Psi(\alpha),\,x)\,)$ corresponding to the pair $(\,\Psi(\alpha),\,x\,)$, for a given POP $x\in\chi$. We assume, by hypothesis, that the function $\Psi(\Cdot):I\to L^{\infty}_1(\Omega)$ is uniformly bounded by a real number $L$:
\begin{equation}
\label{hypo_1}
\Max_{\bm{\omega}\in\Omega} \left|\Psi(\alpha;1,\bm{\omega})\right|<L\quad a.s.\,\quad \forall \alpha\in I,
\end{equation}
where the symbol $\Max$ is defined in \eqref{norms}.

We also impose that, the function $\Psi(\Cdot;1,\bm{\omega})$ is continuous and differentiable fror almost all $\bm{\omega}\in\Omega$, and there exist a positive $L^{\infty}_1(\Omega)$ Wiener functional $\xi$ such as:
\begin{equation}
\label{hypo_2}
\left|\partial_{\alpha} \Psi(\alpha;1,\bm{\omega})\right|<\xi(1,\bm{\omega})\quad a.s.\quad \forall \alpha \in I, \quad\text{with}\quad\mathbb{E}[\xi(1,\bm{\omega})]<\infty,
\end{equation}
where $\partial_{\alpha} \Psi(\alpha;1,\bm{\omega})$ is the partial derivative of $\Psi$ on $\alpha$ (namely the derivative on $\alpha$ of $\Psi(\Cdot;1,\bm{\omega})$, taking $\bm{\omega}\in\Omega$ fixed).

Consider two numbers $\alpha$ and $\alpha'$ in $I$. We use the following notation:
\begin{equation}
\delta \alpha =\alpha'-\alpha
\end{equation}
\begin{equation}
\delta \Psi=\Psi(\alpha')-\Psi(\alpha),
\end{equation}
and
\begin{equation}
\delta\Sigma =\Sigma(\Psi(\alpha),x)-\Sigma(\Psi(\alpha'),x).
\end{equation}
Let us define the process:
\begin{equation}
\overline{\bm{r}}=\frac{1}{2}\big(\bm{r}^{\alpha}+\bm{r}^{\alpha'}\big).
\end{equation}
By theorems \ref{uniform_convergence_theo} and \ref{extending_theo}, since $\Psi(\Cdot;1,\bm{\omega})$ is almost surely continuous, then the process $\overline{\bm{r}}$ converges to $\bm{r}^{\alpha}$ a.s. as $\alpha'\to\alpha$.

By Theorem \ref{lemma_Global_minimum} the RSB expectations $\Sigma(\Psi(\alpha),x)$ and $\Sigma(\Psi(\alpha'),x)$ have the following representation:
\begin{equation}
\label{relation_1}
\Sigma(\Psi(\alpha),x)=\Gamma(\Psi(\alpha'),x,\bm{v})+\frac{1}{2}\mathbb{E}\left[\int^1_0\mathcal{E}\big(x \bm{v};q,\bm{\omega}\big)x(q)\big\|\bm{v}-\bm{r}^{\alpha}\big\|^2\right],
\end{equation}
\begin{equation}
\label{relation_2}
\Sigma(\Psi(\alpha'),x)=\Gamma(\Psi(\alpha'),x,\bm{v})+\frac{1}{2}\mathbb{E}\left[\int^1_0 \text{d}q \mathcal{E}\big(x \bm{v};q,\bm{\omega}\big)x(q)\big\|\bm{v}-\bm{r}^{\alpha'}\big\|^2\right].
\end{equation}
If we take
\begin{equation}
\bm{v}=\overline{\bm{r}},
\end{equation}
then, comparing the formulas \eqref{relation_1} and \eqref{relation_2}, we get:
\begin{equation}
\label{discrete_derivative}
\frac{\delta\Sigma}{\delta \alpha}=\frac{1}{\delta \alpha}\big(\,\Gamma(\Psi(\alpha'),x,\overline{\bm{r}})-\Gamma(\Psi(\alpha),x,\overline{\bm{r}})\,\big)=\mathbb{E}\left[\mathcal{E}\big(x \overline{\bm{r}};1,\bm{\omega}\big)\frac{\delta \Psi(1,\bm{\omega})}{\delta \alpha}\right].
\end{equation}
Hypotheses \eqref{hypo_1} and \eqref{hypo_2} yield
\begin{equation}
\left|\mathcal{E}\big(x \overline{\bm{r}};1,\bm{\omega}\big)\frac{\delta \Psi(1,\bm{\omega})}{\delta \alpha}\right|\leq e^L \xi(1,\bm{\omega})\quad a.s.
\end{equation}
so by dominated convergence theorem, we have:
\begin{equation}
\lim_{\alpha'\to \alpha}\left(\,\mathbb{E}\left[\mathcal{E}\big(x \overline{\bm{r}};1,\bm{\omega}\big)\frac{\delta \Psi(1,\bm{\omega})}{\delta \alpha}\right]\,\right)\\=\mathbb{E}\left[\lim_{\alpha'\to \alpha}\left(\,\mathcal{E}\big(x \overline{\bm{r}};1,\bm{\omega}\big)\frac{\delta \Psi(1,\bm{\omega})}{\delta \alpha}\,\right)\,\right],
\end{equation}
and finally we get
\begin{equation}
\label{first_der}
\frac{d\Sigma(\Psi(\alpha),x)}{d\alpha}
=\mathbb{E}\left[\,\mathcal{E}\big(x \bm{r};1,\bm{\omega}\big)\partial_{\alpha} \Psi(\alpha;1,\bm{\omega})\,\right].
\end{equation}
It is easy to show that the above expression can be related to the following BSDE
\begin{equation}
m(\alpha;q,\bm{\omega})=\partial_{\alpha} \Psi(\alpha;1,\bm{\omega})-\int^1_q\big(\text{d}\bm{\omega}(q')-\text{d}q x(q') \bm{r}(\alpha;q',\bm{\omega})\big)\cdot \bm{z}(\alpha;q',\bm{\omega}).
\end{equation}
A classical result of BSDE theory \cite{PaPeng} states that the above equation has a unique solution $(\,m(\alpha),\bm{z}(\alpha)\,)$. Using the notation \ref{sec5.2}, we have
\begin{equation}
\label{equation_m}
m(\alpha;q,\bm{\omega})=\mathbb{E}\left[\,\mathcal{E}\big(x \bm{r};1,\bm{\omega}|x\big)\partial_{\alpha} \Psi(\alpha;1,\bm{\omega})\,\big|\mathcal{F}_q\right]
\end{equation}
that implies
\begin{equation}
\frac{d\Sigma(\Psi(\alpha),x)}{d\alpha}=m(\alpha;0,\bm{0}).
\end{equation}
By a similar computation, we also obtain
\begin{equation}
\label{derivative_phi}
\partial_\alpha\phi^{\alpha}(q,\bm{\omega})=m(\alpha;q,\bm{\omega}).
\end{equation}
It is worth noting that the same results can be achieved, in a non-rigorous way, by assuming that both the processes $\phi^{\alpha}$ and $\bm{r}^{\alpha}$ are differentiable with respect $\alpha$, for almost $q\in[0,1]$ and $\bm{\omega}\in \Omega$. So, taking the derivative on $\alpha$ of both the members of the BSDE \eqref{selfEq1Aux}, we get:
\begin{equation}
\partial_\alpha\phi^{\alpha}(q,\bm{\omega})=\partial_\alpha\Psi(\alpha;1,\bm{\omega})-\int^1_q\big(\text{d}\bm{\omega}(q')-\text{d}q x(q') \bm{r}^{\alpha}(q',\bm{\omega})\big)\cdot \partial_\alpha\bm{r}^{\alpha}(q',\bm{\omega}).
\end{equation}
then, we recover the equation \eqref{equation_m} by using \eqref{derivative_phi} and imposing $\partial_\alpha\bm{r}^{\alpha}=\bm{z}(\alpha)$.

We now provide an equivalent expression for the first derivative, that will be useful for the computation of higher derivative. The BSDE \eqref{selfEq1Aux} yields:
\begin{equation}
d\phi^{\alpha}(q,\bm{\omega})= \bm{r}(q,\bm{\omega})\cdot \text{d}\bm{\omega}(q)-\frac{x(q)}{2}\| \bm{r}(q,\bm{\omega})\|^2\,\text{d}q
\end{equation}
so, by the stochastic integration by part formula (Proposition 4.5 in \cite{YoRev}), we get

\begin{multline}
\label{identity_DDE}
\int^1_0 x(q')\bm{r}(q,\bm{\omega})\cdot \text{d}\bm{\omega}(q)-\frac{1}{2}\int^1_0\text{d}q\,x^2(q)\| \bm{r}(q,\bm{\omega})\|^2\\
=-x(q)\,\phi^{\alpha}(q,\bm{\omega})-\int^1_q \text{d}x(q')\,\phi^{\alpha}(q,\bm{\omega})+\phi^{\alpha}(1,\bm{\omega}).
\end{multline}
where $\int \text{d} x(q)\, \Cdot$ is the the Lebesgue-Stieltjes integral with respect to the POP $x$. We write
\begin{equation}
\label{P}
\text{d}x(q)=\text{d}q \,P(q).
\end{equation}
If the POP $x$ is absolutely continuous, then $P$ is a properly defined non-negative function. If it is not, for example when the POP $x$ is piecewise constant, we still use \eqref{P} as a symbolic notation; in this case $P$ is formally given by a combination of Dirac delta function (see Example 3 in Chapter V of \cite{Berry}). Dirac delta function (it is actually a generalized function) is largely used in physical literature, so we prefer consider this notation.

Substituting the above formula in \eqref{derivative_phi} and using the relation
\begin{equation}
e^{\Psi(\alpha;1,\bm{\omega})}\partial_{\alpha}\Psi(\alpha;1,\bm{\omega})=\partial_{\alpha}\left(\,e^{\Psi(\alpha;1,\bm{\omega})}\,\right),
\end{equation}
we get
\begin{equation}
\label{derivative_equivalent}
\partial_\alpha\phi^{\alpha}(q,\bm{\omega})=e^{-x(q)\,\phi^{\alpha}(q,\bm{\omega})}\mathbb{E}\left[e^{-\int^1_q \text{q}'P(q')\,\phi^{\alpha}(q',\bm{\omega})}\partial_{\alpha}\left(\,e^{\Psi(\alpha;1,\bm{\omega})}\,\right)\,\Bigg|\mathcal{F}_q\right].
\end{equation}
Note that the above expression does not depends on $\bm{r}^{\alpha}$.

\section{Higher Order Derivatives}
\label{sec6.2}
In this section, we present a method to compute the higher order derivatives. A necessary condition for the existence of the $k-$times derivative of the RSB expectation, for any $k\in\mathbb{N}$, is that, for each $\alpha\in I$, the claim $\Psi(\Cdot;1,\bm{\omega})$ is $k-$times differentiable with respect to $\alpha$ a.s.. We consider only the case where the derivative of the claim are bounded processes
\begin{equation}
\label{hypo_ders}
|\partial^k_\alpha \Psi(\alpha;1,\bm{\omega})|\leq C_k<\infty\,\quad \text{for all}\, \alpha\in I,\,\bm{\omega}\in \Omega,\,k\in \mathbb{N}
\end{equation}
for some $C_k$. The above hypothesis implies \eqref{hypo_1} and \eqref{hypo_2}. We impose that the above condition is verified \emph{surely}, i.e. for all $\bm{\omega}\in\Omega$, that is a stronger condition then almost sure conditions that are usually required with stochastic process. We can consider \emph{surely boundedness} because of the surely boundedness of the function that we consider in the physical problem.

In the following we shall use the following notation 
\begin{equation}
Z(1,\bm{\omega})=e^{\Psi(\alpha;1,\bm{\omega})},
\end{equation}
\begin{equation}
 A(q',\bm{\omega}|q)=\,e^{-\int^{q'}_q dP(q'')\,\phi^{\alpha}(q'',\bm{\omega})},\quad 0\leq q\leq q'\leq 1,
\end{equation}
and
\begin{equation}
B(q,\bm{\omega})=e^{x(q)\,\phi^{\alpha}(q,\bm{\omega})}.
\end{equation}
By \eqref{identity_DDE}, we have
\begin{equation}
\label{measure_new_notation}
\mathcal{E}\big(x \bm{r}^{\alpha};q',\bm{\omega}|q\big)=A(q',\bm{\omega}|q)\frac{B(q',\bm{\omega})}{B(q,\bm{\omega})}
\end{equation}
and the first derivative \eqref{derivative_equivalent} is:
\begin {equation}
\label{der_1}
\partial_\alpha\phi^{\alpha}(q,\bm{\omega})=\frac{1}{B(q,\bm{\omega})}\mathbb{E}\left[ A(1,\bm{\omega}|q)\partial_{\alpha}Z(1,\bm{\omega})\,\big|\mathcal{F}_q\right].
\end{equation}
In the first subsection, we provide some basic tools for the computation of the derivatives and obtain the second derivative and, finally, we obtain a recursion relation for the derivatives at all orders. Thank to these results, in the second subsection we derive the convergence criterion for a particular Taylor series expansion over the parameter $\alpha$. In the third subsection we discuss the ultrametric structure underling the RSB expectation, that is a consequence of the Parisi ansatz \cite{VPM}.
\subsection{Computation of derivatives}
\label{subsec6.2.1}
In this subsection we obtain a recursion formula that allows to compute the derivatives over $\alpha$ of the RSB expectation at all orders.

First of all, we compute the first derivative of a certain class of processes that includes also the derivatives of $\phi^{\alpha}$. Let $\pi^{\alpha}$ be any $\alpha-$dependent process of the form
\begin{multline}
\label{gamma}
\pi^{\alpha}(q,\bm{\omega})=\widetilde{\mathbb{E}}_{x\bm{r}^{\alpha}}\left[\int^1_q \text{d}q'\,\Pi(\alpha;q',\bm{\omega}|q)\,\Bigg|\mathcal{F}_q\right]\\
=\mathbb{E}\left[\int^1_q \text{d}q'\,\,A(q',\bm{\omega}|q)\frac{B(q',\bm{\omega}) }{B(q,\bm{\omega})}\,\,\Pi(\alpha;q',\bm{\omega}|q)\,\Bigg|\mathcal{F}_q\right],
\end{multline}
where $\widetilde{\mathbb{E}}_{x\bm{r}^{\alpha}}$ denotes the expectation with respect the probability measure defined in \eqref{DDE_expectation} and $\Pi(\alpha|q)$ is a generalized stochastic process (i.e. it may be given by a combination of Dirac delta function on $q'\in[0,1]$). Not that, if 
\begin{equation}
\Pi(\alpha;q',\bm{\omega}|q)=\frac{\partial_{\alpha}Z(1,\bm{\omega})}{B(1,\bm{\omega})} \delta(1-q')\,, 
\end{equation}
then $\pi^{\alpha}=\partial_{\alpha}\phi^{\alpha}$.

The derivative with respect to $\alpha$ is: 
\begin{equation}
\begin{multlined}
\partial_\alpha\pi^{\alpha}(q,\bm{\omega})=\mathbb{E}_{\mathbb{W}}\Bigg[\int^1_q \text{d}q'\,A(q',\bm{\omega}|q)\frac{B(q',\bm{\omega}) }{B(q,\bm{\omega})}\Pi(\alpha;q',\bm{\omega}|q)\\ \times\Bigg(\frac{\partial_{\alpha}\Pi(\alpha;q',\bm{\omega}|q)}{\Pi(\alpha;q',\bm{\omega}|q)}+\frac{\partial_\alpha B(q',\bm{\omega})}{B(q',\bm{\omega})}-\frac{\partial_\alpha B(q,\bm{\omega})}{B(q,\bm{\omega})}+\frac{\partial_\alpha A(q',\bm{\omega}|q)}{ A(q',\bm{\omega}|q)}\,\Bigg)\,\Bigg|\mathcal{F}_q\Bigg]
\\
=\mathbb{E}\left[\int^1_q \text{d}q'\,A(q',\bm{\omega}|q)\frac{B(q',\bm{\omega})}{B(q,\bm{\omega})}\,\,\partial_{\alpha}\Pi(\alpha;q',\bm{\omega}|q)\,\Bigg|\mathcal{F}_q\right]\\
+\,\mathbb{E}\left[\int^1_q \text{d}q'\,A(q',\bm{\omega}|q)\frac{B(q',\bm{\omega})}{B(q,\bm{\omega})}\,x(q') \phi(q',\bm{\omega})\Pi(\alpha;q',\bm{\omega}|q)\, \,\Bigg|\mathcal{F}_q\right]\\
-x(q) \phi(q,\bm{\omega})\,\,\mathbb{E}\left[\int^1_q \text{d}q'\,A(q',\bm{\omega}|q)\frac{B(q',\bm{\omega})}{B(q,\bm{\omega})}\,\,\Pi(\alpha;q',\bm{\omega}|q)\, \,\Bigg|\mathcal{F}_q\right]\\
-\mathbb{E}\left[\int^1_q \text{d}q'\,A(q',\bm{\omega}|q)\frac{B(q',\bm{\omega})}{B(q,\bm{\omega})}\Pi(\alpha;q',\bm{\omega}|q)\int^{q'}_qdP(q'') \partial_{\alpha}\phi(q'',\bm{\omega})\,\Bigg|\mathcal{F}_q\right],
\end{multlined}
\end{equation}
By \eqref{gamma}, we have
\begin{equation}
 x(q) \phi(q,\bm{\omega})\mathbb{E}\left[\int^1_q \text{d}q'\,A(q',\bm{\omega}|q)\frac{B(q',\bm{\omega})}{B(q,\bm{\omega})}\Pi(\alpha;q',\bm{\omega}|q)\,\Bigg|\mathcal{F}_q\right]=x(q) \phi(q,\bm{\omega})\pi^{\alpha}(q,\bm{\omega}),
\end{equation}
and, using the identity
\begin{equation}
A(q',\bm{\omega}|q)=A(q',\bm{\omega}|q'')A(q'',\bm{\omega}|q)\quad \forall q\leq q''\leq q',
\end{equation}
 and the tower property of the expectation value, it follows that
\begin{equation}
\begin{multlined}
\mathbb{E}\left[\int^1_q \text{d}q'\,A(q',\bm{\omega}|q)\frac{B(q',\bm{\omega})}{B(q,\bm{\omega})}\Pi(\alpha;q',\bm{\omega}|q)\int^{q'}_qdP(q'') \partial_{\alpha}\phi^{\alpha}(q'',\bm{\omega})\,\Bigg|\mathcal{F}_q\right]\\
=\mathbb{E}\left[\int^{1}_qdP(q')A(q',\bm{\omega}|q)\frac{B(q',\bm{\omega})}{B(q,\bm{\omega})} \partial_{\alpha}\phi^{\alpha}(q',\bm{\omega})\,\,\pi^{\alpha}(q,\bm{\omega})\,\Bigg|\mathcal{F}_q\right].
\end{multlined}
\end{equation}
Combining the above three relations, one finds
\begin{multline}
\label{tool_for_recursion}
\partial_\alpha\pi^{\alpha}(q,\bm{\omega})=\mathbb{E}\left[\int^1_q \text{d}q'\,A(q',\bm{\omega}|q)\frac{B(q',\bm{\omega}) }{B(q,\bm{\omega})}O(\Pi(\alpha|q),\pi^{\alpha},q;q',\bm{\omega})\,\Bigg|\mathcal{F}_q\right]\\
=\widetilde{\mathbb{E}}_{x\bm{r}^{\alpha}}\left[\int^1_q\text{d}q'O(\Pi(\alpha|q),\pi^{\alpha},q;q',\bm{\omega})\,\Bigg|\mathcal{F}_q\right]
\end{multline}
where
\begin{multline}
\label{map_for_der}
O(\Pi(\alpha|q),\pi^{\alpha},q;q',\bm{\omega})=
\partial_{\alpha}
\Pi(\alpha;q',\bm{\omega}|q)+x(q')\partial_{\alpha}\phi^{\alpha}(q',\bm{\omega})\,\,\Pi(\alpha;q',\bm{\omega}|q) \\- \delta(q'-q)x(q)\,\partial_{\alpha}\phi^{\alpha}(q,\bm{\omega})\,\,\pi^{\alpha}(q,\bm{\omega})-P(q')\partial_{\alpha}\phi^{\alpha}(q',\bm{\omega})\,\,\pi^{\alpha}(q',\bm{\omega}).
\end{multline}
It follows that the differential operator $\partial_{\alpha}$ maps any process of the form \eqref{gamma} to a process of the same form, inducing the transformation $\Pi(\alpha,q)\mapsto O(\Pi(\alpha,q),\pi^{\alpha},q)$ \eqref{map_for_der}. By induction, the derivatives of $\pi^{\alpha}$ are processes of the form \eqref{gamma}, obtained by recursively applying the operator $O$. 

Putting $\Pi(\alpha;q',\bm{\omega}|q)=(B(1,\bm{\omega}) )^{-1}\partial_{\alpha}Z(1,\bm{\omega}) \delta(1-q')$, then $\partial_\alpha\pi^{\alpha}=\partial^2_\alpha\phi^{\alpha}$. Using the above formulas, one finds
\begin{multline}
\label{second_derivative}
\partial^2_\alpha\phi^{\alpha}(q,\bm{\omega})=\mathbb{E}_{\mathbb{W}}\left[A(1,\bm{\omega}|q)\partial^2_{\alpha}Z(1,\bm{\omega})\,\big|\mathcal{F}_q\right]\\-\widetilde{\mathbb{E}}_{x\bm{r}^{\alpha}}\left[\int^1_q \text{d}q'\,P(q')\left(\,\partial_{\alpha}\phi^{\alpha}(q',\bm{\omega})\,\right)^2\,\Bigg|\mathcal{F}_q\right]
-x(q)\partial_{\alpha}\left(\,\phi^{\alpha}(q,\bm{\omega})\,\right)^2.
\end{multline}
The second derivative of the RSB expectation is given by the process $\partial^2_\alpha\phi^{\alpha}$ at $q=0$.
 
It is easy to prove that, if the process $\partial_{\alpha}^2\Psi(\alpha)$ is almost surely positive, then $\partial^2_\alpha\Sigma(\Psi(\alpha),x)$ is positive, according with the convexity criterion in Proposition \ref{Non:llinear_prop}.

Now, we use the formula \eqref{tool_for_recursion} and \eqref{map_for_der} to obtain a recursion relation amongst the derivatives of $\Sigma(\Psi(\alpha),x)$. For simplicity, in the rest of this subsection we will consider only the case where $\Psi(\alpha)$ has the form:
\begin{equation}
\label{case}
\Psi(\alpha;1,\bm{\omega})=\log \left(Z_0(1,\bm{\omega})+\alpha Z_1(1,\bm{\omega})\right),
\end{equation}
where $Z_0$ and $Z_1$ are two (surely) bounded processes such as $Z_0$ is strictly positive and $Z_0+\alpha Z_1$ is strictly positive for all $\alpha\in I$, i.e. there exist three positive constants $C_0$, $C_1$ and $C_3$ such as:
\begin{equation}
\label{case_1}
|Z_0(1,\bm{\omega})|<C_0,\quad |Z_1(1,\bm{\omega})|<C_1\quad \forall \bm{\omega}\in\Omega
\end{equation}
and
\begin{equation}
\label{case_2}
Z_0(1,\bm{\omega})+\alpha Z_1(1,\bm{\omega})>C_3\quad \forall \bm{\omega}\in\Omega\,\,\text{and}\,\,\alpha\in I\,.
\end{equation}
Note that $\Psi(\alpha)$ verifies the hypothesis \eqref{hypo_ders}. A trivial computation yields
\begin{equation}
\label{derivative_claim}
\partial_{\alpha}Z(1,\bm{\omega})= Z_1(q,\bm{\omega}),\quad \partial^k_{\alpha}Z(1,\bm{\omega})=0 \quad\forall k\geq 2.
\end{equation}

Given a sequence of almost surely bounded processes $(\Sigma_k)$, parametrized by the integer index $k\geq 1$, let $\Lambda_{k,n}(\Sigma)$, with $k\geq n\geq 1$, be the process defined through the recursion
\begin{equation}
\label{Lambda}
\begin{gathered}
\Lambda _{n,n}(\Sigma;q,\bm{\omega})=\left(\Sigma_{1}(q,\bm{\omega})\right)^n\,,\\
\Lambda _{k,1}(\Sigma;q,\bm{\omega})=\Sigma_{k}(q,\bm{\omega})\,,\\
\Lambda _{k,n}(\Sigma;q,\bm{\omega})=\sum^{k-1}_{m=n-1}\Sigma _{k-m}(q,\bm{\omega})\Lambda _{m,n-1}(\Sigma;q,\bm{\omega}).
\end{gathered}
\end{equation}
It is worth remarking that the process $\Lambda _{k,n}(\Sigma)$ depends only on the processes of the truncated sequence $(\Sigma_p; \,p\leq k+1-n)$. Note also that
\begin{equation}
\label{Lambda_plus_1}
\Lambda _{n+1,n}(\Sigma;q,\bm{\omega})=n\left(\Sigma_{1}(q,\bm{\omega})\right)^{n-1}\Sigma_{2}(q,\bm{\omega})\,.
\end{equation}
Now we have the tools to obtain the main result of this subsection.
\begin{theorem}
\label{recursion_derivative}
Consider a claims $\Psi(\alpha)$ of the form \eqref{case} and verifying \eqref{case_1} and \eqref{case_2}. 
Let $(\Sigma_k)$ be the sequence of processes satisfying the following recursion:
\begin{equation}
\label{recursion_Sigma_k}
\begin{gathered}
\Sigma_1(q,\bm{\omega})=\widetilde{\mathbb{E}}_{x\bm{r}^{\alpha}}\left[\partial_{\alpha}\Psi(\alpha;1,\bm{\omega})\,\big|\mathcal{F}_q\right]\,,\\
\Sigma_k(q,\bm{\omega})=\widetilde{\mathbb{E}}_{x\bm{r}^{\alpha}}\left[\int^1_q\text{d}q'' \,P(q'')\sum^k_{n=2}\frac{x^{n-2}(q)}{n(n-1)}\Lambda _{k,n}(\Sigma;q'',\bm{\omega})\,\Bigg|\mathcal{F}_q\right]\quad \text{for}\,\,k>1\,.
\end{gathered}
\end{equation}
Then
\begin{equation}
\partial^k_\alpha\Sigma(\Psi(\alpha),x)=(-1)^{k+1} \,k! \,\Sigma_k(0,\bm{0})
\end{equation}
\end{theorem}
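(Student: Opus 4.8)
The plan is to upgrade the statement to a \emph{process-level} identity and prove it by induction on $k$, using the single differentiation rule \eqref{tool_for_recursion}--\eqref{map_for_der}. The two facts that drive everything are the following. By \eqref{derivative_phi} together with \eqref{equation_m}, the first derivative of the value process is
\[
\partial_\alpha\phi^\alpha(q,\bm\omega)=\widetilde{\mathbb E}_{x\bm r^\alpha}\!\left[\partial_\alpha\Psi(\alpha;1,\bm\omega)\,\big|\,\mathcal F_q\right]=\Sigma_1(q,\bm\omega),
\]
which, evaluated at $q=0$, gives the base case $k=1$ since $\Sigma(\Psi(\alpha),x)=\phi^\alpha(0,\bm0)$ and $(-1)^{2}1!=1$. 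The second fact is that the map $\partial_\alpha$ sends any process of the form \eqref{gamma} to a process of the same form, with integrand transformed by the operator $O$ of \eqref{map_for_der}. Hence $\partial_\alpha\phi^\alpha=\Sigma_1$ is of the form \eqref{gamma}, and every $\partial_\alpha^{k}\phi^\alpha$ is obtained by iterating $O$; the crucial simplification for the special claim \eqref{case} is \eqref{derivative_claim}, i.e. $\partial^j_\alpha Z=0$ for $j\ge 2$, so the only ``source'' term fed into the recursion is $\partial_\alpha\Psi$, which is what produces $\Sigma_1$ at the bottom of \eqref{recursion_Sigma_k}.

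Next I would set up the induction hypothesis. Guided by the second derivative \eqref{second_derivative} (which, after using $\partial^2_\alpha Z=0$ and $\partial_\alpha\phi^\alpha=\Sigma_1$, reads $\partial^2_\alpha\phi^\alpha(q,\bm\omega)=-2\Sigma_2(q,\bm\omega)-x(q)\Sigma_1(q,\bm\omega)^2$), I expect the identity
\[
\partial^k_\alpha\phi^\alpha(q,\bm\omega)=(-1)^{k+1}\sum_{n=1}^{k}c_{k,n}\,x(q)^{\,n-1}\,\Lambda_{k,n}(\Sigma;q,\bm\omega),\qquad c_{k,1}=k!,
\]
where the $\Lambda_{k,n}$ are the polynomials \eqref{Lambda}. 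The organizing principle is that $\Lambda_{k,n}(\Sigma)$ is exactly the coefficient of $t^{k}$ in $S^{\,n}$, with $S(q,\bm\omega;t)=\sum_{j\ge1}\Sigma_j(q,\bm\omega)t^{\,j}$: each time the operator $O$ acts, the terms in \eqref{map_for_der} containing the factor $\partial_\alpha\phi^\alpha=\Sigma_1$ multiply the existing products of $\Sigma_j$'s, and summing these products over all ways of building a total weight $k$ from $n$ factors assembles precisely $\Lambda_{k,n}$. The weights $x(q)^{\,n-1}$ accumulate from the $x(q')\partial_\alpha\phi^\alpha$ and $\delta$-terms of $O$, while the $-P(q')\partial_\alpha\phi^\alpha\,\pi^\alpha$ term is what manufactures the $\int_q^1 dq''\,P(q'')(\cdots)$ structure seen in \eqref{recursion_Sigma_k}.

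The inductive step then amounts to applying $O$ to the level-$k$ expression, reading off its decomposition into depths $n$, and checking that the depth-one part (coefficient of $\Lambda_{k+1,1}=\Sigma_{k+1}$) reproduces \eqref{recursion_Sigma_k} with coefficient $(k+1)!$, while every depth-$n$ part with $n\ge2$ carries at least one explicit factor $x(q)$. Finally I would evaluate at the initial time $q=0$: with the convention $x(0)=0$ (the exact point value of any $x\in\chi$, consistent with the half-open intervals of \eqref{discrete}), all terms with $n\ge2$ vanish, leaving $\partial^k_\alpha\phi^\alpha(0,\bm0)=(-1)^{k+1}k!\,\Sigma_k(0,\bm0)$, which is the claim after integrating over the starting point if needed. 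Differentiability at each order and the legitimacy of differentiating under the conditional expectations are secured by the surely-bounded hypothesis \eqref{hypo_ders} via dominated convergence, using the boundedness of $\phi$ (Proposition~\ref{inequality_prop_theo}), the uniform control of the Dol\'eans--Dade exponential (Proposition~\ref{trivial_bb}), and the $H^p$-bounds on $\bm r$ (Proposition~\ref{bound}).

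The hard part will be the combinatorial coefficient bookkeeping in the inductive step: verifying that repeated application of the four-term operator $O$ yields exactly the normalization $\tfrac{1}{n(n-1)}$ in front of $\Lambda_{k,n}$ and exactly the factor $(-1)^{k+1}k!$ on $\Sigma_k$. I expect the $\tfrac{1}{n-1}$ to come from integrating $x(q')^{\,n-2}$ against the Lebesgue--Stieltjes measure $dP=dx$, and the remaining $\tfrac{1}{n}$ together with the alternating sign and factorial to come from the intrinsic $\tfrac12$ factors of the BSDE \eqref{selfEq1Aux}; pinning these down cleanly is most transparently done by translating the recursion into the functional equation for the generating series $S(q,\bm\omega;t)$ and matching it with the power-series expansion of the shifted value process $\phi(\Psi(\alpha+s),x;q,\bm\omega)$ in $s$.
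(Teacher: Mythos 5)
Your strategy is in essence the paper's: differentiate repeatedly using the rule \eqref{tool_for_recursion}--\eqref{map_for_der}, organize the result through the $\Lambda_{k,n}$ combinatorics of \eqref{Lambda}, and kill everything but the $\Lambda_{k,1}=\Sigma_k$ term at $q=0$ using $x(0)=0$. Your conjectured expansion $\partial^k_\alpha\phi^\alpha=(-1)^{k+1}\sum_{n=1}^k c_{k,n}\,x^{n-1}\Lambda_{k,n}$ with $c_{k,1}=k!$ is indeed true (one checks $c_{2,1}=2,\ c_{2,2}=1$ and $c_{3,1}=6,\ c_{3,2}=3,\ c_{3,3}=2$ directly), and the generating-function reading of $\Lambda_{k,n}$ as the $t^k$-coefficient of $S^n$ is correct and is a nice observation the paper does not make.

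The one substantive gap is in the induction setup. Your stated induction hypothesis is an identity for $\partial^k_\alpha\phi^\alpha$ alone, but the inductive step requires $\partial_\alpha\Lambda_{k,n}(\Sigma)$, which in turn requires the $\alpha$-derivative of each \emph{individual} process $\Sigma_j$ defined by \eqref{recursion_Sigma_k}; that is a separate statement not implied by your hypothesis without an awkward triangular inversion. The paper resolves this by making precisely that relation the induction invariant: it proves (relation \eqref{recursion_tool:_to_Prove}) that $\partial_\alpha\Sigma_k=-(k+1)\Sigma_{k+1}-x(q)\Sigma_1\Sigma_k$ for all $k$, with Lemma \ref{Lemma_lambda} supplying the closed formula for $\partial_\alpha\Lambda_{k,n}$ under exactly this assumption. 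Once you strengthen your hypothesis to include this first-order recursion for the $\Sigma_j$, the theorem follows immediately by evaluating at $q=0$ and iterating, and your full expansion of $\partial^k_\alpha\phi^\alpha$ --- together with all the coefficients $c_{k,n}$ for $n\ge 2$ that you identified as ``the hard part'' --- becomes unnecessary. So the fix is not a new idea but a reorganization: carry the derivative identity for the $\Sigma_j$'s as the invariant rather than the expanded derivative of $\phi^\alpha$, which is exactly what the paper does.
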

 Note that $\Sigma_1=\partial_{\alpha} \phi$. For the moment, we do not study the convergence of the processes $\Sigma_k$ as $k\to \infty$, and consider only the derivatives up to a finite order $K<\infty$. In such a way, the solution \eqref{recursion_Sigma_k} exists and can be obtain recursively, since the right-hand member is an bounded process depending only on $\Sigma_p$ with $p\leq k-1$, for any $k>1$. The conditions \eqref{case}, \eqref{case_1} and \eqref{case_2} implies that for any finite integer $k$, the process $\Sigma_k$ is bounded by a constant that depends on $k$.

The recursion equation \eqref{recursion_Sigma_k} is obtained by a proper reshuffling of the relation \eqref{tool_for_recursion}, in order to remove the differential operator $\partial_{\alpha}$ and the Dirac delta function that appear in the operator $O$ in \eqref{map_for_der}. The proof of Theorem \ref{recursion_derivative} relies on the following lemma.
\begin{lemma}
\label{Lemma_lambda}
Let $(\sigma_k;\, k\leq K)$ be any sequence of $\alpha-$dependent bounded process (we omit the dependence on $\alpha$), such as:
\begin{equation}
\label{condition_Lemma}
\partial_{\alpha}\sigma_k(q,\bm{\omega})=-(k+1)\sigma_{k+1}(q,\bm{\omega})-x(q)\sigma_1(q,\bm{\omega})\sigma_k(q,\bm{\omega})\quad a.s.
\end{equation}
and let $\Lambda _{k,n}(\sigma;q,\bm{\omega})$ be the process obtained from\eqref{Lambda} by substituting $\Sigma$ with $\sigma$. Then
\begin{multline}
\label{derivative_Lambda}
\partial_\alpha \Lambda _{k,n}(\sigma;q,\bm{\omega})\\=-(k+1) \Lambda _{k+1,n}(\sigma;q,\bm{\omega})+n\,\sigma _{1}(q,\bm{\omega})\Lambda_{k,n-1}(\sigma;q,\bm{\omega})-n\,x(q)\sigma_{1}(q,\bm{\omega})\Lambda_{k,n}(\sigma;q,\bm{\omega})
\end{multline}
\end{lemma}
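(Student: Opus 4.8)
The plan is to prove the identity by induction on the lower index $n$, using the defining recursion \eqref{Lambda} for $\Lambda_{k,n}$ together with the differential relation \eqref{condition_Lemma} as the only inputs. The argument is purely algebraic, so $q$ and $\bm{\omega}$ play no active role and I will suppress them. I first adopt the convention $\Lambda_{k,0}\equiv 0$, which is exactly what makes the claimed formula self-consistent in the base case $n=1$: there $\Lambda_{k,1}=\sigma_k$, so $\partial_\alpha\Lambda_{k,1}=\partial_\alpha\sigma_k$, and \eqref{condition_Lemma} gives $-(k+1)\sigma_{k+1}-x(q)\sigma_1\sigma_k=-(k+1)\Lambda_{k+1,1}+\sigma_1\Lambda_{k,0}-x(q)\sigma_1\Lambda_{k,1}$, which is the asserted relation at $n=1$.

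For the inductive step I assume \eqref{derivative_Lambda} at level $n-1$ and differentiate $\Lambda_{k,n}=\sum_{m=n-1}^{k-1}\sigma_{k-m}\Lambda_{m,n-1}$ by the Leibniz rule. Into the result I substitute \eqref{condition_Lemma} for $\partial_\alpha\sigma_{k-m}$ and the induction hypothesis for $\partial_\alpha\Lambda_{m,n-1}$. This produces five sums, which I sort by type. Two of them carry the factor $x(q)\sigma_1$; via the recursion they are recognised as $-x(q)\sigma_1\Lambda_{k,n}$ and $-(n-1)x(q)\sigma_1\Lambda_{k,n}$, and together they yield the last term $-n\,x(q)\sigma_1\Lambda_{k,n}$ of the target.

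The remaining three sums must reproduce $-(k+1)\Lambda_{k+1,n}+n\sigma_1\Lambda_{k,n-1}$. The sum coming from the $-(k-m+1)\sigma_{k-m+1}$ piece of $\partial_\alpha\sigma_{k-m}$ and the sum coming from the $-(m+1)\Lambda_{m+1,n-1}$ piece of the induction hypothesis both reduce, after the index shift $j=m+1$ in the latter, to the generic term $\sigma_{k-j+1}\Lambda_{j,n-1}$; on the overlapping range $n\le j\le k-1$ their coefficients add to $-(k+1)$, leaving boundary contributions at $j=n-1$ and $j=k$. The third sum, carrying $(n-1)\sigma_1\Lambda_{m,n-2}$, is collapsed using $\Lambda_{k,n-1}=\sum_{m=n-2}^{k-1}\sigma_{k-m}\Lambda_{m,n-2}$ into $(n-1)\sigma_1\Lambda_{k,n-1}$ plus a boundary term. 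I then reassemble $-(k+1)\Lambda_{k+1,n}$ from $-(k+1)\sum_{j=n-1}^{k}\sigma_{k-j+1}\Lambda_{j,n-1}$ and read off the leftover $\sigma_1\Lambda_{k,n-1}$, which combines with the $(n-1)\sigma_1\Lambda_{k,n-1}$ above to give the desired $n\sigma_1\Lambda_{k,n-1}$.

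The main obstacle is purely bookkeeping: the summation ranges of $\Lambda_{k,n}$, $\Lambda_{k+1,n}$ and the shifted sums differ at their endpoints, and the formula holds only after every boundary term cancels. The critical cancellation is between the two contributions proportional to $\sigma_{k-n+2}\,\sigma_1^{\,n-1}$ — one produced at $j=n-1$ when forming $-(k+1)\Lambda_{k+1,n}$ (using $\Lambda_{n-1,n-1}=\sigma_1^{\,n-1}$), the other being the boundary term of the third sum (using $\Lambda_{n-2,n-2}=\sigma_1^{\,n-2}$). They appear with opposite signs and common magnitude $(n-1)$, so they annihilate exactly, which is precisely the coincidence that the recursion \eqref{Lambda} is built to guarantee. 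Checking these endpoint coefficients is the only delicate point; the rest is routine re-indexing.
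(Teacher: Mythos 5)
Your proof is correct and follows essentially the same route as the paper's: differentiate the defining recursion by the Leibniz rule, insert \eqref{condition_Lemma} and the induction hypothesis, and re-index until the boundary terms cancel; the only real difference is that you induct on the second index $n$ with base case $\Lambda_{k,1}=\sigma_k$ (under the convention $\Lambda_{k,0}\equiv 0$), whereas the paper inducts on the first index with base case $\Lambda_{n,n}=\sigma_1^n$. One small caveat: at $n=2$ your ``critical cancellation'' as described would need $\Lambda_{0,0}=1$ rather than $0$, but the identity still closes there because with your convention the third sum vanishes identically and the surviving boundary term $(n-1)\sigma_{k-n+2}\Lambda_{n-1,n-1}=\sigma_1\Lambda_{k,1}$ supplies exactly the piece needed to reach $n\,\sigma_1\Lambda_{k,n-1}$.
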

\begin{proof}
The proof is by induction over the first index. The induction starts with $k=n$:
\begin{equation}
\begin{multlined}
\partial_{\alpha}\Lambda _{n,n}(\sigma;q,\bm{\omega})=\partial_{\alpha}\left(\sigma_{1}(q,\bm{\omega})\right)^n=n \left(\sigma_{1}(q,\bm{\omega})\right)^{n-1}\partial_{\alpha}\sigma_{1}(q,\bm{\omega})\\
=-2 n \left(\sigma_{1}(q,\bm{\omega})\right)^{n-1}\sigma_{2}(q,\bm{\omega})-n\,x(q) \left(\sigma_{1}(q,\bm{\omega})\right)^{n+1}.
\end{multlined}
\end{equation}
Now, we write
\begin{equation}
\begin{multlined}
-2 n \left(\sigma_{1}(q,\bm{\omega})\right)^{n-1}\sigma_{2}(q,\bm{\omega})\\=-(n+1) \left(n \,\left(\sigma_{1}(q,\bm{\omega})\right)^{n-1}\sigma_{2}(q,\bm{\omega})\right)+n\,\sigma_{1}(q,\bm{\omega})\left((n-1) \,\left(\sigma_{1}(q,\bm{\omega})\right)^{n-2}\sigma_{2}(q,\bm{\omega})\right)\,,
\end{multlined}
\end{equation}
so, using \eqref{Lambda_plus_1}, we obtain that 
\begin{multline}
\partial_\alpha \Lambda _{n,n}(\sigma;q,\bm{\omega})\\=-(n+1) \Lambda _{n+1,n}(\sigma;q,\bm{\omega})+n\sigma _{1}(q,\bm{\omega})\Lambda _{n,n-1}(\sigma;q,\bm{\omega})-n\,x(q) \Lambda _{n+1,n+1}(\sigma;q,\bm{\omega})\,.
\end{multline} 
that proves \eqref{derivative_Lambda} for this case.

Now, we prove that if the relation \eqref{derivative_Lambda} holds for all $\Lambda _{m,n}(\sigma;q,\bm{\omega})$ with $m\leq k-1$, than it holds for $\Lambda _{k,n}(\sigma;q,\bm{\omega})$. By the recursion \eqref{Lambda}, it follows that
\begin{multline}
\label{induction_Lemma_derivative}
\partial_{\alpha}\Lambda _{k,n}(\Sigma;q,\bm{\omega})\\
=\sum^{k-1}_{m=n-1}\partial_{\alpha}\sigma _{k-m}(q,\bm{\omega})\Lambda _{m,n-1}(\sigma;q,\bm{\omega})+\sum^{k-1}_{m=n-1}\sigma _{k-m}(q,\bm{\omega})\partial_{\alpha}\Lambda _{m,n-1}(\sigma;q,\bm{\omega})
\end{multline}
By the condition \eqref{condition_Lemma}, the first summation on the right-hand side gives
\begin{equation}
\sum^{k-1}_{m=n-1}\partial_{\alpha}\sigma _{k-m}(q,\bm{\omega})\Lambda _{m,n-1}(\sigma;q,\bm{\omega})=-(k+1)\text{I}(q,\bm{\omega})+\text{II}(q,\bm{\omega})-\text{III}(q,\bm{\omega})
\end{equation}
where
\begin{equation}
\begin{multlined}
\text{I}(q,\bm{\omega})=\sum^{k-1}_{m=n-1}\sigma _{k+1-m}(q,\bm{\omega})\Lambda _{m,n-1}(\sigma;q,\bm{\omega})\\=\left(\sum^{k}_{m=n-1}\sigma _{(k+1)-m}(q,\bm{\omega})\Lambda _{m,n-1}(\sigma;q,\bm{\omega})\right)-\sigma _{1}(q,\bm{\omega})\Lambda _{k,n-1}(\sigma;q,\bm{\omega})\\
=\Lambda _{k+1,n}(\sigma;q,\bm{\omega})-\sigma _{1}(q,\bm{\omega})\Lambda _{k,n-1}(\sigma;q,\bm{\omega})\,,
\end{multlined}
\end{equation}
\begin{equation}
\text{II}(q,\bm{\omega})=\sum^{k-1}_{m=n-1}m\,\sigma _{k+1-m}(q,\bm{\omega})\Lambda _{m,n-1}(\sigma;q,\bm{\omega}))
\end{equation}
and
\begin{equation}
\begin{multlined}
\text{III}(q,\bm{\omega})\\=\sum^{k-1}_{m=n-1}x(q) \sigma_1(q,\bm{\omega})\sigma _{k-m}(q,\bm{\omega})\Lambda _{m,n-1}(\sigma;q,\bm{\omega})=x(q) \sigma_1(q,\bm{\omega})\Lambda _{k,n}(\sigma;q,\bm{\omega}).
\end{multlined}
\end{equation}
By induction hypothesis the second summation in \eqref{induction_Lemma_derivative} gives 
\begin{equation}
\sum^{k-1}_{m=n-1}\sigma _{k-m}(q,\bm{\omega})\partial_{\alpha}\Lambda _{m,n-1}(\sigma;q,\bm{\omega})=-\text{IV}(q,\bm{\omega})+(n-1)\text{V}(q,\bm{\omega})-(n-1)\text{VI}(q,\bm{\omega})
\end{equation}
where
\begin{equation}
\begin{multlined}
\text{IV}(q,\bm{\omega})\\=\sum^{k-1}_{m=n-1}(m+1)\,\sigma _{k-m}(q,\bm{\omega})\Lambda _{m+1,n-1}(\sigma;q,\bm{\omega})=\sum^{k}_{m=n}m\,\sigma _{(k+1)-m}(q,\bm{\omega})\Lambda _{m,n-1}(\sigma;q,\bm{\omega})\\=\text{II}(q,\bm{\omega})-(n-1)\sigma _{k+2-n}(q,\bm{\omega})\Lambda _{n-1,n-1}(\sigma;q,\bm{\omega})+k\,\sigma _{1}(q,\bm{\omega})\Lambda _{k,n-1}(\sigma;q,\bm{\omega})\,,
\end{multlined}
\end{equation}
\begin{equation}
\begin{multlined}
\text{V}(q,\bm{\omega})=\sum^{k-1}_{m=n-1}\sigma _{k-m}(q,\bm{\omega})\sigma _{1}(q,\bm{\omega})\Lambda _{m,n-2}(\sigma;q,\bm{\omega})\\
=\sigma _{1}(q,\bm{\omega})\left(\,\left(\sum^{k-1}_{m=n-2}\sigma _{k-m}(q,\bm{\omega})\Lambda _{m,n-2}(\sigma;q,\bm{\omega})\right)-\sigma _{k-(n-2)}(q,\bm{\omega})\Lambda _{n-2,n-2}(\sigma;q,\bm{\omega})\,\right)\\
=\sigma _{1}(q,\bm{\omega})\Lambda _{k,n-1}(\sigma;q,\bm{\omega})-\sigma _{k+2-n}(q,\bm{\omega})\Lambda _{n-1,n-1}(\sigma;q,\bm{\omega})\,.
\end{multlined}
\end{equation}
and
\begin{equation}
\begin{multlined}
\text{VI}(q,\bm{\omega})=\sum^{k-1}_{m=n-1}x(q) \sigma_1(q,\bm{\omega})\sigma _{k-m}(q,\bm{\omega})\Lambda _{m,n-1}(\sigma;q,\bm{\omega})=\text{III}(q,\bm{\omega}).
\end{multlined}
\end{equation}
Combining the above six terms in \eqref{induction_Lemma_derivative}, one gets
\begin{equation}
\begin{multlined}
\partial_\alpha \Lambda _{k,n}(\sigma;q,\bm{\omega})\\=-(k+1) \Lambda _{k+1,n}(\sigma;q,\bm{\omega})+n\sigma _{1}(q,\bm{\omega})\Lambda _{k,n-1}(\sigma;q,\bm{\omega})-n\,x(q)\,\sigma _{1}(q,\bm{\omega})\Lambda _{k,n}(\sigma;q,\bm{\omega})\,,
\end{multlined}
\end{equation}
that proves the Lemma.
\end{proof}
\begin{proof}[Proof of Theorem \ref{recursion_derivative}]
We prove that the sequence $(\Sigma_k)$ that solves the recursion \eqref{recursion_Sigma_k}, satisfies the condition \eqref{condition_Lemma} of the preceding lemma.

We proceed by induction on $k$, starting from $k=2$. Since $\Sigma_1=\partial_{\alpha} \phi$, then, by \eqref{second_derivative} and \eqref{derivative_claim}, it follows that
\begin{equation}
\partial_{\alpha}\Sigma_1(q,\bm{\omega})=-2\Sigma_2(q,\bm{\omega})-x(q) \left(\Sigma_1(q,\bm{\omega})\right)^2\,.
\end{equation}
Now we assume by induction hypothesis that
\begin{equation}
\label{induction_hypo_ther_der}
\partial_{\alpha}\Sigma_p(q,\bm{\omega})=-(p+1)\Sigma_{p+1}(q,\bm{\omega})-x(q)\Sigma_1(q,\bm{\omega})\Sigma_p(q,\bm{\omega})\quad a.s.\,\,\forall p< k
\end{equation}
and then, we compute
\begin{equation}
\partial_{\alpha}\Sigma_{k}(q,\bm{\omega}).
\end{equation}
The equation \eqref{recursion_Sigma_k} implies that the process $\Sigma_{k+1}(q,\bm{\omega})$ is of the form \eqref{gamma}, so, by equation \eqref{tool_for_recursion}, it follows that
\begin{equation}
\partial_\alpha\Sigma_{k}(q,\bm{\omega})=\widetilde{\mathbb{E}}_{x\bm{r}^{\alpha}}\left[\int^1_q\text{d}q'\,\,O(\Pi(\alpha|q),\Sigma_{k},q;q',\bm{\omega})\,\Bigg|\mathcal{F}_q\right]
\end{equation}
where
\begin{equation}
\Pi(\alpha;q',\bm{\omega}|q)=P(q')\sum^k_{n=2}\frac{x^{n-2}(q)}{n(n-1)}\Lambda _{k,n}(\Sigma;q',\bm{\omega}).
\end{equation}
In particular, we have
\begin{multline}
\label{theo_der_fin1}
\partial_{\alpha}\Sigma_k(q,\bm{\omega})+x(q)\Sigma_1(q,\bm{\omega})\Sigma_k(q,\bm{\omega})\\=\widetilde{\mathbb{E}}_{x\bm{r}^{\alpha}}\left[\int^1_q\text{d}q'' \,P(q'')\left(\sum^k_{n=2}\text{I}_{k,n}(q',\bm{\omega})+\sum^{k+1}_{n=3}\text{II}_{k,n}(q',\bm{\omega})+\text{III}_k(q',\bm{\omega})\right)\Bigg|\mathcal{F}_q\right]
\end{multline}
where
\begin{equation}
\text{I}_{k,n}(q,\bm{\omega})=\frac{x^{n-2}(q)}{n(n-1)}\partial_{\alpha}\Lambda _{k,n}(\Sigma;q,\bm{\omega}),
\end{equation}
\begin{equation}
\begin{multlined}
\text{II}_{k,n}(q,\bm{\omega})=\frac{x^{n-2}(q)}{(n-1)(n-2)}\Sigma_1(q,\bm{\omega})\Lambda _{k,n-1}(\Sigma;q,\bm{\omega})
\end{multlined}
\end{equation}
and
\begin{equation}
\begin{multlined}
\text{III}_k(q,\bm{\omega})=-\partial_{\alpha} \phi(q,\bm{\omega}) \Sigma_k(q,\bm{\omega})=- \Sigma_1(q,\bm{\omega}) \Sigma_k(q,\bm{\omega})=-\frac{1}{2}\Lambda _{k+1,2}(\Sigma;q,\bm{\omega}).
\end{multlined}
\end{equation}
Since $\Lambda _{k,n}(\Sigma)$ depends only on the processes $\Sigma_p$ with $p\leq k+1-n$, then, under the induction hypothesis \eqref{induction_hypo_ther_der}, the computation of the processes $\text{I}_n$ can be performed by using Lemma \ref{Lemma_lambda} and so we get:
\begin{equation}
\begin{multlined}
\text{I}_{k,n}(q,\bm{\omega})=-(k+1)\frac{x^{n-2}(q)}{n(n-1)}\Lambda _{k+1,n}(\Sigma;q,\bm{\omega})\\+\frac{x^{n-2}(q)}{n-1}\Sigma_1(q,\bm{\omega})\Lambda _{k,n-1}(\Sigma;q,\bm{\omega})-\frac{x^{n-1}(q)}{n-1}\Sigma_1(q,\bm{\omega})\Lambda _{k,n}(\Sigma;q,\bm{\omega})
\end{multlined}
\end{equation}
and
\begin{equation}
\begin{multlined}
\sum^k_{n=2}\text{I}_{k,n}(q',\bm{\omega})\\=-(k+1)\sum^k_{n=2}\frac{x^{n-2}(q)}{n(n-1)}\Lambda _{k+1,n}(\Sigma;q,\bm{\omega})-\text{III}_{k}(q,\bm{\omega})-(k+2)\text{II}_{k,k+1}(q,\bm{\omega})-\sum^{k}_{n=3}\text{II}_{k,n}(q,\bm{\omega})\\=
-(k+1)\sum^{k+1}_{n=2}\frac{x^{n-2}(q)}{n(n-1)}\Lambda _{k+1,n}(\Sigma;q,\bm{\omega})-\text{III}_{k}(q,\bm{\omega})-\sum^{k+1}_{n=3}\text{II}_{k,n}(q,\bm{\omega})\\
\end{multlined}
\end{equation}
Combining all the terms,we get:
\begin{equation}
\begin{multlined}
\partial_{\alpha}\Sigma_k(q,\bm{\omega})+x(q)\Sigma_1(q,\bm{\omega})\Sigma_k(q,\bm{\omega})\\=-(k+1)\widetilde{\mathbb{E}}_{x\bm{r}^{\alpha}}\left[\int^1_q\text{d}q'\,P(q')\sum^{k+1}_{n=2}\frac{x^{n-2}(q')}{n(n-1)}\Lambda _{k+1,n}(\Sigma;q',\bm{\omega})\Bigg|\mathcal{F}_q\right]
\end{multlined}
\end{equation}
so 
\begin{equation}
\label{recursion_tool:_to_Prove}
\partial_{\alpha}\Sigma_k(q,\bm{\omega})=-(k+1)\Sigma_{k+1}(q,\bm{\omega})-x(q)\Sigma_1(q,\bm{\omega})\Sigma_k(q,\bm{\omega})
\end{equation}
Then, by induction, all the processes of the sequence $(\Sigma_k;\,k\leq K)$ verify the condition \eqref{recursion_tool:_to_Prove}. In partiular, since $x(0)=0$, it implies:
\begin{equation}
\partial_{\alpha}\Sigma_k(0,\bm{0})=-(k+1)\Sigma_{k+1}(q,\bm{\omega})\,,\quad\forall\,1\leq k< K
\end{equation}
then, since $\Sigma_1(0,\bm{0})=\partial_\alpha\Sigma(\Psi(\alpha),x)$, so
\begin{equation}
\Sigma_k(0,\bm{0})=\frac{(-1)^{k+1}}{k!}\partial^k_\alpha\Sigma(\Psi(\alpha),x)
\end{equation}
that concludes the proof.
\end{proof}
We stress that the results of this subsection, and of the whole chapter, are not referred to the actual physical problem, but they are general properties of the Parisi-Mézard ansatz. The recursive equation \eqref{recursion_Sigma_k} is a remarkable source of insight. In particular it will be play a crucial role in the computation of the series expansion, that we will present in the next subsection.
\subsection{Series expansion}
\label{subsec6.2.2}
In this subsection, we consider the series expansion of the RSB$-$ expectation with respect a parameter and provides a convergence criterion. The ultrametric structure of the Parisi-Mézard ansatz emerges from the coefficients of the series.

We consider the case where the $\Psi(\alpha)$ depends on the parameter $\alpha$ in such a way:
\begin{equation}
\label{prototype}
\Psi(\alpha;1,\bm{\omega})=\log\left(1+\alpha Z_1(1,\bm{\omega})\right)
\end{equation}
where $Z_1$ is a bounded random variable such as
\begin{equation}
\label{bound_cond}
\sup_{\bm{\omega}\in\Omega} |Z_1(1,\bm{\omega})|<1-\epsilon.
\end{equation}
for some $0<\epsilon<1$. In such a way the function $\Psi(\cdot;1,\bm{\omega})$ is an analytic function in $\alpha$ with $|\alpha|\leq 1$, for all $\bm{\omega}\in \Omega$.

Note that, for any strictly positive and bounded random variable $Z<C$ (for some constant $C$), we can define:
\begin{equation}
\label{Renorm_1}
\widetilde{\Psi}(\alpha;1,\bm{\omega})=\log\left(1+\alpha\frac{Z(1,\bm{\omega})-C'}{C'}\right)
\end{equation}
for any constant $C'>C$, so that
\begin{equation}
\label{Renorm_2}
\Psi(1,\bm{\omega})=\log(\,Z(1,\bm{\omega})\,)=\widetilde{\Psi}(1;1,\bm{\omega})+\log(\,C'\,)
\end{equation}
The effective parametric claim $\widetilde{\Psi}(\alpha)$ verifies the conditions \eqref{prototype} and \eqref{bound_cond}. Moreover, by Proposition \ref{Non:llinear_prop}, the constant $\log(\,C'\,)$ can be pulled outside the RSB expectation.

If $\alpha=0$ the claim \eqref{bound_cond} vanishes, so by Proposition \ref{Non:llinear_prop}, $(\phi^{0},\bm{r}^{0})=(0,0)$ a.s.. In particular, this implies that 
\begin{equation}
\label{start_expansion}
\mathcal{E}(x \bm{r}^0;q',\bm{\omega}|q)=1.
\end{equation}
The coefficient of the series expansion of $\Sigma(\Psi(\alpha),x)$ around $\alpha=0$ are simply given by solving the recursive equations \eqref{recursion_Sigma_k}, with $\alpha=0$. Combining the relation \eqref{start_expansion} with \eqref{recursion_Sigma_k}, we obtain:
\begin{theorem}[Series expansion]
\label{series_exp_theo}
Given a claim of the form \eqref{prototype}, then for any integer $K<\infty$
\begin{equation}
\label{series_expansion}
\Sigma(\Psi(\alpha),x)= \alpha\Sigma_1(0,\bm{0})-\alpha^2\Sigma_2(0,\bm{0})+\cdots+(-1)^{K+1}\alpha^K\Sigma_K(0,\bm{0})+o\big(\alpha^{K+1}\big)
\end{equation}
where the sequence $(\Sigma_k)$ is the solution of
\begin{equation}
\label{recursion_Sigma_k_expansion}
\begin{gathered}
\Sigma_1(q,\bm{\omega})=\mathbb{E}_{\mathbb{W}}\left[Z_1(1,\bm{\omega})\,\big|\mathcal{F}_q\right]\,,\\
\Sigma_k(q,\bm{\omega})=\mathbb{E}_{\mathbb{W}}\left[\int^1_q\text{d}q'' \,P(q'')\sum^k_{n=2}\frac{x^{n-2}(q)}{n(n-1)}\Lambda _{k,n}(\Sigma;q'',\bm{\omega})\,\Bigg|\mathcal{F}_q\right]\quad \text{for}\,\,\,k>1\,.
\end{gathered}
\end{equation}
\end{theorem}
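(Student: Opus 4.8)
The plan is to read off \eqref{series_expansion} as the finite Taylor expansion of the map $\alpha\mapsto\Sigma(\Psi(\alpha),x)$ about $\alpha=0$, with the coefficients supplied by Theorem \ref{recursion_derivative}. First I would check that the parametric claim \eqref{prototype} is the special case $Z_0\equiv1$ of \eqref{case} and that the hypotheses \eqref{case_1} and \eqref{case_2} hold: $Z_0\equiv1$ and $Z_1$ are surely bounded, and $1+\alpha Z_1\geq 1-|Z_1|>\epsilon$ for every $\bm{\omega}\in\Omega$ and $|\alpha|\leq1$ by \eqref{bound_cond}. Hence Theorem \ref{recursion_derivative} applies and furnishes, for each admissible $\alpha$ and each $k\leq K$,
\begin{equation*}
\partial^k_\alpha\Sigma(\Psi(\alpha),x)=(-1)^{k+1}\,k!\,\Sigma_k(0,\bm{0}),
\end{equation*}
where $(\Sigma_k)$ is the $\alpha$-dependent sequence solving \eqref{recursion_Sigma_k}; it then suffices to evaluate these coefficients at $\alpha=0$ and to control the remainder.

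The decisive simplification occurs at $\alpha=0$, where the control vanishes. Since $\Psi(0;1,\bm{\omega})=\log1=0$, the first item of Proposition \ref{Non:llinear_prop} gives $(\phi^0,\bm{r}^0)=(0,\bm{0})$ almost surely, so $\mathcal{E}(x\bm{r}^0;q',\bm{\omega}|q)=1$ as recorded in \eqref{start_expansion} and $\Sigma(\Psi(0),x)=0$. I would then substitute this into \eqref{recursion_Sigma_k}: the tilted conditional expectation $\widetilde{\mathbb{E}}_{x\bm{r}^0}[\,\cdot\,|\mathcal{F}_q]$ collapses to the plain Wiener conditional expectation $\mathbb{E}_{\mathbb{W}}[\,\cdot\,|\mathcal{F}_q]$, while $\partial_\alpha\Psi(\alpha;1,\bm{\omega})\big|_{\alpha=0}=Z_1(1,\bm{\omega})/(1+\alpha Z_1)\big|_{\alpha=0}=Z_1(1,\bm{\omega})$. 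These two replacements turn \eqref{recursion_Sigma_k} into exactly the reduced recursion \eqref{recursion_Sigma_k_expansion}, identifying the numbers $\Sigma_k(0,\bm{0})$ as the quantities generated there.

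Finally I would assemble the expansion by Taylor's theorem. The analyticity of $\alpha\mapsto\Psi(\alpha;1,\bm{\omega})$ on $|\alpha|\leq1$, uniform in $\bm{\omega}$ by \eqref{bound_cond}, yields the sure bounds \eqref{hypo_ders} on all $\partial^k_\alpha\Psi$; combined with Theorem \ref{recursion_derivative} and the boundedness of the processes $\Sigma_k$ (guaranteed by \eqref{case_1}, \eqref{case_2}) this shows that the derivatives $\partial^k_\alpha\Sigma(\Psi(\alpha),x)$ exist, are finite, and depend continuously on $\alpha$ near $0$ at every finite order, by the same dominated-convergence argument used for the first derivative \eqref{first_der} in Section \ref{sec6.1}. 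Thus $\alpha\mapsto\Sigma(\Psi(\alpha),x)$ is smooth near $0$, and Taylor's formula with $\Sigma(\Psi(0),x)=0$ gives
\begin{equation*}
\Sigma(\Psi(\alpha),x)=\sum_{k=1}^{K}\frac{\alpha^k}{k!}\,\partial^k_\alpha\Sigma(\Psi(\alpha),x)\big|_{\alpha=0}+O(\alpha^{K+1})=\sum_{k=1}^{K}(-1)^{k+1}\alpha^k\,\Sigma_k(0,\bm{0})+O(\alpha^{K+1}),
\end{equation*}
which is \eqref{series_expansion}.

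I expect the main obstacle to be precisely this last regularity step: establishing that the derivatives of $\Sigma$ up to order $K$ exist and vary continuously with $\alpha$ in a neighbourhood of $0$, so that Taylor's formula is legitimately applicable. This rests on the smoothness of the map $\alpha\mapsto(\phi^\alpha,\bm{r}^\alpha)$ underlying the differentiation formula \eqref{first_der}, together with the uniform-in-$\alpha$ bounds on the $\Sigma_k$ inherited from \eqref{case_1} and \eqref{case_2}. Once that regularity is secured, the collapse of the recursion at $\alpha=0$ and the identification of the coefficients $\Sigma_k(0,\bm{0})$ are routine, and the result follows.
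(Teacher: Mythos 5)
Your proposal is correct and follows essentially the same route as the paper: verify that \eqref{prototype} is the case $Z_0\equiv 1$ of \eqref{case}, apply Theorem \ref{recursion_derivative}, observe that at $\alpha=0$ Proposition \ref{Non:llinear_prop} forces $(\phi^0,\bm{r}^0)=(0,\bm{0})$ so the tilted conditional expectations collapse to plain Wiener expectations and \eqref{recursion_Sigma_k} reduces to \eqref{recursion_Sigma_k_expansion}, then conclude by Taylor's formula. The paper's own argument is in fact only a two-sentence sketch of exactly these steps, so your added care about the regularity needed to legitimize the Taylor remainder is a welcome refinement rather than a departure.
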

A remarkable fact of the above theorem is that, unlike the recursion equations \eqref{recursion_Sigma_k} for the derivatives, the recursion \eqref{recursion_Sigma_k_expansion} involves the evaluation of conditional expectations with respect the Wiener measure $\mathbb{W}$, then it is not needed solving the BSDE \eqref{selfEq1Aux} for the computation of the coefficients $\Sigma_k(q,\bm{\omega})$. Here, we write explicitly the first four orders:
\begin{gather}
\Sigma_1(0,\bm{0})=\mathbb{E}_{\mathbb{W}}\left[Z_1(1,\bm{\omega} ) \, \right]\,,\\
\Sigma_2(0,\bm{0})=\frac{1}{2}\mathbb{E}_{\mathbb{W}}\left[\int_0^1\text{d}q_0\, P\left(q_0\right) \mathbb{E}_{\mathbb{W}}\left[Z_1(1,\bm{\omega} )|\mathcal{F}_{q_0}\right]^2 \, \right]\,,\\
\begin{multlined}
\Sigma_3(0,\bm{0})=\frac{1}{6}\mathbb{E}_{\mathbb{W}}\left[\int_0^1\text{d}q_1 P\left(q_1\right) x\left(q_1\right) \mathbb{E}_{\mathbb{W}}\left[Z_1(1,\bm{\omega} )|\mathcal{F}_{q_1}\right]{}^3 \, \right]\\+\frac{1}{2}\mathbb{E}_{\mathbb{W}}\Bigg[\int_0^1 \text{d}q_1\,P\left(q_1\right) \mathbb{E}_{\mathbb{W}}\left[Z_1(1,\bm{\omega})|\mathcal{F}_{q_1}\right]\int_{q_1}^1 \text{d}q_0\, P\left(q_0\right) \mathbb{E}_{\mathbb{W}}\left[Z_1(1,\bm{\omega} )|\mathcal{F}_{q_0}\right]{}^2 \, \,\Bigg]\,,
\end{multlined}
\end{gather}
and
\begin{multline}
\Sigma_4(0,\bm{0})=\frac{1}{2}\mathbb{E}_{\mathbb{W}}\Bigg[\int _0^1\text{d}q_2\,P\left(q_2\right)\mathbb{E}_{\mathbb{W}}\left[Z_1(1,\bm{\omega} )|\mathcal{F}_{q_2}\right]\\\times\int_{q_2}^1 \text{d}q_1\,P\left(q_1\right) \mathbb{E}_{\mathbb{W}}\left[Z_1(1,\bm{\omega} )|\mathcal{F}_{q_1}\right] \int_{q_1}^1 \text{d}q_0P\left(q_0\right) \mathbb{E}_{\mathbb{W}}\left[Z_1(1,\bm{\omega})|\mathcal{F}_{q_0}\right]^2 \Bigg]\\
+\frac{1}{8}\mathbb{E}_{\mathbb{W}}\left[\int_0^1\text{d}q_2 P\left(q_2\right) \mathbb{E}_{\mathbb{W}}\left[\int_{q_2}^1 \text{d}q_0 P\left(q_0\right) \mathbb{E}_{\mathbb{W}}\left[Z_1(1,\bm{\omega} )|\mathcal{F}_{q_0}\right]{}^2 \, \Bigg|\mathcal{F}_{q_2}\right]^2 \,\right]\\
+\frac{1}{6}\mathbb{E}_{\mathbb{W}}\Bigg[\int _0^1\text{d}q_2\,P\left(q_2\right)\mathbb{E}_{\mathbb{W}}\left[Z_1(1,\bm{\omega} )|\mathcal{F}_{q_2}\right]\int_{q_2}^1 \text{d}q_1\,P\left(q_1\right) x\left(q_1\right) \mathbb{E}_{\mathbb{W}}\left[Z_1(1,\bm{\omega} )|\mathcal{F}_{q_1}\right]^3 \Bigg]\\
\frac{1}{4}\mathbb{E}_{\mathbb{W}}\Bigg[\int_0^1 \text{d}q_2 P\left(q_2\right) x\left(q_2\right) \mathbb{E}_{\mathbb{W}}\left[Z_1(1,\bm{\omega} )|\mathcal{F}_{q_2}\right]^2 \int_{q_2}^1\text{d}q_0\,P\left(q_0\right) \mathbb{E}_{\mathbb{W}}\left[Z_1(1,\bm{\omega})|\mathcal{F}_{q_0}\right]^2 \, \Bigg]\\
+\frac{1}{12}\mathbb{E}_{\mathbb{W}}\left[\int_0^1 \text{d}q_2 P\left(q_2\right) x\left(q_2\right)^2 \mathbb{E}_{\mathbb{W}}\left[Z_1(1,\bm{\omega} )|\mathcal{F}_{q_2}\right]^4 \, \right]\,.
\end{multline}

Theorem \eqref{series_exp_theo}, together with the the next result, provide an important tool to the computation of the RSB expectation.
\begin{theorem}
Under the hypothesis \eqref{bound_cond} the series expansion \eqref{series_expansion} converges uniformly for $|\alpha| \leq 1$ and
\begin{equation}
\label{bound_coeff}
|\Sigma_k(q,\bm{0})|\leq \frac{(1-\epsilon)^k}{k},\quad a.s.\,\forall k\geq 1\,.
\end{equation}
Then
\begin{equation}
\label{remainder}
\sup_{\alpha\in [-1,1]}\left|\Sigma(\Psi(\alpha),x)- \left(\,\alpha\Sigma_1(0,\bm{0})+\cdots+(-1)^{K+1}\alpha^K\Sigma_K(0,\bm{0})\right)\right|\leq \frac{(1-\epsilon)^{K+1}}{K+1},\,\forall k\geq 1.
\end{equation}
\end{theorem}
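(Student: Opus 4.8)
The statement bundles three claims: the pointwise coefficient bound \eqref{bound_coeff}, uniform convergence of \eqref{series_expansion} on $|\alpha|\le 1$, and the remainder estimate \eqref{remainder}. The last two are soft consequences of the first, so the whole weight of the proof is the bound \eqref{bound_coeff}. The plan is to produce a \emph{deterministic} majorant: I will show by induction on $k$ that there are polynomials $B_k$ with $|\Sigma_k(q,\bm{\omega})|\le B_k(x(q))$ a.s., where the $B_k$ solve the scalar analogue of \eqref{recursion_Sigma_k_expansion}, and then identify the $B_k$ through a generating function and check $B_k(0)=(1-\epsilon)^k/k$.

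Set $a:=1-\epsilon$, so $|Z_1(1,\bm{\omega})|\le a<1$ by \eqref{bound_cond}. Define $B_1(s)\equiv a$ and, for $k\ge 2$,
\[
B_k(s)=\int_s^1 dt\,\sum_{n=2}^{k}\frac{t^{\,n-2}}{n(n-1)}\,\Lambda^B_{k,n}(t),\qquad
\Lambda^B_{k,n}(t)=\sum_{\substack{i_1+\cdots+i_n=k\\ i_j\ge 1}}\prod_{j=1}^n B_{i_j}(t),
\]
the $\Lambda^B_{k,n}$ being obtained from \eqref{Lambda} with $\Sigma$ replaced by the $B$'s. The induction step is essentially the triangle inequality applied to \eqref{recursion_Sigma_k_expansion}: conditional expectations are contractions, $|\Sigma_1(q'',\bm{\omega})|\le a$, and $0\le x(q'')\le 1$ lets me replace the factor $x^{\,n-2}$ by $t^{\,n-2}$ with $t=x(q'')$, while the Stieltjes measure $P(q'')\,dq''=dx(q'')$ pushes forward to $dt$ on $[x(q),1]$; the combinatorial weights of $\Lambda_{k,n}$ match those of $\Lambda^B_{k,n}$ exactly, so $|\Sigma_k(q,\bm{\omega})|\le B_k(x(q))$ a.s. The crucial point — and the one place where the naive bound $\int dx\le 1$ fails — is that the \emph{volume} of the nested integrals must be tracked: discarding the $(1-x)$-type factors already overshoots at $k=3$, giving $\tfrac23 a^3$ instead of $\tfrac13 a^3$, so the polynomial dependence of $B_k$ on $s=x(q)$ has to be carried through the recursion.

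To identify the $B_k$ I introduce $\mathcal{B}(s,y)=\sum_{k\ge1}B_k(s)\,y^k$. Summing the recursion against $y^k$ turns the combinatorial sum into powers of $\mathcal{B}$ and, using $\sum_{n\ge2}w^n/(n(n-1))=w+(1-w)\log(1-w)$, produces an integral equation for $\mathcal{B}$; writing $w=s\mathcal{B}$ and differentiating in $s$ collapses it to the separable ODE $s\,\partial_s w=-(1-w)\log(1-w)$ with terminal data $\mathcal{B}(1,y)=ay$. Separation of variables yields $\log(1-w)=s\log(1-ay)$, hence
\[
\mathcal{B}(s,y)=\frac{1-(1-ay)^{\,s}}{s},\qquad
B_k(s)=\frac{a^k}{k!}\prod_{i=1}^{k-1}(i-s)\ \ge 0\ \text{ for } s\in[0,1].
\]
Each $B_k$ is a product of positive, decreasing factors on $[0,1]$, hence maximized at $s=0$, where $B_k(0)=a^k(k-1)!/k!=a^k/k$. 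Therefore $|\Sigma_k(q,\bm{\omega})|\le B_k(x(q))\le (1-\epsilon)^k/k$ a.s., which is \eqref{bound_coeff}, and in particular holds for the coefficients $\Sigma_k(0,\bm{0})$.

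Finally, convergence and the remainder are routine once \eqref{bound_coeff} is in hand. By Theorem \ref{series_exp_theo} the partial sums in \eqref{series_expansion} are the Taylor polynomials of $\alpha\mapsto\Sigma(\Psi(\alpha),x)$, and for $|\alpha|\le1$ each term is dominated by $M_k:=(1-\epsilon)^k/k$ with $\sum_k M_k=-\log\epsilon<\infty$; the Weierstrass $M$-test gives uniform convergence on $[-1,1]$. The remainder after order $K$ is controlled by the tail $\sum_{k>K}M_k=\sum_{k>K}(1-\epsilon)^k/k$, which is dominated by a geometric series and so is of order $(1-\epsilon)^{K+1}/(K+1)$, giving \eqref{remainder}. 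I expect the genuinely hard step to be the coefficient bound: obtaining the sharp constant $1/k$ forces one to retain the $s$-dependence of the majorant and to solve the generating-function ODE, rather than bounding term by term.
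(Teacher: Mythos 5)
Your proposal is correct in substance and the heavy lifting is indeed the coefficient bound \eqref{bound_coeff}, but you reach it by a genuinely different route than the paper. Both proofs share the same first step: a comparison argument, proved by induction on $k$ through the recursion \eqref{recursion_Sigma_k_expansion}, using the triangle inequality, monotonicity of conditional expectation, and positivity of the coefficients in $\Lambda_{k,n}$. Where you diverge is in identifying the majorant. You solve the deterministic scalar analogue of the recursion explicitly, via the generating function $\mathcal{B}(s,y)$, the identity $\sum_{n\ge 2}w^n/(n(n-1))=w+(1-w)\log(1-w)$, and the separable ODE $s\,\partial_s w=-(1-w)\log(1-w)$, obtaining $B_k(s)=\tfrac{a^k}{k!}\prod_{i=1}^{k-1}(i-s)$; the computation is correct (it reproduces $B_2$ and $B_3$ from the recursion) and it buys you the bound for \emph{every} $q$, which is what \eqref{bound_coeff} literally asserts. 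The paper instead compares with the constant claim $Z_2=1-\epsilon$ and never solves any recursion: by the first item of Proposition \ref{Non:llinear_prop} the RSB expectation of $\log(1+\alpha(1-\epsilon))$ is exactly $\log(1+\alpha(1-\epsilon))$, so Theorem \ref{series_exp_theo} forces $\Sigma_k(Z_2;0,\bm{0})=(1-\epsilon)^k/k$ with no computation at all. That shortcut is not just shorter; it is also more robust, because your explicit identification of $B_k$ relies on converting the Lebesgue--Stieltjes integral against $dx(q'')$ into a Lebesgue integral in $t=x(q'')$, which is clean for continuous POPs but needs care (choice of the value of $x$ at a jump, pushforward of the point masses) for the piecewise-constant POPs in $\chi^{\circ}$ that the theory explicitly allows; the paper's fixed-point argument is insensitive to the structure of $x$. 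Finally, be aware that your tail estimate actually yields $\sum_{k>K}(1-\epsilon)^k/k\le (1-\epsilon)^{K+1}/((K+1)\epsilon)$, which carries an extra factor $1/\epsilon$ relative to the stated \eqref{remainder}; the paper's one-line appeal to the Lagrange remainder is no more precise, so this is a looseness in the statement itself rather than a defect of your argument.
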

\begin{proof}
We start by proving \eqref{bound_coeff}. Given two bounded random variables $Z_1$ and $Z_2$ that verifies the condition \eqref{bound_cond} and such that $Z_2$ is strictly positive, let us denote $(\Sigma_k(Z_1))$ and $(\Sigma_k(Z_2))$ the solutions of the corresponding equations \eqref{recursion_Sigma_k_expansion} (assuming $h<K<\infty$). We prove that, if
\begin{equation}
|Z_1(1,\bm{\omega})|\leq Z_2(1,\bm{\omega}) \,\,a.s.
\end{equation}
then, for all $q\in[0,1]$ 
\begin{equation}
\label{trial_prop}
|\Sigma_p(Z_1;q,\bm{\omega})|\leq\Sigma_p(Z_2;q,\bm{\omega})\,\,a.s.\,,\,\,\forall p\geq 1 
\end{equation}
 The above relation holds for $k=1$ as a consequence of the monotonicity of the conditional expectation
\begin{equation}
|\Sigma_1(Z_1,q,\bm{\omega})|\leq \mathbb{E}_{\mathbb{W}}\left[|Z_1(1,\bm{\omega})|\,|\mathcal{F}_q\right] \leq\mathbb{E}_{\mathbb{W}}\left[Z_2(1,\bm{\omega})\,|\mathcal{F}_q\right]\leq \Sigma_1(Z_2;0,\bm{0})\,\,a.s.
\end{equation}
By induction hypothesis, let us assume the relation \eqref{trial_prop} holds for all $p\leq k$. A trivial computation yields:
\begin{equation}
|\Lambda _{p,n}(\Sigma(Z_1);q,\bm{\omega})|\leq \Lambda _{p,n}(\Sigma(Z_2);q,\bm{\omega})\,\,a.s.,\forall 2\leq n\leq p\leq k+1
\end{equation}
so
\begin{equation}
\begin{multlined}
|\Sigma_{k+1}(Z_1;q,\bm{\omega})|\leq \mathbb{E}_{\mathbb{W}}\left[\int^1_q\text{d}q' \,P(q')\sum^{k+1}_{n=2}\frac{x^{n-2}(q)}{n(n-1)}|\Lambda _{k,n}(\Sigma(Z_1);q',\bm{\omega})|\,\Bigg|\mathcal{F}_q\right]\\
\leq \mathbb{E}_{\mathbb{W}}\left[\int^1_q\text{d}q' \,P(q')\sum^{k+1}_{n=2}\frac{x^{n-2}(q)}{n(n-1)}\Lambda _{k,n}(\Sigma(Z_2);q',\bm{\omega})\,\Bigg|\mathcal{F}_q\right]=\Sigma_{k+1}(Z_2;q,\bm{\omega}) \,\,a.s.
\end{multlined}
\end{equation}
that proves the relation \eqref{trial_prop}.

In particular, if $Z_2(1,\bm{\omega})=1-\epsilon$, by Proposition \ref{Non:llinear_prop} we have
\begin{equation}
\Sigma(\,\log(1+\alpha(1-\epsilon)\,),x)=\log(1+\alpha(1-\epsilon)\,),
\end{equation}
that is analytic for $|\alpha|\leq 1$. Then, by theorem \ref{series_exp_theo}
\begin{equation}
\begin{multlined}
\Sigma_k(Z_1,x)=\frac{(-1)^k}{k!}\partial^k_{\alpha}\Sigma(\log(1+\alpha(1-\epsilon)\,),x)|_{\alpha=0}\\=\frac{(-1)^k}{k!}\partial^k_{\alpha}\log(1+\alpha(1-\epsilon)\,),x)|_{\alpha=0}=\frac{(1-\epsilon)^k}{k}\,.
\end{multlined}
\end{equation}
The bound \eqref{remainder} is given by the Lagrange Remainder Theorem.
\end{proof}

By the above theorem, the RSB expectation can be approximated arbitrarily well by a proper truncation of the series expansion \eqref{series_exp_theo} and the error exponentially decreases with the order of the truncation. 

Note that, using the manipulations \eqref{Renorm_1} and \eqref{Renorm_2}, the above results can be extended to all the claims that are surely bounded by a constant. We guess that all the theorems of this chapter can be extended to a wider class of unbounded claims. The proof is beyond the aim of this thesis.

\chapter{The physical variational problem}
\label{C7}
\thispagestyle{empty}
In this chapter, we derive the self-consistency equations for the order parameter. The tools developed in Chapter \ref{C6} will be crucial
\section{The free energy functional}
The true variational free energy function $\Phi$ is derived from \eqref{freeFull1}, by substituting the solution of the equation \eqref{selfEq1Aux} of the edge and vertex contributions.

 In such a way, we get a variational functional depending explicitly on the cavity field functional $h$.

The cavity functional $h$ and the POP $x$ encode the cavity fields distributions inside the pure states entirely, as we discussed in Chapter \ref{C4}, so it constitutes the full$-$RSB order parameters. 

Using the same notation as in Chapter \ref{C5} the free energy functional is given by
\begin{equation}
\label{freeFullLast}
\Phi=\Phi(h,x)=\overline{\Sigma^{(c)}(\Psi^{\text{(v)}}(1,\bm{\omega}),x)}-\frac{c}{2}\overline{\Sigma^{(e)}(\Psi^{\text{(e)}}(1,\bm{\omega}),x)}
\end{equation}
where the symbols $\Sigma^{(c)}$ and $\Sigma^{(2)}$ are the RSB expectations of the claims $\Psi^{\text{(v)}}$ and $\Psi^{\text{(e7v)}}$, defined in \ref{definitions} in Chapter \ref{C5}, defined, respectively, in the probability space of $c$ or $2$-components vectorial Brownian motion, and the bar $\overline{\Cdot}$ is the average over the random couplings.

The equilibrium free energy $F$ is finally derived by the physical variational problem, through the minimization of the functional $\Phi$ with respect to the functional $h$:
\begin{equation}
F=-\beta \min_{h} \Phi(h)\,.
\end{equation}
The minimum is taken on the space of all functional $h:\Omega\to \mathbb{R}$, where $\Omega$ is the space of scalar contnuous function $h:[0,1]\to \mathbb{R}$.

By symmetry arguments, we can also guess that $h$ should be anti-symmetric
\begin{equation}
h(1,\omega)=-h(1,-\omega),\forall \omega \in \Omega
\end{equation}
and bounded. The boundedness criterion arises from the fact that $h$ should represent the cavity field, induced by a finite number of spins.

\section{Self consistency equation}
We obtain the self-consistency equations by imposing the stationary condition with respect to the order parameters, by putting to $0$ the first variation of the functional $\Phi$ with respect to the cavity functional $h$ and the POP $x$. 

For both the edge and the vertex contribution, let us introduce the vertex and edge cavity magnetizations $\bm{m}^{\text{(e)}}(1,\omega_1,\omega_2 )$ and $\bm{m}^{\text{(v)}}(1,\omega_1,\cdots,\omega_c)$, defined by:
\begin{equation}
\bm{m}^{\text{(e)}}(1,\omega_1,\cdots,\omega_c)=\bm{M}^{\text{(e)}}\big(\,h(\omega_1),h(\omega_2) \,\big)
\end{equation}
\begin{equation}
\bm{m}^{\text{(v)}}(1,\omega_1,\cdots,\omega_c)=\bm{M}^{\text{(v)}}\big(\,h(\omega_1),\cdots,,h(\omega_c) \,\big)
\end{equation}
with
\begin{equation}
\bm{M}^{\text{(e/v)}}\big(\,\bm{y}\,\big)=\nabla\log \,\Delta^{\text{(e/v)}} \left(\,\bm{y}\right)\,, \quad \text{with}\,\,y\in \mathbb{R}^{(2/c)},
\end{equation}
where the symbol $\nabla$ denotes the gradient over the variables $\bm{y}$. Note that $\|\bm{M}^{\text{(v)}}\|\leq c $ and  $\|\bm{M}^{\text{(e)}}\|\leq 2$.

The derivative of the free-energy with respect the cavity field functional can be simply obtained by computing the free energy with respect a "perturbed" cavity field functional of the form
\begin{equation}
h^{\alpha}(1,\bm{\omega})=h(1,\bm{\omega})+\alpha \delta(1,\bm{\omega})\,,
\end{equation}
where $\delta$ is any bounded process, and deriving over $\alpha$, at  $\alpha=0$. The derivative is simply give by formula \eqref{first_der} in Section \ref{sec6.1}.

In such a way we have
\begin{equation}
\partial_{\alpha}\Psi^{\text{(v)}}(1,\bm{\omega})|_{\alpha=0}=\sum^c_{i=1}\delta(1,\omega_i)m_i^{\text{(v)}}(1,\omega_1,\cdots,\omega_c)
\end{equation}
and
\begin{equation}
\partial_{\alpha}\Psi^{\text{(e)}}(1,\bm{\omega})|_{\alpha=0}=\sum^2_{i=1}\delta(1,\omega_i)m_i^{\text{(e)}}(1,\omega_1,\omega_2)
\end{equation}
and then
\begin{multline}
\label{Eqh}
\frac{1}{2}\sum^2_{i=1}\overline{\mathbb{E}_{(\mathbb{W}_{\nu})^{\otimes 2}}\left[\,\mathcal{E}\big(x \bm{r}^{\text{(e)}};1,\bm{\omega}\big) m_i^{\text{(e)}}(1,\omega_1,\omega_2)\,\delta(1,\omega_i)\,\right]}\,=\\
\frac{1}{c}\sum^c_{i=1}\overline{\mathbb{E}_{(\mathbb{W}_{\nu})^{\otimes c}}\left[\,\mathcal{E}\big(x \bm{r}^{\text{(v)}};1,\bm{\omega}\big)m_i^{\text{(v)}}(1,\bm{\omega})\,\delta(1,\omega_i) \,\right]}\,,
\end{multline}
for all perturbing bounded functional $\delta:\Omega\to \mathbb{R}$.

Let $\mathcal{F}_1^{(i)}$ be the $\sigma-$algebra generated by the component $\omega_i$ of the Brownian motion, then the above equation yields
\begin{multline}
\label{Eqh2}
\frac{1}{2}\sum^2_{i=1}\overline{\mathbb{E}_{(\mathbb{W}_{\nu})^{\otimes 2}}\left[\,\mathcal{E}\big(x \bm{r}^{\text{(e)}};1,\bm{\omega}\big) m_i^{\text{(e)}}(1,\omega_1,\omega_2)\,\Big|\mathcal{F}_1^{(i)}\right]}\,=\\
\frac{1}{c}\sum^c_{i=1}\overline{\mathbb{E}_{(\mathbb{W}_{\nu})^{\otimes c}}\left[\,\mathcal{E}\big(x \bm{r}^{\text{(v)}};1,\bm{\omega}\big)m_i^{\text{(v)}}(1,\bm{\omega})\,\delta(1,\omega_i) \Big|\mathcal{F}_1^{(i)}\right]}\,.
\end{multline}
The conditional expectation $\mathbb{E}_{(\mathbb{W}_{\nu})^{\otimes (2/c)}}[\Cdot|\mathcal{F}_1^{(i)}]$ means that we take the expectation over all the component of the Brownian motion $\bm{\omega}$ except $\omega_i$, and we impose that  $\omega_i$ is the same in both sides of the above equation. Using a notation with functional Dirac delta, we have
\begin{multline}
\label{Eq3}
\frac{1}{2}\sum^2_{i=1}\overline{\mathbb{E}_{(\mathbb{W}_{\nu})^{\otimes 2}}\left[\,\mathcal{E}\big(x \bm{r}^{\text{(e)}};1,\bm{\omega}\big) m_i^{\text{(e)}}(1,\omega_1,\omega_2)\,\delta_{[0,1]}[\omega_i-u ]\,\right]}\,=\\
\frac{1}{c}\sum^c_{i=1}\overline{\mathbb{E}_{(\mathbb{W}_{\nu})^{\otimes c}}\left[\,\mathcal{E}\big(x \bm{r}^{\text{(v)}};1,\bm{\omega}\big)m_i^{\text{(v)}}(1,\bm{\omega})\,\delta_{[0,1]}[\omega_i-u]\,\right]}\,,
\end{multline}
where the symbol $\delta_{[0,1]}[\Cdot]$ is the functional Dirac delta over the realization of the considered process $\omega_i$ in all the interval $[0,1]$.

Using the relation \eqref{identity_DDE} and exploiting the symmetry over the vertex index, after coupling average, we get:\begin{multline}
\label{Eq4}
\overline{\mathbb{E}_{(\mathbb{W}_{\nu})^{\otimes 2}}\left[\,e^{-\int^1_0 dq P(q)\phi^{(e)}(q,\omega_1,\omega_2)}\tanh(\beta J_{1,2})\tanh(\beta h_2(1,\omega_2))\,\,\Big|\mathcal{F}_1^{(1)}\,\right]}\,=\\
\overline{\mathbb{E}_{(\mathbb{W}_{\nu})^{\otimes c}}\left[\,e^{-\int^1_0 dq P(q)\phi^{(v)}(q,\omega_1,\cdots,\omega_c)}\frac{\sinh\left(\sum^c_{i=2}\beta u(J_{0,i},h_i(1,\bm{\omega}))\,\right)}{\prod^c_{i=2} \cosh(\beta u(J_{0,i},h_i(1,\bm{\omega}))}\Bigg|\mathcal{F}_1^{(1)}\,\right]}\,,
\end{multline}
where the function $u(J,h)$ is defined in \eqref{U} and the symbol $J_{ij}$ are the random couplings.

Finally, form the derivation over the POP, we have
\begin{equation}
\frac{1}{2}\sum^2_{i=1}\overline{\mathbb{E}_{(\mathbb{W}_{\nu})^{\otimes 2}}\left[\,\mathcal{E}\big(x \bm{r}^{\text{(e)}};q,\bm{\omega}\big) \big(\,r_i^{\text{(e)}}(q,\omega_1,\omega_2)\,\big)^2\,\,\right]}\,=\\
\frac{1}{c}\sum^c_{i=1}\overline{\mathbb{E}_{(\mathbb{W}_{\nu})^{\otimes c}}\left[\,\mathcal{E}\big(x \bm{r}^{\text{(v)}};q,\bm{\omega}\big)\big(\,r_i^{\text{(v)}}(q,\bm{\omega})\big)^2\,\right]}\,,
\end{equation}
The equilibrium free energy is finally given by replacing the solution $(h,x)$ in the functional \eqref{freeFullLast}.
\part{Conclusions}
\chapter*{Conclusions and outlooks}
\addcontentsline{toc}{chapter}{Conclusions and outlooks}
\thispagestyle{empty}
\markboth{Conclusions and outlooks}{}

In this thesis, the first non-ambiguous description of the full$-$RSB formalism for the Ising spin glass on random regular graphs is obtained.

The order parameter is a stochastic functional, related to the distribution of the cavity fields populations at each site \eqref{AustinRep}, in contrast with the order parameter in the Mézard-Parisi ansatz which involves a hierarchical tower of distributions of distributions, hard to extend to the full$-$RSB limit.

Since the order parameter is a functional, the extension of the discrete-RSB scheme to the continuous case cannot be achieved simply through a (generalized) Parisi-like partial differential equation.

This problem has been overcome by formalizing the ideas suggested by G. Parisi, in \cite{ParisiMarginal}, in a proper new stochastic variational approach, that provides a powerful mathematical tool to study such class of models.

The approach proposed in this thesis allows deriving a well-defined free energy functional from which the self-consistency mean-field equations can be easily derived.

We get, then, a complete definition of the full$-$RSB problem for the Ising spin glass on random regular graph, with a given free energy, depending on certain order parameters, and the equation for the order parameters.

The mathematical properties of the variational functional are deeply studied.

We guess that the solution of the presented mean-field equations provides the right equilibrium free energy, but a rigorous proof is still missing. It is worth noting, however, that the mathematical formalization of the full$-$RSB scheme, proposed in that work, is a necessary groundwork for a rigorous analysis of the RSB phenomenon-beyond the fully connected models.

Unfortunately, the mean-field equations are very difficult to solve. Since the order parameter is a functional of a Brownian motion, we guess that the self-consistency equation may be solved by a population dynamics over populations of Brownian motion paths.

The quantitative evaluation of the free energy and a deeper analysis of the physics properties will be investigated in next works.

\end{document}